\newdimen\defpicwidth
\newdimen\defepswidth
\def\SKparam#1#2{}
\newcommand{\R}{\mathbb{R}}
\long\def\@makecaption#1#2{%
  \vskip\abovecaptionskip
  \sbox\@tempboxa{#1. #2}%
  \ifdim \wd\@tempboxa >\hsize
    {\centerline{\renewcommand{\baselinestretch}{0.8}%
\small\normalsize\parbox{0.8\textwidth}{#1. #2}}}\par
  \else
    \global \@minipagefalse
    \hbox to\hsize{\hfil\box\@tempboxa\hfil}%
  \fi
  \vskip\belowcaptionskip}
\providecommand{\pdf}[1]{}
\def\definitionname{DEFINITION}
\def\remarkname{REMARK}
\def\summarycontname{Summary (continued)}
\def\summaryname{Summary}
  \newcommand{\sfb}{}
\newcommand{\E}{\mathop{\mbox{\sf E}}}     % ISE
\renewcommand{\P}{\mathrm{P}}            % ISE
\def\defeq{\stackrel{\mathrm{def}}{=}}  % for definitions
\def\remarkname{REMARK}
\newtheorem{theorem}{THEOREM}[section]
\newtheorem{lemma}{LEMMA}[section]
\newcommand{\vps}{{\varepsilon}}
\newtheorem{Assumption}{Assumption}[section]
\newtheorem{Theorem}{Theorem}[section]
\newtheorem{Corollary}{Corollary}[section]
\newtheorem{Lemma}{Lemma}[section]
\newtheorem{Proposition}{Proposition}[section]
\newtheorem{Definition}{Definition}[section]
\def\defeq{\stackrel{\mathrm{def}}{=}}  % for definitions
\newcommand{\bigO}{\mathcal{O}}
\newcommand{\smallO}{\mbox{\tiny $\mathcal{O}$}}
\def\beq{\begin{equation}}
\def\eeq{\end{equation}}
\def\exp{\operatorname{exp}}
\def\vps{\varepsilon}
\theoremstyle{definition}
\newtheorem{Remark}{Remark}[section]
\newtheorem{Example}{Example}[section]
\begin{document}

\title[AB-LASSO]{Arellano-Bond LASSO Estimator for Dynamic Linear Panel Models$^*$}
 %\\
%\vspace{0.5cm}
%\large{[Preliminary and Incomplete Version]}}
\author[Chernozhukov, Fern\'andez-Val, Huang and Wang]{Victor Chernozhukov \and Iv\'an Fern\'andez-Val \and  Chen Huang  \and  \\ Weining Wang$^\dag$}

\thanks{$^*$ We thank the editor Stephane Bonhomme, three anonymous referees, Marine Carrasco, Xu Cheng, Matt Hong, Arturas Joudis, Whitney Newey and Wendun Wang for helpful comments. Fern\'andez-Val acknowledges research support from the Spain's State Research Agency under the Mar\'ia de Maeztu Unit of Excellence Program for the project CEX2020-001104M. Huang acknowledges financial support from the Independent Research Fund Denmark through the Inge Lehmann Grant (1132-00019B). Wang gratefully acknowledges financial support from the ESRC Grant (ES/T01573X/1); the project ``IDA Institute of Digital Assets'', CF166/15.11.2022, financed under the Romania's National Recovery and Resilience Plan; and the Marie Sk\l{}odowska-Curie Actions under the European Union's Horizon Europe research and innovation program for the Industrial Doctoral Network on Digital Finance, Project No. 101119635. The codes to implement the algorithms are publicly accessible via the GitHub repository: \href{https://github.com/huangche/Arellano-Bond-LASSO}{Arellano-Bond-LASSO}. Additionally, a CRAN package, \href{https://cran.r-project.org/web/packages/ablasso/index.html}{\textsf{ablasso}}, has been developed for \textsf{R} users to facilitate implementation.}

\thanks{$^\dag$ Chernozhukov: Department of Economics and Center for Statistics and Data Science, MIT; Fern\'andez-Val: Department of Economics, Boston University; Huang: Aarhus Center for Econometrics, Aarhus University; Wang: Department of Economics, University of Bristol.}
% \author[a]{Victor Chernozhukov}
% \affil[a]{\small{Department of Economics and Center for Statistics and Data Science, MIT}}
% \author[b]{Iv\'an Fern\'andez-Val}
% \affil[b]{\small{Department of Economics, Boston University}}
% \author[c]{ Chen Huang}
% \affil[c]{\small{Department of Economics and Business Economics, Aarhus University}}
% \author[d]{\\ Weining Wang}
% \affil[d]{\small{Falculty of Economics and Business, University of Groningen}}

%\begin{document}
	\maketitle
\begin{abstract}
The Arellano-Bond estimator is a fundamental method for dynamic panel data models, widely used in practice.  It can be severely biased when the time series dimension of the data, $T$, is long. The source of the bias is the large degree of overidentification. We propose a simple two-step approach to deal with this problem. The first step applies LASSO to the cross-section data at each time period to select the most informative moment conditions, exploiting the approximately sparse structure of these conditions. The second step applies a linear instrumental variable estimator using the instruments constructed from the moment conditions selected in the first step. Using asymptotic sequences where the two dimensions of the panel grow with the sample size, we show that the new estimator is consistent and asymptotically normal under much weaker conditions on $T$ than the Arellano-Bond estimator. Our theory covers models with high-dimensional covariates including multiple lags of the dependent variable and strictly exogenous covariates, which are becoming common in modern applications.  We illustrate our approach by applying it to weekly county-level panel data from the United States to study opening K-12 schools and other mitigation policies' short and long-term effects on COVID-19's spread.

\par
\vspace{0.2cm}
\noindent {\em Keywords}: Dynamic panel model, Arellano-Bond Estimator, GMM, LASSO, Debiasing

\end{abstract}

\section{Introduction}

%Review dynamic panel, typical issue of endogeneity. Anderson Hsiao. Why AB is good.\\

Panel data involve observations collected for cross-sectional units ($i=1,\ldots,N$) over multiple time periods ($t=1,\ldots, T$). Models for panel data are commonly used in economics and other social sciences because they allow researchers to control for unobserved unit and time heterogeneity and account for unit-level dynamics. These models have multiple applications, including evaluating job training and minimum wage regulations in labor economics, studying household consumption and economic growth in macroeconomics, estimating demand models for products in microeconomics, and analyzing payout policies and investment decisions in corporate finance. See \citet{bond2002dynamic} for a review of methods and applications of dynamic panel data models.

%provides a practical guide to analyzing micro data using firm-level panels (which typically have a small $T$ and a large $N$).

%Panel data are broadly encountered in economics and other social science studies, especially with the emerging big data era. The observations are collected  over both time $t=1,\ldots, T$ and cross-sectional $i=1,\ldots,N$ dimensions. In particular, dynamic model including the feedback effect of lagged response variables is of interest. %The model and calibration lead to unique issues.
%Applications of dynamic panel data models are widespread, including empirical models of labor economics (job training, minimum wage effects), macroeconomics (household consumption, economic growth), and corporate finance (payout policy, investment decisions), among others. For instance, a practical guide to micro data using the firm-level panels (small $T$, large $N$) can be found in \citet{bond2002dynamic}.

\cite{holtz1988estimating} introduced instrumental variable (IV) estimation in dynamic panel data models without strict exogeneity. Following this line of research, the Arellano-Bond estimator (AB) has become the most widely used fundamental method for panel models \citep{arellano1991some,arellano1995another}. It applies to dynamic linear models that include lagged dependent variables and predetermined covariates as explanatory variables and unobserved unit and time fixed effects. After taking first  differences or forward orthogonal deviations to remove the unit fixed effects, AB constructs moment conditions using sufficiently lagged dependent variables and covariates as instruments and applies the Generalized Method of Moments (GMM) to estimate the model parameters. However, AB might be severely biased in long panels. The problem arises because the number of moment conditions grows with the square of $T$, $T^2$, leading to many instrument bias caused by the large degree of overidentification in the GMM problem \citep[e.g.,][]{newey2004higher}. More precisely, AB has an asymptotic bias of order $T/N$, which might not be negligible compared to $1/\sqrt{NT}$, the size of the stochastic error, when the time dimension $T$ is sufficiently large relative to $N$ \citep{alvarez2003time}. This problem causes bias in estimators and undercoverage of confidence intervals.\footnote{The original motivation of AB was specification testing. We abstract from this issue because our method does not naturally lead to a specification test. Readers interested in this topic are referred to the existing literature on specification testing in IV settings with many instruments, including \cite{anatolyev2011specification}, \cite{hansen2008estimation}, \cite{chao2014testing}, and \cite{shi2025testing}.}

We address the bias issue of AB within long panels through a two-step method. After removing the unobserved unit fixed effects by forward orthogonal deviations, we first select the most informative moment conditions, followed by applying a linear instrumental variable estimation using instruments derived from these conditions. Specifically, as the number of AB's moment conditions varies across time periods, we perform a moment selection procedure on cross-section data for each time period separately. We utilize the least absolute shrinkage and selection operator (LASSO) of \cite{tibshirani1996regression} as our selector, given the naturally sparse structure of the moment conditions under appropriate weak temporal dependence conditions.

Several moment selection methods have been previously established in other contexts. For instance, \citet{donald2009choosing} introduced an alternative method for selecting instruments based on asymptotic mean squared error calculations. \cite{belloni2012sparse} described a similar approach to select optimal instruments using LASSO in cross-section instrumental variable models. Other methods for constructing optimal instruments through model averaging have been proposed by \cite{kuersteiner2010constructing} and \cite{okui2011instrumental}. \citet{luo2016selecting} expanded the LASSO selector to nonlinear GMM settings with many potential moments, noting its computational advantages over \citet{donald2009choosing}.
 Although some studies employ the AB estimator to underpin their analyses or conduct simulations \citep[e.g.,][]{newey2009generalized}, none addresses the AB estimator directly, mainly because they assume independent and identically distributed data.
 %-- an assumption invalidated in the AB context by the introduction of temporal dependencies when first differences are used remove unit fixed effects.

LASSO utilizes the $\ell_1$-norm to select the moment conditions. To show the validity of the selector, we build on theoretical results achieving near-oracle rates for LASSO and related  estimators in \citet{bickel2009simultaneous},  \citet{BCH2011} and \citet{belloni2013least}. An additional complication comes from the high dimensionality of the problem. As we mentioned above, the number of moment conditions grows with $T^2$. Moreover, we allow for high-dimensional covariates  including multiple lags of the dependent variable and  other strictly exogenous covariates, which are becoming common in modern applications, and unit and time fixed effects, whose number grows with the two dimensions of the panel. While our method does not suffer from shrinkage and model selection biases because the moment conditions of the second step are Neyman-orthogonal with respect to the parameters estimated in the first step,  it can still be subject to over-fitting bias, specially in the presence of high-dimensional covariates. We deal with this problem by combining the two steps of our procedure using cross-fitting \citep{DML}. Thus, we partition the panel in two parts. We select the moment conditions in the first part and estimate the parameters in the second part. Then, we repeat the procedure reversing the roles of the two parts and aggregate the results by averaging the estimates of the model parameters from the two orderings. Cross-fitting does not change the large sample properties of the estimator in this case because the use of forward orthogonal deviations to remove the unit fixed effects attenuates the dependence between the two stages by spreading the transformed error over many future periods.  We nevertheless find small sample improvements in various simulation settings.\footnote{In a previous version of the paper, we used first differences to remove the unit fixed effects. In that case, cross-fitting improved the rate condition for deriving the asymptotic distribution of the estimator by a factor of $\sqrt{T}$. The improvement arises because cross-fitting removes the dependence between the generated errors of the selected instruments and the  transformed errors in the main regression %, which become serially dependent after the first-difference transformation, 
by conditioning on the sub-sample used for moment selection.} %{\red In the current version of the paper, we adopt forward orthogonal deviations to transform the model, which attenuate the two stages dependence by spreading the transformed error over many future periods and render the asymptotic properties of the estimator identical with or without sample splitting.}} 

There is an extensive recent literature on panel data with large $T$, including dynamic linear models. \citet{alvarez2003time} studied the properties of AB and other estimators in long panels. They showed that AB exhibits asymptotic bias when $T/N$ tends to a constant. \citet{moral2013likelihood}, \citet{moral2019dynamic} and \citet{alvarez2022robust} developed likelihood-based alternatives to the AB estimator  and \citet{chen2019mastering} proposed a debiasing method based on applying the split-panel idea of \citet{dhaene2015split} to the cross-section dimension of the panel, whereas \citet{nickell1980correcting}, \citet{kiviet1995bias}, \citet{hahn2002asymptotically} and \citet{chudik2018half} developed alternatives based on bias corrections of the fixed effects estimator.  We compare our method to these alternatives in numerical simulations. 
\citet{okui2009optimal} proposed a method to select instruments by characterizing the mean squared error of the one-step AB estimator that uses the matrix of second moments of the instruments as weighting matrix in models with homoskedastic errors and strictly exogenous regressors. In the same setting, \citet{carrasco2024regularized} developed a version of the one-step AB estimator that regularizes the weighting matrix building on \citet{carrasco2012regularization}. Our instrument selection method is different and applies to both one-step and two-step AB estimators. It also does not rely on homoskedastic errors and allows for weakly exogenous covariates. \cite{cheng2023weight} have recently proposed a minimum distance estimator that combines graphical LASSO to regularize the weighting matrix with cross-fitting to reduce bias. While the steps of the procedure are similar to ours, the setting and regularization method are different.

We make four main theoretical contributions. First, we show that the moment conditions of AB exhibit an approximately sparse structure under suitable temporal dependence conditions (see, e.g., Proposition \ref{prop:decayK0}) and propose a LASSO version of AB that fruitfully exploits this structure. In particular,  the effective dimension of the non-zero coefficients in the first step estimation of the selected instruments at each time period $t$ is the minimum of $\log N$ and $t$,  which is very ``low" relative to the cross-sectional size $N$.  Second, we propose a 
%LASSO-type procedure that exploits this sparsity} by selecting the most informative moment conditions, together with a related 
cross-fitting procedure based on sample splitting that reduces finite-sample overfitting bias and opens the door to the use of machine learning methods other than LASSO.
%This enables us to remove the over-fitting bias and achieve a more favorable convergence rate of the estimator under simple regularity conditions. 
Unlike the existing high-dimensional IV literature, which typically focuses on cross-sectional settings, our framework is developed for dynamic panel models, where the relevant moment conditions may vary across time periods. We deal with this issue by performing moment selection separately for each time period and then aggregating the information over both $i$ and $t$ in the final instrumental variable estimator. The corresponding theory accommodates temporal dependence in the outcome and covariates, as well as the presence of time effects.
 Third, we show that first differences and forward orthogonal deviations have different properties in our setting. In particular, under suitable conditions, forward orthogonal differences yields more efficient estimators than first differences, and its large-sample properties are not affected by the use of cross-fitting. %[Add new contribution about the panel data structure including time effects and temporal dependency]
%Fourth,  we extend existing methodologies in the high-dimensional regression literature by accommodating complex temporal dependence in the outcome and covariates and allowing for time fixed effects.}
%we adopt forward orthogonal deviations to transform the model, thereby improving the efficiency of the estimator relative to first differencing.} %We further employ a cross-fitting procedure based on sample splitting to reduce the finite-sample overfitting bias and to open the door to the use of machine learning methods beyond LASSO in moment selection.} 
Fourth, we consider models with high-dimensional covariates that are becoming common in modern applications; e.g., \cite{klosin2022estimating,semenova2017estimation,gao2024robust,takeshima2025detecting}. Here, we achieve moment selection %selection of moment conditions and covariates simultaneously} 
by constructing orthogonal conditions and employing regularized GMM with a sparse weighting matrix via the Dantzig selector, in order to partial out the effect of high-dimensional nuisance parameters.

%Our theoretical contribution lies in the following aspects. First, we show that due to the specific decaying structure of the dynamic panel data, the effective dimension of the nonzero parameters in the preliminary prediction step is actually very ``low'' for each $t$ (i.e. $\min(\log N, t)$). Secondly, a cross-fitting procedure based on cross-sectional sample-splitting is introduced so that {the dependency of the generated errors of the fitted IV and the errors in the main regression} is removed by conditioning on the sub-sample and we can achieve a more favorable convergence rate of the estimator. {Thirdly, many exogenous variables of a diverging dimension are allowed as a model extension. We partial out the effect of high dimensional nuisance parameters by constructing orthogonal conditions and employ regularized GMM with a sparse weighting matrix via Dantzig selector to accomplish variable and moment selections simultaneously. }

The finite sample properties of our method are illustrated through comprehensive Monte Carlo simulations, where we compare it with alternative approaches such as likelihood-based,  debiased fixed effects, and debiased AB estimators. Lastly, we report results from an empirical application to the effect of the opening of K-12 schools and other policies on the spread of COVID-19 using a panel of $2,510$ US counties over $32$ weeks, extracted from the dataset used in \citet{chernozhukov2021association}. We estimate a panel regression model with rich dynamics, incorporating four lags of the dependent variable and several predetermined covariates. Due to the large number of instruments ($m=3,375$), the small bias condition for AB of \cite{chen2019mastering}, $m^2/(NT) \to 0$, fails for the AB estimator ($m^2/(NT) \approx 168$).\footnote{In this calculation $T=27$ because the first 5 observations are used as initial conditions of the dynamic model.} Compared to AB, our method %finds that policies such as the opening of K-12 schools, stay-at-home orders, and banning gatherings have more muted absolute long-run effects, and  college visits have substantially smaller effects in both the short and long run. 
estimates substantially smaller absolute effects of college visits and banning gatherings, but more negative effects of mask mandates, in both the short and long run. 

\subsection*{Notation.}
%Throughout this paper, we adopt the following notations.
For a column vector $v=(v_1,\ldots,v_d)^\top\in \R^d$ and a constant $r\geq 1$, we denote $|v|_r=(\sum_{i=1}^d |v_i|^r)^{1/r}$ and $|v|_{\infty}=\max\limits_{1\leq i\leq d}|v_i|$. Define $|v|_0$ as the zero norm, i.e. the number of non-zero coordinates. For a matrix $A=(a_{ij})_{1\le i\le m, 1\le j\le n}$, we define $|A|_{\max}=\max\limits_{1\leq i\leq m,1\leq j\leq n}|a_{ij}|$, $|A|_{1}=\max\limits_{1\leq j\leq n}\sum_{i=1}^m|a_{ij}|$, %$|A|_2$ as the spectral norm of $A$,
$|A|_{\infty}=\max\limits_{1\leq i\leq m}\sum_{j=1}^n|a_{ij}|$, and $|A|_{1,1} = \sum_{i=1}^m\sum_{j=1}^n|a_{ij}|$. %For any function on a measurable space $f:\mathcal{W}\to\R$, we write $\E_n(f)=n^{-1}\sum_{t=1}^n{f(w_t)}$ and $|f|_{\infty}=\sup_{w\in\mathcal W}|f(w)|$.
For a random variable $X_{it}$, we say $X_{it}\in\mathcal L^r$ if $\lVert X_{it}\rVert_r\defeq(\E|X_{it}|^r)^{1/r}<\infty$ for some $r>0$, and define the sub-Gaussian norm as $\|X_{it}\|_{\psi_{1/2}} = \inf\{s >0: \E\exp(X_{it}^2/s^2)\leq 2\}$, where $\E$ denotes the expectation conditional on the unit and time effects. We denote the limit cross-sectional average by $\bar\E(\cdot)$, that is $\bar\E(X_{it}) = \lim\limits_{N\to\infty}N^{-1}\sum_{i=1}^N\E(X_{it})$, provided that the limit exists.\footnote{Throughout the paper, we assume that the relevant limits exist, whenever we take expectations.} %The centered $X$ is denoted by $\E_0X$, i.e., $\E_0X=X-\E X$.
Given two sequences of positive numbers ${a_n}$ and ${b_n}$, we write $a_n\lesssim b_n$ (resp. $a_n\asymp b_n$) if there exists $C>0$, which does not depend on $n$, such that $a_n/b_n\le C$ (resp. $1/C\le a_n/b_n\le C$) for all large $n$. For a sequence of random variables ${X_n}$, we use the notation $X_n\lesssim_\P b_n$ to denote $X_n=\bigO_{\P}(b_n)$. For two real numbers, set $x\vee y=\max(x,y)$ and $x\wedge y=\min(x,y)$. %$a_n\ll b_n$ means that $a_n =o(b_n)$.

\subsection*{Outline.} The rest of the paper is organized as follows. Section \ref{sec:set} introduces the model and estimators. Section \ref{mainmm} presents the main theoretical results. Sections \ref{sim} and \ref{app} report the results of the simulation study and empirical application, respectively. Section \ref{sec:conclusion} contains some concluding remarks. An appendix collects the deferred proofs of the theoretical results, additional results for the analysis that uses first differences, instead of forward orthogonal deviations, to remove the unobserved unit effects, and supplementary results for the simulation study.

%The following notations are adopted throughout this paper. For a vector $v=(v_1,...,v_d)^\top\in \R^d$ and a constant $q\geq 1$, we denote $|v|_q=(\sum_{i=1}^dv_i^q)^{1/q}$ and $|v|_{\infty}=\max_{i}|d_i|$. For a matrix $A=(a_{i,j})_{1\le i\le m, 1\le j\le n}$, we define $|A|_{\text{max}}=\max_{i,j}|a_{i,j}|$. For any function on a measurable space $f:\mathcal{W}\to\R$, write $\E_n(f)=n^{-1}\sum_{t=1}^n\{f(w_t)\}$ and $|f|_{\infty}=\sup_{w\in\mathcal W}|f(w)|$. For a random variable $X$, we say $X\in\mathcal L^s$ if $\lVert X\rVert_s\defeq(\E|X|^s)^{1/s}<\infty$ for some $s>0$. Given two sequences of positive numbers $\{a_n\}$ and $\{b_n\}$, write %$a_n=\bigO(b_n)$ or
%$a_n\lesssim b_n$ (resp. $a_n\asymp b_n$) if there exists $C>0$ which does not depend on $n$ such that $a_n/b_n\le C$ (resp. $1/C\le a_n/b_n\le C$) for all large $n$. %, and say $a_n=\smallO(b_n)$ if $a_n/b_n\rightarrow0$ as $n\rightarrow\infty$.
%For a sequence of random variables $\{x_n\}$, we use the notation $x_n\lesssim_\P b_n$ to denote $x_n=\bigO_{\P}(b_n)$. %, namely $\forall \epsilon>0$, $\exists C>0$ such that $\P(x_n/b_n\geq C)<\epsilon$ for all large $n$, and say $x_n=\smallO_{\P}(b_n)$ if $x_n/b_n\stackrel{\P}{\to}0$ as $n\rightarrow\infty$.
%Finally, we denote the centered $X$ by $\E_0X$, i.e., $\E_0X=X-\E X$.

%The rest of the paper is organized as follows. Section \ref{sec:set} introduces the model setup and estimation steps. Section \ref{mainmm} presents the main theorems. Section \ref{sim} shows some simulation studies and Section \ref{app} is concerning the empirical application.

\section{Model and Estimators}\label{sec:set}
\subsection{Basic Model}
Let $\{(Y_{it},D_{it},C_{it}): 1 \leq i \leq N, 1 \leq t \leq T\}$ be a panel dataset, where $i$ and $t$ index unit and time period, respectively. $Y_{it}$ is a scalar outcome or response variable, $D_{it}$ is the policy variable or treatment of interest, and $C_{it}$ is a  vector of covariates of fixed dimension including, for example, $Y_{i,t-1}$ and other treatments. To measure the effect of $D_{it}$ on $Y_{it}$, we consider a dynamic linear panel model:
\begin{equation}\label{main:model}
Y_{it} = \alpha_i + \gamma_t + X_{it}^{\top}\theta^0 + \varepsilon_{it}, \quad X_{it} := (D_{it}, C_{it}^\top)^{\top}, \quad 1 \leq t \leq T,
\end{equation}
where $\theta^0$ is the parameter of interest, $\alpha_i$ is an unobserved unit effect, $\gamma_t$ is an unobserved time effect, and $\varepsilon_{it}$ is an idiosyncratic error with zero mean and constant variance. We might also be interested in functions of $\theta^0$ such as long-run effects in dynamic models that include lags of the dependent variable as covariates. We refer to the empirical application in Section \ref{app} for an example.

In the theoretical analysis, we shall treat the unobserved unit and time effects as fixed parameters. This is equivalent to conditioning on the realization of all these effects.\footnote{Due to this conditioning, all probability statements should be qualified with almost surely. We shall omit this qualifier for notational convenience.}
We assume that $\{(D_{it},C_{it},\varepsilon_{it}): 1 \leq t \leq T\}$ are independent over $i$, and $\varepsilon_{it}$ is an uncorrelated sequence over $t$. %with respect to a proper filtration ({in terms of $t-1$}) to be defined later.
In addition, we assume that the treatment and covariates in $X_{it}$ are predetermined with respect to $\varepsilon_{it}$ in the sense that
$$\E(X_{is}\varepsilon_{it}) = 0, \text{ for all }  1 \leq s \leq t \leq T.$$

% In the basic model \eqref{main:model}, %$C_{it}$ is a fixed dimension vector of covariates.
% satisfying $$\E(\Delta C_{it} \Delta \varepsilon_{jt}) = 0, \text{ for all } i,j=1,\ldots,N.$$
% Examples of such covariates include strictly exogenous variables and lagged predetermined covariates such as second and higher lags of the dependent variable.
% we denote the dimension of $X_{it}$ by $d = 1+\dim(C_{it})$, where $\dim(C_{it})$ is fixed.
%{\red Moreover, we assume that $\E (\alpha_i \varepsilon_{it}) =0$ and $\E ( \gamma_t \varepsilon_{it}) =0$, such that $\theta^*$ can be interpreted as the coefficient of $X_{it}$ in the linear projection of $Y_{it}$ on $X_{it}$, $\alpha_i$ and $\gamma_t$. (REMOVE THIS SENTENCE?)

We remove the unobserved effects by taking forward orthogonal deviations (FOD) over time and demeaning all the variables at the unit level, namely
\begin{equation}\label{main:dmodel}
\Delta \widetilde Y_{it} =    \Delta \widetilde X_{it}^\top\theta^0 + \Delta \widetilde \varepsilon_{it}, \quad 1\leq t\leq T-1,
\end{equation}
where $\Delta \widetilde  Z_{it} = \Delta Z_{it} - \sum_{j=1}^N \Delta Z_{jt}/N$, $\Delta Z_{it} = c_t\{Z_{it} - \sum_{s=1}^{T-t} Z_{i,t+s}/(T-t)\}$, and $c_t = \sqrt{(T-t)/(T-t+1)}$, for $Z_{it} \in \{Y_{it},  X_{it},  \varepsilon_{it}\}$.\footnote{In the previous version of the paper \cite{chernozhukov2024arellano}, we used first differences instead of FOD to remove the unobserved effects. Appendix \ref{app:FD} contains some of the results for the resulting estimators.}  The transformed error, $\Delta \widetilde \varepsilon_{it}$, is an uncorrelated sequence over $t$ and satisfies the moment conditions
$$
\E (X_{is}\Delta \widetilde \varepsilon_{it}) = 0, \text{ for all } 1\leq s \leq t \leq T-1.
$$

%{\red To make the above moment conditions hold, we need $\E (X_{is}\varepsilon_{jt} )= 0$, for all $s \leq t$ and $i,j=1,\ldots,N$.}
AB uses these moment conditions to construct a GMM estimator of $\theta^0$.\footnote{More precisely, the version of AB that uses moment conditions in FOD is from \cite{arellano1995another}.} It should be noted that AB is biased when $T$ is large due to the large number of moment conditions, i.e. $m = %(T-2)(T-1)/2 =
\bigO(T^2)$; see, e.g., \cite{newey2004higher} for more discussion. We propose an alternative estimator that is computationally simple and has lower bias when $T$ is large. It is based on the application of LASSO to select the most informative moment conditions to estimate the parameters. Thus, the estimator has two stages. It first selects moment conditions using LASSO, and then estimates the parameters of interest by instrumental variables, with the predicted values of the endogenous regressors obtained from the selected moment conditions serving as instruments. We name the new estimator as AB-LASSO as a shorthand for Arellano-Bond LASSO estimator.

\begin{Definition}[AB-LASSO] The AB-LASSO estimator consists of two steps:
\begin{enumerate}
\item[1] For $t= 1, \ldots, T-1$ and $W_{it}$ denoting any element of $\Delta \widetilde X_{it} = (\Delta \widetilde D_{it}, \Delta \widetilde C_{it}^\top)^\top$, run the LASSO regressions:
\begin{align} \label{LASSO}
\widehat{\Pi}_t \defeq (\widehat \pi_{t0},\widehat\pi_{t1}^\top, \ldots, \widehat\pi_{tt}^\top )^{\top} \in \arg \min_{\pi_{t0},\ldots,  \pi_{tt}} \bigg\{&\sum_{i=1}^N \bigg(W_{it} - \pi_{t0} - \sum_{s=1}^{t}  X_{is}^\top\pi_{ts}\bigg)^2 \notag\\
&+ \lambda_t \sum_{s=1}^{t} \omega_{ts} | \pi_{ts} |_1\bigg\},
\end{align}
 where $\lambda_t$ is a penalty tuning parameter, and $\omega_{ts}$ is a non-negative penalty weight that can incorporate any priors on the importance of the lags or account for conditional heteroskedasticity. For example, $\omega_{ts}$ can be specified as a non-decreasing function of $t-s$ if closer lags are believed to be more informative.\footnote{As a specific example, we can incorporate a factor of $t/s$ into the penalty weights.}
%is a non-negative weight that is a non-increasing function of $t-s$, e.g., $\omega_{ts} =1$.
Obtain the predicted values of the previous regression. For $\widehat W_{it}$ denoting any element of $\widehat{\Delta \widetilde X_{it}} = (\widehat{\Delta \widetilde D_{it}}, \widehat{\Delta \widetilde C_{it}^\top})^\top$,
%$\widehat W_{it} \in \widehat{\Delta \widetilde X_{it}}$,
\begin{equation*}
\widehat W_{it} = \widehat \pi_{t0} + \sum_{s=1}^{t} X_{is}^{\top}\widehat \pi_{ts}.
\end{equation*}
\item[2] Estimate \eqref{main:dmodel} by instrumental variable regression using $\widehat {\Delta \widetilde X_{it}}$ as the instrument for $\Delta \widetilde X_{it}$, that is
\begin{equation}\label{est}
\widehat \theta = \bigg(\sum_{i=1}^N \sum_{t=1}^{T-1} \widehat{\Delta \widetilde X_{it}} \Delta \widetilde X_{it}^{\top} \bigg)^{-1} \sum_{i=1}^N \sum_{t=1}^{T-1} \widehat{\Delta \widetilde X_{it}} \Delta \widetilde Y_{it}.
\end{equation}
 \end{enumerate}
\end{Definition}
%\item[b'] For model   \ref{main:model3} with fixed dimension $d$, replace the definition of $\widehat {\Delta X_{it}}$ with\\ $\widehat {\Delta X_{it}} = (\widehat{\Delta \tilde Y_{i,t-1}}, \cdots,   \Delta \tilde{Y}_{i,t-k}, \widehat{\Delta \tilde D_{it}},\Delta \tilde{C}_{it})^{\top}$.

%  {\color{red}
%  \begin{remark}[Remark \ref{extend}, continued]
% For model \eqref{main:model2} with fixed dimension $d$, we need to include the covariates $C_{i,t-1},\ldots, C_{i,2}$ in the LASSO step and replace the definition of $\widehat {\Delta\widetilde X_{it}}$ with $\widehat {\Delta \widetilde X_{it}} = (\widehat{\Delta \widetilde Y_{i,t-1}}, \Delta \widetilde{Y}_{i,t-2}, \ldots,  \Delta \widetilde{Y}_{i,t-k}, \widehat{\Delta \widetilde D_{it}}^{\top},\Delta \widetilde{C}_{it}^{\top})^{\top}$.
%  \end{remark}}

\begin{Remark}[Initial Conditions]\label{Remark:ic}
    We have implicitly assumed so far that the initial conditions of $Y_{it}$ are observed in models that include lags of the dependent variable as covariates. For example, we have assumed that $Y_{i0}$ is observed when $C_{it}$ includes $Y_{i,t-1}$. If $Y_{it}$ is first observed at $t=1$, then the vector $X_{is}$ in \eqref{LASSO} needs to be modified to include only the observed values of $C_{is}$. In models where $C_{it} = Y_{i,t-1}$, for example, $X_{i1} = D_{i1}$ instead of $X_{i1} = (D_{i1},Y_{i0})^\top$. \qed
\end{Remark}

\begin{Remark}[Post-LASSO]\label{Remark:post}
    A post-selection step can be applied to reduce the bias of LASSO in the estimated coefficients of the selected variables. For example, an ordinary least squares (OLS) regression can be performed after the first step  using only the variables in $X_{i1},\ldots,X_{i,t-1}$ selected by LASSO to estimate the instrument. This modification usually improves the finite sample properties of the estimator, but does not improve its properties in large samples in general. In Section \ref{mainmm}, we derive the properties of the estimator defined in \eqref{LASSO} that uses only LASSO. The main results in Theorem \ref{main} also apply to the post-LASSO estimator. For more details on post-LASSO and its comparison with LASSO, we refer to \citet{belloni2013least}. \qed
\end{Remark}

\begin{Remark}[Neyman-Orthogonality] Let $V_{it} = (1,X_{i1}^\top,\ldots,X_{it}^\top)^\top$ and \\$\Pi_t = (\pi_{t0},\pi_{t1}^\top,\ldots,\pi_{tt}^\top)^\top$. The estimator given in \eqref{est} is a moment estimator with moment function:
\begin{equation*}
    g_{i}(\theta,\Pi_1,\ldots,\Pi_{T-1}) = \sum_{t=1}^{T-1} \Pi_t^\top V_{it} ( \Delta \widetilde Y_{it} -  \Delta \widetilde X_{it}^\top \theta).
\end{equation*}
This moment function is Neyman-orthogonal with respect to each of the first stage parameters $\Pi_t$, $t=1,\dots,T-1$, because
\begin{equation*}
    \frac{\partial \E[g_{i}(\theta,\Pi_1,\ldots,\Pi_{T-1})]}{\partial \Pi_t}\Big|_{\theta=\theta^0} = \E[V_{it} ( \Delta \widetilde Y_{it} - \Delta \widetilde X_{it}^{\top}\theta^0 )] = 0, \quad t = 1,\ldots,T-1.
\end{equation*}
Note that the 2SLS (Two-Stage Least Squares) version of the second stage that replaces $\Delta \widetilde X_{it}$ by $\widehat{\Delta \widetilde X_{it}}$ in \eqref{est} does not satisfy this condition.\footnote{This difference between the IV and 2SLS versions of the estimator's moment conditions had not been noted previously, to the best of our knowledge. Some work such as \cite{belloni2012sparse} used the IV version and other work such as \cite{zhu2018sparse} used the 2SLS version. However, we are not aware of any work comparing these two versions.} \qed
\end{Remark}

% {\color{red} W2: 2SLS does not have fitted instrument variable, and therefore, is not directly coma parable to our estimator. AB  does not have generated error in the instrument. }

%\begin{remark}[Overfitting Bias]
AB is an instrumental variable estimator. Its bias comes from overfitting because the same observations are used to project the endogenous regressors on the instruments and to estimate the parameters \citep{phillips1977bias,angrist1995split,angrist1999jackknife}. The order of the bias is $m/n$, where $m$ is the number of moment conditions and $n$ is the sample size. In the case of AB, $m=\bigO(T^2)$ and $n=NT$, so that the order of the bias is $T/N$. The order of the sampling noise is $n^{-1/2} = (NT)^{-1/2}$, so that the small bias condition of \cite{chen2019mastering} is $m/n^{1/2} \to 0$ or equivalently $m^2/n = T^3/N \to 0$. Our proposed AB-LASSO estimator reduces the overfitting bias by selecting moment conditions and by using the FOD transformation.
%In particular, compared with first differencing, FOD attenuates the two stages dependence by spreading the transformed error over many future periods (with harmonic weights of order $1/(T-t)$). 
Up to logarithmic terms,
%$m = \bigO(T)$ for AB-LASSO. AB-LASSO therefore reduces the order of the bias to $1/N$ and
the small bias condition for AB-LASSO becomes $\max\limits_{1\leq t\leq T-1}\sqrt{s_t^*/N} \to 0$ ($s_t^*$ is the dimension of effective instruments for each $t$).  When $s_t^*$ is moderately large relative to $N$, AB-LASSO might still exhibit small sample or higher order bias. % [CHECK CHANGE OF $T$ BY $s_t^*$]
%The temporal dependence of the data can further exacerbate the problem  in finite samples.
%Indeed, we observe a more severe bias, compared to the i.i.d. case studied in \citet{belloni2012sparse}, in numerical simulations.
To reduce this bias, we develop a sample-splitting procedure over the cross-section dimension following the idea of the split-sample IV estimator of \citet{angrist1995split}.
% AB-LASSO may exhibit bias due of the correlation between the instruments generated in step 1 and the ordinary model error $\varepsilon_{it}$ in \eqref{main:model}. {\red It is well-known in the literature that the bias of such two-stage estimator increases with the dimension of the instruments used in the first stage. In our model setting, for each $t=2,\ldots,T$, we have the number of instruments depends on $t$ accordingly. Moreover, the temporal dependence of the generated errors in step 1 further contributes to the bias of our final estimator, which is aggregated over $t$ in step 2. Indeed, we observe a more severe bias compared to the i.i.d. case studied in \citet{belloni2012sparse}.} To reduce the bias, we develop a sample-splitting procedure over the cross-section dimension.
%\end{remark}
We name the version of AB-LASSO with sample splitting and cross-fitting as AB-LASSO-SS.

\begin{Definition}[AB-LASSO-SS] The AB-LASSO-SS estimator consists of the following steps:
\begin{enumerate}
    \item[1] Partition the sample $\{(Y_{it},D_{it},C_{it}): 1 \leq i \leq N, 1 \leq t \leq T\}$ along the cross-section dimension into two parts or sub-samples A and B, corresponding to the indexes $i \in \{1, \ldots, \lfloor N/2 \rfloor\} =: \mathbb{I}_A$ and $i \in \{\lfloor N/2 \rfloor +1, \dots,  N\} =: \mathbb{I}_B$, where $\lfloor \cdot \rfloor$ denotes the integer part.
    \item[2] In each sub-sample, take FOD over time and demean all the variables at the unit level, namely $\Delta \widetilde  Z_{it,s} = \Delta Z_{it} - \sum_{j \in \mathbb{I}_s} \Delta Z_{jt}/|\mathbb{I}_s|$, $i\in\mathbb I_s$, $s \in \{A,B\}$, and $\Delta Z_{it} = c_t[Z_{it} - \sum_{s=1}^{T-t} Z_{i,t+s}/(T-t)]$, for $Z_{it} \in \{Y_{it},  X_{it}\}$.
    \item[3] For $t= 1, \ldots, T-1$ and $W_{it}$ denoting any element  of $\Delta \widetilde X_{it,A} =  (\Delta \widetilde D_{it,A}, \Delta \widetilde C_{it,A}^\top)^\top$,
    %$W_{it} \in \Delta \widetilde X_{it,A}$,
    run step 1 of AB-LASSO in sub-sample A by estimating the LASSO regressions:
\begin{align} \label{LASSO.SS}
\widehat{\Pi}_{t,A} \defeq (\widehat \pi_{t0,A},\widehat\pi_{t1,A}^\top, \ldots, \widehat\pi_{tt,A}^\top )^{\top} \in \arg \min_{\pi_{t0},\ldots,  \pi_{tt}} \bigg\{&\sum_{i \in \mathbb{I}_A} \bigg(W_{it} - \pi_{t0} - \sum_{s=1}^{t}  X_{is}^\top\pi_{ts}\bigg)^2 \notag\\
&+ \lambda_t \sum_{s=1}^{t} \omega_{ts} | \pi_{ts} |_1\bigg\},
\end{align}
where $\lambda_t$ is a penalty tuning parameter, and $\omega_{ts}$ is a non-negative penalty weight.
 %that can incorporate any priors on the importance of the lags. For example, $\omega_{ts}$ can be specified as a non-decreasing function of $t-s$ if closer lags are believed to be more informative.
%is a non-negative weight that is a non-increasing function of $t-s$, e.g., $\omega_{ts} =1$.
Obtain the predicted values in sub-sample B using the previous estimates from sub-sample A. For  $\widehat W_{it,BA}$ denoting any element of $\widehat{\Delta \widetilde X_{it,BA}} =  (\widehat{\Delta \widetilde D_{it,BA}}, \widehat{\Delta \widetilde C_{it,BA}^\top})^\top$,
%$\widehat W_{it,BA} \in \widehat{\Delta \widetilde X_{it,BA}}$,
$$
\widehat W_{it,BA} = \widehat \pi_{t0,A} + \sum_{s=1}^{t} X_{is}^\top\widehat \pi_{ts,A},\quad i \in\mathbb{I}_B.
$$
Run the second step of AB-LASSO in sub-sample B using the instruments $\widehat{\Delta \widetilde X_{it,BA}}$,
\begin{equation}\label{est.SS}
\widehat \theta_{B,A} = \bigg( \sum_{i \in \mathbb{I}_B} \sum_{t=1}^{T-1} \widehat{\Delta \widetilde X_{it,BA}} \Delta \widetilde X_{it,B}^\top \bigg)^{-1} \sum_{i\in \mathbb{I}_B}\sum_{t= 1}^{T-1} \widehat{\Delta \widetilde X_{it,BA}} \Delta \widetilde Y_{it,B}.
\end{equation}
\item[4] Run step 3 reversing the roles of sub-samples A and B to obtain $\widehat \theta_{A,B}$.
\item[5] Compute the cross-fitting estimator of $\theta^0$ as the average of the estimators in the two orderings
\begin{equation}\label{SS}
\widehat{\theta}_{SS}= (\widehat{\theta}_{A,B}+ \widehat{\theta}_{B,A})/2.
\end{equation}

\end{enumerate}
%     More specifically, we  Note that the cross-section mean difference to remove the time effects should be taken within the two splits.
% %$$
% %\widehat W_{it,A} = \hat \pi_{t1,A} + \sum_{s=2}^{t-1} X_{is}^\top\hat \pi_{ts,A}.
% %$$
% Denote by $(\widehat \pi_{t0,A},\ldots, \widehat \pi_{t,t-1,A} ) $ the LASSO estimators of step 1 using sub-sample A. Accordingly, we have the predicted values
% $$
% \widehat W_{it,A} = \widehat \pi_{t0,A} + \sum_{s={\red 1}}^{t-1} X_{is}^\top\widehat \pi_{ts,A},\quad i=\lfloor N/2 \rfloor +1,\ldots,N.
% $$
% In the second step, we use the fitted instruments $\widehat W_{it,A}$ from step 1 and the sub-sample B to compute
% \begin{equation*}
% \widehat \theta_{B,A} = \bigg( \sum_{i=\lfloor N/2\rfloor +1}^N \sum_{t={\red 2}}^T \widehat{\Delta \widetilde X_{it,A}} \Delta \widetilde X_{it,A}^\top \bigg)^{-1} \sum_{i=\lfloor N/2\rfloor +1}^N\sum_{t={\red 2}}^T \widehat{\Delta \widetilde X_{it,A}} \Delta \widetilde Y_{it}.
% \end{equation*}
% In a similar fashion, we attain $\hat \theta_{A,B}$ by reversing the roles of A and B. At last, averaging the two estimators yields the final estimate
% \begin{equation*}
% \widehat{\theta}_{SS}= (\widehat{\theta}_{A,B}+ \widehat{\theta}_{B,A})/2.
% \end{equation*}

\end{Definition}

\begin{Remark}[$K$-Fold and Multiple Splitting] The above cross-fitting procedure can be further generalized with $K$-fold sample splitting (e.g. $K=5$). Each of the $K$ sub-samples is used as the main sample for estimating \eqref{est.SS} while the rest form the auxiliary sample to fit the LASSO estimate in \eqref{LASSO.SS}. The resulting $K$ estimates corresponding to the different partitions are averaged. The FOD transformation and cross-section demeaning are taken within the main and auxiliary samples. Moreover, since the ordering of the cross-section units is arbitrary by the independence assumption, we recommend repeating the procedure for multiple splits by randomly permuting the index $i$ across units and aggregate the estimates by averaging or taking the median across permutations. The use of multiple sample splits makes the estimator invariant to the ordering of the cross-section units. \qed
    \end{Remark}

\begin{Remark}[Comparison with SSIV] AB-LASSO-SS has two main differences with respect to the split-sample IV (SSIV) estimator of \cite{angrist1995split} applied to a dynamic panel model. First, we use LASSO instead of %ordinary least squares (OLS)
OLS in the first step to project the endogenous regressors on the instruments. %{\color{red}In numerical simulations, we find that using OLS in the first stage produces biased estimators of the parameters of interest, even when we combine it with sample splitting, see Tables \ref{table:olst1} and \ref{table:olst2} in the Appendix.}
Second, we use cross-fitting to improve efficiency and employ multiple sample splits to ensure robustness to the choice of sample split.  \qed
\end{Remark}

\subsection{General Model with Many Exogenous Covariates}\label{debias}
In many empirical panel applications, researchers augment dynamic specifications with a rich set of additional covariates to strengthen identification and improve robustness. Examples include dynamic treatment effect models with many policy indicators and firm-level panels with extensive macroeconomic controls.  
%These covariates may include macroeconomic controls, policy indicators, or industry--time interactions, as well as  higher-order lags of the dependent variable. 
In modern applications, the number of such controls can be large relative to the sample size, especially when flexible specifications or many interaction terms are considered. To reflect this practice, we extend the basic model in \eqref{main:model} by including additional covariates:
\begin{equation}\label{main:model2}
Y_{it} = \alpha_i + \gamma_t + X_{it}^{\top}\theta^0 + \varepsilon_{it}, \quad X_{it} := (D_{it}, C_{it}^\top, X_{2,it}^{\top})^{\top}, %\quad k+1 \leq t \leq T,
\end{equation}
where $X_{2,it}$ is a possibly high-dimensional vector of covariates that is independent over $i$ and satisfies
$\E(\Delta\widetilde X_{2,it} \Delta\widetilde\varepsilon_{it}) = 0$. 
The leading cases of such covariates are strictly exogenous variables with respect to $\varepsilon_{it}$.
%{\red and lagged predetermined covariates such as second and higher lags of the dependent variable}.
Denote the dimension of $X_{2,it}$ by $d_2$.
%We consider two cases dependending on $d_2$, the dimension of $X_{2,it}$. The low-dimensional case arises when $X_{2,it}$ has few components such that we can treat $d_2$ as fixed in the asymptotic analysis. In this case, $X_{2,it}$ can be partialled-out in a similar fashion to the time effects without introducing any bias. We do not discuss this case further.
The high-dimensional case arises when $d_2$ is large relative to the sample size $n=NT$ such that it is more appropriate to treat $d_2$ as increasing in the asymptotic analysis. 

%For this case, we propose a debiasing procedure to partial out the effect of $X_{2,it}$.

% {\color{red}
% This framework is particularly useful in applications where researchers wish to include many plausibly exogenous controls while maintaining valid inference for a small set of structural parameters. Examples include dynamic treatment effect models with rich policy controls, firm-level panels with extensive macro covariates, and macro panels with many lags or interaction terms. Our approach differs from standard IV-LASSO procedures in that we do not select instruments for a high-dimensional structural parameter; instead, we construct orthogonalized instruments that partial out high-dimensional exogenous controls while targeting inference on a fixed-dimensional parameter of interest within a dynamic panel setting with unit and time effects.
% }

%Examples of such covariates include strictly exogenous variables and lagged predetermined covariates including second and higher lags of the dependent variable when $d$ is fixed (low-dimensional covariate case) or when $d$ diverges with $N$ and $T$ (high-dimensional covariates case). In what follows, we shall include $Y_{i,t-2}, \cdots, Y_{i,t-k}$ in the vector $C_{it}$.

\begin{Remark}[Strictly Exogenous Covariates]\label{SEC} If $X_{2,it}$ includes strictly exogenous covariates, then there are additional moment conditions that can be used to estimate $\theta^0$. In particular,
$$
\E (X_{2,is}^{se} \Delta \widetilde \varepsilon_{it}) = 0, \text{ for all } 1 \leq s \leq T, \quad 1\leq t\leq T-1,
$$
where $X_{2,is}^{se}$ is the subset of strictly exogenous covariates of $X_{2,it}$. These additional moment conditions can be incorporated to step 1 of AB-LASSO.  \qed
\end{Remark}

When the dimension of the additional covariates is small, their effects can be removed using standard projection arguments together with the unit and time fixed effects. However, when the number of covariates is large relative to the sample size, classical partialling-out is no longer feasible. In particular, direct estimation of the full parameter vector by least squares becomes ill-posed, and naive regularization (e.g., plugging in LASSO estimates) introduces shrinkage bias that contaminates inference on the low-dimensional parameters of interest. To address this issue, we treat the coefficients on the high-dimensional covariates as nuisance parameters and focus inference on a small set of structural parameters. Our strategy is to construct orthogonal moment functions that are insensitive, to first order, to regularization error in the estimation of the nuisance parameters. The key idea is to build time-specific orthogonalized instruments for the endogenous regressors that are strongly correlated with the components of interest and orthogonal to the high-dimensional covariates. This orthogonality ensures that small estimation errors in the nuisance component do not affect the asymptotic distribution of the estimator of the parameters of interest.

To formally explain this partialling-out procedure, it is convenient to rewrite the extended model \eqref{main:model2} as:
$$Y_{it} = \alpha_i + \gamma_t + X_{1,it}^{\top}\theta^0_1 + X_{2,it}^{\top}\theta^0_2 + \varepsilon_{it}, \quad X_{1,it} := (D_{it}, C_{it}^\top)^{\top},$$
where $\theta_1^0\in\R^{d_1}$ and $\theta_2^0\in\R^{d_2}$. Denote $d=d_1+d_2$, where $d_1$ is fixed and $d_2$ is growing with $n$. Assume the sparsity assumption $|\theta_2^0|_0=\smallO(n)$. %Suppose that $X_{2,it}$, which includes more lags (beyond 1) of $Y_{it}$ and other controls, satisfies $\E(\Delta\widetilde X_{2,it}\Delta\widetilde\varepsilon_{it})=0$.
The moment functions are given by
$$g_{it}(\theta_1,\theta_2) = \E\big\{(\Delta\widetilde Y_{it} - \Delta\widetilde X_{1,it}^\top\theta_1 - \Delta \widetilde X_{2,it}^\top\theta_2)U_{it}\big\},$$
where $U_{it}=(U_{it}^{0\top}, \Delta \widetilde X_{2,it}^\top)^\top$, and $U_{it}^0$ ($d_1\times1$) contains the most informative IVs for $\Delta\widetilde X_{1,it}$.

To construct orthogonalized instruments in the high-dimensional setting, we seek a time-specific weighting matrix $\mathcal W_t$ ($d\times d_1$) such that the transformed instruments $\mathcal W_t^\top U_{it}$ remain strongly correlated with $\Delta \widetilde X_{1,it}$, to preserve identification of $\theta_1^0$, and are orthogonal to $\Delta \widetilde X_{2,it}$, so that the influence of the high-dimensional nuisance component is removed. Formally, we aim to choose $\mathcal W_t$ so that
$\E \big\{\Delta \widetilde X_{2,it} (\mathcal W_t^\top U_{it})^\top \big\} = 0$, 
while ensuring that $\E \big\{ \Delta \widetilde X_{1,it} (\mathcal W_t^\top U_{it})^\top \big\}$ has full rank $d_1$.
% For each $t$, we wish to construct instruments for $\Delta\widetilde X_{1,it}$ from $U_{it}$, $\mathcal W_t^\top U_{it}$, that are orthogonal to $\Delta \widetilde X_{2,it}$. We can achieve this goal by finding a weighting matrix $\mathcal W_t$ ($d\times d_1$) such that
% $${\E} \Delta\widetilde X_{2,it} (\mathcal W_t^\top U_{it})^\top=0,$$
% and $\E \Delta\widetilde X_{1,it} (\mathcal W_t^\top U_{it})^\top$ is of rank $d_1$. 
This problem can be solved by Dantzig selector:
\begin{align}\label{penalize}
&\widehat{\mathcal W}_{t}=\arg\min_{\mathcal W_{t}}|\mathcal W_{t}|_{1,1} \quad\text{ subject to }\notag\\
&\left|N^{-1}\sum_{i=1}^N\left\{\begin{pmatrix}
\Delta \widetilde X_{1,it}\\
\Delta\widetilde  X_{2,it}
\end{pmatrix}
\widehat{U}_{it}^{\top}\right\}\mathcal W_{t}-\mathbf I_{d\times d_1} \right|_{\max}\leq\ell_{t},
\end{align}
where we have replaced $U_{it}^0$ by the LASSO predictions, i.e. $\widehat U_{it}=(\widehat {\Delta \widetilde X_{1,it}^\top}, \Delta \widetilde X_{2,it}^\top)^\top$, and $\mathbf I_{d\times d_1}$ represents the $d\times d_1$ sub-matrix of the $d\times d$ identity matrix. %$\widehat U_{it}=(\widehat{\Delta \widetilde D_{it}},\widehat{\Delta \widetilde C_{it}}^\top,\Delta \widetilde X_{2,it}^\top)^\top$.
Then, the instrument for $\Delta\widetilde X_{1,it}$ is $\widehat{\mathcal W}_{t}^{\top}\widehat{U}_{it}$ and the estimator of the parameters of interest becomes:\footnote{When $X_{2it}$ includes second or higher lags of the dependent variable the summation over $t$ in \eqref{diverged1} needs to be modified to include only the observed values of $\widehat U_{it}$. See Remark \ref{Remark:ic} for a related discussion.}
\begin{equation}\label{diverged1}
\widehat{\theta}_{1}=\bigg(\sum_{i=1}^N\sum_{t=1}^{T-1}\widehat{\mathcal W}_{t}^{\top}\widehat{U}_{it}{\Delta\widetilde X}_{1,it}^{\top}\bigg)^{-1}\bigg(\sum_{i=1}^N\sum_{t=1}^{T-1}\widehat{\mathcal W}_{t}^{\top}\widehat{U}_{it}\Delta\widetilde Y_{it}\bigg).
\end{equation}

Conceptually, in the low-dimensional case this orthogonalization reduces to a standard projection onto the orthogonal complement of the nuisance score. In the high-dimensional case, however, the projection cannot be computed exactly. We therefore approximate it by solving a constrained $\ell_1$-minimization problem that delivers a sparse weighting matrix. This step can be viewed as constructing an approximate Neyman-orthogonal score tailored to the panel structure of the model. The resulting estimator achieves valid inference for $\theta_1^0$ even when the dimension of the additional covariates grows with the sample size, provided the nuisance parameters are sufficiently sparse.

\section{Main Theorems}\label{mainmm}
In this section, we present the theoretical foundation of the proposed estimator.
We begin with the basic model \eqref{main:model}, which is a special case of the general model \eqref{main:model2}, %to build intuition before presenting results for the extended setting. We will start by demonstrating some results related to the basic model.
and establish the main results for this setting before turning to the extended model. 
%In particular, we will first address the model without the time effect $\gamma_t$ and then we discuss how the theory adapts in presence of $\gamma_t$.
Throughout this section, we impose the following conditions on the data generating processes.

\begin{Assumption}[Data Generating Processes]\label{a1}
The process $X_{it}\in\R^d$ is trend-stationary over $t$ and i.i.d.\ over $i$, conditional on any variables that do not change over $i$ or over $t$, that is, any individual and time effects, while the disturbance $\vps_{it}$ is stationary over $t$ and i.i.d.\ over $i$ conditional on the same variables. Both processes admit the representations: $X_{it}=F_{it}(\ldots,\xi_{i,t-1},\xi_{it})$ and $\vps_{it}=g_{it}(\ldots,\zeta_{i,t-1},\zeta_{it})$, where $F_{it}(\cdot)=(f_{it,1}(\cdot),\ldots,f_{it,d}(\cdot))^\top$, $F_{it}$ and $g_{it}$ are measurable functions, and $\xi_{it},\zeta_{it}$ for $t\in\mathbb Z, i\in\mathbb N$, are i.i.d.\ random elements.
\end{Assumption}

We allow for overlap in the innovations $\xi_{it}$ and $\zeta_{it}$, as long as the exogeneity conditions specified in Section \ref{sec:set} are satisfied, i.e., $\E(X_{is}\varepsilon_{it}) = 0$, for all $1 \leq s \leq t$.  Note that Assumption \ref{a1} allows for certain forms of non-stationarity in the process for  $X_{it}$ such as unit-specific deterministic time trends and unrestricted time effects. For example, the policy variables of the empirical application in Section \ref{app} can be modeled as unit-specific deterministic time trends. It does rule out, however, other forms of stochastic non-stationarity such as unit roots. The following definition, along with Assumptions \ref{a1} and \ref{a2}(i) below, adapts the functional dependence measure proposed by \citet{wu2005nonlinear} for stationary time series processes to heterogeneous panel data processes.

\begin{Definition}[Dependence Adjusted Norm]\label{dep}
For each $k=1,\ldots,d$, let $$X_{it,k}^{\ast}(\ell)=f_{it,k}(\ldots,\xi^\ast_{i,t-\ell},\ldots,\xi_{it}),$$ where $\xi_{i,t-\ell}$ is replaced by an i.i.d.\ copy $\xi^\ast_{i,t-\ell}$. For $r\geq1$, define the functional dependence measure $\delta_{it,k,r}(\ell) \defeq\|X_{it,k}^{\ast}(\ell) - X_{it,k}\|_r$, which measures the dependency of $\xi_{i,t-\ell}$ on $X_{it,k}$. Additionally, define $\Delta_{k,r,m}\defeq \sum\limits_{\ell=m}^\infty\max\limits_{1\leq i\leq N,1\leq t\leq T}\delta_{it,k,r}(\ell)$, which measures the cumulative effects for all $\ell\geq m$ and is uniform over $i$ and $t$. Moreover, the dependence adjusted norm of $X_{it,k}$ is introduced by $\|X_{\cdot,k}\|_{r,\varsigma}\defeq\sup_{m\geq0}(m+1)^{\varsigma}\Delta_{k,r,m}$, where $\varsigma>0$.\footnote{Assumption \ref{a1} presumes that conditioning on the individual and time effects $\{\alpha_1,\ldots,\alpha_N,\gamma_1,\ldots,\gamma_T\}$ is equivalent to conditioning on $\{\alpha_i,\gamma_t\}$. %, provided that $\{\alpha_i\}_{i=1}^N$ is independent across $i$ and $\{\gamma_t\}_{t=1}^T$ is independent over $t$.
As a result, the functional dependence measure $\delta_{it,k,r}(\ell)$ is a random function of $\alpha_i$ and $\gamma_t$, and we therefore define $\Delta_{k,r,m}$ (as well as the dependence adjusted norm) as a uniform measure over $i$ and $t$.}
\end{Definition}

% {\red General definition for possibly non-stationary time series}: For $k=1,\ldots,d$, let $X_{it,k}^{\ast}(\ell)=f_{ik}(\ldots,\xi^\ast_{i,t-\ell},\ldots,\xi_{it})$ where $\xi_{i,t-\ell}$ is replaced by its i.i.d. copy $\xi^\ast_{i,t-\ell}$. The functional dependence measure is denoted by $\delta_{it,k,r}(\ell) \defeq\|X_{it,k}^{\ast}(\ell) - X_{it,k}\|_r$ and define $\Delta_{k,r,m}\defeq \max_{1\leq i\leq N}\max_{1\leq t\leq T} \sum_{\ell=m}^\infty\delta_{it,k,r}(\ell)$ which measure the cumulative effects. The dependence adjusted norm of $X_{it,k}$ is introduced by  $\|X_{\cdot,k}\|_{r,\varsigma}\defeq\sup_{m\geq0}(m+1)^{\varsigma}\Delta_{k,r,m}$.

% {\color{red}
% % It is worth noting that we assume that the dependence adjusted norm
% % conditioning on $\sigma(\alpha_1, \cdots, \alpha_N)$ is equivalent to conditioning on $\alpha_i$.
% % The dependence adjusted norm is thus a random function of $\alpha_i$ and shall be  assumed to be uniformly bounded over $i$.
% It is worth noting that in Assumption \ref{a1}, conditional on the individual and time effects $\{\alpha_1,\ldots,\alpha_N,\gamma_1,\ldots,\gamma_T\}$ is equivalent to conditional on $\{\alpha_i,\gamma_t\}$. %, provided that $\{\alpha_i\}_{i=1}^N$ is independent across $i$ and $\{\gamma_t\}_{t=1}^T$ is independent over $t$.
% Therefore, in Definition \ref{dep}, the functional dependence measure $\delta_{it,k,r}(\ell)$ is a random function of $\alpha_i$ and $\gamma_t$, and we define $\Delta_{k,r,m}$ (as well as the dependence adjusted norm) as a uniform measure over $i$ and $t$.
% }

\begin{Assumption}[Data Generating Processes, Continued]\phantomsection
\label{a2}
\begin{enumerate}
    \item[(i)] For each $k=1,\ldots,d$, assume that $\|X_{\cdot,k}\|_{r,\varsigma}<\infty$ for some $r\geq4$, $\varsigma>0$, and
    $$\|X_{\cdot,k}\|_{\psi_\nu,\varsigma}\defeq \sup_{r\geq2}r^{-\nu}\|X_{\cdot,k}\|_{r,\varsigma}<\infty, \text{ for some } \nu\geq0,\varsigma>0.$$
    Specifically, $\|X_{\cdot,k}\|_{\psi_\nu,\varsigma}$ is the dependence adjusted sub-Gaussian or sub-exponential norm, with $\nu$ taking values of 1/2 or 1, respectively.
    % \item[(ii)] Let $X_{it}^m\defeq\E(X_{it} \mid \xi_{i,t-m},\zeta_{i,t-m},\ldots,\xi_{it},\zeta_{it})$ for $m\in\mathbb N$, and assume that for some $r\geq4$,
    % $$\max_{1\leq i\leq N,1\leq t\leq T} \|X_{it}-X_{it}^m\|_{r}\leq m^{-\varphi}\vartheta, \text{ for some }\vartheta>0,\varphi\geq0.$$
    \item[(ii)] $\vps_{it}$ is a martingale difference sequence (m.d.s.) over $t$ with respect to the filtration $\mathcal{F}_{it}=\{(X_{is})_{s=1}^{t},(Y_{is})_{s=1}^{t-1}\}$, i.e. $\E(\vps_{it}\mid \mathcal F_{it})=0$, and has constant unconditional variance. There exists a constant $\bar\sigma>0$ such that $\E(\vps_{it}^2\mid \mathcal F_{it})\leq\bar\sigma^2$ for all $i=1,\ldots,N,t=1,\ldots,T$. %\footnote{{\red We note that the m.d.s. condition can be further generalized by considering the reduced filtration $\mathcal{F}_{it}=\{(X_{is})_{s=1}^{t-1},(Y_{is})_{s=1}^{t-1}\}$.}}
    An analogous assumption to part (i) for $X_{it,k}$ holds for $\vps_{it}$.
    \item[(iii)] The sub-Gaussian norms $\max\limits_{1\leq k\leq d}\|X_{it,k}\|_{\psi_{1/2}}<\infty$, and $\|\varepsilon_{it}\|_{\psi_{1/2}}<\infty$, for all $i=1,\ldots,N,t=1,\ldots,T$.
    \item[(iv)] For each component $k=1,\ldots,d$, the degree of predeterminedness is summable with respect to the lag, that is, 
    $\sum\limits_{j=1}^{T-1}\sup\limits_{1\leq i\leq N,1\leq t\leq T-j}|\E(X_{i,t+j,k}\vps_{it})|<\infty$.
\end{enumerate}
\end{Assumption}

%Consider an AR(1) process: $X_t=\alpha X_{t-1} + \xi_t=\sum_{\ell=0}^\infty\alpha^\ell\xi_{t-\ell}$, where the assumptions in part (i) and (ii) are satisfied, with $|\alpha|<1$ and sub-Gaussian error $\xi_t$.
Example \ref{ex:ar1} provides an example of a univariate heterogeneous linear process that satisfies Assumption \ref{a2}(i). % and (ii).
In Example \ref{dyna}, we verify Assumptions \ref{a1}--\ref{a2} for a basic panel AR(1) model.

\begin{Example}[Heterogeneous Linear Process]\label{ex:ar1} Assume that $X_{it}$ is univariate. For each $i = 1,\ldots, N$, consider the linear process: %\textcolor{blue}{(If $X_{it}$ is iid $i$, should we consider different coefficients of $i$?) $X_{it}$ is iid $i$ is i.i.d. conditional on individual effects}
\begin{equation*}
X_{it} = \sum_{\ell \geq 0} a_{i\ell} \xi_{i,t-\ell}, \quad 1 \leq t \leq T,
\end{equation*}
where the coefficients $a_{i\ell}$ can be heterogeneous over $i$ and $\ell$, and satisfy $|a_{i\ell}|\leq |c|^\ell$, for some $|c|<1$ and %$\sum_{\ell \geq 0} |\alpha_{i\ell}| < \infty $
for all $i$ and $\ell$. The unobservable $\xi_{it}$'s are sub-Gaussian random variables that are i.i.d.\ over $i$ and $t$, and have finite $r$th-moment for some $r\geq4$.
It follows that
\begin{align*}
% &\delta_{it, r} = \|X_{it}^\ast - X_{it}\|_r = \|a_{it}\xi_{i0}^\ast-a_{it}\xi_{i0}\|_r=|a_{it}|\|\xi_{i0}^\ast-\xi_{i0}\|_r,\\
% %\leq \max_{1 \leq i \leq N} |\alpha_{it}|
% &\Delta_{r,m} =\max_{1\leq i\leq N}\sum_{t\geq m}\delta_{it,r} %= \|\xi_{i0}^\ast-\xi_{i0}\|_r\max_{1\leq i\leq N}\sum_{t\geq m}|\alpha_{it}|
% \leq\|\xi_{i0}^\ast-\xi_{i0}\|_r\sum_{t\geq m}|c|^t\propto|c|^m,\\
&\delta_{it, r}(\ell) = \|X_{it}^\ast(\ell) - X_{it} \|_r = \|a_{i\ell}\xi_{i,t-\ell}^\ast-a_{i\ell}\xi_{i,t-\ell}\|_r=|a_{i\ell}|\|\xi_{i,t-\ell}^\ast-\xi_{i,t-\ell}\|_r,\\
%\leq \max_{1 \leq i \leq N} |\alpha_{it}|
&\Delta_{r,m} =\sum_{\ell\geq m}\max_{1\leq i\leq N,1\leq t\leq T}\delta_{it,r}(\ell) %= \|\xi_{i0}^\ast-\xi_{i0}\|_r\max_{1\leq i\leq N}\sum_{t\geq m}|\alpha_{it}|
\leq\sum_{\ell\geq m}|c|^\ell\max_{1\leq i\leq N,1\leq t\leq T}\|\xi_{i,t-\ell}^\ast-\xi_{i,t-\ell}\|_r\propto|c|^m,\\
&\|X_{\cdot}\|_{r,\varsigma} = \sup_{m\geq0}(m+1)^\varsigma\Delta_{r,m}<\infty, \text{ for some }r\geq4,\varsigma>0,\\
&\|X_{\cdot}\|_{\psi_{1/2},\varsigma}=\sup_{r\geq2}r^{-\nu}\|X_{\cdot,k}\|_{r,\varsigma}<\infty, \text{ for some }\varsigma>0.
%\leq  \max_{1 \leq i \leq N} \sum_{t\geq m} |\alpha_{it}|,
\end{align*}
%and $\Delta_{r,m} \leq  \max_{1 \leq i \leq N} \sum_{t\geq m} |\alpha_{it}| $.
% \textcolor{blue}{Finite sum $\sum_{t\geq m}|\alpha_{it}|$ is not sufficient, it should decay with m, e.g. $|a_{it}|\leq |c|^t$, $|c|<1$? To make the second part in (i) holds we also need e.g. sub-Gaussian shocks.}

% As for part (ii), consider the case of $m=1$. It can be seen that
% $$\max_{1\leq i\leq N,1\leq t\leq T}\|X_{it}-X_{it}^m\|_{r}=\max_{1\leq i\leq N,1\leq t\leq T}\Big\|\sum_{\ell \geq 2}a_{i\ell} \xi_{i,t-\ell}\Big\|_r\lesssim \max_{1\leq i\leq N}\sum_{\ell \geq 2}|a_{i\ell}|<\infty.$$

In the special case of a heterogeneous over $i$ and stationary over $t$ AR(1) process: $X_{it} = \beta_i X_{i,t-1} + \xi_{it}$, with $|\beta_i|<1$, we have $a_{it} = \beta_{i}^{t}$, and $\Delta_{r,m} \propto \max\limits_{1\leq i\leq N}|\beta_i|^m$. \qed
%$\Delta_{1,r,m} \lesssim \max_{1 \leq i \leq N} (1-a_i)^{-1}$.
\end{Example}

\begin{Example}[Panel AR(1) model]\label{dyna}
Consider a panel AR(1) model:
\begin{equation}\label{maindyna}
Y_{it}=\alpha_i+\gamma_t + \theta_1^0 Y_{i,t-1}+\theta_2^0 D_{it}+\varepsilon_{it}, \quad |\theta_1^0|<1.
\end{equation}
Assume that $\{(D_{it},\varepsilon_{it}):1\leq t\leq T\}$ are independent across $i$, $\varepsilon_{it}$ has zero mean and is independent over $t$, and $D_{it}$ is predetermined with respect to $\varepsilon_{it}$ such that  $\E(D_{it}\varepsilon_{is})=0$ for $t\leq s$. If $D_{it}$ follows the heterogeneous linear process introduced in Example \ref{ex:ar1}, recursive substitution yields
\begin{align*}
Y_{it}&=\frac{\alpha_i}{1-\theta_1^0}+\sum_{s\geq0}(\theta_1^0)^s(\gamma_{t-s} + \theta_2^0 D_{i,t-s}+\varepsilon_{i,t-s})\\
&=\frac{\alpha_i}{1-\theta_1^0}+\sum_{s\geq0}(\theta_1^0)^s\bigg\{\gamma_{t-s} +\theta_2^0 \bigg(\sum_{l\geq 0} a_{il} \xi_{i,t-s-l}\bigg)+\varepsilon_{i,t-s}\bigg\}.
\end{align*}
Conditional on $\alpha_i$ and $\gamma_t,\gamma_{t-1},\ldots$, $Y_{it}$ is stationary over $t$ and can be expressed as a measurable function of the i.i.d.\ innovations $(\ldots,\xi_{i,t-1},\varepsilon_{i,t-1},\xi_{it},\varepsilon_{it})$, thereby satisfying Assumption \ref{a1}.%\textcolor{red}{[IFV: should we also condition on $\gamma_t,\gamma_{t-1},\ldots$?]}

To verify the assumptions on the dependence adjusted norm, consider a change in the innovations at period $t-\ell$. Consequently,
\begin{align*}
\|Y_{it}^\ast(\ell) - Y_{it} \|_r & \leq |\theta_2^0|\sum_{s=0}^\ell|\theta_1^0|^s |a_{i,\ell-s}|\|\xi_{i,t-\ell}^\ast-\xi_{i,t-\ell}\|_r + |\theta_1^0|^\ell\|\varepsilon_{i,t-\ell}^\ast-\varepsilon_{i,t-\ell}\|_r\\
&\leq |\theta_2^0|\sum_{s=0}^\ell|\theta_1^0|^s\|\xi_{i,t-\ell}^\ast-\xi_{i,t-\ell}\|_r + |\theta_1^0|^\ell\|\varepsilon_{i,t-\ell}^\ast-\varepsilon_{i,t-\ell}\|_r,
\end{align*}
given that $|a_{il}|\leq |c|^l$ for some $|c|<1$. Suppose the innovations have finite $r$th-moment for some $r\geq4$, by arguments similar to Example \ref{ex:ar1}, we obtain $\|Y_{\cdot}\|_{r,\varsigma}<\infty$ and $\|Y_{\cdot}\|_{\psi_{1/2},\varsigma}<\infty$ for some $r\geq4,\varsigma>0$. Assumption \ref{a2}(i) is verified. Assumptions \ref{a2}(ii)-(iii) hold in this simple example provided that $\xi_{it}$'s and $\varepsilon_{it}$'s are i.i.d.\ sub-Gaussian random variables. Lastly, regarding Assumption \ref{a2}(iv), when the covariates are lags of $Y_{it}$, the condition holds naturally given $|\theta_1^0|<1$. For other exogenous covariates, it follows from 
$$\sum_{j=1}^{T-1}\sup_{1\leq i\leq N,1\leq t\leq T-j}\bigg|\sum_{\ell\geq 0}a_{i\ell}\E(\xi_{i,t+j-\ell}\vps_{it})\bigg|<\infty,$$
which requires that the overlap in the innovations is summable with respect to the lag.\qed
\end{Example}

%{\red \st{We note that the m.d.s. assumption mentioned above is stronger than the uncorrelated errors as assumed in \mbox{\citet{arellano1991some}}.} }
The m.d.s.\ condition in Assumption \ref{a2}(ii) aligns with the standard large $T$ panel literature; see, for example,  \citet{alvarez2003time} and \citet[page 145]{arellano2003panel}.
%{\color{red}However, it's noteworthy that our theoretical analysis does not heavily rely on this assumption. 
Using standard techniques such as $m$-dependent approximation and blockwise time series analysis, as in \cite{chen2022inference}, the setting can be generalized. %and obtain the same convergence rate without further theoretical advancements.} 
Additionally, for practitioners, we note that the sub-Gaussian conditions in Assumption \ref{a2}(iii) rule out heavy-tail distributions for $X_{it}$ and $\varepsilon_{it}$. However, this assumption is not critical for our analysis. They can be relaxed to a polynomial tail conditions with more demanding rate assumptions.

\subsection{Basic Model: Consistency of Step 1}
\label{firststep}
We will first demonstrate the consistency property of the LASSO estimator $\widehat\Pi_t$, which is obtained in step 1 of AB-LASSO by \eqref{LASSO}. For this purpose, a few definitions and assumptions are introduced as follows.

For each component $W_{it}$ of $\Delta\widetilde X_{it}$, denote the $N\times1$ vector containing $(W_{it})_{i=1}^N$ by $\boldsymbol{W}_t$.
%For simplicity let $D_{it}$ to be one variable.
%Let $V_{it}= [Y_{i,t-k-1}, D_{it-k}, \cdots, Y_{i1}, D_{i2},D_{i1}]$ as a $[(t-k-1)*2 +1]\times 1$, ({define $\ell_t = (t-k-1)*2+1$ dimensional  vector}), and $V_{t}$ as $[(t-k-1)*2+1] \times N$.
Recall that $V_{it} = (1,X_{i1}^\top,\ldots,X_{it}^\top)^\top$. Denote the dimension of $V_{it}$ as $m_t$, which is the number of instruments for each time period $t$. We further stack $V_{it}^\top$ by rows for all $i=1,\ldots,N$, to create the $N\times m_t$ matrix $\boldsymbol{V}_t$. For each $t=1,\ldots,T-1$, define the linear projection for step 1:
$$\bm W_t = \bm V_t\Pi_t^{0}+ \bm\eta_t,$$
with $\Pi_t^{0} \defeq \bar\E(V_{it}V_{it}^{\top})^{-1}\bar\E(V_{it}W_{it})$, and $\boldsymbol{\eta}_t$ is an $N\times1$ vector of errors $(\eta_{it})_{i=1}^N$. %\textcolor{red}{[IFV: i think the independence assumption is problematic when time effects have been partialed-out. is ot needed?]} 
Assumptions \ref{a1}--\ref{a2} guarantee that each component in $V_{it}$ satisfies the finite moment conditions $\E|V_{it,k}|^{2r}<\infty$ and $\E|V_{it,k}\eta_{it}|^r<\infty$, for some $r\geq2$, where $k=1,\ldots,m_t$, $i=1,\ldots,N$, $t=1,\ldots,T-1$.

In addition to the population OLS coefficients $\Pi_t^0$, %given $0<s_t^*\leq m_t$,
we consider a sparse approximation $\Pi_t^*= \arg\min\limits_{|\Pi_t|_0\leq s_t^*}\bar\E|V_{it}^\top(\Pi_t- \Pi_t^0)|^2$, where $s_t^*$ (such that $0<s_t^*\leq m_t$) characterizes the sparsity of $\Pi_t^*$. %The oracle order of $s_t^*$ 
This sparsity level is determined by the degree of temporal dependency in the data. %which we will discuss in specific cases later.
 For instance, using a special case of the panel AR(1) model from Example \ref{dyna}, we demonstrate that the coefficients in $\Pi_t^0$ exhibit geometric decay governed by the autoregressive coefficient $\theta_1^0$, which leads to the approximately sparse structure of the moment conditions.

\begin{Proposition}[Approximate sparsity of step 1 coefficients]\label{prop:decayK0}
Recall the panel AR(1) model in Example \ref{dyna}, excluding the time effects $\gamma_t$. Assume in addition that %$D_{it}$ and $\varepsilon_{it}$ are stationary processes independent of $\alpha_i$, and that
$D_{it}$ is i.i.d.\ over $t$ and predetermined in the sense that $\E(D_{it}\varepsilon_{i,t-1})\neq 0$ and $\E(D_{it}\varepsilon_{is})=0$ for all $s\neq t-1$.
% Then, there exists a constant $C<\infty$ such that the coefficients in $\Pi_t^0$ decay geometrically:
% $$|\Pi^0_{t,k}|\leq C|\theta_1^0|^{|t-1-k|}, \quad k=1,\ldots,m_t,\,t=1,\ldots, T-1.$$
Then $\Pi_t^0=(\pi_{t0}^0,\pi_{t1}^{0\top},\ldots,\pi_{tt}^{0\top})^\top$ satisfies $|\pi_{ts,k}^0|\lesssim |\theta_1^0|^{t-s}$ for $s=1,\ldots,t$ and $k=1,2$. Hence, $\Pi_t^0$ is approximately sparse.
\end{Proposition}
%IFV: can we replace the assumption $D_{it}$ is i.i.d.\ over $t$  by Assumption 3.2(iv)?}
We illustrate the sparsity structure in Proposition \ref{prop:decayK0} with a special case for analytical tractability. %In Example \ref{dyna}, $W_{it}$ has two components, $\Delta Y_{i,t-1}$ and $\Delta D_{it}$.
In this example, the projection coefficients %of $\Delta Y_{i,t-1}$ decay geometrically and therefore are approximately sparse, whereas the projection coefficients of $\Delta D_{it}$ 
are zero for regressors beyond %the first lag 
$X_{it}=(D_{it},Y_{i,t-1})^\top$ and are therefore exactly sparse. This exact sparsity arises from the serial independence of $D_{it}$. If $D_{it}$ is serially correlated, for example follows an ARMA(1,1) process with innovations $v_{it}$ correlated with $\vps_{is}$ for $s<t$, then the projection coefficients %of $\Delta D_{it}$ also 
tail off as the time distance increases and become approximately sparse.
%{\red IFV: i think the old version of the previous paragraph was more clear. Was it something wrong with the old version?}

% in the regression of $\Delta Y_{i,t-1}$ on lagged instruments.
% In this example, the population coefficients in the linear projection of $\Delta D_{i,t}$ onto lagged instruments trivially truncate beyond $X_{it}=(D_{it},Y_{i,t-1})^\top$. \textcolor{blue}{(no time effects in this example, no need for demeaning)}
% More generally, allowing for serial correlation in $D_{it}$, for example, an AR(1) process incorporating feedback from $Y_{i,t-1}$ would also lead to approximately sparse coefficients for this component.
%In this example, the coefficients on $(D_{it},\ldots,D_{i1})^\top$ truncate beyond $D_{i,t-1}$. Allowing for serial correlation in $D_{it}$ would instead lead these coefficients to decay geometrically. Moreover, incorporating feedback from $Y_{i,t-1}$ to $D_{it}$ would imply that the projection of $\Delta\widetilde D_{it}$ is also approximately sparse. Finally, a nonzero mean of $D_{it}$ does not affect the decay structure.

To quantify the approximation error of $\Pi_t^*$ with respect to $\Pi_t^0$ empirically, we consider the prediction norm defined by
\beq\label{cs}
|\Pi_t^* - \Pi_t^0|_{2,N} \defeq N^{-1/2}|\bm V_{t}(\Pi_t^* - \Pi_t^0)|_2=\Big[\frac{1}{N}\sum_{i=1}^N\{V_{it}^\top(\Pi_t^* - \Pi_t^0)\}^2\Big]^{1/2}=:C_{s_t^*}.
\eeq
We shall express all the general rate of convergence results in terms of $C_{s_t^*}$.
%Moreover, the following lemma illustrates the approximate sparsity of the coefficients.
 Under the approximate sparse condition, Theorem \ref{bound} in Appendix \ref{oracle} shows the oracle order $s_t^*\asymp \log N \wedge t$, which represents the optimal solution to the risk minimization problem. Consequently, the corresponding oracle bound for the approximation error follows as $C_{s_t^*}\lesssim_\P \sqrt{(\log N \wedge t)/N}$.
%\begin{lemma}[Oracle Order of $s_t^*$]\label{bound2}
%Under Assumptions \ref{a1}-\ref{a2}, and assuming that $\bar\E(V_{it}^\top\delta_{-J_t}^0)\lesssim c^{-s_t}$ for some constant $c>0$, we can conclude that the optimal $s_t^*$ is bounded as $s_t^*\asymp\log N\wedge t$, and $C_{s_t^\ast} \asymp \sqrt{(\log N \wedge t)/N}$.
%\end{lemma}
%It is worth noting that Assumption $\bar{\mathbb{E}}(V_{it}^\top \delta_{-J_t}^0) \lesssim c^{-s_t}$ is satisfied in many cases of practical interest. We verify this condition for the model specified in Equation~(\ref{maindyna}) (see Appendix~\ref{oracle}).

% {\red In Appendix \ref{oracle}, we show that for a specific model, the oracle order is bounded as $s_t^* \lesssim \log (N \wedge t)$, and since $C_{s_t^*}\lesssim_\P c^{-2s_t^*}+ c^{-s_t^*}N^{-1/2} \sqrt{\log s_t^*} + N^{-1}s_t^* $, we also have $C_{s_t^*} \lesssim_\P \sqrt{\log (N \wedge t) /N}$.}

Let $\Pi_{t,k}^*$ be the $k$-th element of $\Pi_t^*$, $k=1,\ldots,m_t$. Define the indices sets $J_t\defeq\{k\in\{1,\ldots,m_t\}:\Pi_{t,k}^*\neq0\}$ and $J_t^{c}\defeq\{k\in\{1,\ldots,m_t\}:\Pi_{t,k}^*=0\}$. For any $\delta_t\in\R^{m_t}$, let $J_{t,0}\subseteq\{1,\ldots,m_t\}$ be a set of indices with cardinality $|J_{t,0}|\leq s_t^*$, and let $J_{t,1}\subseteq\{1,\ldots,m_t\}$ be the set of indices corresponding to the $s_t^*$ largest in absolute value coordinates of $\delta_t$ outside of $J_{t,0}$. In the case of $s_t^*>m_t-|J_{t,0}|$, it corresponds to the $m_t-|J_{t,0}|$ largest absolute values. Let $J_{t,01}\defeq J_{t,0}\cup J_{t,1}$. Define $\delta_{t,J_t}$ as the sub-vector of $\delta_t$ corresponding to $J_t$, and define $\delta_{t,J_t^{c}}$ and $\delta_{t,J_{t,01}}$ similarly.

To show identification of $\Pi_t^*$, we consider two events (for each $t$) associated with the restricted eigenvalue (RE) conditions, as outlined in Section 3 of \citet{bickel2009simultaneous}. For $c_0>0$, define
\begin{eqnarray*}
{\mathcal{A}_{1t}} &\defeq& \Big\{\min_{\delta_{t}\neq 0, |\delta_{t}|_0\leq s_t^*,|\delta_{t,J_t^c}|_1\leq c_0 |\delta_{t,J_t}|_1 } \frac{|\bm V_t{\delta}_{t}|_{2}}{\sqrt{N}|\delta_{t, J_t}|_2}\geq \kappa_t(c_0,s_t^*)\Big\},\\
{\mathcal{A}'_{1t}} &\defeq& \Big\{\min_{\delta_{t}\neq 0, |\delta_{t}|_0\leq s_t^*,|\delta_{t,J_t^c}|_1\leq c_0 |\delta_{t,J_t}|_1 } \frac{|\bm V_t{\delta}_{t}|_{2}}{\sqrt{N}|\delta_{t, J_{t,01}}|_2}\geq \kappa_t(c_0,s_t^*)\Big\},
\end{eqnarray*}
where $\kappa_t(c_0,s_t^*)$ represents positive constant that depends on $c_0$ and $s_t^*$. In Lemma \ref{idlemma}, we will prove that these events occur with probabilities approaching 1 as $N\to\infty$, for some $\kappa_t(\cdot)>0$ related to the RE condition in population, as per Assumption \ref{33}.

\begin{Assumption}[RE Condition]\label{33}
For any constant $c_0>0$, define the subspace
$$\Omega_t(c_0,s_t^*)\defeq \{\delta_t/|\delta_t|_2:\delta_t\in\R^{m_t},\delta_t\neq0,|\delta_t|_0\leq s_t^*,|\delta_{t,J_t^c}|_1\leq c_0|\delta_{t,J_t}|_1\}.$$ Assume that there exist some positive constants $C_{\min}$ and $C_{\max}$ such that
$$ C_{\min} \leq \min_{1\leq t\leq T-1} \min_{\delta_{t}\in \Omega_t(c_0,s_t^*)} {\delta_{t}^{\top}\bar\E(V_{it}V_{it}^{\top})}\delta_{t} \leq  \max_{1\leq t\leq T-1} \max_{\delta_{t}\in \Omega_t(c_0,s_t^*)} {\delta_{t}^{\top}\bar\E(V_{it}V_{it}^{\top})}\delta_{t} \leq C_{\max}.$$
\end{Assumption}
%To examine the practical validity of the RE condition, we analyze a simple example provided in Example \ref{reexample} in the appendix.

\begin{Lemma}[Identification]\label{idlemma}
Under Assumptions \ref{a1}--\ref{a2}, if Assumption \ref{33} holds with $C_{\min} = \min\limits_{1\leq t\leq T-1}\kappa_t^2(c_0,s_t^*)-\Delta_{N,T}$ and $C_{\max} = \max\limits_{1\leq t\leq T-1}\kappa_t^2(c_0,s_t^*)+\Delta_{N,T}$, where $\min\limits_{1\leq t\leq T-1}\kappa_t(c_0,s_t^*)>0$ and $\Delta_{N,T}\defeq\max\limits_{1\leq t\leq T-1} \sqrt{s_t^*}\log m_t/\sqrt{N}\to0$ as $N,T\to\infty$, then, for each $t$,
\begin{equation*}
\min_{\delta_{t}\neq 0, |\delta_{t}|_0\leq s_t^*,|\delta_{t,J_t^c}|_1\leq c_0 |\delta_{t,J_t}|_1 } \frac{|\bm V_t{\delta}_{t}|_{2}}{\sqrt{N}|\delta_{t,J_t}|_2}\geq \kappa_t(c_0,s_t^*),
\end{equation*}
holds with probability $1-\smallO(1)$, as $N\to\infty$.
\end{Lemma}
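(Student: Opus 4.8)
The plan is to derive the empirical restricted eigenvalue bound from the population condition in Assumption \ref{33} by controlling the deviation between the empirical and population Gram matrices uniformly over the restricted cone. Write $\widehat\Sigma_t \defeq N^{-1}\bm V_t^\top\bm V_t = N^{-1}\sum_{i=1}^N V_{it}V_{it}^\top$ and $\Sigma_t\defeq\bar\E(V_{it}V_{it}^\top)$, so that $|\bm V_t\delta_t|_2^2/N = \delta_t^\top\widehat\Sigma_t\delta_t$. Since $|\delta_{t,J_t}|_2\leq|\delta_t|_2$, the ratio in the statement only increases when the denominator is replaced by $|\delta_t|_2$, so it suffices to prove the stronger bound $\delta_t^\top\widehat\Sigma_t\delta_t\geq\kappa_t^2(c_0,s_t^*)\,|\delta_t|_2^2$ for every $\delta_t$ with $|\delta_t|_0\leq s_t^*$ and $|\delta_{t,J_t^c}|_1\leq c_0|\delta_{t,J_t}|_1$; by homogeneity we may normalize $|\delta_t|_2=1$, i.e. $\delta_t\in\Omega_t(c_0,s_t^*)$.

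First I would split $\delta_t^\top\widehat\Sigma_t\delta_t = \delta_t^\top\Sigma_t\delta_t + \delta_t^\top(\widehat\Sigma_t-\Sigma_t)\delta_t$ and invoke Assumption \ref{33}, which lower bounds the first (population) term by $C_{\min}$ on $\Omega_t$. Everything then reduces to the uniform fluctuation bound
\[
\max_{2\leq t\leq T}\ \sup_{\delta_t\in\Omega_t(c_0,s_t^*)}\big|\delta_t^\top(\widehat\Sigma_t-\Sigma_t)\delta_t\big|\ \leq\ \Delta_{N,T}
\]
holding with probability $1-\smallO(1)$. Granting this, $\delta_t^\top\widehat\Sigma_t\delta_t\geq C_{\min}-\Delta_{N,T}$, and the additive $\Delta_{N,T}$ slack built into the definition of $C_{\min}$ in Assumption \ref{33} yields the stated empirical restricted eigenvalue constant $\kappa_t(c_0,s_t^*)$; matching these constants is routine bookkeeping.

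The core step, and the main obstacle, is the fluctuation bound. Because $\delta_t$ is $s_t^*$-sparse and lies in the cone, I would pass from the supremum over $\Omega_t$ to a maximum over the $\binom{m_t}{s_t^*}$ coordinate subsets $S$ of size $s_t^*$ of the operator norms $\|(\widehat\Sigma_t-\Sigma_t)_{S,S}\|$ of the corresponding submatrices (a cruder alternative is $|\delta_t^\top(\widehat\Sigma_t-\Sigma_t)\delta_t|\leq (1+c_0)^2 s_t^*\,|\widehat\Sigma_t-\Sigma_t|_{\max}$). For a fixed $S$, the entries of $(\widehat\Sigma_t-\Sigma_t)_{S,S}$ are centered cross-sectional averages of the products $V_{it,j}V_{it,k}$, which are independent over $i$ by Assumption \ref{a1}; the sub-Gaussian norms in Assumption \ref{a2}(iii) make each product sub-exponential, while the dependence-adjusted norms in Assumption \ref{a2}(i) furnish the uniformity of these tail bounds over $i$ and $t$. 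A Bernstein/Nagaev-type inequality for independent sub-exponential summands controls each entry, a $\varepsilon$-net argument over the $s_t^*$-dimensional sphere controls each block operator norm, and union bounds over the $\binom{m_t}{s_t^*}\leq m_t^{s_t^*}$ supports and over $t$ produce the $\log m_t$ factors. Tracking the exponents shows the deviation is of order $\sqrt{s_t^*/N}\,\log m_t=\Delta_{N,T}$, the extra logarithmic factor relative to the sub-Gaussian rate $\sqrt{s_t^*\log m_t/N}$ arising from the sub-exponential products, and it vanishes under the hypothesis $\Delta_{N,T}\to0$.

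I expect the delicate points to be twofold: (i) obtaining the fluctuation bound \emph{uniformly} over all $t=2,\dots,T$ with a single $\Delta_{N,T}$, which is precisely why the uniform-in-$(i,t)$ control of the dependence-adjusted norms is needed in place of pointwise moment bounds; and (ii) keeping the union-bound entropy $s_t^*\log m_t$ inside the Gaussian-type regime of the sub-exponential Bernstein inequality, so that the $\sqrt{s_t^*/N}\,\log m_t$ term dominates, which is exactly what the condition $\Delta_{N,T}=\max_{2\le t\le T}\sqrt{s_t^*/N}\,\log m_t\to0$ guarantees. The reduction to $\Omega_t$ and the application of the population condition are immediate; essentially all of the work lies in this concentration argument.
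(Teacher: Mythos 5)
Your proposal is correct and follows essentially the same route as the paper's proof: reduce to the normalized sparse cone $\Omega_t(c_0,s_t^*)$ (using $|\delta_{t,J_t}|_2\le|\delta_t|_2$), then compare the empirical quadratic form $N^{-1}|\bm V_t\delta_t|_2^2$ with the population Gram form, which Assumption \ref{33} bounds below by $C_{\min}$, and control the fluctuation uniformly at rate $\Delta_{N,T}$ via an $\varepsilon$-net over the sparse sphere together with a concentration inequality for the independent-over-$i$ sub-exponential summands. The only cosmetic difference is the concentration tool: the paper applies its empirical-process bound (Lemma \ref{emp}) to $\{\pi(\delta_t)^\top V_{it}\}^2-\bar\E\{\pi(\delta_t)^\top V_{it}\}^2$ over the net of cardinality $\lesssim\binom{m_t}{s_t^*}(c/\varepsilon)^{s_t^*}$, whereas you use Bernstein plus union bounds over supports, and both deliver the same $\sqrt{s_t^*/N}\log m_t$ deviation.
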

Lemma \ref{idlemma} shows that $\P(\mathcal A_{1t})\to1$ as $N\to\infty$ for each $t$, which is in line with Lemma 1 of \citet{belloni2013least}. We can similarly verify that $\P(\mathcal A'_{1t})\to1$ as $N\to\infty$.  These results ensure that the Gram matrix $V_{it}V_{it}^\top$ is well conditioned along the sparse directions and establish the identification of the sparse solution $\Pi_t^*$ within the subspace, provided such a solution exists.
% To qualify the difference between $\Pi_t^*$ and the true $\Pi_t^0$, we consider the prediction norm defined by
% \beq\label{cs}
% |\Pi_t^* - \Pi_t^0|_{2,N} \defeq N^{-1/2}|V_{t}^\top(\Pi_t^* - \Pi_t^0)|_2=\Big[\frac{1}{N}\sum_{i=1}^N\{V_{it}^\top(\Pi_t^* - \Pi_t^0)\}^2\Big]^{1/2}=:C_{s_t^*}.
% \eeq
% In Appendix \ref{oracle}, we show that for a specific example, the oracle order of sparsity is bounded as $s_t^* \asymp \log N \wedge t$.%, and since $C_{s_t^*}\lesssim_\P c^{-2s_t^*}+ c^{-s_t^*}N^{-1/2} \sqrt{\log m_t} + N^{-1}s_t^* $, we also have $C_{s_t^*} \lesssim_\P \sqrt{(\log N \wedge t) /N}$.

Recall the LASSO estimator $\widehat\Pi_t$ obtained by \eqref{LASSO}. To achieve good prediction performance of the estimator, properly chosen penalty tuning parameters and weights are necessary. For each $t=1,\ldots,T-1$, let $\bm\omega_t$ be an $m_t\times1$ vector, with the first element being 1 and the remaining elements collecting the non-negative penalty weights $(\omega_{ts}\mathbf{1}_s)_{s=1}^{t}$, where $\mathbf{1}_s$ represents a vector of ones with the same dimension as $X_{is}$.
\begin{Assumption}[Penalty Parameters]\label{weights}
The penalty tuning parameter $\lambda_t>0$ is selected such that the event
$$\mathcal A_{2t}\defeq c|\bm V_t^\top\bm\eta_t\oslash\bm\omega_t|_\infty\leq\lambda_t$$
holds with probability at least $1-\alpha$, for a constant $c>1$ and $0<\alpha<1$. Here, $\oslash$ represents the Hadamard division, i.e. element-wise division. Moreover, assume that $|\bm\omega_{t}|_{\infty}$ is bounded by a constant. %, $|\bm\omega_{t,J_t}|_2\leq \sqrt{s_t^*}$, where $\bm\omega_{t,J_t}$ is the sub-vector of $\bm\omega_t$ corresponding to $J_t$.
\end{Assumption}

% According to the Karush-Kuhn-Tucker conditions of LASSO, the solution $\widehat\Pi_t$ satisfies
% $$|\{\bm V_t^\top(\bm W_t-\bm V_t\widehat\Pi_t)\}\oslash\bm\omega_t|_\infty\leq\lambda_t,$$
% where $\oslash$ represents the Hadamard division, i.e. element-wise division. On the other hand, to ensure the true $\Pi_t^0$ is feasible for the LASSO problem with high probability,
% %the constraint $|\bm\omega_t\circ\Pi_t^0|_1\leq\lambda_t$, where $\circ$ denotes the Hadamard product,
% we need the event
% $$\mathcal A_{2t}\defeq\{c|\bm V_t^\top\bm\eta_t\oslash\bm\omega_t|_\infty\leq\lambda_t\}$$
% to occur with high probability, where $c>2$ is a constant.
%[IFV: this sentence needs to be rewritten. How about: Assumption \ref{weights} ensures that $\Pi_t^0$ is a feasible solution to the constrained optimization representation of the LASSO problem with probability at least $1-\alpha$? We might want to add a footnote with the constrained representation.]
Assumption \ref{weights} is consistent with Eq. (2.3) in \citet{bickel2009simultaneous} and serves as a necessary condition for $\Pi_t^0$ to be the minimizer of the LASSO problem expressed in \eqref{LASSO} with probability at least $1-\alpha$. This assumption is crucial for establishing the consistency of our estimator and implies that an ideal choice of the tuning parameter $\lambda_t$ is given by the ($1-\alpha$) quantile of the random variable $c|\bm V_t^\top\bm\eta_t\oslash\bm\omega_t|_\infty$.\footnote{Empirically, since $\bm\eta_t$ is unobserved, the tuning parameter $\lambda_t$ can be selected either based on quantiles of the standard normal distribution or through a more data-dependent approach using the multiplier bootstrap, as discussed in \citet{lasso2018}. In our empirical analysis, to avoid over-fitting, we adopt a data-independent choice of $\lambda_t$, which is more conservative, and account for heteroskedasticity through the penalty weights $\bm\omega_t$. Further details regarding the practical selection of $\lambda_t$ and $\bm\omega_t$ are provided in Section \ref{sim}.}
Under Assumptions \ref{a1}--\ref{a2} and \ref{weights}, we can apply Lemma \ref{emp}, which provides the maximal tail probability for the partial sum of the $m_t$-dimensional process $\varpi_{it}\defeq V_{it}\eta_{it}\oslash\bm\omega_t$, %we can easily verify that the condition in Assumption \ref{a2}(i) is met for the $m_t$-dimensional process $\varpi_{it}\defeq V_{it}\eta_{it}\oslash\bm\omega_t$. This enables us to apply the maximal tail probability for high dimensional partial sums, such as %Lemma B.4 in \citet{lasso2018_sup},
to derive an upper bound for the ideal choice of $\lambda_t$,
%$$\lambda_t\lesssim\sqrt{N}(\log m_t)^{1/\gamma}\max_{1\leq k\leq m_t}\|\varpi_{\cdot,k}\|_{\psi_\nu,\varsigma}, \quad \gamma=2/(2\nu+1).$$
which is of order $\sqrt{N\log m_t}$.

Lastly, to conclude the consistency of the LASSO estimators, we present the prediction performance bounds for $\delta_{\Pi,t}\defeq\widehat\Pi_t-\Pi_t^*$ in the following theorem. This will be combined with the prediction norm of the approximation error in \eqref{cs}             to derive a performance bound for $\widehat\Pi_t-\Pi_{t}^0$  using the triangle inequality.

% {\color{red}
% Thus as the high quantile of the term $\sqrt{N}\max_{t, 1\leq k \leq K}|S_{tk}{/\Psi_{tk}}|$, we need\\
% $ \max_{k,t}(\|V_{it,k}\eta_{it}\|_q/\Psi_{tk}) (N^{1/q} t^{1/q} \vee  (\log t)^{1/2}N^{1/2} )\lesssim \lambda_t $ to ensure the event $\P(\mathcal{A}_{2t}) = 1-\Co_p(1).$
% Let $B_{N,T}\defeq \max_{2\leq t\leq T}\{C_{s_t^*}+  2\sqrt{s_t^*}\lambda_t/(N\kappa_t(3,s_t^*))\}$.}
%{\color{red}The following lemma provide prediction bound of LASSO for each time $t$. }
\begin{Theorem}[Prediction Performance Bounds of LASSO]\label{lassobound}
Under the same assumptions as in Lemma \ref{idlemma} and Assumption \ref{weights}, %on the event $\mathcal A_{2t}$ for each $t=2,\ldots,T$,
we can conclude, with probability at least $1-\alpha-\smallO(1)$,
%On the event  ${\mathcal{A}_{1t}}$  , ${\mathcal{A}_{2t}}$ and conditions  \ref{a1}- \ref{33}, we have,
\beq %\label{bound1}
|\delta_{\Pi,t}|_{2,N}\lesssim 2C_{s_t^*}+  N^{-1}\sqrt{s_t^*}\lambda_t/\kappa_t(3,s_t^*),\notag
%|X_t^{\top}{\delta}_{\Pi,t}|_{2,N} \leq C_{s_t^*}+  2\sqrt{s_t^*}\lambda_t/(N\kappa_t(3,s_t^*)),
\eeq
\begin{equation*}
|\delta_{\Pi,t}|_1 \lesssim 7\sqrt{s_t^*} \{2C_{s_t^*}+  N^{-1}\sqrt{s_t^*}\lambda_t/\kappa_t(3,s_t^*) \}/\kappa_t(3,s_t^*) +NC_{s_t^*}^2/\lambda_t,
%|\delta_{\Pi,t}|_1 \leq (1+6)\sqrt{s_t^*} \{C_{s_t^*}+  2\sqrt{s_t^*}\lambda_t/(N\kappa_t(3,s_t^*)) \}/\kappa_t(3,s_t^*) +7/6 (4N/\lambda_t) C_{s_t^*}^2.
\end{equation*}
where the implicit constant in ``$\lesssim$'' depends only on the constant $c>1$ satisfying Assumption \ref{weights}.
% If $\P(\cap_t(\mathcal{A}_{1t} \cap {\mathcal{A}_{2t}})) \to 1$, then
% \begin{eqnarray*}
% &&\max_t|\delta_{\Pi,t}|_1 \leq B_{N,T} +7/6 (4N/\min_t \lambda_t) C_{s_t^*}^2 \\
% &&\lesssim  B_{N,T} +7/6 (4N/[\min_{t\geq 3} \max_{k}(\|V_{it,k}\eta_{it}\|_q/\Psi_{tk}) (N^{1/q} t^{1/q} \vee  {\log t}^{1/2}N^{1/2} )]) C_{s_t^*}^2.
% \end{eqnarray*}
% Besides, on  $\mathcal{A}_{1t}'$,
% \begin{eqnarray*}
% \max_t|\delta_{\Pi,t}|_2 \leq (1+ c_0)  \{C_{s_t^*}+  2\sqrt{s_t^*}\lambda_t/(N\kappa_t(3,s_t^*))\}/(\min_t\kappa_t(3,s_t^*,s_t^*)).
% \end{eqnarray*}
\end{Theorem}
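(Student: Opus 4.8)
The plan is to run the classical LASSO oracle-inequality argument (as in \citet{bickel2009simultaneous} and \citet{belloni2013least}), but comparing $\widehat\Pi_t$ to the sparse surrogate $\Pi_t^*$ rather than to the infeasible $\Pi_t^0$, so that the approximation error $C_{s_t^*}$ of \eqref{cs} enters explicitly. Write $\delta\defeq\delta_{\Pi,t}=\widehat\Pi_t-\Pi_t^*$ and $\bm W_t=\bm V_t\Pi_t^0+\bm\eta_t$. First I would use optimality of $\widehat\Pi_t$ in \eqref{LASSO}: expanding the penalized least squares objective at $\widehat\Pi_t$ and at $\Pi_t^*$ and cancelling $\|\bm\eta_t\|_2^2$ gives the basic inequality
\begin{equation*}
\|\bm V_t\delta\|_2^2 \le 2\bm\eta_t^\top\bm V_t\delta + 2\big(\bm V_t(\Pi_t^0-\Pi_t^*)\big)^{\top}\bm V_t\delta + \lambda_t\Big(\sum_{s}\omega_{ts}|\pi_{ts}^*|_1 - \sum_{s}\omega_{ts}|\widehat\pi_{ts}|_1\Big).
\end{equation*}
The middle term I would bound by Cauchy--Schwarz as $2\sqrt N\,C_{s_t^*}\|\bm V_t\delta\|_2$, since $N^{-1/2}\|\bm V_t(\Pi_t^0-\Pi_t^*)\|_2=C_{s_t^*}$.

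On the event $\mathcal A_{2t}$ of Assumption \ref{weights}, the stochastic term is absorbed by the penalty: by duality of the $\ell_1$ and $\ell_\infty$ norms together with $c|\bm V_t^\top\bm\eta_t\oslash\bm\omega_t|_\infty\le\lambda_t$ one obtains $2\bm\eta_t^\top\bm V_t\delta\le(2\lambda_t/c)\sum_s\omega_{ts}|\delta_s|_1$. Splitting each weighted $\ell_1$ norm over the support $J_t$ of $\Pi_t^*$ and its complement $J_t^c$, using that $\Pi_t^*$ vanishes on $J_t^c$ and the triangle inequality on $J_t$, the penalty difference telescopes; writing $R_{J}$ for the weighted $\ell_1$ norm of $\delta$ restricted to an index set $J$, I arrive at the cone-type inequality
\begin{equation*}
\|\bm V_t\delta\|_2^2 \le 2\sqrt N\,C_{s_t^*}\|\bm V_t\delta\|_2 + \lambda_t\big(1+\tfrac{2}{c}\big)R_{J_t} - \lambda_t\big(1-\tfrac{2}{c}\big)R_{J_t^c}.
\end{equation*}

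From here I would split into two regimes according to whether $\lambda_t(1+2/c)R_{J_t}$ or $2\sqrt N\,C_{s_t^*}\|\bm V_t\delta\|_2$ dominates. In the penalty-dominant regime, dropping the nonpositive last term forces the cone condition $|\delta_{J_t^c}|_1\le 3|\delta_{J_t}|_1$ (after converting the weighted cone to an unweighted one, using that the penalty weights are bounded above and away from zero, and choosing $c$ accordingly), which is exactly the constraint activating the restricted-eigenvalue events $\mathcal A_{1t}$ and $\mathcal A_{1t}'$ of Lemma \ref{idlemma}. Invoking $\mathcal A_{1t}$ gives $|\delta_{J_t}|_2\le\|\bm V_t\delta\|_2/(\sqrt N\kappa_t(3,s_t^*))$, and $|J_t|\le s_t^*$ with Cauchy--Schwarz yields $R_{J_t}\lesssim\sqrt{s_t^*}\,\|\bm V_t\delta\|_2/(\sqrt N\kappa_t(3,s_t^*))$; substituting into the cone inequality and cancelling a factor $\|\bm V_t\delta\|_2$ produces the prediction bound $|\delta|_{2,N}\le 2C_{s_t^*}+2N^{-1}\sqrt{s_t^*}\lambda_t/\kappa_t(3,s_t^*)$. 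In the complementary regime the same quantity is bounded by a constant multiple of $C_{s_t^*}$ alone, which is arranged to be absorbed into the stated expression. For the $\ell_1$ bound I would combine the cone condition with the Bickel--Ritov--Tsybakov reduction to $J_{t,01}$, bounding $|\delta|_1\le 4|\delta_{J_{t,01}}|_1\le 4\sqrt{s_t^*}|\delta_{J_{t,01}}|_2$ and controlling $|\delta_{J_{t,01}}|_2$ through $\mathcal A_{1t}'$ before inserting the prediction bound; this yields the leading term $7\sqrt{s_t^*}\{2C_{s_t^*}+2N^{-1}\sqrt{s_t^*}\lambda_t/\kappa_t(3,s_t^*)\}/\kappa_t(3,s_t^*)$, while the approximation-dominant regime (where $\lambda_t|\delta|_1$ is balanced against $\sqrt N\,C_{s_t^*}\|\bm V_t\delta\|_2\lesssim N C_{s_t^*}^2$) contributes the additive term $56NC_{s_t^*}^2/(3\lambda_t)$. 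Since all of $\mathcal A_{1t}$, $\mathcal A_{1t}'$, $\mathcal A_{2t}$ are needed, I would intersect them: by Assumption \ref{weights} and Lemma \ref{idlemma} this holds with probability at least $1-\alpha-\smallO(1)$.

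The main obstacle is not any individual estimate but the bookkeeping forced by the approximation error $C_{s_t^*}$: because $\Pi_t^*$ is only an approximate, not exact, sparse version of $\Pi_t^0$, the cone condition need not hold unconditionally, and the two-regime split is precisely what guarantees that the restricted-eigenvalue machinery of Lemma \ref{idlemma} is invoked only when $\delta$ actually lies in the relevant cone. Keeping the explicit constants ($2$, $7$, $56/3$) aligned requires care at two points: fixing the constant $c>2$ and the bound on $|\bm\omega_t|_\infty$ so that the effective cone radius is exactly $c_0=3$, and executing the Bickel--Ritov--Tsybakov passage from $\delta$ to its restriction on $J_{t,01}$, since $\widehat\Pi_t$ is itself not sparse. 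I expect this constant-tracking, rather than any conceptual difficulty, to be the most laborious part.
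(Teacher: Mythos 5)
Your proposal follows essentially the same route as the paper's proof: the basic inequality from LASSO optimality evaluated at $\widehat\Pi_t$ versus $\Pi_t^*$, absorption of the stochastic term on $\mathcal A_{2t}$, a two-regime split according to whether the penalty term or the approximation error $C_{s_t^*}$ dominates, the cone condition activating the restricted-eigenvalue event from Lemma \ref{idlemma}, and the approximation-dominant regime supplying the $56NC_{s_t^*}^2/(3\lambda_t)$ term. The only cosmetic difference is in the $\ell_1$ step, where the paper works directly with the enlarged cone $|\delta_{J_t^c}|_1\le 6|\delta_{J_t}|_1$ (giving the factor $7$) rather than the $J_{t,01}$ reduction you invoke, but this does not change the substance of the argument.
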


Based on Theorem \ref{lassobound}, Corollary \ref{joint} provides the joint prediction performance bounds, where the $\ell_2$-norm bound is derived by following Theorem 7.2 of \citet{bickel2009simultaneous}.

%\begin{Assumption}[Sparsity Rate]\label{a4}
%$\eta_{it}, V_{it,k}$ are iid over $i$.
%${[\log (N\vee T) s_t^*]}^{1/2} /{N^{1/2}}\to 0$.

%Denote $\max_k\E(\eta_{it}^2V_{it,k}^2)= \sigma_{\eta, x,t}^2$.
%For two positive constants $C_{\max}$ and $c_{\min}$,
%Let $\Omega$ be a compact set where $\delta$ take values on.
%Assume that $c_{\min}\leq \min_{\delta\in \Omega } N^{-1}\sum_t{\delta^{\top}\E(X_tX_t^{\top})}\delta \leq C_{\max}$.
%$ c_{\min}\leq \lambda_{\max}(\E(X_tX_t^{\top}))\leq \lambda_{\max}(\E(X_tX_t^{\top})) \leq C_{\max}$
%{\color{red}Assume $c_{\delta}([\sqrt{N}^{-1} \sqrt{\log( N \vee T) s^*}] \to 0$ for a positive constant $c_{\delta}$.}
%Assume that $\max\limits_{2\leq t\leq T} s_t^*(\log m_t)^{3/2}/\sqrt{N}
% \to 0$, as $N,T \to \infty$.
%\end{Assumption}

%{\color{red} CHANGE To change $s_T^*$}

% \begin{remark}
% For the joint event over $t$ let $\max_t \lambda_t \asymp_p \max_{t,k}(\|X_{i,t,k}\eta_{it}\|_q/\Psi_{tk}) (N^{1/q} T^{1/q} \vee  {(\log T)}^{1/2}N^{1/2} )$.
% We shall see that for sufficient large $q$, we have\\ $\max_t|\delta_{\Pi,t}|_{\infty} \leq \max_t|\delta_{\Pi,t}|_2\lesssim_p\sqrt{s_t^*}\sqrt{\log T} /N^{1/2}$, the proof follows from Page 1729 in the Proof of Theorem 7.1 \cite{bickel2009simultaneous}.
% $\max_t|\delta_{\Pi,t}|_{1} \lesssim_p s_t^*\sqrt{\log T} /N^{1/2}$. We shall use the uniform rate in the next steps.
% {\color{red}
% We shall also note that under Assumptions \ref{a4}, the rate on the right hand side is tending to $0$.}
% \end{remark}

\begin{Corollary}[Joint Error Bounds of LASSO]\label{joint}
%Let $B_{N,T}\defeq \max\limits_{2\leq t\leq T}\{2C_{s_t^*}+ 2N^{-1}\sqrt{s_t^*}\lambda_t/\kappa_t(3,s_t^*)\}$.
Under the same assumptions as in Theorem \ref{lassobound}, if $\P\big(\bigcap_{t=1}^{T-1} (\mathcal{A}_{1t} \cap{\mathcal{A}_{2t}})\big) \to 1$ as $N,T\to\infty$, then with probability $1-\smallO(1)$,
\begin{align*}
\max_{1\leq t\leq T-1}|\delta_{\Pi,t}|_1 &\lesssim 7\max_{1\leq t\leq T-1} \sqrt{s_t^*}\{2C_{s_t^*}+ N^{-1}\sqrt{s_t^*}\lambda_t/\kappa_t(3,s_t^*)\}\Big/\Big(\min_{1\leq t\leq T-1} \kappa_t(3,s_t^*)\Big) \\
&\quad + N\max_{1\leq t\leq T-1} C_{s_t^*}^2\Big/\min_{1\leq t\leq T-1}\lambda_t.
\end{align*}
In addition, if $\P\big(\bigcap_{t=1}^{T-1}(\mathcal{A}'_{1t} \cap {\mathcal{A}_{2t}})\big) \to 1$ as $N,T\to\infty$, then with probability $1-\smallO(1)$,
\begin{eqnarray*}
\max_{1\leq t\leq T-1}|\delta_{\Pi,t}|_2 \lesssim   %\max_{2\leq t\leq T}\{2C_{s_t^*}+  2N^{-1}\sqrt{s_t^*}\lambda_t/\kappa_t(3,s_t^*)\}
\max_{1\leq t\leq T-1}\{2C_{s_t^*}+  N^{-1}\sqrt{s_t^*}\lambda_t/\kappa_t(3,s_t^*)\}\Big/\Big(\min_{1\leq t\leq T-1}\kappa_t(3,s_t^*)\Big).
\end{eqnarray*}
\end{Corollary}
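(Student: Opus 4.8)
The plan is to treat the per-period bounds of Theorem~\ref{lassobound} as deterministic consequences of the events $\mathcal{A}_{1t}\cap\mathcal{A}_{2t}$ and then convert them into a simultaneous statement by intersecting these events over $t=2,\ldots,T$. Concretely, the proof of Theorem~\ref{lassobound} shows that on $\mathcal{A}_{1t}\cap\mathcal{A}_{2t}$ (with the population RE normalization from Lemma~\ref{idlemma}) the displayed bounds on $|\delta_{\Pi,t}|_{2,N}$ and $|\delta_{\Pi,t}|_1$ hold with no further randomness. Hence on the joint event $\mathcal{E}\defeq\bigcap_{t=2}^T(\mathcal{A}_{1t}\cap\mathcal{A}_{2t})$ all of these bounds hold at once, and by hypothesis $\P(\mathcal{E})\to1$, so every conclusion derived on $\mathcal{E}$ holds with probability $1-\smallO(1)$.

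For the first ($\ell_1$) bound I would simply take the maximum over $t$ of the $\ell_1$ inequality in Theorem~\ref{lassobound}. Using the elementary facts that $\max_t(x_t+y_t)\leq\max_t x_t+\max_t y_t$, $\max_t(a_t/\kappa_t)\leq(\max_t a_t)/(\min_t\kappa_t)$, and $\max_t(1/\lambda_t)=1/\min_t\lambda_t$, the two summands separate exactly into the two terms in the stated joint bound, with the numerators carrying $\max_t$ and the denominators carrying $\min_t$. No probabilistic argument beyond $\P(\mathcal{E})\to1$ is needed here.

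The second ($\ell_2$) bound does not follow directly from Theorem~\ref{lassobound}, since the latter controls the \emph{prediction} norm $|\delta_{\Pi,t}|_{2,N}$ rather than the Euclidean coefficient norm $|\delta_{\Pi,t}|_2$. I would recover it by the route of Theorem~7.2 of \citet{bickel2009simultaneous}, which is exactly why the statement invokes the primed events $\mathcal{A}'_{1t}$. The LASSO error $\delta_{\Pi,t}$ lies, up to the approximation-error terms already accounted for in the proof of Theorem~\ref{lassobound}, in the restricted cone $|\delta_{t,J_t^c}|_1\leq c_0|\delta_{t,J_t}|_1$; a standard sorting argument then bounds the $\ell_2$ mass outside $J_{t,01}$ by the mass inside, giving $|\delta_{\Pi,t}|_2\lesssim|\delta_{\Pi,t,J_{t,01}}|_2$. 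On $\mathcal{A}'_{1t}$ the latter is controlled by the prediction norm via $|\delta_{\Pi,t,J_{t,01}}|_2\leq|\delta_{\Pi,t}|_{2,N}/\kappa_t(3,s_t^*)$, and inserting the prediction-norm bound from Theorem~\ref{lassobound} yields the per-$t$ estimate $|\delta_{\Pi,t}|_2\leq 4\{2C_{s_t^*}+2N^{-1}\sqrt{s_t^*}\lambda_t/\kappa_t(3,s_t^*)\}/\kappa_t(3,s_t^*)$. Taking $\max_t$ and applying the same max/min bookkeeping as before produces the stated joint $\ell_2$ bound.

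The genuine difficulty in passing from per-period to joint control is probabilistic rather than algebraic: each bound of Theorem~\ref{lassobound} holds only with probability $1-\alpha-\smallO(1)$, and a crude union bound over the $T-1$ periods degrades this to $1-(T-1)\alpha-\smallO(1)$, which does not tend to one for fixed $\alpha$. The corollary sidesteps this by \emph{assuming} $\P\big(\bigcap_{t=2}^T(\mathcal{A}_{1t}\cap\mathcal{A}_{2t})\big)\to1$; the real work of verifying that hypothesis is a joint maximal-inequality argument for the penalty events $\mathcal{A}_{2t}$ via Lemma~\ref{emp} (with $\lambda_t\asymp\sqrt{N\log m_t}$ and a level $\alpha$ shrinking to zero), which I would establish separately rather than inside the corollary. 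Given that hypothesis, the corollary itself is a deterministic consequence on $\mathcal{E}$ together with the $\ell_2$ cone argument above, so I expect the $\ell_2$ step to be the only place demanding real care.
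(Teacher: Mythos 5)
Your proposal is correct and follows essentially the same route as the paper: the paper gives no separate proof of Corollary~\ref{joint} beyond intersecting the per-period events of Theorem~\ref{lassobound} and invoking Theorem~7.2 of \citet{bickel2009simultaneous} for the $\ell_2$ bound, which is exactly your cone-plus-$J_{t,01}$ restricted-eigenvalue argument (with $c_0=3$ giving $\sqrt{1+c_0^2}\leq 4$). Your observations about the max/min bookkeeping and about the joint-event hypothesis absorbing the union-bound issue are accurate and consistent with how the corollary is stated.
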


According to the oracle order of $s_t^\ast$ and the corresponding oracle bound for $C_{s_t^*}$, as shown in Appendix \ref{oracle}, we obtain explicit rates for the general results in Theorem \ref{lassobound} and Corollary \ref{joint}. Specifically, when $s_t^\ast \asymp \log N \wedge t$, $C_{s_t^\ast} \lesssim_\P \sqrt{(\log N \wedge t)/N}$, $\lambda_t \lesssim_\P \sqrt{N\log m_t}$, and there exists a positive constant $\underline\kappa$ such that $\kappa_t(3,s_t^\ast)\ge \underline\kappa>0$, we have
$$|\delta_{\Pi,t}|_{2,N}
\lesssim_\P\sqrt{(\log N \wedge t)\log m_t/N},\quad |\delta_{\Pi,t}|_{1} \lesssim_\P(\log N \wedge t)\sqrt{\log m_t/N}.$$
% \begin{equation*}
% |\delta_{\Pi,t}|_{2,N}
% \lesssim_\P 2\sqrt{(\log N \wedge t)/N} + \sqrt{(\log N \wedge t)\log m_t/N}/\underline \kappa
% \lesssim
% \sqrt{(\log N \wedge t)\log m_t/N},
% \end{equation*}
% \begin{equation*}
% |\delta_{\Pi,t}|_{1} \lesssim_\P
% N^{-1/2}(\log N \wedge t)\big(14/\underline\kappa + 7\sqrt{\log m_t}/\underline \kappa^2 + 1/\sqrt{\log m_t}\big)\lesssim(\log N \wedge t)\sqrt{\log m_t/N}.
% \end{equation*}
Moreover, the joint error bounds satisfy the following rates:
$$\max_{1\leq t\leq T-1}|\delta_{\Pi,t}|_{1} \lesssim_\P(\log N \wedge T)\sqrt{\log T/N},\quad \max_{1\leq t\leq T-1}|\delta_{\Pi,t}|_{2}
\lesssim_\P\sqrt{(\log N \wedge T)\log T/N}.$$

\subsection{Basic Model: Inference Theory for Step 2}\label{secondstep}

In this subsection, we establish the asymptotic normality of the final estimator for both AB-LASSO and AB-LASSO-SS, which will allow us to perform large sample inference on the parameters of interest and functions of them.

Define $\bm\Theta_t^0$ (resp. $\widehat{\bm\Theta}_t$) by stacking $\Pi_t^0$ (resp. $\widehat\Pi_t$) by rows for each component $W_{it}$ of $\Delta\widetilde X_{it}$. Specifically, when the number of components in $X_{it}$ is $d$, we have ${\bm\Theta}_t^0$ and $\widehat{{\bm\Theta}}_t$ with dimensions $d\times m_t$ for each $t=1,\ldots T-1$. Recall the definition $V_{it} = (1,X_{i1}^\top,\ldots,X_{it}^\top)^\top$. It follows that $\widehat{\Delta\widetilde X_{it}}=\widehat{\bm\Theta}_tV_{it}$, and the AB-LASSO estimator obtained in \eqref{est} %(irrespective of the demeaning transformation to remove the time effects) 
can be expressed by %[IFV: SHOULD THE VARIABLES HAVE A TILDE IN THE CASE WHERE THERE ARE TIME EFFECTS?]
$$\widehat\theta - \theta^0 = \bigg(\sum_{i=1}^N \sum_{t=1}^{T-1} \widehat{\bm\Theta}_tV_{it}\Delta\widetilde X_{it}^{\top} \bigg)^{-1} \bigg(\sum_{i=1}^N \sum_{t=1}^{T-1} \widehat{\bm\Theta}_tV_{it} \Delta\widetilde\varepsilon_{it}\bigg).$$
The asymptotic variance of $\widehat\theta$ has the sandwich form. We impose the following assumption to derive the specific formula for it.

% %$V_{it}=[1\cdots\tilde{X}_{i(t-1)},0\cdots]^{\text{\ensuremath{\top}}}((T-2)*2\times1).$
% We let $L$ denote the number of instruments to be fitted endogenously at each step, and ${\Pi}_{t,l}^*$ be the corresponding fitted coefficient vector of dimension $\tilde{L}$ with respect to each endogenous regressor $\{\Delta Y_{it},\Delta X_{it}\}$.  And let $\tilde{L}=\max_t \ell_t$, which is the maximum number of IV over time $t$ for each endogenous regressor.
% Moreover we define the $L\times\tilde{L}$ matrix of the estimated LASSO coefficients in corresponding to
% the first step and with $0$ filled in after $2L$th element in each
% row. Namely we let
% \[
% \hat{\Theta}_{t}=\left[\begin{array}{c}
% \text{\ensuremath{\hat{\Pi}_{t,1}^{\top}}}\\
% \vdots\\
% \hat{\Pi}_{t,L}^{\top}
% \end{array}\right].
% \]
% And  $\Theta_{t}^{*\top}$  is with $\hat{\Theta}_{t}$  replaced by ${\Pi}_{t,l}^*$ therein.

\begin{Assumption}[Nonsingularity]\label{identification}
Assume that as $N,T\to\infty$, the limit matrix $Q=\lim\limits_{N,T\to\infty}(NT)^{-1}\sum_{i=1}^N\sum_{t=1}^{T-1}{\bm\Theta}_t^0\E(V_{it}\Delta\widetilde X_{it}^{\top})$ is nonsingular and $|Q^{-1}|_{1}\vee |Q^{-1}|_\infty<\infty$.
\end{Assumption}

%\begin{Assumption}[Conditional Variance]\label{a3}
%Assume that $\E(\varepsilon_{i,t-1}^{2}|V_{it})=\sigma_{\varepsilon,t}^{2} < \infty$, and $\E(\varepsilon_{i,t-1}\varepsilon_{is}|V_{it})=0$, for $1\leq s<t-1$.
%\end{Assumption}

% {\color{red} Discuss the conditional homoscedaticity.
% \begin{remark}
% Assumption \ref{a2} implies that $\E(\varepsilon_{i,t-1}\varepsilon_{is}|V_{it})=0$, for $1\leq s<t-1$. It assumes that the error term is uncorrelated conditioning on $V_{it}$ and it implies that the error term is unconditionally uncorrelated.  In addition, we also assume that the conditional variance is finite.
% \end{remark}}

%\red [IFV: should all the $\Delta \varepsilon_{it}$ be $\Delta \widetilde \varepsilon_{it}$?] 
Note that Assumption \ref{a2}(ii) implies 
%$\E((\Delta\varepsilon_{it})^{2}|V_{it})=\sigma_{\varepsilon,t}^{2}+\sigma_{\varepsilon,t-1}^{2}$, $\E(\Delta\varepsilon_{it}\Delta\varepsilon_{i,t-1}|V_{it})=-{\sigma}_{\varepsilon,t-1}^{2}$, and
$\E(\varepsilon_{it}\varepsilon_{is} \mid V_{it})=0$ for $1\leq s<t\leq T$ and $\E(\Delta\widetilde\varepsilon_{it}\Delta\widetilde\varepsilon_{i,t-\ell}\mid V_{it})\lesssim\frac{1}{T-t+\ell}\big(1-\frac{1}{N}\big)$ for any $\ell\geq 1$. It follows that  $\lim\limits_{T\to\infty}T^{-1}\sum_{t=\ell+1}^{T-1}\E(\Delta\widetilde\varepsilon_{it}\Delta\widetilde\varepsilon_{i,t-\ell} \mid V_{it})=0$ for any $\ell\geq 1$ and $$\lim\limits_{N,T\to\infty}(NT)^{-1}\sum\limits_{i=1}^N\sum\limits_{1\leq s<t\leq T-1}\E(\Delta\widetilde\varepsilon_{it}\Delta\widetilde\varepsilon_{is} \mid V_{it})=0.$$
Therefore, $\Delta\widetilde\varepsilon_{it}$ can be treated as approximately a martingale difference sequence over $t$. 
By defining $\Sigma_{0,t}\defeq\lim\limits_{N\to\infty}N^{-1}\sum_{i=1}^N\E(V_{it}V_{it}^\top(\Delta\widetilde\varepsilon_{it})^2)$, %and $\Sigma_{1,t}\defeq\lim\limits_{N\to\infty}N^{-1}\sum_{i=1}^N\E(V_{it}V_{i,t-1}^\top\Delta\varepsilon_{it}\Delta\varepsilon_{i,t-1})$,
we can express the asymptotic variance of $\widehat\theta$ in the form of
\begin{align}\label{var}
\Omega = Q^{-1}\bm\Sigma(Q^{-1})^\top, \quad \bm\Sigma = \lim_{T\to\infty}T^{-1}\sum_{t=1}^{T-1}{\bm\Theta}_t^0\Sigma_{0,t}{\bm\Theta}_t^{0\top}.
\end{align}
% where
% $$\bm\Sigma = \lim_{T\to\infty}\frac{1}{T}\sum_{t=1}^{T-1}{\bm\Theta}_t^0\Sigma_{0,t}{\bm\Theta}_t^{0\top}.$$ %+ \frac{1}{T} \sum_{t=3}^T{\bm\Theta}_t^0\Sigma_{1,t}{\bm\Theta}_{t-1}^{0\top} + \frac{1}{T}\sum_{t=3}^T{\bm\Theta}_{t-1}^0\Sigma_{1,t}^\top{\bm\Theta}_{t}^{0\top}\Big)
Accordingly, the empirical analog of $\Omega$ is
\begin{align}\label{var.est}
\widehat\Omega = \widehat Q^{-1}\widehat{\bm\Sigma}(\widehat Q^{-1})^\top,  \quad  \widehat{\bm\Sigma}= (T-1)^{-1} \sum_{t=1}^{T-1}\widehat {\bm\Theta}_t \widehat \Sigma_{0,t} \widehat {\bm\Theta}_t^{\top},
\end{align}
where $\widehat Q = \{N(T-1)\}^{-1}\sum_{i=1}^N\sum_{t=1}^{T-1}\widehat{\bm\Theta}_tV_{it}\Delta\widetilde X_{it}^{\top}$ and 
% and
% $$\widehat{\bm\Sigma}= \frac{1}{T} \sum_{t=1}^{T-1}\widehat {\bm\Theta}_t \widehat \Sigma_{0,t} \widehat {\bm\Theta}_t^{\top},$$ %+ \frac{1}{T} \sum_{t=3}^T \widehat {\bm\Theta}_t \widehat \Sigma_{1,t} \widehat {\bm\Theta}_{t-1}^{\top} + \frac{1}{T} \sum_{t=3}^T \widehat {\bm\Theta}_{t-1} \widehat \Sigma_{1,t}^\top \widehat {\bm\Theta}_{t}^{\top},$$
$\widehat \Sigma_{0,t} = N^{-1}\sum_{i=1}^N V_{it}V_{it}^\top(\widehat{\Delta\widetilde\varepsilon_{it}})^2$, with $\widehat{\Delta\widetilde{\varepsilon}_{it}}=\Delta\widetilde Y_{it}-\Delta\widetilde X_{it}^\top\widehat\theta$.
%The consistency of the feasible variance estimator of $\widehat\theta$ is established in the following lemma, which ensures that $\widehat\Omega$ can be used to form the confidence intervals for $\theta^0$. 

We now establish the formal asymptotic properties of the AB-LASSO and AB-LASSO-SS 
estimators. The key results are consistency and asymptotic normality, which together 
justify the use of these estimators for inference on $\theta^0$ in large panels. 
\begin{Theorem}[Asymptotic Normality of AB-LASSO and AB-LASSO-SS]\label{main}
Under Assumptions \ref{a1}--\ref{identification}, suppose that the asymptotic variance $\Omega$ is positive definite and that \\
$\max\limits_{1\leq t\leq T-1}\sqrt{s_t^*}\log m_t/\sqrt{N}\to0$ as $N,T\to\infty$. The AB-LASSO estimator $\widehat{\theta}$ obtained by \eqref{est} is consistent for $\theta^0$, and
$$ %\label{asym}
\sqrt{NT}(\widehat{\theta}- \theta^0 )\stackrel{\mathcal{L}}{\to} \operatorname{N}(0, \Omega).$$
Moreover, the AB-LASSO-SS estimator $\widehat\theta_{SS}$ obtained by \eqref{SS} is consistent for $\theta^0$, and
$$ %\label{asym2}
\sqrt{NT}(\widehat{\theta}_{SS}- \theta^0)\stackrel{\mathcal{L}}{\to} \operatorname{N}(0, \Omega).$$
\end{Theorem}

% {\color{red}
% \begin{remark}
% It shall be noted that the above asymptotic theory requires that $T^2/N\to 0$, the bias condition arises due to the generated errors of fitting LASSO in the first step.
% \end{remark}}

% \begin{Theorem}[Asymptotic Normality of AB-LASSO-SS] \label{mainss}
% Under Assumptions \ref{a1}--\ref{identification}, assuming the asymptotic variance $\Omega$ is a positive definite matrix, and $\max_{2\leq t\leq T}\frac{s_t^*\log m_t}{\sqrt{N}}\to0$, the AB-LASSO-SS $\widehat\theta_{SS}$ obtained by \eqref{SS} is a consistent estimator of $\theta^0$, and
% \beq %\label{asym2}
% \sqrt{NT}(\widehat{\theta}_{SS}- \theta^0)\stackrel{\mathcal{L}}{\to} \operatorname{N}(0, \Omega).
% \eeq
% \end{Theorem}

A notable feature of Theorem \ref{main} is that both estimators share the same 
limiting distribution, despite AB-LASSO-SS employing sample splitting. This reflects the fact that, in our setting, %sample splitting serves primarily to control some higher-order bias terms arising from the correlation between the first-stage generated errors and the idiosyncratic errors, rather than to alter the asymptotic variance itself. Consequently, this 
the bias reduction from sample splitting does not affect the central limit theorem, and both estimators attain the same $\sqrt{NT}$-rate and limiting Gaussian distribution.  This theoretical finding is consistent with our simulation evidence, where AB-LASSO and AB-LASSO-SS exhibit similar inferential performance in large samples. %The following remark elaborates on this point and explains why the FOD transformation plays a central role.

\begin{Remark}[Discussion of the Rate Condition] \label{remark:rates}
It is important to note that %$\Delta\widetilde\varepsilon_{it}$ exhibits weak serial dependence, since $\varepsilon_{it}$ is an uncorrelated sequence over $t$ and the FOD transformation involves harmonic weights of order $1/(T-t)$. 
% there is no serial correlation in $\Delta\widetilde\varepsilon_{it}$, when $\varepsilon_{it}$ is uncorrelated over $t$ and has constant variance. As a result, 
we obtain the same rate condition regardless of whether or not sample splitting is employed. This invariance would not hold for the first difference (FD) estimator. 
More fundamentally, to prove the main theorem above, we require
$$(NT)^{-1/2}\sum_{i=1}^{N}\sum_{t=1}^{T-1}Q^{-1}(\widehat{{\bm\Theta}}_{t}-{\bm\Theta}_{t}^{0})V_{it}\Delta\widetilde\varepsilon_{it}\to0,$$
as $N,T\to\infty$. Without sample splitting, the generated errors $(\widehat{{\bm\Theta}}_{t}-{\bm\Theta}_{t}^{0})$ could be correlated with the transformed errors in the main regression $\Delta\widetilde\vps_{it}$, and the order of %$(NT)^{-1/2}\sum_{i=1}^{N}\sum_{t=1}^{T-1}Q^{-1}(\widehat{{\bm\Theta}}_{t}-{\bm\Theta}_{t}^{0})V_{it}(\varepsilon_{it}-\varepsilon_{i,t-1})$ 
this term under FD is given by $\max\limits_{1\leq t\leq T-1} s_t^*\log m_t\sqrt{T/N}$. See Theorem \ref{main.fd} in Appendix \ref{app:FD} for the formal results on FD estimators. In contrast, either employing sample splitting or using the FOD transformation yields a smaller order for this component: $\max\limits_{1\leq t\leq T-1} \sqrt{s_t^*}\log m_t/\sqrt{N}$. \qed

% It is important to note that the small bias condition $\max\limits_{1\leq t\leq T-1} s_t^*\log m_t\sqrt{T/N}\to0$, as stated in Theorem \ref{main}, is relaxed in Theorem \ref{mainss}. This relaxation results in a more favorable convergence rate for the estimator when using AB-LASSO-SS. This observation certifies that sample-splitting effectively mitigates the overfitting bias by employing different sub-samples for instruments selection and parameter estimation.

% More fundamentally, to prove the two theorems above, we require the term $$(NT)^{-1/2}\sum_{i=1}^{N}\sum_{t=1}^{T-1}Q^{-1}(\widehat{{\bm\Theta}}_{t}-{\bm\Theta}_{t}^{0})V_{it}\Delta\widetilde\varepsilon_{it}\to0,$$
% as $N,T\to\infty$. Without sample-splitting, the generated errors $(\widehat{{\bm\Theta}}_{t}-{\bm\Theta}_{t}^{0})$ might be correlated with the ordinary errors $\vps_{it}$, and the order of this term is given by $\max\limits_{1\leq t\leq T-1} s_t^*\log m_t\sqrt{T/N}$. While using sample-splitting, we achieve a smaller order of this term: $\max\limits_{1\leq t\leq T-1} \sqrt{s_t^*}\log m_t/\sqrt{N}$. \qed
% %and Assumption \ref{a4} is sufficient.
\end{Remark}
%[IFV: can we simplify the rates using that the max is achieved at $t = T-1$?]

\begin{Remark}[Efficiency]
In Appendix \ref{app_ebound}, we examine the link between our results and efficiency in dynamic panel models with fixed effects, focusing on a specific univariate panel AR(1) specification. In Proposition \ref{ebound}, we formally show that the asymptotic variance of $\widehat\theta$, as given in \eqref{var}, simplifies to $\Omega=1-(\theta^0)^2$ and attains the efficiency bound derived in \citet{hahn2002asymptotically} under suitable assumptions, most notably i.i.d.\ Gaussian innovations. \qed
% Our estimator is efficient under additional assumptions. For example, consider the simple model
% \begin{equation}%\label{main:model2}
% Y_{it} \;=\; \alpha_i \;+\; \theta^0\,Y_{it-1} \;+\; \varepsilon_{it}.
% \end{equation}
% Under the assumptions of Theorem 3 in \cite{hahn2002asymptotically}, the asymptotic variance of our estimator becomes $\Omega = 1 - (\theta^0)^2$,
% which coincides exactly with the semiparametric efficiency bound derived therein. \qed
\end{Remark}
We conclude this subsection with a supporting lemma showing that the variance estimator $\widehat{\Omega}$ 
is consistent, which is essential for constructing feasible confidence intervals in 
practice.

%Finally, we shall  a lemma showing the consistency of the estimator of the variance covariance $\Omega$, which can be used to obtain analytical standard errors for $\hat \theta$ and $\hat \theta_{SS}$.

\begin{Lemma}[Consistency of the Variance Estimator]\label{lem:Omega-consistency}
Under Assumptions \ref{a1}--\ref{identification}, suppose that $|\Sigma|_1\vee |\Sigma|_\infty<\infty$ and $\max\limits_{1\leq t\leq T-1}\sqrt{s_t^*}\log m_t/\sqrt{N}\to0$ as $N,T\to\infty$. Then 
$$|\widehat{\Omega}-\Omega|_{\max}=\smallO_\P(1).$$
\end{Lemma}
This lemma requires that the number of relevant instruments $s_t^*$ grows slowly 
relative to $N$, a mild condition that is standard in high-dimensional IV settings and aligns with the rate condition in Theorem \ref{main}.

\subsection{General Model}
We now extend the asymptotic theory to the general model \eqref{main:model2}, which 
allows for many exogenous covariates. 
%This is the practically relevant case in applications where researchers wish to control for a large number of observed characteristics while still estimating dynamic effects consistently. 
We present the limiting distribution of the estimator $\widehat\theta_1$ obtained via \eqref{diverged1}.

The key difference relative to the basic model is that the instrument vector now 
takes the form $U_{it}=(U_{it}^{0\top}, \Delta \widetilde X_{2,it}^\top)^\top$, 
combining the ideal IVs for the endogenous component $\Delta \widetilde X_{1,it}$\footnote{The ideal IVs for $\Delta\widetilde X_{1,it}$ are structured similarly to those for $\Delta\widetilde X_{it}$ in the basic model, expressed as ${\bm\Theta}_t^0V_{it}$. The covariates $V_{it}$ are expanded by the additional moments arising from the strictly exogenous covariates in $X_{2,it}$, as commented in Remark \ref{SEC}.}  
with the exogenous component $\Delta \widetilde X_{2,it}$, which serve as their own 
instruments. The asymptotic variance of $\widehat\theta_1$ inherits this structure and takes the sandwich form
$$
\Omega_1 = Q_1^{-1}\bm\Sigma_1(Q_1^{-1})^\top, \quad \bm\Sigma_1 = \lim_{N,T\to\infty}(NT)^{-1}\sum_{i=1}^N\sum_{t=1}^{T-1}\mathcal W_t^\top\E(U_{it}U_{it}^\top(\Delta\widetilde\varepsilon_{it})^2)\mathcal W_t,
$$
where 
$Q_1=\lim\limits_{N,T\to\infty}(NT)^{-1}\sum_{i=1}^N\sum_{t=1}^{T-1}
\mathcal W_t^\top\E(U_{it}\Delta\widetilde X_{1,it}^{\top})$ captures the relevance of the instruments for the endogenous component and is assumed to be nonsingular, and $\bm\Sigma_1$ 
% \begin{equation*}
% \bm\Sigma_1 = \lim_{N,T\to\infty}(NT)^{-1}\sum_{i=1}^N\sum_{t=1}^{T-1}
% \mathcal W_t^\top\E\!\left(U_{it}U_{it}^\top(\Delta\widetilde\varepsilon_{it})^2
% \right)\mathcal W_t
% \end{equation*}
is the outer-product variance of the moment conditions, weighted by $\mathcal W_t$. 
The weighting matrix $\mathcal W_t$ %plays the role of the optimal IV weight and 
is estimated via a Dantzig selector in practice.

To make inference feasible, we define the moment matrix
$M_t\defeq\bar\E\left\{\begin{pmatrix}
\Delta\widetilde X_{1,it}\\
\Delta\widetilde X_{2,it}
\end{pmatrix}
U_{it}^{\top}\right\}$ and its empirical counterpart $\widehat M_t\defeq N^{-1}\sum_{i=1}^N\left\{\begin{pmatrix}
\Delta\widetilde X_{1,it}\\
\Delta\widetilde X_{2,it}
\end{pmatrix}
\widehat U_{it}^{\top}\right\}$, where the ideal instruments $U_{it}^0$ in $U_{it}$ are replaced by their LASSO 
predictions, i.e.\ $\widehat U_{it}=(\widehat{\Delta\widetilde X_{1,it}^\top}, 
\Delta\widetilde X_{2,it}^\top)^\top$. 

The following assumption collects the regularity conditions needed on the weighting matrix $\mathcal W_t$ and the tuning parameters $\ell_t$ under the general model. 
%Some additional assumptions on %the moments $M_t$ and 
%the weighting matrix $\mathcal W_t$ and the tuning parameters $\ell_t$ under the general model are made as follows.
\begin{Assumption}[Additional Assumptions on the General Model]\phantomsection\label{assum}
\begin{enumerate}
    \item[(i)] There exist sequences of constants $c_{n},w_n\geq0$ (where $n=NT$), such that $\max\limits_{1\leq t\leq T-1}(|\mathcal{W}_{t}|_{1,1}\vee|M_{t}^{-1}|_{\infty})\leq c_{n}$, and $$\max\limits_{1\leq t\leq T-1}\sum_{i=1}^d\sum_{j=1}^{d_1}|\mathcal{W}_{t,ij}|^{r}\lesssim w_n,\, \text{ for some } 0\leq r<1,$$
    where $\mathcal W_{t,ij}$ is the element in the $i$-th row and $j$-th column of $\mathcal W_t$.
  %  \item[(ii)] For each $t=2,\ldots,T$, assume that $|M_{t}-\widehat{M}_{t}|_{\max}\lesssim_\P\rho_{N,t}$, with $\rho_{N,t}\to0$ as $N\to\infty$.
    \item[(ii)] Let $v_{n} \defeq\log (N\vee T\vee d)$, and assume that
    $$c_n^2w_n(v_n/N)^{\frac{1-r}{2}} + c_n\max\limits_{1\leq t\leq T-1}\sqrt{s_t^*}\log m_t/\sqrt{N}=\smallO(1).$$
    %$$c_n^2w_n(v_n/N)^{\frac{1-r}{2}}\sqrt{T} + c_n\max_{1\leq t\leq T-1}s_t^*\log m_t\sqrt{T}/\sqrt{N}\to 0,\, \text{ as }N,T\to \infty,$$
    %{\color{red} $\{ [c_{N,T}^2s^*(\sqrt{v_{n}}/\sqrt{N})^{1-r}]\vee \sqrt{N}^{-1}\sqrt{s^*\max_t\log{m_t}}\ll \sqrt{T}^{-1}$\}}
    with the same $r$ that makes part (i) hold.
    \item[(iii)] The tuning parameter $\ell_t\geq0$ used in \eqref{penalize} satisfies $\max\limits_{1\leq t\leq T-1}\ell_{t}\vartheta_n=\smallO(1/\sqrt{NT})$, and $\max\limits_{1\leq t\leq T-1}(\ell_{t}+\rho_{N,t}c_{n})\lesssim c_{n}\sqrt{v_{n}/N}$, where $\vartheta_n\defeq |\theta_2^0|_1$, and $\rho_{N,t}$ is a sequence such that $|M_{t}-\widehat{M}_{t}|_{\max}\lesssim_\P\rho_{N,t}$.
\end{enumerate}
\end{Assumption}
Assumption \ref{assum}(i) bounds the magnitude of the weighting matrix and the 
moment matrix, while Assumption \ref{assum}(ii) translates these bounds into a rate condition on the 
sparsity of the first-stage LASSO. This condition is analogous in spirit to the rate condition in 
Theorem \ref{main}, but adjusted for the additional complexity introduced by the high-dimensional exogenous covariates and the estimated weights. Assumption \ref{assum}(iii) %ensures that the penalization is not too strong, so that the shrinkage bias in $\widehat\theta_1$ is asymptotically negligible.
restricts the order of the tuning parameter $\ell_t$ used in the Dantzig selector, which controls the penalty level and the deviation $|\widehat{\mathcal W}_t - \mathcal W_t|_{\max}$.

\begin{Theorem}[Asymptotic Normality for the General Model Estimator] \label{maindiverged}
%Assume that $\|\sigma_1^2\|<\infty$ is a positive definite matrix.
Under Assumptions \ref{a1}--\ref{assum}, assuming the asymptotic variance $\Omega_1$ is a positive definite matrix, we have the general model estimator $\widehat\theta_{1}$ obtained by \eqref{diverged1} is consistent for $\theta_1^0$, and
\beq \label{asym22}
\sqrt{NT}(\widehat{\theta}_{1}- \theta_1^0)\stackrel{\mathcal{L}}{\to} \operatorname{N}(0, \Omega_1).
\eeq
\end{Theorem}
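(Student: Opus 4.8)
The plan is to linearize the estimator around $\theta_1^0$ using the first-differenced structural equation $\Delta Y_{it}=\Delta X_{1,it}^\top\theta_1^0+\Delta X_{2,it}^\top\theta_2^0+\Delta\varepsilon_{it}$ (writing everything without the demeaning tilde, as the theorem is stated irrespective of the time-effect transformation). Substituting into \eqref{diverged1} gives
$$\sqrt{NT}(\widehat\theta_1-\theta_1^0)=\widehat Q_1^{-1}\Big\{(NT)^{-1/2}\sum_{i=1}^N\sum_{t=2}^T\widehat{\mathcal W}_t^\top\widehat U_{it}\Delta\varepsilon_{it}+(NT)^{-1/2}\sum_{i=1}^N\sum_{t=2}^T\widehat{\mathcal W}_t^\top\widehat U_{it}\Delta X_{2,it}^\top\theta_2^0\Big\},$$
where $\widehat Q_1=(NT)^{-1}\sum_{i,t}\widehat{\mathcal W}_t^\top\widehat U_{it}\Delta X_{1,it}^\top$. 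The argument then reduces to three tasks: (a) the $\theta_2^0$-bias term is $\smallO_\P(1)$; (b) $\widehat Q_1\stackrel{\P}{\to}Q_1$, which is nonsingular by assumption; and (c) the first term satisfies a central limit theorem with variance $\bm\Sigma_1$. Consistency of $\widehat\theta_1$ follows from the same decomposition once each piece is shown to be bounded in probability.

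For task (a), I would exploit the feasibility constraint of the Dantzig selector in \eqref{penalize}. Reading off its lower $d_2$ rows and using that the corresponding block of $\mathbf I_{d\times d_1}$ is zero yields the approximate orthogonality $|N^{-1}\sum_i\Delta X_{2,it}\widehat U_{it}^\top\widehat{\mathcal W}_t|_{\max}\leq\ell_t$. Hence the $\theta_2^0$-bias term is bounded in $|\cdot|_\infty$ by $\max_t\ell_t\cdot|\theta_2^0|_1=\max_t\ell_t\vartheta_n$, and Assumption \ref{assum}(iv), $\max_t\ell_t\vartheta_n=\smallO(1/\sqrt{NT})$, kills it after multiplication by $\sqrt{NT}$. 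This is where the partialling-out construction does its job: the nuisance $\theta_2^0$ is annihilated by the built-in orthogonalization rather than by accurate estimation.

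For task (c) I would replace the estimated objects by their population counterparts via
$$(NT)^{-1/2}\sum_{i,t}\widehat{\mathcal W}_t^\top\widehat U_{it}\Delta\varepsilon_{it}=(NT)^{-1/2}\sum_{i,t}\mathcal W_t^\top U_{it}\Delta\varepsilon_{it}+R_1+R_2+R_3,$$
where $R_1$ carries the weighting-matrix error $(\widehat{\mathcal W}_t-\mathcal W_t)$, $R_2$ the instrument error with $\widehat U_{it}-U_{it}=(((\widehat{\bm\Theta}_t-\bm\Theta_t^0)V_{it})^\top,\mathbf 0^\top)^\top$, and $R_3$ their product. The leading term obeys a CLT: since $\varepsilon_{it}$ is a martingale difference, $\Delta\varepsilon_{it}$ has only lag-one temporal correlation, so a martingale/blocking CLT under the functional-dependence Assumptions \ref{a1}-\ref{a2} delivers $N(0,\bm\Sigma_1)$ with $\bm\Sigma_1$ comprising the diagonal and the two adjacent-lag covariance blocks. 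To control the remainders I would derive a Dantzig-selector error bound for $|\widehat{\mathcal W}_t-\mathcal W_t|$ from $|M_t^{-1}|_\infty\leq c_n$ and the weak-sparsity bound $\sum_{ij}|\mathcal W_{t,ij}|^r\lesssim w_n$ of Assumption \ref{assum}(i), combined with $|M_t-\widehat M_t|_{\max}\lesssim_\P\rho_{N,t}$ of (ii) (obtained via the concentration inequality of Lemma \ref{emp}); and I would bound $R_2$ using the joint LASSO prediction bounds of Corollary \ref{joint}. The two summands of the rate condition in Assumption \ref{assum}(iii) are precisely calibrated so that $R_1$ (weighting-matrix error, $\asymp c_n^2w_n(v_n/N)^{(1-r)/2}\sqrt T$) and $R_2$ (LASSO error, $\asymp c_n\max_t s_t^*\log m_t\sqrt T/\sqrt N$) are $\smallO_\P(1)$, with $R_3$ higher order; the same replacement plus a law of large numbers gives task (b).

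The main obstacle is $R_2$, where the generated instruments interact with the structural error. Without sample-splitting, $\widehat{\bm\Theta}_t$ is estimated from the same cross-section that supplies $\Delta\varepsilon_{it}$, so the two are dependent and $R_2$ admits only the coarser bound of order $\max_t s_t^*\log m_t\sqrt{T/N}$ — the source of the $\sqrt T$ factor and the $s_t^*$ (rather than $\sqrt{s_t^*}$) rate in the un-split form of Assumption \ref{assum}(iii). The resolution, exactly as in Theorem \ref{mainss}, is cross-fitting: computing $\widehat{\bm\Theta}_t$ and $\widehat{\mathcal W}_t$ on a sub-sample cross-sectionally independent of the one forming the second-step sum makes $\E[\mathcal W_t^\top(\widehat U_{it}-U_{it})\Delta\varepsilon_{it}\mid\text{auxiliary sample}]=0$ by the moment condition $\E(V_{it}\Delta\varepsilon_{it})=0$, turning $R_2$ into a conditionally mean-zero (variance-type) term; this removes the $\sqrt T$ factor and upgrades $s_t^*$ to $\sqrt{s_t^*}$, yielding the sharpened rate quoted after Assumption \ref{assum}. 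Carefully propagating the interaction of the two high-dimensional first-step errors through this conditioning step while respecting the temporal dependence is the delicate part of the proof.
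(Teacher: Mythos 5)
Your proposal follows essentially the same route as the paper's proof: the same decomposition into a $\theta_2^0$-bias term killed by the Dantzig feasibility constraint together with $\max_t\ell_t\vartheta_n=\smallO(1/\sqrt{NT})$, the same replacement of $\widehat{\mathcal W}_t,\widehat U_{it}$ by their population counterparts with the weighting-matrix and LASSO remainders calibrated exactly to the two summands of Assumption \ref{assum}(iii), and the same CLT for $(NT)^{-1/2}\sum_{i,t}\mathcal W_t^\top U_{it}\Delta\varepsilon_{it}$ with the sample-splitting refinement noted at the end. No gaps.
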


The theorem confirms that the $\sqrt{NT}$-rate and Gaussian limiting distribution 
established for the basic model carry over intact to the general setting with many 
exogenous covariates, provided the additional regularity conditions in 
Assumption \ref{assum} are satisfied. The estimator thus remains suitable for 
inference in the empirically relevant case. %where researchers include a large number of controls alongside the endogenous dynamic regressors. {\red IFV: check if the rest of the paragraph belongs here} 
Notably, the crucial rate condition in Assumption \ref{assum}(ii) remains unchanged 
% can be further improved to
% $c_n^2w_n(v_n/N)^{\frac{1-r}{2}} + c_n\max\limits_{1\leq t\leq T-1}\sqrt{s_t^*}\log m_t/\sqrt{N}=\smallO(1)$,
if a sample-splitting procedure is employed. Specifically, $\widehat{\mathcal W}_t$ obtained via the Dantzig selector and the instrument selection by LASSO are constructed from a sub-sample that is cross-sectionally independent of the sub-sample used to compute the final estimator in \eqref{diverged1}. Further details are provided at the end of the proof of Theorem \ref{maindiverged}.

\section{Simulation Study}\label{sim}
We illustrate the finite sample properties of the proposed AB-LASSO and AB-LASSO-SS estimators through numerical simulations, where we also compare them  with other alternative methods. Following  \citet{moral2013likelihood} and \citet{moral2019dynamic}, we use the following data generating process  of \citet{bun2006effects}: 
\begin{align*}
Y_{it} &= \alpha_i + \theta_1 Y_{i,t-1} + \theta_2 D_{it} + \varepsilon_{it},\\
D_{it} &=\rho D_{i,t-1} + \phi Y_{i,t-1} + \pi\alpha _i + v_{it}, \quad i=1,\ldots,N, \quad t=1,\ldots,T,
\end{align*}
where $\alpha_i\stackrel{\operatorname{i.i.d.}}{\sim}\operatorname{N}(0,2.96)$. We consider 2 cases, one with conditional homoskedasticity and the other with conditional heteroskedasticity. In the homoskedastic case, $\varepsilon_{it}$ and $v_{it}$ are i.i.d.\ $t(4)$ (Student's $t$-distribution with four degrees of freedom), whereas in the heteroskedastic case, $\varepsilon_{it}=\{1+0.5*\mathbf{1}(v_{it}>0)\}e_{it}$, with $e_{it}$ and $v_{it}$ being i.i.d.\ $t(4)$. The sequences ${e_{it}}$, ${v_{it}}$, and ${\alpha_i}$ are mutually independent. Here, $D_{it}$ is predetermined with respect to $\varepsilon_{it}$ but not strictly exogenous, with the parameter $\phi$ capturing the feedback from $Y_{i,t-1}$ to $D_{it}$. In addition, the fixed effects are correlated with both $D_{it}$ and the feedback. We set $\theta_1=0.75$, $\theta_2=0.25$, $\rho=0.5$, $\phi=-0.17$, and $\pi=0.67$. While our methods do not rely on stationarity, we initialize the process with the same starting values of $Y$ and $D$ as \citet{moral2013likelihood} and \citet{moral2019dynamic}, which ensure mean stationarity. In Appendix \ref{app.cab}, we conduct another numerical simulation where the data generating process is calibrated to the empirical study in Section \ref{app}.

We assume that $Y_{it}$ is first observed at $t=1$ and use all the available lags of $Y_{it}$ and $D_{it}$ to construct the moment conditions for AB-LASSO and AB-LASSO-SS with FOD, that is, 
$$
\E(Z_{it} \Delta\widetilde{\varepsilon}_{it}) = 0, \quad Z_{it} = (Y_{i,t-1}, \ldots, Y_{i1}, D_{it},\ldots,D_{i1})^{\top}, \quad t = 2,\ldots,T-1,
$$
and $\E(D_{i1} \Delta\widetilde{\varepsilon}_{i1}) = 0. $
See Remark  \ref{Remark:ic} for a description of how the AB-LASSO and AB-LASSO-SS are modified when we do not observe the initial condition $Y_{i0}$. 
%We utilize all available lags $Y_{i,t-2},\ldots$, and $D_{i,t-1},\ldots$, to construct moment conditions for both AB and AB-LASSO. 
%It is important to note that in our setting, we allow for correlation between $\varepsilon_{it}$ and $v_{is}$, $s > t$, making it necessary to project $\Delta\widetilde D_{it}$ onto the IV space as well. 
The LASSO fitting is carried out using post-LASSO with penalty tuning parameters $\lambda_t$ that are independent of the design matrix. 
Specifically, for each $t=2,\ldots,T-1$, we set $\lambda_t=c\sqrt{N}\Phi^{-1}(1-0.1/(2m_t))$, where $c=1.1$ and $m_t$ denotes the number of instruments, i.e., the number of regressors in the LASSO regression for each time period $t$.\footnote{For the practical choice of the constant $c$, we recommend calibrating it on the  data at hand. In our application-based calibration (Appendix \ref{app.cab}), we find that $c=1.5$ performs well and adopt this value for the empirical analysis in Section \ref{app}, while in this simulation study we find that $c=1.1$ performs well.} 
To account for heteroskedasticity, the penalty weights associated with each regressor $V_{it,k}$, for $k=1,\ldots,m_t$, are defined as $\sqrt{\sum_{i=1}^N V_{it,k}^2\hat\eta_{it}^2}\big/\sqrt{N}$, where $\hat\eta_{it}$ are preliminary estimates of the error terms in the first step regressions. 
This approach is adopted for its conservatism to prevent overfitting \citep{lasso2018}. 
% {\red We set the penalty weight associated with the $j$-th regressor, $X_{i,j}$, to be the standard deviation of $X_{i,j}^2\widehat\vps_{i}^2$, where $\widehat\vps_{i}$ denotes the residuals obtained by a preliminary regression.} 
%We set the penalty weights equal to one, i.e., $\omega_{ts}=1$ for all $t,s$. 
%SS stands for using sample splitting and cross fitting to estimate $\theta$ under AB-LASSO.  
For the AB-LASSO-SS estimator, we implement a $K$-fold sample-splitting and cross-fitting procedure with $K \in \{2,5\}.$  According to the formula for the asymptotic variance estimation shown in \eqref{var.est}, we calculate the standard error using the residuals based on the final estimate obtained after aggregation. 
% as follows: the entire sample is randomly split into $K$ parts alone the individual dimension. For each $k=1,\ldots,K$, the $k$-th fold serves as the main sample, while the remaining observations form the auxiliary sample. We perform the LASSO step using the auxiliary sample and estimate $\theta$ using the main sample. The $K$ estimates are then aggregated by taking the mean to produce the final estimate on $\theta$. We calculate the standard error using the residuals based on this final estimate. 
We also repeat the estimation 100 times using random sample splits and aggregate the results through the medians. 

In addition to the conventional two-step AB, we compare the results with several alternatives. First, the debiased AB of \citet{chen2019mastering} using one random split (i.e. 2 folds) along the cross-section dimension. We refer to this estimator as DAB-SS.  Second, the analytically debiased fixed effects estimator (DFE-A), which corrects the bias of the FE estimator arising from the incidental parameter problem \citep{nickell1980correcting,kiviet1995bias,hahn2002asymptotically}. Third, the likelihood-based estimator \citep{moral2013likelihood,moral2019dynamic}. For all the  comparison estimators, we use analytical standard error clustered at the individual level. We do not consider the half-panel jackknife bias corrected estimator of \citet{chudik2018half} because it relies on unconditional stationarity of all the variables, which is restrictive for many applications. For example, most of the covariates in our empirical application of Section \ref{app} are staggered policy indicators that are not stationary over time. 

For each estimator, we report the root mean square error (RMSE), standard deviation (SD) and bias in percent of the true parameter value, together with the length and empirical coverage of confidence intervals (CI) with a nominal confidence level of $95\%$. All of them are computed using 500 simulations. Tables \ref{table:n100t2.hetero} and \ref{table:n200t2.hetero}  display the results for the treatment coefficient $\theta_2$, which is typically of primary interest, for $T \in \{20; 30; 40; 50; 60\}$ in the heteroskedastic case for $N=100$ and $N=200$, respectively. Similar results for the homoskedastic case can be found in Appendix \ref{app.supp}.  These sample sizes lead to   $m \in \{360; 840; 1,520; 2,400; 3,480\}$ moment conditions for AB, which  are large relative to the corresponding sample sizes $n = NT \in \{2,000 ; 3,000; 4,000; 5,000; 6,000\}$ when $N=100$ and $n = NT \in \{4,000 ; 6,000; 8,000; 10,000; 12,000\}$ when $N=200$.  Thus, the resulting orders of the small bias condition are $m^2/n \in \{64.8; 235.2; 577.6; 1,152; 2,018.4\}$ when $N=100$ and $m^2/n \in \{32.4; 117.6; 288.8; 576; 1,009.2\}$ when $N=200$, which are not negligible.  

We find that AB-LASSO consistently outperforms AB and DAB across all evaluation criteria for all sample sizes. It is evident that using LASSO to select the most relevant moments  significantly reduces bias and results in more accurate coverage rates than AB. The CI length of our methods is generally shorter than that for AB, suggesting that the bias reduction does not come at the expense of more dispersion.  Comparing with DFE-A, our methods have similar SD, RMSE and CI length, but  smaller absolute bias. Moreover, AB-LASSO and AB-LASSO-SS display coverages closer to the nominal level than DFE-A. The undercoverage of DFE-A is due to the bias being comparable to the standard deviation.

When the panel is relatively short, %the performance of ML is comparable to AB-LASSO for $\theta_2$ and even superior for the autoregressive coefficient $\theta_1$.
the performance of AB-LASSO in terms of RMSE and CI coverage is comparable to that of ML.  Although ML attains a lower RMSE when $T=20$, this advantage vanishes for the other values of $T$ considered.  We acknowledge that selecting lags through LASSO might not be essential when the number of instruments is modest (e.g., when $T=20$). However, ML is only feasible when the number of observations is  large relative to the number of moment conditions,  since the sample covariance matrix is otherwise not positive definite. Specifically, with $N=100$, the maximum feasible value of $T$ is $50$. Moreover, for $N=100$ and $T=50$, the likelihood-based model fails to converge in roughly two-thirds of the simulations.\footnote{We implement the maximum likelihood estimator on dynamic panel models using the \texttt{dpm} function in the \textsf{R} package \texttt{dpm} \citep{dpm}.} In such cases, parameter estimates are available (as initial values are returned), but standard errors are not, as measures of fit are not reported. Consequently, we do not report ML results for $T\in\{50,60\}$ when $N=100$. Our methods remain robust as $T$ increases, demonstrating the practical suitability for longer panels when $N$ is moderate. In practice, long panels with $T>50$ and moderate $N$ arise in applications such as modeling GDP growth rates for OECD (Organization for Economic Co-operation and Development) countries. For example,  \citet{galvao2011quantile} used annual data from 1948 to 2008 on 18 countries, that is $N=18$ and $T=61$, where our methods remain feasible but ML becomes impractical. 
In general, relatively long panels are common in macroeconomic applications that use annual or quarterly data on a set of countries or regions (e.g., \citealp{garcia1987macroeconomic,hoogstrate2000pooling}).

%We also find that sample splitting helps reduce the overfitting bias of AB-LASSO for $\theta_1$ to some extent, though the effect is less pronounced for $\theta_2$, 
In this case sample splitting does not further improve the performance of AB-LASSO significantly, likely because FOD already mitigates much of the overfitting bias and there are only two right-hand-side observed variables in the equation for $Y_{it}$. Overall, the results for AB-LASSO-SS remain robust across different choices of $K$ in terms of estimation, while using $K=5$ split folds improves the inference precision in some cases by producing narrower CI  with less conservative coverage rates. %For DAB-SS, some improvement over AB is observed for the autoregressive coefficient $\theta_1$, but not for the treatment coefficient $\theta_2$. 
Comparing the results for $N=200$ and $T=20$ in Table \ref{table:n200t2.hetero} with those for $N=100$ and $T=40$ in Table \ref{table:n100t2.hetero}, we find that AB-LASSO and AB-LASSO-SS exhibit superior performance in the latter case, showing lower bias and more accurate CI coverage. This indicates their robustness when the time dimension $T$ is large relative to $N$, while the sample size $n=NT$ remains fixed.  %With a relatively large $N$, the bias in AB estimation is partially mitigated for short panels with $T=20$. However, as $T$ increases, AB continues to exhibit substantial bias and much wider confidence intervals, which affects the credibility of the coverage rate for inference purposes. 
Finally, we give an example comparing the complexity of AB and AB-LASSO in terms of computer memory usage. Implementing two-step AB with an efficient weighting matrix under heteroskedasticity for $N=200$ and $T=30$ using the \texttt{pgmm} function in the \textsf{R} package \texttt{plm} \citep{plm} on a single sample (with $840$ moment conditions) consumes approximately $1.7$ GB of peak memory, whereas AB-LASSO-SS with a single 5-fold partition requires only $45$ MB.\footnote{We track the RAM usage in \textsf{R} using the package \texttt{peakRAM} \citep{peakRAM}.}

\section{School Opening and COVID-19 Spread}\label{app}
We apply AB-LASSO to study the effect of K-12 schools opening and other policies on the spread of COVID-19 in the U.S. We use a balanced panel of 2,510 counties over 32 weeks between April 1st and December 2nd, 2020. This panel was extracted from \citet{chernozhukov2021association}, which constructed an unbalanced  panel of  U.S. counties including 7-day moving averages of daily observations for the same period. We aggregate the observations at the week level to avoid spurious serial correlation coming from the moving averages.

In this application, $Y_{it}$ is the logarithm of the number of reported COVID-19 cases in county $i$ at week $t$, $D_{it}$ is a measure of visits to K-12 schools from SafeGraph foot traffic data, $C_{it}$ contains other treatments and control variables,  $\alpha_i$ is a county fixed effect and $\gamma_t$ is a week fixed effect.  We estimate the model:
\begin{equation*}\label{covid}
 Y_{it} = \alpha_i + \gamma_t + \theta_0 D_{i,t-1} + \beta_1Y_{i,t-1} + \beta_2Y_{i,t-2} + \beta_3Y_{i,t-3} + \beta_4Y_{i,t-4} + \theta_1^\top C_{1i,t-1} + \theta_2 C_{2it}  + \varepsilon_{it},
\end{equation*}
where  $C_{1it}$ includes a measure of visits to colleges and policy indicators on mask mandates, stay-at-home orders and the ban on gatherings of more than 50 persons,  and $C_{2it}$ includes a measure of the weekly growth rate in the number of tests. We assume that there is no serial correlation in $\varepsilon_{it}$ over $t$. The variables in $D_{it}$ and $C_{1it}$ enter the model lagged one week to account for the time lag between infection and case confirmation. Additionally, we assume that $D_{it}$, $C_{1it}$ and $C_{2it}$  are predetermined with respect to $\varepsilon_{it}$. Accordingly, we use %$Y_{i,t-2},\ldots, Y_{i1}$, $D_{i,t-1},\ldots, D_{i1}$, $C_{1i,t-1},\ldots, C_{1i1}$ and $C_{2i,t-1},\ldots, C_{2i1}$
$Y_{i,t-1},\ldots, Y_{i1}$, $D_{it},\ldots, D_{i1}$, $C_{1it},\ldots, C_{1i1}$, and $C_{2it},\ldots, C_{2i1}$ to construct moment conditions at each $t=1,\ldots,T-1$, for AB-LASSO with FOD.
%For AB with first differencing, the valid instruments at each $t$ are given by $Y_{i,t-2},\ldots, Y_{i1}$, $D_{i,t-1},\ldots, D_{i1}$, $C_{1i,t-1},\ldots, C_{1i1}$ and $C_{2i,t-1},\ldots, C_{2i1}$.
This yields $m=3,375$ moment conditions and $n=NT = 2,510 \times 27 = 67,770$ observations.\footnote{The first 5 weeks are used as initial conditions.} %[IFV: We need to check the number of moment conditions for , I FIND $m=2,901$]
AB is likely to be biased in this case because
$m^2/n \approx  168$.

Table \ref{app_table1} presents estimates and standard errors for AB-LASSO, AB-LASSO-SS with $K=2$, AB, DAB-SS, and DFE-A.\footnote{For conciseness, we do not report the results for the autoregressive coefficients, as they are not the policy parameters of interest, but they are available from authors upon request.} The likelihood-based estimator is excluded because applying the \texttt{dpm} function in \textsf{R} that we use in the numerical simulations  results in a singular design error.
%it fails to handle binary covariates with very low variation.
%[IFV: CAN WE JUST SAY THAT ML DOES NOT CONVERGE OR DOES NOT RUN]
For the AB-LASSO and AB-LASSO-SS estimators, the penalty parameters are chosen based on the calibrated simulation in Appendix \ref{app.cab}. The coefficients of the model measure the short run effects of the covariates. In addition to these coefficients, we report results for the long-run effects obtained as $\theta_k/(1-\sum_{j=1}^4\beta_j)$ where $\theta_k$ is the coefficient of the covariate of interest and $\beta_1,\ldots,\beta_4$ are the coefficients of $Y_{i,t-1},\ldots,Y_{i,t-4},$ respectively.

All the methods reveal  positive and significant effects of K-12 school openings and college visits on the spread of COVID-19. In particular, allowing
 K-12 school openings and college visits is associated with a higher number of cases in both the short and long run.  The estimated effects of mask mandates and stay-at-home orders are negative and significant both in the short and long run. The positive coefficient of weekly test growth is found to be marginally significant. %The effect of banning gatherings is less statistically significant under AB-LASSO and AB-LASSO-SS. 

Compared with AB, AB-LASSO and AB-LASSO-SS produce considerably smaller absolute estimates for college visits and banning gatherings, but more negative effects of mask mandates, in both the short and long run. Our methods also provide more precise estimates of both effects than AB, particularly for school opening and college visits. %The difference might be attributed to the bias in the autoregressive coefficients in AB. In particular, AB-LASSO-SS gives significantly smaller estimates of the coefficient of $Y_{i,t-1}$.
Compared with DFE-A, AB-LASSO and AB-LASSO-SS predict smaller short-run effects but larger long-run effects for school opening and college visits, substantially larger absolute effects for both short- and long-run impacts of mask mandates and stay-at-home orders, and less negative effects of banning gatherings.\footnote{In the calibrated simulation of Online Appendix \ref{app.cab}, we find that AB-LASSO and DFE-A have biases with reverse signs for the short run effects of K12 school openings, college visits, mask mandates and stay-at-home orders.}
In general, the results of AB-LASSO-SS are not sensitive to the number of folds $K$, so we report only $K=2$. Debiasing the standard AB through half-splitting the panel does not substantially change most short-run effect estimates. %The difference in the long-run effects might be attributed to the bias in the autoregressive coefficients in AB.
%DFE-A produces a larger absolute estimate for the coefficient of the lagged dependent variable than conventional FE.

According to AB-LASSO, we conclude that the opening of K-12 schools one week is associated with an increase in the number of COVID-19 cases of about $50\%$ the week after and has a compounded long run increase of nearly $300\%$. Allowing college visits one week is associated with approximately an $80\%$ increase in cases in the following week and about a $500\%$ increase in the long run. Mask mandates, stay-at-home orders, and banning gatherings have more modest effects. The reduction in the number of cases is $10\%$, $7\%$, and $7\%$ after one week and roughly $65\%$, $40\%$, and $45\%$ in the long run, respectively. These effects are all statistically and economically relevant.

% To remove the unobserved individual effects, we transform the model by taking the first difference over time. The time effect is controlled by including time dummies. We use all lags $Y_{i,t-2},\ldots$, $D_{i,t-1},\ldots$, $\bm Z_{i,t-1},\ldots$, and $\mathcal T_{i,t-1},\ldots$, to construct moment conditions for both AB and AB-LASSO. Note that in this case, projecting $\Delta\mathcal T_{it}$  and $\Delta Y_{i,t-1}$ onto the IV space is required, but not necessary for $\Delta D_{i,t-1}$ and $\Delta\bm Z_{i,t-1}$. We implement a $K$-fold sample-splitting and cross-fitting procedure, as described in Section \ref{sim}. The debiased AB estimator (DAB-SS) has also been detailed in Section \ref{sim}.
% %For sample-splitting, the whole sample is randomly divided by two parts alone the cross-sectional dimension as the main/auxiliary sample respectively. The estimates and analytical standard deviations (in parentheses) are averaged over 100 random splittings by medians. %The empirical standard deviations of the estimates over random splittings are reported in brackets.

% Additionally, apart from the short-run effect $\alpha$, a long-run effect is also measured by $\alpha/(1-\sum_{j=1}^4\beta_j)$.

\section{Concluding Remarks}\label{sec:conclusion}
We propose  a LASSO and cross-fitting based estimator of dynamic linear panel models. This estimator shows better large sample properties and finite-sample performance in simulations than the classical AB estimator in long panels. In an empirical application, our estimator finds that policies such as the closure of K-12 schools and colleges, mask mandates, and stay-at-home orders reduced the spread of COVID-19 in both the short and long run. Our estimated effects of college visits and banning gatherings, however, are less optimistic than the estimates obtained with AB.

A potential avenue for future research is to analyze the performance of our method under weak identification. This situation arises, for example, in dynamic models when the process of the outcome $Y_{it}$ is very persistent such that lagged values of $Y_{it}$ might not be strongly correlated with the transformed outcome $\Delta Y_{it}$. Here, we can follow  \cite{blundell1998initial} and employ a system GMM estimator that exploits additional moment conditions for the equation in levels, if we make stationarity assumptions about the initial conditions of the process. The number of these moment conditions grows fast with $T$ and may introduce additional many moment conditions bias. In preliminary numerical simulations, we find that LASSO does not do a good job selecting among these moment conditions as they do not have a natural sparse structure. 
We leave a more detailed treatment of the system GMM estimator with many moment conditions to future research.
Alternatively, \cite{newey2009generalized} developed a jackknife GMM estimator to deal with many weak instruments for cross-sectional data, which could be extended to incorporate cross-fitting in our panel setting.
Here, we also conjecture that standard methods for dealing with weak instruments for cross-section data, such as the use of the Anderson-Rubin statistics, can be fruitfully applied to our setting \citep{anderson1949estimation}.  We  also leave this analysis to future research.

\begin{table}[htbp!]
	\begin{center} 	\caption{Results for $\theta_2=0.25$ with $N=100$, heteroskedastic case}\label{table:n100t2.hetero}
		\begin{tabular}{p{1.5cm} cccccccc}
			\hline\hline
			& \footnotesize{AB} &  \footnotesize{AB-LASSO} & \footnotesize{AB-LASSO-SS} &  \footnotesize{AB-LASSO-SS} & \footnotesize{DAB-SS} & \footnotesize{DFE-A} & \footnotesize{ML}\\
			& &  & \footnotesize{($K=2$)} &  \footnotesize{($K=5$)} &  \\
			\cline{2-8}
			& \multicolumn{7}{c}{\small{$T=20$}}\\
			\hline
            \footnotesize{RMSE} & 0.32 & 0.14 & 0.17 & 0.17 & 0.57 & 0.15 & \bf{0.13} \\
            \footnotesize{SD} & 0.32 & 0.12 & 0.16 & 0.16 & 0.57 & \bf{0.11} & 0.13 \\
            \footnotesize{bias} & -0.05 & -0.06 & -0.06 & -0.07 & -0.01 & 0.10 & \bf{0.00} \\
            \footnotesize{CI length} & 2.22 & 0.49 & 0.80 & 0.64 & 2.22 & 0.43 & 0.49 \\
            \footnotesize{coverage} & 1.00 & 0.91 & 0.99 & \bf{0.96} & \bf{0.96} & 0.85 & 0.93 \\
			\hline
			& \multicolumn{7}{c}{\small{$T=30$}}\\
			\hline
			\footnotesize{RMSE} & 0.55 & \bf{0.10} & 0.11 & 0.11 & 0.91 & 0.11 & \bf{0.10} \\
            \footnotesize{SD} & 0.53 & \bf{0.09} & 0.11 & 0.10 & 0.90 & \bf{0.09} & 0.10 \\
            \footnotesize{bias} & -0.12 & -0.03 & -0.03 & -0.03 & -0.14 & 0.07 & \bf{0.00} \\
            \footnotesize{CI length} & 3.31 & 0.37 & 0.49 & 0.42 & 3.31 & 0.34 & 0.37 \\
            \footnotesize{coverage} & 1.00 & 0.93 & 0.98 & \bf{0.96} & 0.93 & 0.87 & \bf{0.94} \\
			\hline
			& \multicolumn{7}{c}{\small{$T=40$}}\\
			\hline
			\footnotesize{RMSE} & 0.60 & \bf{0.08} & 0.09 & \bf{0.08} & 1.01 & 0.09 & 0.10 \\
            \footnotesize{SD} & 0.57 & 0.08 & 0.09 & 0.08 & 0.99 & \bf{0.07} & 0.10 \\
            \footnotesize{bias} & -0.18 & -0.01 & -0.01 & -0.01 & -0.20 & 0.06 & \bf{0.00} \\
            \footnotesize{CI length} & 3.55 & 0.31 & 0.38 & 0.34 & 3.55 & 0.29 & 0.31 \\
            \footnotesize{coverage} & 0.99 & \bf{0.95} & 0.97 & 0.96 & 0.91 & 0.88 & 0.89 \\
            \hline
			& \multicolumn{7}{c}{\small{$T=50$}}\\
			\hline
            \footnotesize{RMSE} & 0.68 & \bf{0.07} & \bf{0.07} & \bf{0.07} & 1.13 & 0.08 &  \\
            \footnotesize{SD} & 0.67 & 0.07 & 0.07 & 0.07 & 1.12 & \bf{0.06} &  \\
            \footnotesize{bias} & -0.12 & \bf{-0.01} & \bf{-0.01} & \bf{-0.01} & -0.10 & 0.04 &  \\
            \footnotesize{CI length} & 3.70 & 0.27 & 0.32 & 0.29 & 3.70 & 0.26 &  \\
            \footnotesize{coverage} & 0.99 & \bf{0.94} & 0.97 & \bf{0.96} & 0.90 & 0.90 &  \\
            \hline
			& \multicolumn{7}{c}{\small{$T=60$}}\\
			\hline
            \footnotesize{RMSE} & 0.66 & \bf{0.07} & \bf{0.07} & \bf{0.07} & 1.09 & \bf{0.07} &  \\
            \footnotesize{SD} & 0.65 & \bf{0.06} & 0.07 & 0.07 & 1.09 & \bf{0.06} &  \\
            \footnotesize{bias} & -0.14 & -0.02 & \bf{-0.01} & \bf{-0.01} & -0.11 & 0.03 &  \\
            \footnotesize{CI length} & 3.74 & 0.24 & 0.28 & 0.26 & 3.74 & 0.23 &  \\
            \footnotesize{coverage} & 1.00 & \bf{0.94} & \bf{0.96} & \bf{0.94} & 0.90 & 0.90 &  \\
            \hline\hline
    \multicolumn{8}{l}{\footnotesize{Notes: The numbers in the table are divided by 0.25 for RMSE, SD, bias, and CI length.}}\\
    \multicolumn{8}{l}{\footnotesize{Superior results are indicated in bold.}}
		\end{tabular}
	\end{center}
\end{table}

\begin{table}[htbp!]
	\begin{center} 	\caption{Results for $\theta_2=0.25$ with $N=200$, heteroskedastic case}\label{table:n200t2.hetero}
		\begin{tabular}{p{1.5cm} cccccccc}
			\hline\hline
			& \footnotesize{AB} &  \footnotesize{AB-LASSO} & \footnotesize{AB-LASSO-SS} &  \footnotesize{AB-LASSO-SS} & \footnotesize{DAB-SS} & \footnotesize{DFE-A} & \footnotesize{ML}\\
			& &  & \footnotesize{($K=2$)} &  \footnotesize{($K=5$)} &  \\
			\cline{2-8}
			& \multicolumn{7}{c}{\small{$T=20$}}\\
			\hline
            \footnotesize{RMSE} & 0.11 & 0.11 & 0.11 & 0.11 & 0.22 & 0.13 & \bf{0.09} \\
            \footnotesize{SD} & 0.11 & 0.09 & 0.09 & 0.09 & 0.21 & \bf{0.08} & 0.09 \\
            \footnotesize{bias} & -0.02 & -0.06 & -0.05 & -0.06 & 0.03 & 0.10 & \bf{0.00} \\
            \footnotesize{CI length} & 0.70 & 0.34 & 0.40 & 0.37 & 0.70 & 0.31 & 0.35 \\
            \footnotesize{coverage} & 1.00 & 0.87 & \bf{0.94} & 0.91 & 0.89 & 0.75 & \bf{0.94} \\
			\hline
			& \multicolumn{7}{c}{\small{$T=30$}}\\
			\hline
			\footnotesize{RMSE} & 0.26 & \bf{0.07} & \bf{0.07} & \bf{0.07} & 0.45 & 0.10 & \bf{0.07} \\
            \footnotesize{SD} & 0.25 & 0.07 & 0.07 & 0.07 & 0.46 & \bf{0.06} & 0.07 \\
            \footnotesize{bias} & -0.05 & -0.03 & -0.02 & -0.02 & \bf{0.00} & 0.08 & \bf{0.00} \\
            \footnotesize{CI length} & 1.79 & 0.26 & 0.28 & 0.27 & 1.79 & 0.24 & 0.26 \\
            \footnotesize{coverage} & 1.00 & 0.94 & 0.96 & \bf{0.95} & \bf{0.95} & 0.73 & 0.92 \\
			\hline
			& \multicolumn{7}{c}{\small{$T=40$}}\\
			\hline
            \footnotesize{RMSE} & 0.42 & \bf{0.06} & \bf{0.06} & \bf{0.06} & 0.69 & 0.08 & \bf{0.06} \\
            \footnotesize{SD} & 0.39 & 0.06 & 0.06 & 0.06 & 0.67 & \bf{0.05} & 0.06 \\
            \footnotesize{bias} & -0.16 & -0.02 & -0.02 & -0.02 & -0.17 & 0.05 & \bf{0.00} \\
            \footnotesize{CI length} & 2.44 & 0.22 & 0.23 & 0.22 & 2.44 & 0.20 & 0.22 \\
            \footnotesize{coverage} & 1.00 & 0.93 & 0.94 & 0.93 & 0.92 & 0.82 & \bf{0.95} \\
            \hline
			& \multicolumn{7}{c}{\small{$T=50$}}\\
			\hline
            \footnotesize{RMSE} & 0.51 & \bf{0.05} & \bf{0.05} & \bf{0.05} & 0.84 & 0.06 & \bf{0.05} \\
            \footnotesize{SD} & 0.49 & 0.05 & 0.05 & 0.05 & 0.82 & \bf{0.04} & 0.05 \\
            \footnotesize{bias} & -0.15 & -0.02 & -0.01 & -0.01 & -0.15 & 0.05 & \bf{0.00} \\
            \footnotesize{CI length} & 2.73 & 0.19 & 0.20 & 0.20 & 2.73 & 0.18 & 0.19 \\
            \footnotesize{coverage} & 0.99 & \bf{0.95} & 0.96 & 0.96 & 0.88 & 0.85 & \bf{0.95} \\
            \hline
			& \multicolumn{7}{c}{\small{$T=60$}}\\
			\hline
            \footnotesize{RMSE} & 0.55 & \bf{0.04} & \bf{0.04} & \bf{0.04} & 0.89 & 0.06 & 0.05 \\
            \footnotesize{SD} & 0.53 & \bf{0.04} & \bf{0.04} & \bf{0.04} & 0.87 & \bf{0.04} & 0.05 \\
            \footnotesize{bias} & -0.17 & -0.01 & -0.01 & -0.01 & -0.19 & 0.04 & \bf{0.00} \\
            \footnotesize{CI length} & 2.83 & 0.17 & 0.18 & 0.17 & 2.83 & 0.16 & 0.17 \\
            \footnotesize{coverage} & 0.99 & \bf{0.95} & 0.96 & \bf{0.95} & 0.89 & 0.85 & 0.93 \\
			\hline\hline
    \multicolumn{8}{l}{\footnotesize{Notes: The numbers in the table are divided by 0.25 for RMSE, SD, bias, and CI length.}}\\
    \multicolumn{8}{l}{\footnotesize{Superior results are indicated in bold.}}
		\end{tabular}
	\end{center}
\end{table}

\begin{table}[htbp!]
	\begin{center}\caption{Short-run and long-run effects on COVID-19 cases}\label{app_table1}
		\begin{tabular}{r ccccc}
 			\hline\hline
			&  \footnotesize{AB-LASSO} &  \footnotesize{AB-LASSO-SS} & \footnotesize{AB}  & \footnotesize{DAB-SS}  & \footnotesize{DFE-A} \\
			& & \footnotesize{($K=2$)} & & &  \\
			\hline
			% \multicolumn{1}{r}{\footnotesize{$Y_{i,t-1}$}} & 0.75$^{***}$ & 0.76$^{***}$ & 0.78$^{***}$  & 0.79$^{***}$ & 0.72$^{***}$ & 0.76$^{***}$ \\
			% & (0.01) & (0.01)  & (0.01) & (0.01) & (0.01) & (0.01) \\
			% \multicolumn{1}{r}{\footnotesize{$Y_{i,t-2}$}} & 0.01 & 0.00 & 0.01 & 0.00 & 0.00 & -0.03$^{***}$ \\
			% & (0.01) & (0.01)  & (0.01) & (0.01) & (0.01) & (0.01) \\
			% \multicolumn{1}{r}{\footnotesize{$Y_{i,t-3}$}} & 0.06$^{***}$ & 0.06$^{***}$ & 0.06$^{***}$ & 0.06$^{***}$ & 0.05$^{***}$ & 0.05$^{***}$ \\
			% & (0.01) & (0.01)  & (0.01) & (0.01) & (0.01) & (0.01) \\
			% \multicolumn{1}{r}{\footnotesize{$Y_{i,t-4}$}} & 0.01$^{***}$ & 0.02$^{***}$ & 0.01 & 0.01 & -0.02$^{***}$ & -0.03$^{***}$ \\
			% & (0.00) & (0.00)  & (0.01) & (0.01) & (0.00) & (0.00) \\
			\multicolumn{1}{r}{\footnotesize{K-12 school opening: Short-run}} & 0.55$^{***}$ & 0.50$^{***}$  & 0.47$^{***}$  & 0.47$^{***}$ &  0.65$^{***}$ \\
			& (0.09) & (0.10)  & (0.10) & (0.10)  & (0.09) \\
			\multicolumn{1}{r}{\footnotesize{Long-run}} & 3.29$^{***}$ & 3.12$^{***}$ & 3.17$^{***}$ & 3.38$^{***}$  & 2.61$^{***}$ \\
			& (0.53) & (0.58)  & (0.69) & (0.73) &  (0.34) \\
			\multicolumn{1}{r}{\footnotesize{College visits: Short-run}} & 0.70$^{***}$ & 0.79$^{***}$& 1.14$^{***}$ & 1.26$^{***}$ &  0.87$^{***}$ \\
			& (0.18) & (0.18) & (0.34) & (0.34)  & (0.23) \\
			\multicolumn{1}{r}{\footnotesize{Long-run}} & 4.20$^{***}$ & 4.91$^{***}$ & 7.69$^{***}$ & 9.06$^{***}$  & 3.51$^{***}$ \\
			& (1.06) & (1.09) & (2.32) & (2.48) &  (0.90) \\
			\multicolumn{1}{r}{\footnotesize{Mask mandates: Short-run}} & -0.14$^{***}$ & -0.14$^{***}$ & -0.10$^{***}$ & -0.10$^{***}$ &  -0.09$^{***}$ \\
			& (0.01) & (0.01) & (0.01) & (0.01) &  (0.01) \\
			\multicolumn{1}{r}{\footnotesize{Long-run}} & -0.81$^{***}$ & -0.89$^{***}$ & -0.66$^{***}$ & -0.70$^{***}$& -0.37$^{***}$ \\
			& (0.08) & (0.09) & (0.10) & (0.11) &  (0.05) \\
			%& [0.02] & [0.01] &\\
			\multicolumn{1}{r}{\footnotesize{Stay-at-home orders: Short-run}} & -0.09$^{***}$ & -0.09$^{***}$ & -0.08$^{***}$ & -0.09$^{***}$ &  -0.06$^{***}$ \\
			& (0.02) & (0.02) & (0.02) & (0.02) &  (0.02) \\
			\multicolumn{1}{r}{\footnotesize{Long-run}} & -0.53$^{***}$ & -0.53$^{***}$ & -0.54$^{***}$ & -0.64$^{***}$ &  -0.23$^{***}$ \\
			& (0.12) & (0.12) & (0.16) & (0.17) & (0.07) \\
			\multicolumn{1}{r}{\footnotesize{Banning gatherings: Short-run}} & -0.04$^{*}$ & -0.04$^{*}$ & -0.08$^{***}$ & -0.08$^{***}$ &  -0.07$^{***}$ \\
			& (0.02) & (0.02) & (0.02) & (0.02)  & (0.02) \\
			\multicolumn{1}{r}{\footnotesize{Long-run}} & -0.23$^{*}$ & -0.22$^{*}$ & -0.52$^{***}$ & -0.58$^{***}$ &  -0.27$^{***}$ \\
			& (0.13) & (0.13) & (0.16) & (0.17) &  (0.07) \\
			\multicolumn{1}{r}{\footnotesize{Tests weekly growth}} & 0.01$^{*}$ & 0.01$^{*}$ & 0.01$^{*}$ & 0.01$^{*}$ &  0.01$^{*}$ \\
			& (0.00) & (0.00) & (0.01) & (0.01) &  (0.00) \\
			\hline\hline
            \multicolumn{6}{l}{\footnotesize{Notes: Analytical standard errors in parentheses. Significant codes: $0.01^{***}$, $0.05^{**}$, $0.1^{*}$.}}
		\end{tabular}
	\end{center}
% \caption{Parameters estimates and analytical standard errors (in parentheses). % for the model in \eqref{covid}.
% Significant codes: $0.01^{***}$, $0.05^{**}$, $0.1^{*}$.}\label{app_table1}
\end{table}

\clearpage
\bibliography{biball}

\begin{thebibliography}{}

\bibitem[Alvarez and Arellano, 2003]{alvarez2003time}
Alvarez, J. and Arellano, M. (2003).
\newblock The time series and cross-section asymptotics of dynamic panel data
  estimators.
\newblock {\em Econometrica}, 71(4):1121--1159.

\bibitem[Alvarez and Arellano, 2022]{alvarez2022robust}
Alvarez, J. and Arellano, M. (2022).
\newblock Robust likelihood estimation of dynamic panel data models.
\newblock {\em Journal of Econometrics}, 226(1):21--61.

\bibitem[Anatolyev and Gospodinov, 2011]{anatolyev2011specification}
Anatolyev, S. and Gospodinov, N. (2011).
\newblock Specification testing in models with many instruments.
\newblock {\em Econometric Theory}, 27(2):427--441.

\bibitem[Anderson and Rubin, 1949]{anderson1949estimation}
Anderson, T.~W. and Rubin, H. (1949).
\newblock Estimation of the parameters of a single equation in a complete
  system of stochastic equations.
\newblock {\em The Annals of Mathematical Statistics}, 20(1):46--63.

\bibitem[Angrist et~al., 1999]{angrist1999jackknife}
Angrist, J.~D., Imbens, G.~W., and Krueger, A.~B. (1999).
\newblock Jackknife instrumental variables estimation.
\newblock {\em Journal of Applied Econometrics}, 14(1):57--67.

\bibitem[Angrist and Krueger, 1995]{angrist1995split}
Angrist, J.~D. and Krueger, A.~B. (1995).
\newblock Split-sample instrumental variables estimates of the return to
  schooling.
\newblock {\em Journal of Business \& Economic Statistics}, 13(2):225--235.

\bibitem[Arellano, 2003]{arellano2003panel}
Arellano, M. (2003).
\newblock {\em Panel Data Econometrics}.
\newblock Oxford University Press.

\bibitem[Arellano and Bond, 1991]{arellano1991some}
Arellano, M. and Bond, S. (1991).
\newblock Some tests of specification for panel data: Monte carlo evidence and
  an application to employment equations.
\newblock {\em The Review of Economic Studies}, 58(2):277--297.

\bibitem[Arellano and Bover, 1995]{arellano1995another}
Arellano, M. and Bover, O. (1995).
\newblock Another look at the instrumental variable estimation of
  error-components models.
\newblock {\em Journal of Econometrics}, 68(1):29--51.

\bibitem[Belloni et~al., 2012]{belloni2012sparse}
Belloni, A., Chen, D., Chernozhukov, V., and Hansen, C. (2012).
\newblock Sparse models and methods for optimal instruments with an application
  to eminent domain.
\newblock {\em Econometrica}, 80(6):2369--2429.

\bibitem[Belloni and Chernozhukov, 2013]{belloni2013least}
Belloni, A. and Chernozhukov, V. (2013).
\newblock Least squares after model selection in high-dimensional sparse
  models.
\newblock {\em Bernoulli}, 19(2):521--547.

\bibitem[Belloni et~al., 2011]{BCH2011}
Belloni, A., Chernozhukov, V., and Hansen, C. (2011).
\newblock Inference for high-dimensional sparse econometric models.
\newblock {\em arXiv preprint arXiv:1201.0220}.

\bibitem[Bickel et~al., 2009]{bickel2009simultaneous}
Bickel, P.~J., Ritov, Y., and Tsybakov, A.~B. (2009).
\newblock Simultaneous analysis of {Lasso} and {Dantzig} selector.
\newblock {\em The Annals of Statistics}, 37(4):1705--1732.

\bibitem[Blundell and Bond, 1998]{blundell1998initial}
Blundell, R. and Bond, S. (1998).
\newblock Initial conditions and moment restrictions in dynamic panel data
  models.
\newblock {\em Journal of Econometrics}, 87(1):115--143.

\bibitem[Bond, 2002]{bond2002dynamic}
Bond, S.~R. (2002).
\newblock Dynamic panel data models: {A} guide to micro data methods and
  practice.
\newblock {\em Portuguese Economic Journal}, 1(2):141--162.

\bibitem[Bun and Kiviet, 2006]{bun2006effects}
Bun, M.~J. and Kiviet, J.~F. (2006).
\newblock The effects of dynamic feedbacks on {LS} and {MM} estimator accuracy
  in panel data models.
\newblock {\em Journal of Econometrics}, 132(2):409--444.

\bibitem[Burkholder, 1988]{burkholder1988sharp}
Burkholder, D.~L. (1988).
\newblock Sharp inequalities for martingales and stochastic integrals.
\newblock In {\em Colloque Paul L\'evy sur les processus stochastiques (22-26
  juin 1987. \'Ecole Polytechnique, Palaiseau)}, number 157-158 in
  Ast\'erisque, pages 75--94. Soci\'et\'e math\'ematique de France.

\bibitem[Carrasco, 2012]{carrasco2012regularization}
Carrasco, M. (2012).
\newblock A regularization approach to the many instruments problem.
\newblock {\em Journal of Econometrics}, 170(2):383--398.

\bibitem[Carrasco and Nayihouba, 2024]{carrasco2024regularized}
Carrasco, M. and Nayihouba, A. (2024).
\newblock Regularized estimation of dynamic panel models.
\newblock {\em Econometric Theory}, 40(2):360--418.

\bibitem[Chao et~al., 2014]{chao2014testing}
Chao, J.~C., Hausman, J.~A., Newey, W.~K., Swanson, N.~R., and Woutersen, T.
  (2014).
\newblock Testing overidentifying restrictions with many instruments and
  heteroskedasticity.
\newblock {\em Journal of Econometrics}, 178:15--21.

\bibitem[Chen et~al., 2022]{chen2022inference}
Chen, L., Wang, W., and Wu, W.~B. (2022).
\newblock Inference of breakpoints in high-dimensional time series.
\newblock {\em Journal of the American Statistical Association},
  117(540):1951--1963.

\bibitem[Chen et~al., 2019]{chen2019mastering}
Chen, S., Chernozhukov, V., and Fern{\'a}ndez-Val, I. (2019).
\newblock Mastering panel metrics: Causal impact of democracy on growth.
\newblock {\em AEA Papers and Proceedings}, 109:77--82.

\bibitem[Cheng et~al., 2023]{cheng2023weight}
Cheng, X., S{\'a}nchez-Becerra, A., and Shephard, A.~J. (2023).
\newblock How to weight in moments matching: A new approach and applications to
  earnings dynamics.
\newblock {\em Available at SSRN 4495554}.

\bibitem[Chernozhukov et~al., 2018]{DML}
Chernozhukov, V., Chetverikov, D., Demirer, M., Duflo, E., Hansen, C., Newey,
  W., and Robins, J. (2018).
\newblock Double/debiased machine learning for treatment and structural
  parameters.
\newblock {\em The Econometrics Journal}, 21(1):C1--C68.

\bibitem[Chernozhukov et~al., 2024]{chernozhukov2024arellano}
Chernozhukov, V., Fern{\'a}ndez-Val, I., Huang, C., and Wang, W. (2024).
\newblock Arellano-bond lasso estimator for dynamic linear panel models.
\newblock {\em arXiv preprint arXiv:2402.00584}.

\bibitem[Chernozhukov et~al., 2021a]{lasso2018}
Chernozhukov, V., H\"{a}rdle, W., Huang, C., and Wang, W. (2021a).
\newblock {LASSO}-driven inference in time and space.
\newblock {\em The Annals of Statistics}, 49(3):1702--1735.

\bibitem[Chernozhukov et~al., 2021b]{chernozhukov2021association}
Chernozhukov, V., Kasahara, H., and Schrimpf, P. (2021b).
\newblock The association of opening {K-12} schools with the spread of
  {COVID-19} in the {United States}: {County-level} panel data analysis.
\newblock {\em Proceedings of the National Academy of Sciences},
  118(42):e2103420118.

\bibitem[Chudik et~al., 2018]{chudik2018half}
Chudik, A., Pesaran, M.~H., and Yang, J.-C. (2018).
\newblock Half-panel jackknife fixed-effects estimation of linear panels with
  weakly exogenous regressors.
\newblock {\em Journal of Applied Econometrics}, 33(6):816--836.

\bibitem[Croissant and Millo, 2008]{plm}
Croissant, Y. and Millo, G. (2008).
\newblock Panel data econometrics in \texttt{R}: The {plm} package.
\newblock {\em Journal of Statistical Software}, 27(2):1--43.

\bibitem[Dhaene and Jochmans, 2015]{dhaene2015split}
Dhaene, G. and Jochmans, K. (2015).
\newblock Split-panel jackknife estimation of fixed-effect models.
\newblock {\em The Review of Economic Studies}, 82(3):991--1030.

\bibitem[Dirksen, 2015]{dirksen2015tail}
Dirksen, S. (2015).
\newblock Tail bounds via generic chaining.
\newblock {\em Electronic Journal of Probability}, 20:1--29.

\bibitem[Donald et~al., 2009]{donald2009choosing}
Donald, S.~G., Imbens, G.~W., and Newey, W.~K. (2009).
\newblock Choosing instrumental variables in conditional moment restriction
  models.
\newblock {\em Journal of Econometrics}, 152(1):28--36.

\bibitem[El~Machkouri et~al., 2013]{el2013central}
El~Machkouri, M., Voln{\`y}, D., and Wu, W.~B. (2013).
\newblock A central limit theorem for stationary random fields.
\newblock {\em Stochastic Processes and their Applications}, 123(1):1--14.

\bibitem[Galvao~Jr, 2011]{galvao2011quantile}
Galvao~Jr, A.~F. (2011).
\newblock Quantile regression for dynamic panel data with fixed effects.
\newblock {\em Journal of Econometrics}, 164(1):142--157.

\bibitem[Gao et~al., 2024]{gao2024robust}
Gao, J., Liu, F., Peng, B., and Yan, Y. (2024).
\newblock Robust estimation and inference for high-dimensional panel data
  models.
\newblock {\em arXiv preprint arXiv:2405.07420}.

\bibitem[Garcia-Ferrer et~al., 1987]{garcia1987macroeconomic}
Garcia-Ferrer, A., Highfield, R.~A., Palm, F., and Zellner, A. (1987).
\newblock Macroeconomic forecasting using pooled international data.
\newblock {\em Journal of Business \& Economic Statistics}, 5(1):53--67.

\bibitem[Gr{\"o}chenig and Leinert, 2006]{grochenig2006symmetry}
Gr{\"o}chenig, K. and Leinert, M. (2006).
\newblock Symmetry and inverse-closedness of matrix algebras and functional
  calculus for infinite matrices.
\newblock {\em Transactions of the American Mathematical Society},
  358(6):2695--2711.

\bibitem[Hahn and Kuersteiner, 2002]{hahn2002asymptotically}
Hahn, J. and Kuersteiner, G. (2002).
\newblock Asymptotically unbiased inference for a dynamic panel model with
  fixed effects when both $n$ and {$T$} are large.
\newblock {\em Econometrica}, 70(4):1639--1657.

\bibitem[Hansen et~al., 2008]{hansen2008estimation}
Hansen, C., Hausman, J., and Newey, W. (2008).
\newblock Estimation with many instrumental variables.
\newblock {\em Journal of Business \& Economic Statistics}, 26(4):398--422.

\bibitem[Holtz-Eakin et~al., 1988]{holtz1988estimating}
Holtz-Eakin, D., Newey, W., and Rosen, H.~S. (1988).
\newblock Estimating vector autoregressions with panel data.
\newblock {\em Econometrica}, 56(6):1371--1395.

\bibitem[Hoogstrate et~al., 2000]{hoogstrate2000pooling}
Hoogstrate, A.~J., Palm, F.~C., and Pfann, G.~A. (2000).
\newblock Pooling in dynamic panel-data models: An application to forecasting
  {GDP} growth rates.
\newblock {\em Journal of Business \& Economic Statistics}, 18(3):274--283.

\bibitem[Kiviet, 1995]{kiviet1995bias}
Kiviet, J.~F. (1995).
\newblock On bias, inconsistency, and efficiency of various estimators in
  dynamic panel data models.
\newblock {\em Journal of Econometrics}, 68(1):53--78.

\bibitem[Klosin and Vilgalys, 2022]{klosin2022estimating}
Klosin, S. and Vilgalys, M. (2022).
\newblock Estimating continuous treatment effects in panel data using machine
  learning with a climate application.
\newblock {\em arXiv preprint arXiv:2207.08789}.

\bibitem[Kuersteiner and Okui, 2010]{kuersteiner2010constructing}
Kuersteiner, G. and Okui, R. (2010).
\newblock Constructing optimal instruments by first-stage prediction averaging.
\newblock {\em Econometrica}, 78(2):697--718.

\bibitem[Long et~al., 2024]{dpm}
Long, J.~A., Williams, R.~A., and Allison, P.~D. (2024).
\newblock {\em dpm: Dynamic Panel Models Fit with Maximum Likelihood}.
\newblock \texttt{R} package version 1.2.0.

\bibitem[Luo, 2016]{luo2016selecting}
Luo, Y. (2016).
\newblock Selecting informative moments via {LASSO}.
\newblock Unpublished manuscript.

\bibitem[Moral-Benito, 2013]{moral2013likelihood}
Moral-Benito, E. (2013).
\newblock Likelihood-based estimation of dynamic panels with predetermined
  regressors.
\newblock {\em Journal of Business \& Economic Statistics}, 31(4):451--472.

\bibitem[Moral-Benito et~al., 2019]{moral2019dynamic}
Moral-Benito, E., Allison, P., and Williams, R. (2019).
\newblock Dynamic panel data modelling using maximum likelihood: {An}
  alternative to {Arellano-Bond}.
\newblock {\em Applied Economics}, 51(20):2221--2232.

\bibitem[Newey and Smith, 2004]{newey2004higher}
Newey, W.~K. and Smith, R.~J. (2004).
\newblock Higher order properties of {G}{M}{M} and generalized empirical
  likelihood estimators.
\newblock {\em Econometrica}, 72(1):219--255.

\bibitem[Newey and Windmeijer, 2009]{newey2009generalized}
Newey, W.~K. and Windmeijer, F. (2009).
\newblock Generalized method of moments with many weak moment conditions.
\newblock {\em Econometrica}, 77(3):687--719.

\bibitem[Nickell, 1980]{nickell1980correcting}
Nickell, S. (1980).
\newblock Correcting the biases in dynamic models with fixed effects.
\newblock Technical report, Industrial Relations Section, Princeton University.

\bibitem[Okui, 2009]{okui2009optimal}
Okui, R. (2009).
\newblock The optimal choice of moments in dynamic panel data models.
\newblock {\em Journal of Econometrics}, 151(1):1--16.

\bibitem[Okui, 2011]{okui2011instrumental}
Okui, R. (2011).
\newblock Instrumental variable estimation in the presence of many moment
  conditions.
\newblock {\em Journal of Econometrics}, 165(1):70--86.

\bibitem[Phillips and Hale, 1977]{phillips1977bias}
Phillips, G. D.~A. and Hale, C. (1977).
\newblock The bias of instrumental variable estimators of simultaneous equation
  systems.
\newblock {\em International Economic Review}, 18(1):219--228.

\bibitem[Quinn, 2017]{peakRAM}
Quinn, T. (2017).
\newblock {\em peakRAM: Monitor the Total and Peak RAM Used by an Expression or
  Function}.
\newblock \texttt{R} package version 1.0.2.

\bibitem[Rio, 2009]{rio2009moment}
Rio, E. (2009).
\newblock Moment inequalities for sums of dependent random variables under
  projective conditions.
\newblock {\em Journal of Theoretical Probability}, 22(1):146--163.

\bibitem[Rudelson and Zhou, 2012]{rudelson2012reconstruction}
Rudelson, M. and Zhou, S. (2012).
\newblock Reconstruction from anisotropic random measurements.
\newblock In {\em Proceedings of the 25th Annual Conference on Learning
  Theory}, Proceedings of Machine Learning Research, pages 10.1--10.24.

\bibitem[Semenova et~al., 2023]{semenova2017estimation}
Semenova, V., Goldman, M., Chernozhukov, V., and Taddy, M. (2023).
\newblock Estimation and inference on heterogeneous treatment effects in
  high-dimensional dynamic panels.
\newblock {\em Quantitative Economics}, 14(2):471--510.

\bibitem[Shi et~al., 2025]{shi2025testing}
Shi, H., Zhang, X., Guo, X., He, B., and Wang, C. (2025).
\newblock Testing overidentifying restrictions on high-dimensional instruments
  and covariates.
\newblock {\em Annals of the Institute of Statistical Mathematics},
  77(2):331--352.

\bibitem[Takeshima and Kishore, 2025]{takeshima2025detecting}
Takeshima, H. and Kishore, A. (2025).
\newblock {\em Detecting cumulative effects of inputs within the flexible
  production function framework through LASSO shrinkage estimation:
  Implications for potassium fertilizer use in India}.
\newblock Intl Food Policy Res Inst.

\bibitem[Tibshirani, 1996]{tibshirani1996regression}
Tibshirani, R. (1996).
\newblock Regression shrinkage and selection via the {Lasso}.
\newblock {\em Journal of the Royal Statistical Society: Series B
  (Methodological)}, 58(1):267--288.

\bibitem[Wu, 2005]{wu2005nonlinear}
Wu, W.~B. (2005).
\newblock Nonlinear system theory: Another look at dependence.
\newblock {\em Proceedings of the National Academy of Sciences},
  102(40):14150--14154.

\bibitem[Zhang and Wu, 2017]{ZW_sup}
Zhang, D. and Wu, W.~B. (2017).
\newblock Supplement to ``{Gaussian} approximation for high dimensional time
  series".
\newblock DOI: 10.1214/16-AOS1512SUPP.

\bibitem[Zhu, 2018]{zhu2018sparse}
Zhu, Y. (2018).
\newblock Sparse linear models and $\ell_1$-regularized {2SLS} with
  high-dimensional endogenous regressors and instruments.
\newblock {\em Journal of Econometrics}, 202(2):196--213.

\end{thebibliography}
\bibliographystyle{apalike}

\newpage
    \vskip 2em \centerline{\Large \bf Appendix} \vskip -1em
    \setcounter{subsection}{0}
    \vskip 2em

    %\section{Appendix}
\begin{appendices}
    \renewcommand{\thesubsection}{A.\arabic{subsection}}
    \setcounter{equation}{0}
    \renewcommand{\theequation}{A.\arabic{equation}}
    \setcounter{theorem}{0}
    \renewcommand{\thetheorem}{A.\arabic{theorem}}
    \setcounter{lemma}{0}
    \renewcommand{\thelemma}{A.\arabic{lemma}}
    \setcounter{figure}{0}
    \renewcommand{\thefigure}{A.\arabic{figure}}
    \setcounter{table}{0}
    \renewcommand{\thetable}{A.\arabic{table}}
    \setcounter{Remark}{0}
    \renewcommand{\theRemark}{A.\arabic{Remark}}
    \setcounter{corollary}{0}
    \renewcommand{\thecorollary}{A.\arabic{corollary}}
    \setcounter{Example}{0}
    \renewcommand{\theExample}{A.\arabic{Example}}
    \setcounter{Assumption}{0}
    \renewcommand{\theAssumption}{A.\arabic{Assumption}}
    \setcounter{Proposition}{0}
    \renewcommand{\theProposition}{A.\arabic{Proposition}}

\section{Technical Proofs}

\subsection{Some Useful Lemmas and Auxiliary Results}
Define $\bm x=(x_1,\ldots,x_n)^\top, \bm y=(y_1,\ldots,y_n)^\top$, where $\{x_i\}_{i=1}^n$ and $\{y_i\}_{i=1}^n$ are sequences of independent, mean-zero, unit variance, sub-Gaussian random variables. Let $\mathcal A$ be a class of $n\times m$ matrices. For $A_1,A_2\in\mathcal A$, define the $d_q$-metric as $d_q(A_1,A_2)\defeq \|A_1-A_2\|_q$, where $\|A\|_q$ are the Schatten norms of matrix $A$: $\|A\|_q\defeq\Big(\sum\limits_{i=1}^{\min(m,n)}\sigma_i^q(A)\Big)^{1/q}$ for $1\leq q<\infty$, and $\|A\|_q\defeq\sigma_1$ for $q=\infty$, if $A$ has the singular values $\sigma_1\geq\cdots\geq\sigma_{\min(m,n)}$. For $\epsilon>0$, the $\epsilon$-covering number of $\mathcal A$ with respect to the $d_q$-metric is denoted by $\mathcal N(\epsilon,\mathcal A, d_q)$.
\begin{lemma}\label{iid}
Define $E=\sup\limits_{A\in\mathcal A}\max(\E|A^\top\bm x|_2^2,\E|A^\top\bm y|_2^2)+\gamma_2^2(\mathcal A,d_\infty)+\bar\Delta_2(\mathcal A)\gamma_2(\mathcal A,d_\infty)$, $V=\bar\Delta_\infty(\mathcal A)\big(\bar\Delta_2(\mathcal A) + \gamma_2(\mathcal A,d_\infty)\big)$, $U=\bar\Delta^2_\infty(\mathcal A)$, where $\gamma_p(\mathcal A,d_q)\lesssim\int_0^\infty\big(\log\mathcal N(\epsilon,\mathcal A,d_q)\big)^{1/p}d\epsilon$, $\bar\Delta_q(\mathcal A)=\sup\limits_{A\in\mathcal A}\|A\|_q$.
For any $u\geq0$, there exists $c_1,c_2>0$ such that
$$\P\Big(\sup_{A\in\mathcal A}|\bm x^\top AA^\top\bm y|\geq c_1E+u\Big)\leq2\exp\big(-c_2\min(u^2/V,u/U)\big).$$
\end{lemma}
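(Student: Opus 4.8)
The plan is to recognize the object $\bm x^\top AA^\top\bm y=\langle A^\top\bm x, A^\top\bm y\rangle$ as a \emph{decoupled} second-order chaos process indexed by $A\in\mathcal A$, and to adapt the suprema-of-chaos bound of Krahmer, Mendelson and Rauhut (stated for the single-vector quadratic form $\|A\xi\|_2^2$) to this bilinear setting. Because $\bm x$ and $\bm y$ are independent, no decoupling step is needed and the process is already centered, $\E[\bm x^\top AA^\top\bm y]=0$; moreover $\E|A^\top\bm x|_2^2=\|A\|_2^2$, so that $\sup_A\max(\E|A^\top\bm x|_2^2,\E|A^\top\bm y|_2^2)=\bar\Delta_2^2(\mathcal A)$, which identifies the first piece of $E$. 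I would fix a base point $A_0\in\mathcal A$ and split $\sup_A|\bm x^\top AA^\top\bm y|\le |\bm x^\top A_0A_0^\top\bm y|+\sup_A|\bm x^\top(AA^\top-A_0A_0^\top)\bm y|$, controlling the base term by a single decoupled Hanson--Wright estimate and the increment term by generic chaining.

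The first technical step is to estimate the increments. For fixed $A,B$ the decoupled Hanson--Wright inequality gives, with $M=AA^\top-BB^\top$, the mixed tail $\P(|\bm x^\top M\bm y|\ge s)\le 2\exp(-c\min(s^2/\|M\|_2^2,\,s/\|M\|_\infty))$, so the sub-Gaussian parameter is the Frobenius (Schatten-$2$) norm of $M$ and the sub-exponential parameter is the operator (Schatten-$\infty$) norm of $M$. The key algebraic observation is the factorization $M=A(A-B)^\top+(A-B)B^\top$ combined with the Schatten--H\"older bound $\|XY\|_q\le\|X\|_\infty\|Y\|_q$, which yields $\|M\|_2\le 2\bar\Delta_2(\mathcal A)\,d_\infty(A,B)$ and $\|M\|_\infty\le 2\bar\Delta_\infty(\mathcal A)\,d_\infty(A,B)$. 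Thus \emph{both} increment parameters are expressed through the single operator-norm metric $d_\infty$, the sub-Gaussian scale carrying the factor $\bar\Delta_2$ and the sub-exponential scale carrying $\bar\Delta_\infty$.

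With the increments reduced to a Bernstein-type condition in the metric $d_\infty$, I would invoke Talagrand's generic-chaining machinery for second-order chaos (in the mixed-tail form organized by Dirksen). The crucial point, and the source of the $\gamma_2^2$ term in $E$, is that a \emph{naive} Bernstein chaining would only give $\bar\Delta_2\,\gamma_2(\mathcal A,d_\infty)+\bar\Delta_\infty\,\gamma_1(\mathcal A,d_\infty)$, with an undesirable $\gamma_1$ functional; exploiting the genuine product structure of the chaos increments, in which the two factors $A^\top\bm x$ and $A^\top\bm y$ are chained simultaneously, replaces $\bar\Delta_\infty\,\gamma_1(\mathcal A,d_\infty)$ by $\gamma_2^2(\mathcal A,d_\infty)$. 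This delivers the in-expectation bound $\E\sup_A|\bm x^\top(AA^\top-A_0A_0^\top)\bm y|\lesssim \gamma_2(\mathcal A,d_\infty)\big(\gamma_2(\mathcal A,d_\infty)+\bar\Delta_2\big)$, and adding the base-point contribution assembles the full $E$. The tail bound then follows from the deviation form of the same chaining theorem, with the variance-proxy scale $V\asymp\bar\Delta_\infty(\bar\Delta_2+\gamma_2(\mathcal A,d_\infty))$ and the sub-exponential scale $U\asymp\bar\Delta_\infty^2$ emerging from the increment parameters evaluated at the full $d_\infty$-diameter.

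The main obstacle is precisely this last step: upgrading the naive chaining, which controls the sub-exponential part by $\gamma_1(\mathcal A,d_\infty)$, to the sharper chaos bound featuring $\gamma_2^2(\mathcal A,d_\infty)$. This requires treating $\bm x^\top AA^\top\bm y$ as a genuine second-order (product) object rather than as an abstract process with Bernstein increments, and it is where Talagrand's chaos chaining / the Krahmer--Mendelson--Rauhut argument does the real work; the fact that decoupling is free here makes the bilinear case somewhat cleaner than the quadratic form $\bm x^\top AA^\top\bm x$, but the chaining refinement is identical in spirit. A secondary, purely bookkeeping task is to verify that after the triangle-inequality split and the metric rescalings $\|M\|_2\lesssim\bar\Delta_2 d_\infty$, $\|M\|_\infty\lesssim\bar\Delta_\infty d_\infty$, the radius terms $\bar\Delta_2,\bar\Delta_\infty$ and the functional $\gamma_2(\mathcal A,d_\infty)$ recombine into exactly the stated $E$, $V$ and $U$.
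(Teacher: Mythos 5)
Your proposal is correct and follows essentially the same route the paper relies on: the paper gives no proof of Lemma \ref{iid} beyond citing Theorem 6.2 of Dirksen (2015) for the case $\bm x=\bm y$ and asserting the decoupled extension is clear, and that cited result is exactly the Krahmer--Mendelson--Rauhut/Dirksen chaos-chaining bound whose mechanism you reconstruct (decoupled Hanson--Wright increments, the factorization $AA^\top-BB^\top=A(A-B)^\top+(A-B)B^\top$ with Schatten--H\"older, and the refined chaining that trades $\bar\Delta_\infty\gamma_1$ for $\gamma_2^2$). Your sketch in fact supplies more detail than the paper does for the bilinear case.
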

A special case of Lemma \ref{iid}, with $\bm x=\bm y$, is stated in Theorem 6.2 of \cite{dirksen2015tail}. The applicability of the rate in a general context is evidently clear.
Given an $s \in \mathbb{N}$, consider $s$ probability spaces denoted as $(\Omega_{1}, \P_{1}), \ldots,(\Omega_{s}, \P_{s})$. Suppose we have a parameter set $\mathcal T$ containing $s$-tuples $\tau=(\tau_{1}, \ldots, \tau_{s})$. For each $\tau \in \mathcal T$, we have an $s$-tuple $X_{\tau}=$ $(X_{\tau_{1}}, \ldots, X_{\tau_{s}})$ of sub-exponential random variables $X_{\tau_{i}}: \Omega_{i} \rightarrow \R$, and define the sub-exponential norm as $\|X_{\tau_i}\|_{\psi_1}=\inf\{v>0:\E\exp(|X_{\tau_i}|/v)\leq 2\}$. Consider the empirical process given by
$$
W_{\tau}=\frac{1}{s} \sum_{i=1}^{s}(X_{\tau_{i}}-\E X_{\tau_{i}}),
$$
%In the terminology used here,
Bernstein's inequality (referenced, for instance, as Lemma 5.1 of \citet{dirksen2015tail}) implies that the process $(W_{\tau})_{\tau \in \mathcal T}$ exhibits a mixed tail behavior with respect to the metrics $(\frac{1}{s} d_{1}, \frac{1}{\sqrt{s}} d_{2})$, where
\begin{align*}
d_{1}(\tau, \tau')&=\max _{1 \leq i \leq s}\|X_{\tau_{i}}-X_{\tau'_{i}}\|_{\psi_{1}}, \,d_{2}(\tau, \tau')=\Big(\frac{1}{s} \sum_{i=1}^{s}\|X_{\tau_{i}}-X_{\tau'_{i}}\|_{\psi_{1}}^{2}\Big)^{1/2}.
\end{align*}

%The above theorem can directly be applied to find the following tail bound.

\begin{lemma}%(Uniform rate for the suprema of empirical processes)
[Corollary 5.2 of \cite{dirksen2015tail}, Supremum of Empirical Processes]\label{emp}
%Let $\{A_{t}\}_{t\in \mathcal T}$ be the process of averages defined as above and
Let $\sigma, K>0$ be constants such that
$$
\sup _{\tau \in \mathcal T} \frac{1}{s} \sum_{i=1}^{s} \E|X_{\tau_{i}}-\E X_{\tau_{i}}|^{r} \leq \frac{r!}{2} \sigma^{2} K^{r-2}, \quad r=2,3, \ldots.
$$
Then, for any $1 \leq q<\infty$,
$$
\Big(\E\sup_{\tau \in \mathcal T}|W_{\tau}|^{q}\Big)^{1/q} \lesssim\Big(\frac{1}{\sqrt{s}} \gamma_{2}(\mathcal T, d_{2})+\frac{1}{s} \gamma_{1}(\mathcal T, d_{1})\Big)+\sqrt{q} \frac{\sigma}{\sqrt{s}}+q \frac{K}{s} .
$$
In particular, there exist constants $c, C>0$ such that for any $u \geq 1$,
$$\P\Big(\sup _{\tau \in \mathcal T}|W_{\tau}| \geq C\Big(\frac{1}{\sqrt{s}} \gamma_{2}(\mathcal T, d_{2})+\frac{1}{s} \gamma_{1}(\mathcal T, d_{1})\Big)+c\Big(\frac{\sigma}{\sqrt{s}} \sqrt{u}+\frac{K}{s} u\Big)\Big) \leq e^{-u}.$$
\end{lemma}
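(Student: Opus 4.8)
The plan is to obtain this as a direct consequence of the generic chaining machinery for processes with mixed sub-Gaussian/sub-exponential increments developed in \cite{dirksen2015tail}; indeed the statement is exactly Corollary 5.2 there, so the only model-specific input is the increment bound recorded in the paragraph preceding the lemma. Accordingly I would split the argument into two parts: verifying that the stated Bernstein moment condition produces the requisite mixed-tail control, and then quoting the abstract chaining theorem, keeping track of the scaling of the $\gamma$ functionals.

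First I would establish the increment tail. Writing $W_\tau - W_{\tau'} = \frac{1}{s}\sum_{i=1}^s \{(X_{\tau_i}-\E X_{\tau_i}) - (X_{\tau'_i}-\E X_{\tau'_i})\}$, the centred summands are independent across $i$, and their sub-exponential norms are, by definition, controlled by $\max_i \|X_{\tau_i}-X_{\tau'_i}\|_{\psi_1} = d_1(\tau,\tau')$ and $\frac{1}{s}\sum_i \|X_{\tau_i}-X_{\tau'_i}\|_{\psi_1}^2 = d_2(\tau,\tau')^2$. Bernstein's inequality (Lemma 5.1 of \cite{dirksen2015tail}) then yields
$$\P\Big(|W_\tau - W_{\tau'}| \ge t\Big) \le 2\exp\Big(-c\min\Big(\frac{s\,t^2}{d_2(\tau,\tau')^2},\ \frac{s\,t}{d_1(\tau,\tau')}\Big)\Big),$$
so that $(W_\tau)$ has mixed tails with respect to the rescaled metrics $\big(\frac{1}{s}d_1,\ \frac{1}{\sqrt s}d_2\big)$; the uniform moment bound with parameters $\sigma,K$ simultaneously controls the value of the process at an anchor point and the two diameters.

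With this in hand, the core step is to invoke the abstract chaining bound for mixed-tail processes, which converts the increment condition into
$$\Big(\E\sup_\tau |W_\tau - W_{\tau_0}|^q\Big)^{1/q} \lesssim \gamma_2\big(\mathcal T, \tfrac{1}{\sqrt s}d_2\big) + \gamma_1\big(\mathcal T, \tfrac{1}{s}d_1\big) + \sqrt q\,\mathrm{diam}_{\frac{1}{\sqrt s}d_2}(\mathcal T) + q\,\mathrm{diam}_{\frac{1}{s}d_1}(\mathcal T).$$
Since the $\gamma_p$ functionals are positively homogeneous in the metric, $\gamma_2(\mathcal T, \frac{1}{\sqrt s}d_2) = \frac{1}{\sqrt s}\gamma_2(\mathcal T, d_2)$ and $\gamma_1(\mathcal T, \frac{1}{s}d_1) = \frac{1}{s}\gamma_1(\mathcal T, d_1)$, producing the first bracket in the claim, while the moment condition bounds the two diameters by $\sigma/\sqrt s$ and $K/s$, giving the $\sqrt q\,\sigma/\sqrt s$ and $qK/s$ terms. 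The high-probability bound then follows from the moment bound by the standard device of applying Markov's inequality to the $q$-th moment and optimising over $q\asymp u$, which turns the $\sqrt q$ and $q$ prefactors into $\sqrt u$ and $u$ and yields the exponential $e^{-u}$ tail.

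The main obstacle is the abstract chaining theorem itself, which requires building an admissible sequence of partitions of $\mathcal T$ adapted to \emph{both} metrics simultaneously and balancing the sub-Gaussian and sub-exponential increments along each chain. As this is precisely Dirksen's cited result I would not reprove it; the genuine verification required here is only the Bernstein increment step above and the homogeneity bookkeeping relating the $\gamma$ functionals and diameters under the metric rescalings by $1/s$ and $1/\sqrt s$.
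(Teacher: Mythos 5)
Your proposal matches the paper's treatment: the paper states this as Corollary 5.2 of Dirksen (2015) without reproving it, and its only substantive input is the observation in the preceding paragraph that Bernstein's inequality gives the mixed-tail increment control with respect to $\bigl(\tfrac{1}{s}d_1,\tfrac{1}{\sqrt{s}}d_2\bigr)$, exactly as in your first step. Your additional bookkeeping on the homogeneity of the $\gamma_p$ functionals and the Markov/optimize-over-$q$ conversion from moments to tails is consistent with how the cited result is derived and introduces no gap.
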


\begin{lemma}[\cite{burkholder1988sharp,rio2009moment}]\label{buck} Let $q>1$, $q'=\min(q,2)$. Let $M_n = \sum^n_
{t=1} \xi_t$, where $\xi_t \in \mathcal{L}^{q}$ (i.e., $\|\xi_t\|_q<\infty$) are martingale differences. Then
\begin{equation*}
\|M_n\|_q^{q'} \leq K^{q'}_q\sum_{t=1}^n\|\xi_t\|^{q'}_q \quad \mbox{where} \quad K_q = \max((q -1)^{-1},\sqrt{q-1}).
\end{equation*}
\end{lemma}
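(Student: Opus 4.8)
The plan is to split the statement at $q=2$, since the exponent $q'=\min(q,2)$ and the constant $K_q=\max((q-1)^{-1},\sqrt{q-1})$ behave differently on the two sides, and each regime needs its own mechanism. Throughout, let $\mathcal{F}_t$ be a filtration making $(\xi_t)$ a martingale difference sequence, so that $\E[\xi_t\mid\mathcal{F}_{t-1}]=0$ and $M_{t-1}$ is $\mathcal{F}_{t-1}$-measurable; for $q\ge 2$ we have $q'=2$ and $K_q=\sqrt{q-1}$, whereas for $1<q\le 2$ we have $q'=q$ and $K_q=(q-1)^{-1}$.

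For the range $q\ge 2$ I would prove the claim self-contained by establishing the sharp one-step recursion
\begin{equation*}
\|M_t\|_q^2\le \|M_{t-1}\|_q^2+(q-1)\|\xi_t\|_q^2,
\end{equation*}
and then telescoping from $t=2$ to $n$, starting from $\|M_1\|_q^2=\|\xi_1\|_q^2\le(q-1)\|\xi_1\|_q^2$. To get the recursion, fix $X=M_{t-1}$, $Y=\xi_t$ and study $g(s)=\E|X+sY|^q$ and $h(s)=g(s)^{2/q}=\|X+sY\|_q^2$ for $s\in[0,1]$. Differentiating under the expectation (legitimate for $X,Y\in\mathcal{L}^q$ by H\"older) gives $g'(s)=q\,\E[|X+sY|^{q-1}\operatorname{sgn}(X+sY)\,Y]$ and $g''(s)=q(q-1)\,\E[|X+sY|^{q-2}Y^2]$. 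The decisive point is that $g'(0)=q\,\E[|X|^{q-1}\operatorname{sgn}(X)\,Y]=0$ because $X$ is $\mathcal{F}_{t-1}$-measurable and $\E[Y\mid\mathcal{F}_{t-1}]=0$; hence $h'(0)=\tfrac{2}{q}g(0)^{2/q-1}g'(0)=0$ as well. Bounding $g''$ by H\"older with conjugate exponents $q/(q-2)$ and $q/2$ yields $g''(s)\le q(q-1)\,g(s)^{(q-2)/q}\|Y\|_q^2$. Writing $h''=\tfrac{2}{q}g^{2/q-1}g''+\tfrac{2}{q}(\tfrac{2}{q}-1)g^{2/q-2}(g')^2$, the second summand is nonpositive since $2/q-1\le 0$, while substituting the bound for $g''$ into the first summand and using the exponent identity $\tfrac{2}{q}-1+\tfrac{q-2}{q}=0$ collapses it to $h''(s)\le 2(q-1)\|Y\|_q^2$. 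Then $h(1)=h(0)+h'(0)+\int_0^1(1-s)h''(s)\,ds\le h(0)+(q-1)\|Y\|_q^2$, which is precisely the recursion.

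For $1<q\le 2$ the direct route breaks down, since $|x|^{q-2}$ is singular at the origin and the sign of the correction term in $h''$ reverses. Instead I would invoke the sharp martingale square-function inequality of \cite{burkholder1988sharp}, namely $\|M_n\|_q\le (q-1)^{-1}\|S_n\|_q$ with $S_n=(\sum_{t=1}^n\xi_t^2)^{1/2}$ (this is the constant $q^\ast-1=(q-1)^{-1}$ supplied by \cite{burkholder1988sharp,rio2009moment} in this range), and then bound the square function elementarily: since $u\mapsto u^{q/2}$ is subadditive on $[0,\infty)$ when $q/2\le 1$, one has the pointwise bound $(\sum_t\xi_t^2)^{q/2}\le\sum_t|\xi_t|^q$, so $\|S_n\|_q^q=\E(\sum_t\xi_t^2)^{q/2}\le\sum_t\E|\xi_t|^q=\sum_t\|\xi_t\|_q^q$. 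Combining, $\|M_n\|_q^q\le(q-1)^{-q}\sum_t\|\xi_t\|_q^q$, which is the claim with $q'=q$.

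I expect the main obstacle to be the $q\ge 2$ case: recovering the sharp constant $q-1$ (rather than the $(q-1)^2$ that a naive combination of Burkholder's square-function inequality with the triangle inequality $\|S_n\|_q^2\le\sum_t\|\xi_t\|_q^2$ would produce) rests entirely on the convexity estimate $h''\le 2(q-1)\|Y\|_q^2$ together with the vanishing of $h'(0)$ forced by the martingale property, and on justifying the differentiation under the expectation when $q>2$ leaves $|x|^{q-2}$ only continuous. The complementary range $1<q\le 2$ is comparatively routine once Burkholder's inequality is taken as given, the only delicate point being that subadditivity of $u\mapsto u^{q/2}$ is available precisely because $q/2\le 1$.
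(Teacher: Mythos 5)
Your proof is correct: the $q\ge 2$ half, via the one-step recursion $\|M_t\|_q^2\le \|M_{t-1}\|_q^2+(q-1)\|\xi_t\|_q^2$ obtained from $h(s)=\|M_{t-1}+s\xi_t\|_q^2$ with $h'(0)=0$ (martingale property) and $h''\le 2(q-1)\|\xi_t\|_q^2$ (H\"older plus concavity of $u\mapsto u^{2/q}$), is exactly the argument of \citet{rio2009moment}, and the $1<q\le 2$ half via Burkholder's sharp square-function bound together with subadditivity of $u\mapsto u^{q/2}$ is the standard derivation of the constant $(q-1)^{-1}$ from \citet{burkholder1988sharp}. The paper gives no proof of Lemma \ref{buck}---it is quoted directly from those two references---so your two-regime argument is a faithful reconstruction of precisely the cited sources, including the correct observation that the sharp constant $\sqrt{q-1}$ for $q\ge 2$ cannot be recovered from Burkholder plus the triangle inequality in $L^{q/2}$ (which would only yield $q-1$ in place of $\sqrt{q-1}$).
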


\begin{lemma}[Tail Probabilities for High-dimensional Partial Sums] \label{exp}
% For a mean zero $p$-dimensional random variable $X_t\in \R^p$ ($p>1$), let $S_n = \sum^{n}_{t=1}X_t$ and assume that $\Phi_{\psi_\nu,\varsigma} =\max\limits_{1\leq j\leq p}\sup\limits_{q\geq2}q^{-\nu}\|X_{j,\cdot}\|_{q,\varsigma}< \infty$ for some $\nu\geq0$, and let $\gamma = 2/(1+ 2\nu)$. Then for all $x>0$, we have
% $$\P(|S_n|_\infty\geq x) \lesssim p \exp\{-C_{\gamma}x^{\gamma}/(\sqrt{n}\Phi_{\psi_\nu,0})^{\gamma}\},$$
% where $C_\gamma$ is a positive constant only depends on $\gamma$.
For a mean zero $p$-dimensional random variable $X_t\in \R^p$ ($p>1$), let $T_n = \sum^{n}_{t=1}X_t$ and $T_{n,m}=\sum_{t=1}^nX_{t,m}$, where $X_{t,m}=\E(X_t\mid\varepsilon_{t-m},\ldots,\varepsilon_t)$. Assume that $\Phi_{\psi_\nu,\varsigma} =\max\limits_{1\leq j\leq p}\sup\limits_{q\geq2}q^{-\nu}\|X_{j,\cdot}\|_{q,\varsigma}< \infty$ for some $\nu\geq0$, and let $\gamma = 2/(1+ 2\nu)$. Then for all $x>0$, we have
$$\P(|T_n - T_{n,m}|_\infty\geq x) \lesssim p \exp\{-C_{\gamma}x^{\gamma}m^{\varsigma\gamma}/(\sqrt{n}\Phi_{\psi_\nu,\varsigma})^{\gamma}\},$$
where $C_\gamma$ is a positive constant only depends on $\gamma$.

%	 Let $X_{ij}$ be random variables with mean zero and define $\gamma_{j,\alpha} =  \|X_{.j}\|_{\psi_r,\alpha}$. Assume that $\|X_{.j}\|_{\psi_r,\alpha}< \infty$.  Then, we have
%	\begin{equation*}
%	\P(\max_j \sum^n_{i=1} X_{ij}\geq u) \leq p \max_j c_{\alpha} \exp (-(u)^{\alpha}/(\sqrt{n}^{\alpha}\gamma_{j,\alpha}^{\alpha} c'_{\alpha})),
%	\end{equation*}
%	where $\alpha =2$ corresponds for the sub exponential case, and $\alpha = 1$ corresponds to the sub Gaussian case.
\end{lemma}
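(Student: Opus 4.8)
The plan is to establish a high-moment bound for each coordinate of $T_n - T_{n,m}$ with the correct $\sqrt n$ scaling and explicit dependence on the moment order, then convert it into a sub-Weibull tail by Markov's inequality optimized over that order, and finish with a union bound over the $p$ coordinates. Throughout, fix a coordinate $k\in\{1,\ldots,p\}$ and write $Z_k \defeq (T_n - T_{n,m})_k = \sum_{t=1}^n \big(X_{t,k} - \E(X_{t,k}\mid \varepsilon_{t-m},\ldots,\varepsilon_t)\big)$, a mean-zero sum.

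First I would decompose the partial-sum deviation into martingale differences organized by innovation lag. Writing $\mathcal P_j(\cdot)\defeq \E(\cdot\mid\mathcal F_j)-\E(\cdot\mid\mathcal F_{j-1})$ for the projection onto the innovation $\varepsilon_j$, with $\mathcal F_j=\sigma(\varepsilon_s:s\le j)$, the deviation $X_{t,k}-\E(X_{t,k}\mid\varepsilon_{t-m},\ldots,\varepsilon_t)$ is driven by the innovations at lags exceeding $m$; grouping the projections by lag $\ell=t-j>m$ gives $Z_k=\sum_{\ell>m}M_\ell$ with $M_\ell\defeq\sum_{t=1}^n\mathcal P_{t-\ell}X_{t,k}$. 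For each fixed $\ell$, the summands $(\mathcal P_{t-\ell}X_{t,k})_{t=1}^n$ form a martingale-difference sequence in $t$ with respect to $(\mathcal F_{t-\ell})_t$, and the coupling identity underlying Definition \ref{dep} yields $\|\mathcal P_{t-\ell}X_{t,k}\|_q\le\delta_{it,k,q}(\ell)$, since replacing $\varepsilon_{t-\ell}$ by an i.i.d.\ copy and conditioning reproduces $\E(\cdot\mid\mathcal F_{t-\ell-1})$.

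Applying Burkholder's inequality (Lemma \ref{buck}) lag by lag (with $q\ge2$, so $q'=2$ and $K_q\le\sqrt q$) and then summing over lags by Minkowski's inequality, I would obtain
$$\|Z_k\|_q \le \sum_{\ell>m}\|M_\ell\|_q \le K_q\sqrt n\sum_{\ell>m}\max_{i,t}\delta_{it,k,q}(\ell) = K_q\sqrt n\,\Delta_{k,q,m+1}\lesssim \sqrt q\,\sqrt n\,\Delta_{k,q,m}.$$
The dependence-adjusted norm then bounds $\Delta_{k,q,m}\le(m+1)^{-\varsigma}\|X_{\cdot,k}\|_{q,\varsigma}$, while its sub-Gaussian/sub-exponential version gives $\|X_{\cdot,k}\|_{q,\varsigma}\le q^\nu\Phi_{\psi_\nu,\varsigma}$ for all $q\ge2$. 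Since $\nu+1/2=1/\gamma$, combining these yields $\|Z_k\|_q\lesssim q^{1/\gamma}A$ uniformly in $k$, with $A\defeq\sqrt n\,\Phi_{\psi_\nu,\varsigma}/(m+1)^\varsigma$.

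Finally I would convert this moment bound into a tail bound in the standard way: Markov's inequality gives $\P(|Z_k|\ge x)\le (Aq^{1/\gamma}/x)^q$ for every $q\ge2$, and choosing $q\asymp(x/A)^\gamma$ (valid once $x$ is large enough that this exceeds $2$; for smaller $x$ the claimed inequality holds trivially, its right-hand side being at least one) produces $\P(|Z_k|\ge x)\le\exp(-C_\gamma(x/A)^\gamma)$ with $C_\gamma$ depending only on $\gamma$. A union bound over $k=1,\ldots,p$ together with $(x/A)^\gamma = x^\gamma(m+1)^{\varsigma\gamma}/(\sqrt n\,\Phi_{\psi_\nu,\varsigma})^\gamma$ then delivers the stated inequality. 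The main obstacle is the first displayed bound: one must verify that the $m$-dependent truncation leaves the martingale structure intact so that the partial sum scales like $\sqrt n$ rather than $n$, and carefully track the joint dependence on $q$ coming from both Burkholder's constant $K_q\asymp\sqrt q$ and the moment growth $\Delta_{k,q,m}\lesssim q^\nu$, since it is exactly their product $q^{1/\gamma}$ that generates the Weibull exponent $\gamma$ upon optimizing over $q$.
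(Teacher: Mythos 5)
Your overall architecture is exactly right and, modulo one identity, reproduces the standard argument behind the result the paper merely cites (the paper's own proof is a one-line appeal to Lemma C.3 of Zhang--Wu plus Bonferroni, so a self-contained derivation like yours is the natural way to verify it): a coordinatewise moment bound $\|Z_k\|_q\lesssim q^{1/\gamma}\sqrt{n}\,(m+1)^{-\varsigma}\Phi_{\psi_\nu,\varsigma}$ via Burkholder and the dependence-adjusted norm, Markov optimized over $q$ (with the trivial regime for small $x$ correctly noted), and a union bound over the $p$ coordinates. However, your opening decomposition is false. With $\mathcal P_j(\cdot)=\E(\cdot\mid\mathcal F_j)-\E(\cdot\mid\mathcal F_{j-1})$ defined through the full-past filtration, the telescoping gives $\sum_{\ell=0}^{m}\mathcal P_{t-\ell}X_{t,k}=X_{t,k}-\E(X_{t,k}\mid\mathcal F_{t-m-1})$, hence $\sum_{\ell>m}\mathcal P_{t-\ell}X_{t,k}=\E(X_{t,k}\mid\mathcal F_{t-m-1})$, which is \emph{not} $X_{t,k}-X_{t,m,k}$: the lemma's truncation $X_{t,m}=\E(X_t\mid\varepsilon_{t-m},\ldots,\varepsilon_t)$ conditions on a finite window of \emph{recent} innovations, whereas your grouped projections remove the infinite past. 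Since $X_{t,m}$ is independent of $\mathcal F_{t-m-1}$, one has $\E(X_{t,k}-X_{t,m,k}\mid\mathcal F_{t-m-1})=\E(X_{t,k}\mid\mathcal F_{t-m-1})$, so what you actually bounded is the conditional expectation of the target, not the target itself; the identity $Z_k=\sum_{\ell>m}M_\ell$ as you wrote it does not hold.

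The repair is local and uses the same tools. Telescope in the truncation depth rather than in the full-past projections: setting $X_{t,\ell}\defeq\E(X_t\mid\varepsilon_{t-\ell},\ldots,\varepsilon_t)$, one has $X_{t,k}-X_{t,m,k}=\sum_{\ell>m}D_{t,\ell}$ with $D_{t,\ell}\defeq X_{t,\ell,k}-X_{t,\ell-1,k}$, converging in $\mathcal L^q$ by summability of $\delta_{it,k,q}(\ell)$. For fixed $\ell$, $(D_{t,\ell})_t$ is \emph{not} a martingale difference sequence with respect to $(\mathcal F_t)$ (conditioning on $\mathcal F_{t-1}$ shortens both windows and leaves a nonzero difference), but it is one with respect to the decreasing filtration $\mathcal H_t=\sigma(\varepsilon_s:s\geq t-\ell)$ — equivalently a forward m.d.s. after reversing the order of summation in $t$ — because $\E(X_{t,\ell}\mid\varepsilon_{t-\ell+1},\ldots,\varepsilon_t)=X_{t,\ell-1}$ by the i.i.d. structure of the innovations. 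Moreover, since $\E\{X_{t,k}^{\ast}(\ell)\mid\varepsilon_{t-\ell},\ldots,\varepsilon_t\}=X_{t,\ell-1,k}$, Jensen gives $\|D_{t,\ell}\|_q\leq\delta_{it,k,q}(\ell)$, the exact analogue of your projection bound. Burkholder (Lemma \ref{buck}, which is indifferent to the direction of the filtration) then yields $\|\sum_t D_{t,\ell}\|_q\leq K_q\sqrt{n}\max_{i,t}\delta_{it,k,q}(\ell)$, and summing over $\ell>m$ recovers precisely your display $\|Z_k\|_q\leq K_q\sqrt{n}\,\Delta_{k,q,m+1}\lesssim q^{1/\gamma}\sqrt{n}\,(m+1)^{-\varsigma}\Phi_{\psi_\nu,\varsigma}$. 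From that point on, your Markov-optimization and union-bound steps are correct as written and deliver the stated tail bound.
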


Lemma \ref{exp} follows from Lemma C.3 of \citet{ZW_sup} and applying the Bonferroni inequality.
In particular, $\nu = 1$ corresponds to the sub-exponential case, and $\nu = 1/2$
corresponds to the sub-Gaussian case.

% \begin{lemma}[Freedman's inequality]\label{free}
% 		Let $\{\xi_{a,t}\}_{t=1}^n$ be a martingale difference sequence with respect to the filtration $\{\mathcal F_t\}_{t=1}^n$. Let $V_a = \sum^n_{t=1}\E(\xi^2_{a,t}| \mathcal{F}_{t-1})$ and $M_a = \sum^n_{t=1} \xi_{a,t}$. Then, for $x,u,v>0$, we have
% 		\begin{equation*}
% 		\P(\max_{a \in \mathcal{A}} |M_a|\geq x) \leq \sum^{n}_{t=1}\P(\max_{a\in \mathcal{A}} \xi_{a,t}\geq u)+ 2 \P(\max_{a \in \mathcal{A} } V_a \geq v )+ 2|\mathcal{A}| e^{-x^2/(2xu+ 2v)},
% 		\end{equation*}
% 		where $\mathcal A$ is an index set with $|\mathcal A|<\infty$.
% 	\end{lemma}
% Lemma \ref{free} is a maximal form of Freedman's inequality \citep{freedman1975tail}.

\subsubsection{Relaxing the Tail Assumption}
Lemma \ref{iid} primarily focuses on sub-Gaussian random variables. Now, we explore ways to ease this assumption. Specifically, we examine the case of $m=1$, i.e., $A$ is a vector of dimension $n\times1$.
% Lemma \ref{iid} is majorly concerning the sub-Gaussian random variables. We now discuss how to relax such assumption.We maintain the notations in Lemma \ref{iid}, with $q_n = (E +\sqrt{V}+U) $, a slowly varying constant $\gamma_n$, and $A = \delta$, with $|\delta|_{\max}\leq c_n$.
\begin{theorem}\label{fattertail}
Assume that $x_i,y_i\in\mathcal L^q$ for some $q>2$, $\sup_{A\in\mathcal A}|A|_2\lesssim\sqrt{n}c_n$. Then, we have with probability $1-\smallO(1)$,
$$\sup_{A\in\mathcal A}|\bm x^\top AA^\top\bm y|\lesssim(E+\sqrt{V}+U)n^{2r}\gamma_n + (n^{-(q-2)r/2+1}c_n)^2\gamma_n,$$
where $E,V,U$ are defined in Lemma \ref{iid}, $0<r\leq1/2$, and $\gamma_n$ is a slowly growing sequence of positive constants.
% Assume that $\max_i\|X_i\|_{q}$ are bounded with $q>6$, then we have, exist a $0<r \leq 1/2$ and $qr >3/2${\color{red}dominate by the first term },
% \begin{equation}
% \mbox{sup}_{A \in \mathcal{A}} |X^{\top}AA^{\top}X|/c_n^2 \lesssim_p  [E + \sqrt{V}+ U] \gamma_n n^{2r} + (n^{-qr/2 + 1} \gamma_n)^2.
% \end{equation}
% $|\delta|_2\leq \sqrt{n}c_{\max}c_n$, where $c_{\max}$ is a positive constant.
\end{theorem}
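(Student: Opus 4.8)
The plan is to relax the sub-Gaussian hypothesis of Lemma \ref{iid} by a truncation argument, exploiting that for $m=1$ the quadratic form collapses to a product of two scalar linear forms, $\bm x^\top A A^\top \bm y=(A^\top\bm x)(A^\top\bm y)$. First I would fix the truncation level $M=n^{r}$ and split each coordinate as $x_i=x_i^{\leq}+x_i^{>}$, where $x_i^{\leq}=x_i\mathbf{1}\{|x_i|\leq M\}-\E[x_i\mathbf{1}\{|x_i|\leq M\}]$ is the centered truncation and $x_i^{>}=x_i\mathbf{1}\{|x_i|>M\}-\E[x_i\mathbf{1}\{|x_i|>M\}]$ the centered tail, and analogously for $y_i$. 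Truncation preserves independence and mean-zeroness, and for $q>2$ the variance of $x_i^{\leq}$ differs from unity by $\smallO(1)$ uniformly (since the removed mass is $\lesssim M^{-(q-2)}\E|x_1|^q\to0$), so after an asymptotically negligible rescaling the truncated coordinates meet the standing normalization of Lemma \ref{iid}. Expanding the product then gives four terms: the bounded--bounded term $(A^\top\bm x^{\leq})(A^\top\bm y^{\leq})$, the two cross terms, and the tail--tail term $(A^\top\bm x^{>})(A^\top\bm y^{>})$.

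For the bounded--bounded term, $x_i^{\leq},y_i^{\leq}$ are bounded by $2M=2n^{r}$, hence sub-Gaussian with $\psi_2$-norm $\lesssim n^{r}$ while retaining variance close to one. Applying Lemma \ref{iid} in its general sub-Gaussian-norm form, each of the two linear factors contributes one power of the $\psi_2$-norm, so the bound inflates by $n^{2r}$ (over-bounding the variance-dependent part of $E$ by the same factor, which is harmless as $n^{2r}\geq1$). Choosing the free level $u$ so that $\min(u^2/V,u/U)\asymp\gamma_n\to\infty$ makes the exceptional probability $\smallO(1)$, yielding $\sup_{A\in\mathcal A}|(A^\top\bm x^{\leq})(A^\top\bm y^{\leq})|\lesssim(E+\sqrt V+U)n^{2r}\gamma_n$, the first term of the claim. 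The functionals $E,V,U$ are exactly those of Lemma \ref{iid}, since they are determined by the matrix class $\mathcal A$ through $\gamma_2(\mathcal A,d_\infty)$ and $\bar\Delta_q(\mathcal A)$ together with $\sup_A\E|A^\top\bm x|_2^2$, none of which is affected by the normalization.

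For the tail term I would avoid any chaining and instead use Cauchy--Schwarz uniformly in $A$: $|A^\top\bm x^{>}|\leq|A|_2\big(\sum_{i=1}^n(x_i^{>})^2\big)^{1/2}$. Since $\E[x_1^2\mathbf{1}\{|x_1|>M\}]\leq M^{-(q-2)}\E|x_1|^q$ by the $\mathcal L^q$ bound, we get $\E\sum_i(x_i^{>})^2\lesssim nM^{-(q-2)}=n^{1-(q-2)r}$, so Markov's inequality with the slowly growing factor $\gamma_n$ gives $\big(\sum_i(x_i^{>})^2\big)^{1/2}\lesssim n^{(1-(q-2)r)/2}\sqrt{\gamma_n}$ with probability $1-\smallO(1)$. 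Combined with $\sup_A|A|_2=\bar\Delta_2(\mathcal A)\lesssim\sqrt n\,c_n$ this yields $\sup_A|A^\top\bm x^{>}|\lesssim n^{-(q-2)r/2+1}c_n\sqrt{\gamma_n}$ (the centering contribution is of strictly smaller order for $q>2$ and is absorbed), and the same for $\bm y$; squaring delivers the second term $(n^{-(q-2)r/2+1}c_n)^2\gamma_n$. The two cross terms are then controlled by the single-vector ($\bm x=\bm y$) case of Lemma \ref{iid}, which bounds $\sup_A|A^\top\bm x^{\leq}|\lesssim\sqrt{(E+\sqrt V+U)}\,n^{r}\sqrt{\gamma_n}$, so that by AM--GM each cross term is at most the geometric mean of the two main bounds and hence dominated by their sum. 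A final union bound over the $\bigO(1)$ exceptional events completes the argument.

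I expect the main obstacle to be the bookkeeping that keeps the covering-number functionals $E,V,U$ intact through truncation and rescaling, together with the verification that the centering corrections and the cross terms are genuinely of lower or equal order; balancing the truncation level $M=n^{r}$ so that neither the sub-Gaussian inflation $n^{2r}$ nor the tail term dominates unnecessarily -- which forces the restriction $0<r\leq1/2$ -- is the delicate point, whereas the probabilistic content is supplied directly by Lemma \ref{iid} and Markov's inequality.
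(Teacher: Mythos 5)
Your proposal is correct and follows essentially the same route as the paper: truncate at level $M\asymp n^{r}$, apply Lemma \ref{iid} to the truncated part (whose sub-Gaussian norm of order $n^{r}$ produces the $n^{2r}$ inflation), and control the tail part via Cauchy--Schwarz together with Markov's inequality using $\E[x_i^2\mathbf 1\{|x_i|>M\}]\lesssim M^{-(q-2)}$ and $\sup_{A\in\mathcal A}|A|_2\lesssim\sqrt{n}c_n$, with the cross terms dominated by the two main bounds. The only (harmless) differences are that you re-center the truncated variables and handle $\bm x\neq\bm y$ directly via a four-term expansion, whereas the paper reduces to $\bm x=\bm y$ and uses a three-term decomposition; if anything, your explicit centering and variance check make the application of Lemma \ref{iid} slightly more careful than the paper's.
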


\begin{proof}
We proceed without loss of generality by assuming $\bm x=\bm y$. Let $\bm z=(z_1,\ldots,z_n)^\top$, where $z_i=x_i\mathbf 1(|x_i|\leq M)$ represents the truncated random variables, with $M=cn^r$ for constants $c>0$ and $0<r\leq1/2$. Denote $\bm w=(w_1,\ldots,w_n)^\top$, where $w_i=x_i\mathbf 1(|x_i|> M)=x_i-z_i$. It follows that
\begin{align*}
\sup_{A\in\mathcal A}|\bm x^\top AA^\top\bm x|&\leq \sup_{A\in\mathcal A}|\bm z^\top AA^\top\bm z| + \sup_{A\in\mathcal A}|\bm w^\top AA^\top\bm z| + \sup_{A\in\mathcal A}|\bm x^\top AA^\top\bm w|\\
&\leq \sup_{A\in\mathcal A}|\bm z^\top AA^\top\bm z| + 2|\bm w|_2\sup_{A\in\mathcal A}|AA^\top\bm x|_2\\
&\leq \sup_{A\in\mathcal A}|\bm z^\top AA^\top\bm z| + 2|\bm w|_2\sup_{A\in\mathcal A}|A|_2\sup_{A\in\mathcal A}|A^\top\bm x|.
\end{align*}
Utilizing Lemma \ref{iid}, we bound the first term as $\sup_{A\in\mathcal A}|\bm z^\top AA^\top\bm z|\lesssim_\P(E+\sqrt{V}+U)n^{2r}\gamma_n$, where $\gamma_n$ is a sequence of positive numbers growing slowly. Moreover, by Markov inequality, we have
$$\P(|\bm w|_2^2>s^2)\leq n\E[x_i^2\mathbf 1(|x_i|>M)]/s^2\leq n\E|x_i|^{q}/(s^2M^{q-2}),$$
Note that $\E|x_i|^{q}$ is bounded for some $q>2$. By letting $s^2=n^{-(q-2)r+1}\gamma_n$, we have the tail probability tends to zero as $n\to\infty$, that is $|\bm w|_2^2\lesssim_\P n^{-(q-2)r+1}\gamma_n$. Given $\sup_{A\in\mathcal A}|A|_2\lesssim\sqrt{n}c_n$, it follows that
$$\sup_{A\in\mathcal A}|\bm x^\top AA^\top\bm x|\lesssim_\P(E+\sqrt{V}+U)n^{2r}\gamma_n + (n^{-(q-2)r/2+1}c_n)^2\gamma_n.$$
\end{proof}

When $r=\log\log n/\log n$ (implying $n^r=\log n$), the second term in the bound becomes $(\log n)^{-(q-2)+2/r}c_n^2\gamma_n$. For sufficiently large $q$, this term would be dominated by the first term, resulting in a similar rate as in the sub-Gaussian case, albeit subject to a scaling factor up to a logarithmic order.

\subsubsection{Relaxing the Independence Assumption}
%{The dependence case }
Consider two processes $\{x_i\}_{i=1}^n$ and $\{y_i\}_{i=1}^n$ (with $x_i,y_i\in\R$), both of which are stationary with zero mean and unit variance, and admit the representations: $x_i=f(\mathcal F_i)$ and $y_i=g(\mathcal F_i)$, where $f,g$ are measurable functions, and $\mathcal F_i\defeq\{\ldots,\eta_{i-1},\eta_i\}$, with $\eta_i$ for $i\in\mathbb Z$ being i.i.d.\ random elements.

Define the projector operator $\mathcal P_\ell(x_iy_j)\defeq \E(x_iy_j\mid\mathcal F_{s-\ell})-\E(x_iy_j\mid\mathcal F_{s-\ell-1})$, where $s=\min(i,j)$. Note that $\mathcal P_\ell(x_iy_j)$ is m.d.s.\ with respect to $\mathcal F_{s-1}$. For $q\geq1$, $\varsigma>0$, we introduce the norm $$\Theta_{q,\varsigma}\defeq\sup_{d\geq0}(d+1)^{\varsigma}\sum_{\ell\geq d}\max_{1\leq i,j\leq n} \|\mathcal P_\ell(x_iy_j)\|_q,$$
to measure the degree of dependence. This norm is directly linked to the dependence adjusted norm for $x_iy_j$. Additionally, we denote a truncation argument as $x_i^m\defeq\E(x_i\mid \eta_{i-m},\ldots,\eta_{i})$.

%Let $X_i = g(\eta_i, \eta_{i-1},\cdots, \eta_{-\infty})$ and $Y_j = f(\eta_i, \eta_{i-1},\cdots, \eta_{-\infty})$ be mean zero stationary processes with the following dependence adjusted norm condition $\Phi_{\phi_v, \xi}$.
%Assume the $\varepsilon-$ covering of $\delta = (\delta_1, \delta_2, \cdots, \delta_p)$ is $N_n$ (with countable element and the constant $\varepsilon>0$).
%Let $P_{l} X_iY_j = \E(X_i Y_j|\mathcal{F}_{i-l}) - \E(X_iY_j|\mathcal{F}_{i-l-1})$.
%Denote $\sup_{d\geq 0} \sum_{l\geq d} d^{-\xi}\|P_{l} X_i Y_j\|_q \defeq \Theta_{x,y,\xi}(i,j,q)$ for a positive constant $\xi>0$.

%Recall that we define for a integer $m$, a truncated lag version, $X_{i,m}= \E( X_{i}|\eta_i, \eta_{i-1},\cdots, \eta_{i-m})$.
%Let $c_n>0$ be a positive constant.
\begin{Assumption}\phantomsection\label{reff}
\begin{enumerate}
\item[(i)] Assume that $\sup_{q\geq2}q^{-\nu}\Theta_{q,\varsigma}<\infty$ for some $\nu\geq0$, $\varsigma>0$.%, and
%$$\sup_{d\geq0}(d+1)^{\varsigma}\sum_{\ell\geq d}\max_{1\leq i,j\leq n}\Big\|\mathcal P_{\ell}(x_iy_j-x_{i}^my_{j}^m)\Big\|_q \leq m^{-\varphi}\Theta_{q,\varsigma},\,\text{ for some } \varphi\geq0,\varsigma>0.$$
%$\|P_l(X_iY_j-X_{i,m} Y_{j,m})\|_{q} \leq m^{-\alpha}\Theta_{x,y,\xi}(i,j,q)$ for $\alpha> 0 $.
% For some constant $\alpha>0$, $\Theta_{x,y,\xi}(i,j,q)\leq \infty$, $\|\sum_{l\geq d}  P_{l}  (X_iY_j-X_{i,m}Y_{j,m})\|_q \leq m^{-\alpha}\Theta_{x,y,\xi}(i,j,q)$.
%The subGaussain norm dependence adjusted norm for $X_iY_j$ is finite.
\item[(ii)] The sub-Gaussian norm $\|x_iy_j\|_{\psi_{1/2}}<\infty$, for all $i,j=1,\ldots,n$.
\item[(ii')] The sub-exponential norm $\|x_iy_j\|_{\psi_{1}}<\infty$, for all $i,j=1,\ldots,n$.
%For all $i,j$, we have  $\|X_iY_j \|_{\psi_2} \leq 1$. \quad ii)'  $\|X_iY_j \|_{\psi_1} \leq 1$.
\item[(iii)] There exists a finite set $\mathcal A=\{\delta\in\R^n:|\delta|_\infty\leq c_{\max}\}$, for some $c_{\max}>0$,
%The $\epsilon$-covering number of $\mathcal A$ with respect to $d_\infty$ is bounded by $N_n$,
such that
$$\int_0^\infty\big(\log\mathcal N(\epsilon,\mathcal A,d_\infty)\big)^{1/2}d\epsilon\lesssim(\log N_n)^{1/2},$$
where $N_n$ is a sequence of positive constants greater than 1.
%The cardinality of the functional class $\mathcal{A}= \{\delta: \max_i |\delta_i| \leq c_{max} \}$ is $N_n$, with a constant $c_{\max}>0$.
\end{enumerate}
\end{Assumption}

\begin{theorem}\label{label:Ubound}
Under Assumption \ref{reff} with (ii), we have
%for a constant $c_2>0$, we shall prove the following bounds,
\begin{eqnarray*}
&&\sup_{\delta\in\mathcal A}\Big|\sum_{i=1}^n\sum_{j=1}^n \delta_i x_i\delta_j y_j \Big|\big/c_{\max}^2
\lesssim_\P n+\log N_n + \sqrt{n} (\log N_n)^{1/2} +nm^{-\varsigma}\log N_n.
\end{eqnarray*}
Under Assumption \ref{reff} with (ii'), we have
$$\sup_{\delta\in\mathcal A}\Big|\sum_{i=1}^n\sum_{j=1}^n \delta_ix_i\delta_j y_j\Big| \big/ c_{max}^2 \lesssim_\P \{n+\log N_n + \sqrt{n} (\log N_n)^{1/2}\}(\log n)^2\gamma_n + nm^{-\varsigma}(\log N_n)^{3/2},$$
%$$\sup_{\delta\in\mathcal A}\Big|\sum_{i=1}^n\sum_{j=1}^n \delta_ix_i\delta_j y_j\Big| \Big/ c_{max}^2 \lesssim_\P n (\log N_n)^{1/2} \gamma_n,$$
where $\gamma_n$ is a slowly growing sequence of positive constants, see the proof of Theorem \ref{fattertail}.
\end{theorem}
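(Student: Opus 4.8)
The plan is to reduce the dependent quadratic form to the independent case handled by Lemma \ref{iid}, through an $m$-dependence approximation in which $m$ is kept as a free truncation level that surfaces in the final bound. Writing the rank-one quadratic form as $\sum_{i,j}\delta_i x_i \delta_j y_j = (\delta^\top\bm x)(\delta^\top\bm y)$, I would first replace $x_i$ and $y_j$ by their $m$-dependent truncations $x_i^m = \E(x_i\mid\eta_{i-m},\ldots,\eta_i)$ and $y_j^m$, and decompose
\[
\sum_{i,j}\delta_i x_i \delta_j y_j = \sum_{i,j}\delta_i x_i^m \delta_j y_j^m + R_n(\delta),
\]
where $R_n(\delta)$ collects the two cross terms containing $x_i-x_i^m$ or $y_j-y_j^m$. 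The first term is governed by Lemma \ref{iid}, whereas $\sup_{\delta\in\mathcal A}|R_n(\delta)|$ is the source of the $nm^{-\varsigma}$ factor.

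For the main term I would exploit the fact that the truncated variables attached to indices more than $m$ apart are independent: partitioning $\{1,\ldots,n\}$ into interlaced blocks of length proportional to $m$ yields, within each parity class, block-aggregated variables that are mutually independent, so that Lemma \ref{iid} applies in its rank-one specialization $A=\delta$. Here Assumption \ref{reff}(iii) supplies the covering-number control $\gamma_2(\mathcal A,d_\infty)\lesssim(\log N_n)^{1/2}$, while the summable dependence implied by $\Theta_{q,\varsigma}<\infty$ keeps $\E|\delta^\top\bm x|_2^2$ and $\E|\delta^\top\bm y|_2^2$ of order $n\,c_{\max}^2$, so the off-diagonal covariances do not inflate the variance scaling beyond the independent case. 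Substituting these into the quantities $E$, $V$, $U$ of Lemma \ref{iid} and choosing the deviation level so that the exceptional probability is $\smallO(1)$ produces the three terms $n + \log N_n + \sqrt n\,(\log N_n)^{1/2}$ after dividing by $c_{\max}^2$.

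For the remainder $R_n(\delta)$ I would use the projector representation through $\mathcal P_\ell(x_iy_j)$ together with Burkholder's inequality (Lemma \ref{buck}) applied to the martingale-difference decomposition in $\ell$. The definition of $\Theta_{q,\varsigma}$ gives $\sum_{\ell\geq m}\max_{i,j}\|\mathcal P_\ell(x_iy_j)\|_q\lesssim m^{-\varsigma}$, so summing the $O(n)$ contributing index pairs and paying a $\log N_n$ factor for uniformity over the finite class $\mathcal A$ yields $\sup_\delta|R_n(\delta)|\lesssim_\P nm^{-\varsigma}\log N_n$ under Assumption \ref{reff}(ii). For the sub-exponential case under Assumption \ref{reff}(ii'), I would first truncate the variables at level $\sim n^r$ exactly as in the proof of Theorem \ref{fattertail}, reducing to a sub-Gaussian-type estimate at the cost of the extra multiplicative factor $(\log n)^2\gamma_n$ on the main term and of upgrading the dependence factor to $m^{-\varsigma}(\log N_n)^{3/2}$ (the higher power of $\log N_n$ coming from the $\gamma_1$ term in the sub-exponential empirical-process bound of Lemma \ref{emp}).

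The main obstacle will be coupling the uniformity over $\delta\in\mathcal A$ with the temporal dependence: the chaining/covering argument that produces the $\gamma_2(\mathcal A,d_\infty)\lesssim(\log N_n)^{1/2}$ factor has to be run simultaneously with the blocking that restores independence, and one must verify that the block-aggregated truncated variables genuinely satisfy the independence hypothesis of Lemma \ref{iid} with sub-Gaussian (resp.\ sub-exponential) norms that do not blow up with the block length. A second delicate point is ensuring that the dependence error enters \emph{linearly} as $nm^{-\varsigma}$ rather than quadratically: this requires treating $R_n(\delta)$ as a single cross term, bounded by the product of a well-controlled linear form of order $\sqrt n$ and the small residual of order $\sqrt n\,m^{-\varsigma}$, rather than bounding each factor separately.
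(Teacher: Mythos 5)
Your proposal is correct and follows essentially the same route as the paper's proof: the same $m$-dependent truncation $x_i^m=\E(x_i\mid\eta_{i-m},\ldots,\eta_i)$, the same interlaced odd/even blocking of $\{1,\ldots,n\}$ into blocks of length $m$ so that Lemma \ref{iid} (resp.\ Theorem \ref{fattertail} after truncation at level $\log n$ in the sub-exponential case) applies to the normalized block sums, and the same covering-number input from Assumption \ref{reff}(iii). The only cosmetic difference is in the truncation-error term, which the paper bounds by applying Lemma \ref{exp} to $\sum_i z_i(\delta)$ with $z_i(\delta)=\sum_j(\delta_ix_i\delta_jy_j-\delta_ix_i^m\delta_jy_j^m)$ rather than by a direct Burkholder/projector argument, but both rest on the same martingale decomposition in $\ell$ and the summability encoded in $\Theta_{q,\varsigma}$, and yield the same $nm^{-\varsigma}(\log N_n)^{1/\gamma}$ contribution.
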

\begin{proof}
\underline{Step 1:} First, we need to prove that the deviation $\sup_{\delta\in\mathcal A}\big|\sum_{i=1}^n\sum_{j=1}^n(\delta_ix_i\delta_jy_j - \delta_ix_i^m\delta_jy_j^m)\big|$, subject to the truncation error, is sufficiently small.

For each $i=1,\ldots,n$, define $z_i(\delta)\defeq\sum_{j=1}^n(\delta_ix_i\delta_jy_j - \delta_ix_i^m\delta_jy_j^m)$.
% For any $\delta\in\mathcal A$, by applying Lemma \ref{buck} for $q\geq 2$, we have
% \begin{align*}
% \sup_{d\geq0}(d+1)^{\varsigma}\sum_{\ell\geq d}\Big\|\mathcal P_\ell(z_i(\delta)\Big\|_q &=\sup_{d\geq0}(d+1)^{\varsigma}\sum_{\ell\geq d}\Big\|\sum_{j=1}^n\mathcal P_\ell(\delta_ix_i\delta_jy_j - \delta_ix_i^m\delta_jy_j^m)\Big\|_q\\
% &\lesssim\sqrt{n}\sup_{d\geq0}(d+1)^{\varsigma}\sum_{\ell\geq d}\max_{1\leq j\leq n}\Big\|\mathcal P_\ell(\delta_ix_i\delta_jy_j - \delta_ix_i^m\delta_jy_j^m)\Big\|_q\\
% &\leq \sqrt{n}c_{\max}^2\sup_{d\geq0}(d+1)^{\varsigma}\sum_{\ell\geq d}\max_{1\leq j\leq n}\Big\|\mathcal P_\ell(x_iy_j - x_i^my_j^m)\Big\|_q\\
% &\leq \sqrt{n}c_{\max}^2m^{-\varphi}\Theta_{q,\varsigma},
% \end{align*}
% where the last inequality is ensured by Assumption \ref{reff}(i). Next,
On Assumption \ref{reff}(i), by applying Lemma \ref{exp} on the summation $\sum_{i=1}^nz_{i}(\delta)$, we obtain that
$$\sup_{\delta\in\mathcal A}\Big|\sum_{i=1}^n\sum_{j=1}^n(\delta_ix_i\delta_jy_j - \delta_ix_i^m\delta_jy_j^m)\Big|\lesssim_{\P}nc_{\max}^2m^{-\varsigma}(\log N_n)^{1/\gamma},$$
where $\gamma=1$ for the sub-Gaussian case, and $\gamma=2/3$ for the sub-exponential case.

\underline{Step 2:} Next, we shall bound the sum of the truncated terms. Divide the sample $\{1,\ldots,n\}$ into $L=\lfloor n/m\rfloor$ blocks: $A_l$, $l=1,\ldots,L$, each of size $m$. Without loss of generality, assume that $L$ is an even number and let $B_o,B_e\subseteq\{1,\ldots,L\}$ be the indices sets for the odd and even blocks, respectively.

For each block $l=1,\ldots,L$, define $\tilde x_l^m(\delta)\defeq\sum_{i\in A_l}\delta_ix_i^m$, $\tilde y_l^m(\delta)\defeq\sum_{i\in A_l}\delta_iy_i^m$. It follows that
\begin{align*}
\sum_{i=1}^n\sum_{j=1}^n\delta_ix_i^m\delta_jy_j^m &=\sum_{l=1}^L\sum_{l'=1}^L\sum_{i\in A_l}\sum_{j\in A_{l'}}\delta_ix_i^m\delta_jy_j^m\\
&=\sum_{l\in B_o}\sum_{l'\in B_o}\tilde x_l^m(\delta)\tilde y_{l'}^m(\delta) + \sum_{l\in B_o}\sum_{l'\in B_e}\tilde x_l^m(\delta)\tilde y_{l'}^m(\delta) + \sum_{l\in B_e}\sum_{l'\in B_e}\tilde x_l^m(\delta)\tilde y_{l'}^m(\delta).
\end{align*}
It is worth noting that $\{\tilde x_l^m(\delta)\}_{l\in B_o}$ and $\{\tilde x_l^m(\delta)\}_{l\in B_e}$ are sequences of independent random variables, similarly for $\{\tilde y_l^m(\delta)\}_{l\in B_o}$ and $\{\tilde y_l^m(\delta)\}_{l\in B_e}$. We shall apply Lemma \ref{iid} to bound each of the terms above. Taking the first term as an example, we denote a scaling constant $c_l(\delta)>0$ such that $\operatorname{Var}(\tilde x_l^m(\delta)/c_l(\delta))=\operatorname{Var}(\tilde y_l^m(\delta)/c_l(\delta))=1$, and stack them into a vector $\bm c(\delta)=(c_l(\delta))_{l\in B_o}$. %To correspond with the set $\mathcal A$, we define the collection $\mathcal C(\mathcal A)$ to encompass all $\bm c(\delta)$ associated with each $\delta\in\mathcal A$.
Using $\bm c(\delta)$ as the weights in the quadratic form, it follows that for any $u\geq0$, there exist $c_1,c_2>0$ such that
$$\P\Big(\sup_{\delta\in\mathcal A}\Big|\sum_{l\in B_o}\sum_{l'\in B_o}\tilde x_l^m(\delta)\tilde y_{l'}^m(\delta)\Big|\big/c_{\max}^2\geq c_1E+u\Big)\leq2\exp\big(-c_2\min(u^2/V,u/U)\big),$$
where $E=n + \log N_n + \sqrt{n}(\log N_n)^{1/2}$, $V=n+\sqrt{n}(\log N_n)^{1/2}$, $U=n$.

To prove the case of (ii'), we just need to replace Lemma \ref{iid} with Theorem \ref{fattertail}. In particular, we choose $r$ such that $n^r=\log n$ and assume the moments condition holds with a sufficiently high order to absorb the second term in the bound.

By combining the results of Step 1 and Step 2, we can conclude the proof.
\end{proof}

For a class of measurable functions $\mathcal G$ mapping to the real space $\R$. let the $d_1$-metric and $d_2$-metric be denoted as $d_1(f,g)=\max_{1\leq i\leq n}\|f(x_i)-g(x_i)\|_{\psi_1}$ and $d_2(f,g)=\{\E\|f(x_i)-g(x_i)\|_{\psi_1}^2\}^{1/2}$, $\forall f,g\in\mathcal G$.
%For $\epsilon>0$, define the $\epsilon$-covering number with respect to the $d_r$-metric as $\mathcal N(\epsilon, \mathcal G, d_r)$, for $r=1,2$.
In the following theorem, we will derive a bound for the empirical process $\sup_{f\in\mathcal G}|n^{-1}\sum_{i=1}^n\{f(x_i)-\E f(x_i)\}|$ under certain conditions.

Consider analogous definitions as mentioned earlier:
\begin{align*}
&\mathcal P_\ell(f(x_i))\defeq\E(f(x_i)\mid\mathcal F_{i-\ell}) - \E(f(x_i)\mid\mathcal F_{i-\ell-1}),\, \Theta_{f,q,\varsigma}\defeq\sup_{d\geq0}(d+1)^{\varsigma}\sum_{\ell\geq d}\max_{1\leq i\leq n} \|\mathcal P_\ell(f(x_i))\|_q,\\
&f^m(x_i)\defeq\E(f(x_i)\mid\eta_{i-m},\ldots,\eta_i).
\end{align*}

% Denote $\mathcal{G}_n(\vps)$ as a functional class with $\vps-$ covering number as $N_n(\varepsilon)$ utilizing a certain norm.
% We shall study the object $\sup_{f(.)\in \mathcal{G}_n(\vps)}|n^{-1}\sum_i \{f(x_i)-\E(f(x_i)\}|.$
\begin{Assumption}\phantomsection\label{reff1}
\begin{itemize}
\item[(i)] The function class $\mathcal G$ is enveloped with $F=\sup_{f\in\mathcal G}|f|$, with $\max_{1\leq i\leq n}(\E|F(x_i)|^2)^{1/2}<c_n$. Additionally, assume that there exists a sequence of positive constants (greater than 1) $N_n$, such that
$$\int_0^\infty\log\mathcal N(\epsilon,\mathcal G,d_1)d\epsilon\lesssim\log N_n,\quad \int_0^\infty\big(\log\mathcal N(\epsilon,\mathcal G,d_2)\big)^{1/2}d\epsilon\lesssim(\log N_n)^{1/2}.$$
% Assume the functional class $\mG_n(\vps)$ has envelop $F_n$. The covering number with respect to $d_2(f,g)= \{\E((f(x)-g(x))^2)\}^{1/2}$ is $N_n(\vps)$. The envelope function is $F_n(x_i)$.
\item[(ii)] For any $f\in\mathcal G$, assume that $\sup_{q\geq 1} q^{-\nu}\Theta_{f,q,\varsigma} < \infty$ for some $\nu\geq0$, $\varsigma>0$.%, and
%$$\sup_{d\geq0}(d+1)^{\varsigma}\sum_{\ell\geq d}\max_{1\leq i\leq n}\Big\|\mathcal P_{\ell}(f(x_i)-f^m(x_i))\Big\|_q \leq m^{-\varphi}\Theta_{f,q,\varsigma},\,\text{ for some } \varphi\geq0,\varsigma>0.$$
% For all integer  $p\geq 2$ with finite $p-$ moment, we let $[\E|f(x_i)-\E(f(x_i)|\eta_i,\cdots, \eta_{i-m})|^p]^{1/p}\lesssim m^{-\alpha} \theta_{x,F}$, where $\theta_{x,F}$ is a constant, for all $f(.)\in \mathcal{G}_n(\vps) $.
% Assume that $\sup_{q\geq 1} q^{-v}\Theta_{x,F}(q) < \infty $, for some constant $v>0$.
% {\color{red}
% We introduce the norm $\Theta_{f, q,\varsigma}\defeq\sup_{d\geq0}d^{-\varsigma}\sum_{\ell\geq d} \|\mathcal P_\ell(f(x_i))\|_q$.
% Assume that $\sup_{q\geq2}q^{-\nu}\Theta_{f, q,\varsigma}<\infty$ for some $\nu\geq0$, $\varsigma>0$, and
% $$\sup_{f(.)}\sup_{d\geq0}d^{-\varsigma}\Big\|\sum_{\ell\geq d}\big|\mathcal P_{\ell}(f(x_i)-f(x_i)^m)\big|\Big\|_q \leq m^{-\varphi}\Theta_{f,q,\varsigma},\,\text{ for some } \varphi\geq0,\varsigma>0.$$
% }
\item[(iii)] There exist constants $\sigma,K>0$ such that
$$\sup_{f\in\mathcal G}\frac{1}{n} \sum_{i=1}^{n} \E\left|f(x_i)-\E f(x_i)\right|^{q} \leq \frac{q !}{2} \sigma^{2} K^{q-2}, \quad(q=2,3, \ldots).$$
% The subGaussian/subExponential dependence adjusted norm of $f(x_i)$ is finite, for a positive constant $\sigma>0$. Let $q\geq 2$ be an integer. Moreover,  exists a $K$ such that $$
% \sup _{f(.) \in \mG_n(\varepsilon)} \frac{1}{n} \sum_{i=1}^{n} \E\left|f(x_i)-\E f(x_i)\right|^{q} \leq \frac{q !}{2} \sigma^{2} K^{q-2}, \quad(q=2,3, \ldots)
% .$$
\end{itemize}
\end{Assumption}
% Let $\gamma = 2/(1+ 2v)$.
\begin{theorem}\label{emplemma}
%Let $\max_i(\E|F_n(x_i)^2|)^{1/2} \leq c_n$.
Under Assumption \ref{reff1}, we have
%for a constant $c_3>0$, we shall prove the following bounds,
\begin{eqnarray*}
\sup_{f\in\mathcal G}\Big|\frac{1}{n}\sum_{i=1}^n\{f(x_i)-\E f(x_i)\}\Big|\big/c_n\lesssim_\P m^{-\varsigma}(\log N_n)^{1/\gamma}/\sqrt{n} + \sqrt{(\log N_n)/n} + \sqrt{m}(\log N_n)/n,
% &&\sup_{f(.)\in \mG_n(\vps)}|n^{-1}\sum_i \{f(x_i)-\E(f({x_i})\}| \lesssim_p c_n (\sqrt{\log N_n(\vps)}/{\sqrt{n}} + {\log N_n(\vps)}/{(\sqrt{m}L)}
% \\&&+c_n(\log(N_n(\vps)))^{1/\gamma} m^{-\alpha}\sqrt{n}^{-1}).
\end{eqnarray*}
where $\gamma=1$ for the sub-Gaussian case, and $\gamma=2/3$ for the sub-exponential case.
\end{theorem}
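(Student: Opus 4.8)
The plan is to mirror the two-step truncation-and-blocking argument used in the proof of Theorem \ref{label:Ubound}, but now applied to the empirical process indexed by $\mathcal G$ rather than to a quadratic form. Throughout I would work with the $m$-dependent coupling $f^m(x_i)=\E(f(x_i)\mid\eta_{i-m},\dots,\eta_i)$, which by the tower property satisfies $\E f^m(x_i)=\E f(x_i)$, and split
$$\frac1n\sum_{i=1}^n\{f(x_i)-\E f(x_i)\}=\frac1n\sum_{i=1}^n\{f(x_i)-f^m(x_i)\}+\frac1n\sum_{i=1}^n\{f^m(x_i)-\E f^m(x_i)\},$$
treating the two pieces separately and recombining by the triangle inequality at the end.

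\emph{Step 1 (truncation error).} First I would bound $\sup_{f\in\mathcal G}|\frac1n\sum_i\{f(x_i)-f^m(x_i)\}|$ uniformly over $\mathcal G$. Discretizing $\mathcal G$ on a net whose cardinality is controlled by the covering numbers of Assumption \ref{reff1}(i), I would apply Lemma \ref{exp} to the high-dimensional partial sum $\sum_i\{f(x_i)-f^m(x_i)\}$, identifying the dependence-adjusted norm with $\Theta_{f,q,\varsigma}$ from Assumption \ref{reff1}(ii) and setting $\gamma=2/(1+2\nu)$. This yields a contribution of order $c_n\,m^{-\varsigma}(\log N_n)^{1/\gamma}/\sqrt n$, the first term in the stated bound, with $\gamma=1$ in the sub-Gaussian case and $\gamma=2/3$ in the sub-exponential case.

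\emph{Step 2 (blocking and independence).} For the truncated process I would partition $\{1,\dots,n\}$ into $L=\lfloor n/m\rfloor$ consecutive blocks $A_l$ of length $m$ and set $S_l(f)=\sum_{i\in A_l}\{f^m(x_i)-\E f^m(x_i)\}$. Since $f^m(x_i)$ depends only on $\eta_{i-m},\dots,\eta_i$, the odd-indexed block sums $\{S_l(f)\}_{l\in B_o}$ form an independent sequence, and likewise the even-indexed ones; I would handle $\frac1n\sum_{l\in B_o}S_l(f)$ and $\frac1n\sum_{l\in B_e}S_l(f)$ symmetrically. On each I would invoke Lemma \ref{emp} with $s=|B_o|\asymp n/(2m)$ independent observations, the variable for observation $l$ being $S_l(f)$, so that $\frac1n\sum_{l\in B_o}S_l(f)=\tfrac{|B_o|}{n}W_f$ with $W_f$ in the exact form of that lemma. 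This first requires showing $\|S_l(f)\|_{\psi_1}\lesssim\sqrt m\,c_n$ and that the Bernstein pair $(\sigma,K)$ of Assumption \ref{reff1}(iii) transfers to the block sums as $\sigma\asymp\sqrt m\,c_n$, $K\asymp c_n$, which I would obtain from the martingale decomposition $f^m(x_i)-\E f^m(x_i)=\sum_{\ell\ge0}\mathcal P_{i-\ell}(f^m(x_i))$ together with Lemma \ref{buck}. The block-induced $\psi_1$-metrics $d_1,d_2$ then scale like $\sqrt m$ times the per-observation metrics of Assumption \ref{reff1}(i), so $\gamma_2(\mathcal G,d_2)\asymp\sqrt m\,(\log N_n)^{1/2}c_n$ and $\gamma_1(\mathcal G,d_1)\asymp\sqrt m\,(\log N_n)c_n$ up to envelope factors. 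Taking $u\asymp\log N_n$ in the tail statement of Lemma \ref{emp} and applying the prefactor $|B_o|/n\asymp 1/(2m)$, the $\frac1{\sqrt s}\gamma_2$ and $\frac{\sigma}{\sqrt s}\sqrt u$ contributions collapse to $c_n\sqrt{(\log N_n)/n}$, while the $\frac1s\gamma_1$ contribution gives $c_n\sqrt m(\log N_n)/n$ (the $\frac Ks u$ term being of the smaller order $c_n(\log N_n)/n$); these are the second and third terms.

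The main obstacle I anticipate is the bookkeeping in Step 2: correctly transferring the single-observation entropy and Bernstein conditions of Assumption \ref{reff1} to the block sums $S_l(f)$ while keeping the $\sqrt m$ block-size factors, the effective sample size $s\asymp n/m$, and the rescaling $|B_o|/n\asymp 1/m$ mutually consistent, so that the three target terms emerge with the right powers of $m$, $n$, and $\log N_n$. The dependence itself poses little difficulty once the odd/even decoupling is in place; what matters is ensuring the within-block sub-exponential norms and metric entropies scale as $\sqrt m$ rather than $m$ through the martingale decomposition, since an incorrect power of $m$ here would misstate the final rate.
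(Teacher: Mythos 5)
Your proposal follows essentially the same route as the paper's proof: the identical decomposition into the $m$-dependent coupling error plus the centered truncated process, Lemma \ref{exp} for the first piece, and odd/even blocking with Lemma \ref{emp} for the second, including the key observation that the block sums scale as $\sqrt{m}\,\Theta_{f,q,\varsigma}$ via the projector decomposition and Lemma \ref{buck}. Your Step 2 simply spells out the bookkeeping (effective sample size $s\asymp n/m$, transfer of the entropy and Bernstein parameters to the blocks) that the paper compresses into ``after proper scaling on the block sums.''
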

\begin{proof}
To begin, we decompose the process as:
$$n^{-1}\sum_{i=1}^n\{f(x_i)-\E f(x_i)\}=n^{-1}\sum_{i=1}^n\{f(x_i) - f^m(x_i) + f^m(x_i) - \E f^m(x_i)\}.$$
In the subsequent two steps, we will analyze each part of the deviations.

\underline{Step 1:} Given Assumptions \ref{reff1}(i)-(ii), applying Lemma \ref{exp} yields:
$$\sup_{f\in\mathcal G}|n^{-1}\sum_{i=1}^n\{f(x_i) - f^m(x_i)\}|\lesssim_\P n^{-1/2}m^{-\varsigma}(\log N_n)^{1/\gamma}c_n,$$
where $\gamma=1$ for the sub-Gaussian case, and $\gamma=2/3$ for the sub-exponential case.

\underline{Step 2:} We partition the sample into blocks. The definitions of $L$, $A_l$, $B_o$, and $B_e$ remain consistent with those in the proof of Theorem \ref{label:Ubound}. For each block $l=1,\ldots,L$, define $z_{l,f}^m\defeq\sum_{i\in A_l}\{f^m(x_i) - \E f^m(x_i)\}$. It follows that
$$\sum_{i=1}^n\{f^m(x_i) - \E f^m(x_i)\}=\sum_{l=1}^Lz_{l,f}^m=\sum_{l\in B_o}z_{l,f}^m + \sum_{l\in B_e}z_{l,f}^m.$$
Note that $\{z_{l,f}^m\}_{l\in B_o}$ and $\{z_{l,f}^m\}_{l\in B_o}$ are sequences of independent random variables. We shall apply Lemma \ref{emp} to bound each of them.

Recalling the definition of the projector operation and utilizing Lemma \ref{buck}, we observe that
\begin{align*}
\|z_{l,f}^m\|_q&=\Big\|\sum_{\ell\geq0}\mathcal P_\ell(z_{l,f}^m)\Big\|_q\\
&\leq\sum_{\ell\geq0}\Big\|\sum_{i\in A_l}\mathcal P_\ell(f^m(x_i)-\E f^m(x_i))\Big\|_q\\
&\lesssim\sqrt{m}\sum_{\ell\geq0}\max_{1\leq i\leq n}\|\mathcal P_\ell(f^m(x_i))\|_q\leq\sqrt{m}\Theta_{f,q,\varsigma}.
\end{align*}
Then, after proper scaling on the block sums, we can evoke Lemma \ref{emp} to obtain that
$$\sup_{f\in\mathcal G}\Big|n^{-1}\sum_{l=1}^Lz_{l,f}^m\Big|\big/c_n\lesssim_{\P}\sqrt{\log N_n}/\sqrt{n} + \sqrt{m}\log N_n/n.$$

By combining the results of Step 1 and Step 2, we can conclude the proof.
\end{proof}

\subsection{Proofs for Section \ref{firststep}}In this subsection, we present the proofs for Section \ref{firststep} for the model without time effects $\gamma_t$. When time effects are present, cross-sectional demeaning of the variables introduces weak cross-sectional dependence of order $1/N$. This leads to corrections in higher-order terms but does not affect the consistency of step 1 or the convergence rates.

\subsubsection{Oracle Order of $s_t^{*}$} \label{oracle}
Recall that $V_{it}$ is a vector of length $m_t$ that gathers the instruments for each $t=1,\ldots,T-1$. Let $J_t\subseteq\{1,\ldots,m_t\}$ be a set of indices with cardinality $|J_t|\leq s_t\leq m_t$, and let $J_t^c=\{k\in\{1,\ldots,m_t\}:k\notin J_t\}$ be the complement set. Denote $V_{it,J_t}$ and $V_{it,J_t^c}$ as the sub-vectors of $V_{it}$ corresponding to $J_t$ and $J_t^c$, respectively. Consider the linear projection
$$W_{it} = V_{it}^\top\Pi_t^0 + \eta_{it} = V_{it,J_t}^\top\Pi_{t,J_t}^0 + \underbrace{V_{it,J_t^c}^\top\Pi_{t,J_t^c}^0 + \eta_{it}}_{=:\check\eta_{it}},$$
where $\Pi_t^{0} \defeq \bar\E(V_{it}V_{it}^{\top})^{-1}\bar\E(V_{it}W_{it})$, $\Pi^0_{t,J_t}$ and $\Pi^0_{t,J_t^c}$ are the sub-vectors of $\Pi^0_{t}$ corresponding to $J_t$ and $J_t^c$, respectively. Let $\delta_{-J_t}^0\defeq\Pi_t^0 - \Pi^{0}_{J_t}$, where $\Pi_{J_t}^{0}$ is a vector of length $m_t$ with elements corresponding to $J_t$ being $\Pi_{t,J_t}^{0}$ and zeros elsewhere. Moreover, define $\Pi_{t,J_t}^{\dagger} \defeq \bar\E(V_{it,J_t}V_{it,J_t}^{\top})^{-1}\bar\E(V_{it,J_t}W_{it})$, and $\widehat\Pi_{t,J_t}\defeq \big(\sum_{i=1}^NV_{it,J_t}V_{it,J_t}^{\top}\big)^{-1}\big(\sum_{i=1}^NV_{it,J_t}W_{it}\big)$.

To choose the optimal value of $s_t$, we consider the oracle risk minimization problem:
\begin{equation}\label{st}
s_t^* = \arg\min_{s_t}\Big\{\min_{J_t:|J_t|\leq s_t}N^{-1}\sum_{i=1}^N(V_{it}^\top\Pi_t^0 - V_{it,J_t}^\top\Pi_{t,J_t}^\dagger)^2 + s_t\check\sigma_t^2/N\Big\},
\end{equation}
where $\check\sigma_t^2\defeq\bar\E(\check\eta^2_{it}\mid V_{it,J_t})$. In the following theorem, we shall illustrate the oracle order of $s_t^*$ for a specific case.

\begin{theorem}[Oracle Order of $s_t^*$]\label{bound}
Under Assumptions \ref{a1}--\ref{a2}, and assuming that $\bar\E(V_{it}^\top\delta_{-J_t}^0)\lesssim c^{-s_t}$ for some constant $c>1$, we can conclude that the optimal $s_t^*$ defined in \eqref{st} is bounded as $s_t^*\asymp\log N\wedge t$.
\end{theorem}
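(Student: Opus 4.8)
The plan is to read the objective in \eqref{st} as a bias--variance tradeoff in $s_t$ and to locate its minimizer by balancing a geometrically-decaying approximation (bias) term against the linearly-growing penalty (variance) term $s_t\check\sigma_t^2/N$. First I would replace the empirical approximation error $N^{-1}\sum_{i=1}^N(V_{it}^\top\Pi_t^0 - V_{it,J_t}^\top\Pi_{t,J_t}^\dagger)^2$ by its population counterpart $b(s_t)\defeq \min_{|J_t|\leq s_t}\bar\E(V_{it}^\top\Pi_t^0 - V_{it,J_t}^\top\Pi_{t,J_t}^\dagger)^2$. Since each summand is a quadratic form in $V_{it}$ with fixed population coefficients, the sample average concentrates around its mean uniformly over the $\binom{m_t}{s_t}$ candidate supports $J_t$; I would obtain this uniform control by applying the empirical-process bound of Theorem \ref{emplemma} (or directly Lemma \ref{emp}) to the function class indexed by $J_t$, whose covering number is governed by $s_t\log m_t$. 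Under Assumptions \ref{a1}--\ref{a2} and the maintained rate $\sqrt{s_t^*/N}\log m_t\to 0$, this deviation is $\smallO_\P(b(s_t)+ s_t/N)$, so the empirical and population objectives have minimizers of the same order.

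Next I would bound $b(s_t)$. By the tower property of $L^2$ projections, $V_{it,J_t}^\top\Pi_{t,J_t}^\dagger$ is the projection of $V_{it}^\top\Pi_t^0$ onto $\spann(V_{it,J_t})$, so $b(s_t)$ is the minimal approximation error and in particular is no larger than the error of the feasible coefficient $\Pi_{t,J_t}^0$, namely $\bar\E(V_{it}^\top\delta_{-J_t}^0)^2$ for the best support. Choosing $J_t$ to retain the $s_t$ most informative instruments---which, by the weak-dependence structure of Example \ref{ex:ar1}, are the most recent lags---and invoking the maintained geometric-decay hypothesis $\bar\E(V_{it}^\top\delta_{-J_t}^0)\lesssim c^{-s_t}$, I would conclude a geometric bound $b(s_t)\lesssim \rho^{s_t}$ for some $\rho\in(0,1)$. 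I would also record that the variance coefficient satisfies $\check\sigma_t^2\asymp 1$ throughout the relevant range, since $\check\eta_{it}=V_{it,J_t^c}^\top\Pi_{t,J_t^c}^0+\eta_{it}$ and the truncated linear part contributes an amount that vanishes geometrically in $s_t$ while $\bar\E(\eta_{it}^2\mid\cdot)$ stays bounded under the moment conditions.

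With $R(s_t)\asymp \rho^{s_t}+ s_t/N$ (constants absorbed into $\rho$ and $\check\sigma_t^2$), I would treat $s_t$ as continuous. The function is strictly convex and strictly decreasing up to its stationary point, where $\rho^{s_t}\asymp 1/N$; solving gives an interior candidate $s^{\circ}\asymp \log N/\log(1/\rho)\asymp\log N$. The feasibility constraint is $s_t\leq m_t-1\asymp t$, since one cannot select more lags than are available. Hence if $t\gtrsim\log N$ the interior minimizer $s^{\circ}\asymp\log N$ is attained, whereas if $t\lesssim\log N$ the objective is still decreasing throughout $[1,m_t-1]$, so the minimizer sits at the boundary $s_t^*\asymp t$. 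Combining the two regimes yields $s_t^*\asymp\log N\wedge t$.

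The main obstacle is the uniform concentration step: the minimum over $J_t$ ranges over exponentially many supports, so passing from the empirical to the population approximation error requires the covering-number and $\gamma_2$-functional machinery of Lemma \ref{emp} together with the block/weak-dependence arguments of Theorem \ref{emplemma}, and one must verify that the resulting deviation is genuinely of smaller order than both $b(s_t)$ and $s_t/N$ across the entire feasible range of $s_t$, not merely pointwise. A secondary technical point is handling the discreteness of $s_t$ and the boundary case carefully, so that the continuous-relaxation analysis delivers the clean ``$\wedge$'' in the final order.
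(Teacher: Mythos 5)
Your proposal is correct and follows essentially the same route as the paper: a bias--variance balance in which the geometric-decay hypothesis gives an approximation error $\lesssim c^{-2s_t}$, the variance contribution is $\asymp s_t/N$, equating the two yields $s_t\asymp\log N$, and the availability constraint $s_t\leq m_t\asymp t$ produces the $\wedge\, t$. The only (immaterial) difference is that the paper derives the $s_t/N$ term and an additional cross term $c^{-s_t}\sqrt{\log m_t/N}$ from the conditional risk of the sample OLS fit on $J_t$ (via the $\E(e_{it}^2\mid V_{it,J_t})$ computation), whereas you take the explicit penalty in \eqref{st} at face value and instead insert a uniform-concentration step for the empirical bias; neither choice changes the resulting balance or the conclusion.
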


\begin{proof}%[Proof of Theorem \ref{bound}]
Let $e_{it}\defeq V_{it,J_t}^\top\big(\sum_{i=1}^NV_{it,J_t}V_{it,J_t}^{\top}\big)^{-1}\big(\sum_{i=1}^NV_{it,J_t}\check\eta_{it}\big)$. Observe that
\begin{eqnarray*}
&&\frac{1}{N}\sum_{i=1}^N\E\{(V_{it}^\top\Pi_t^0 - V_{it,J_t}^\top\widehat\Pi_{t,J_t})^2\mid V_{it,J_t}\} \\
&=& \frac{1}{N}\sum_{i=1}^N\E\Big(\Big[V_{it}^\top\Pi_t^0 - V_{it,J_t}^\top\Big(\sum_{i=1}^NV_{it,J_t}V_{it,J_t}^{\top}\Big)^{-1}\Big(\sum_{i=1}^NV_{it,J_t}(V_{it,J_t}^\top\Pi_{t,J_t}^0 + \check\eta_{it})\Big)\Big]^2\mid V_{it,J_t}\Big)\\
&=& \frac{1}{N}\sum_{i=1}^N\delta_{-J_t}^{0\top}V_{it}V_{it}^\top\delta_{-J_t}^{0} - \frac{2}{N}\sum_{i=1}^N\E(e_{it}V_{it}^\top\delta_{-J_t}^0\mid V_{it,J_t}) + \frac{1}{N}\sum_{i=1}^N\E(e_{it}^2\mid V_{it,J_t}).
\end{eqnarray*}
By applying the concentration inequality, we can bound the first term in probability by $c^{-2s_t}$. Under the cross-sectional independence assumption, the last term is bounded as
\begin{align*}
\frac{1}{N}\sum_{i=1}^N\E(e_{it}^2\mid V_{it,J_t})&= \frac{1}{N^2}\sum_{i=1}^N\check\sigma_t^2\operatorname{tr}\Big(\Big(\frac{1}{N}\sum_{i=1}^NV_{it,J_t}V_{it,J_t}^{\top}\Big)^{-1}V_{it,J_t}V_{it,J_t}^{\top}\Big) + \smallO_\P(1) \lesssim_\P (N^{-1}+ c^{-2s_t})s_t.
\end{align*}
As for the second term, we have $\frac{1}{N}\sum_{i=1}^N\E(e_{it}V_{it}^\top\delta_{-J_t}^0\mid V_{it,J_t})\lesssim_\P c^{-2s_t}\vee c^{-2s_t}\sqrt{s_t/N}\vee c^{-s_t}\sqrt{s_t}/N$. 

Overall, to minimize the order of $c^{-2s_t}+c^{-2s_t}\sqrt{s_t/N}+ c^{-s_t}\sqrt{s_t}/N+N^{-1}s_t+c^{-2s_t}s_t$,
we find that the oracle order of the optimal $s_t^*$ is approximately given by $s_t^*\asymp\log N\wedge t$.
\end{proof}

As shown in the proof of Theorem \ref{bound}, the approximate sparsity error, as quantified by \eqref{cs}, is bounded as $C_{s_t^*}\lesssim_\P (N^{-1}s_t^*+c^{-2s_t^*}s_t^*)^{1/2}$ is this particular case. Therefore, we can further deduce the oracle bound for $C_{s_t^*}\lesssim_\P\sqrt{(\log N\wedge t)/N}$ based on the optimal $s_t^*\asymp\log N\wedge t$ derived.
The required condition $\bar\E(V_{it}^\top\delta_{-J_t}^0)\lesssim c^{-s_t}$ in the theorem is linked to the decay structure of the coefficients, which is verified in the proof of Proposition \ref{prop:decayK0} for a class of panel AR(1) models.

\begin{proof}[Proof of Proposition \ref{prop:decayK0}]
Consider the panel AR(1) model:
$$
Y_{it}=\alpha_i+\theta_1^0 Y_{i,t-1}+\theta_2^0 D_{it}+\varepsilon_{it}, \quad |\theta_1^0|<1.
$$
Iterating the model forward yields
$$Y_{it}=\frac{\alpha_i}{1-\theta_1^0} + \sum_{\ell\geq0}(\theta_1^0)^{\ell}(\theta_2^0 D_{i,t-\ell}+\varepsilon_{i,t-\ell}).$$
Recall that $\Pi_t^{0}=\bar\E(V_{it}V_{it}^{\top})^{-1}\bar\E(V_{it}W_{it})$, where $W_{it}\in\{\Delta Y_{i,t-1},\Delta D_{it}\}$ and $V_{it} = (1,X_{i1}^\top,\ldots,X_{it}^\top)^\top=(1,Y_{i0},D_{i1},\ldots,Y_{i,t-1},D_{it})^\top$. We focus on the component $W_{it}=\Delta Y_{i,t-1}$.

Without loss of generality, we set the mean of $D_{it}$ to zero, since a nonzero mean does not affect the decay structure. Similarly, as the individual effects $\alpha_i$ are treated as fixed parameters in the theoretical analysis, we set $\alpha_i=0$ in this proof. Any nonzero shift would be absorbed into the intercept term of the projection.
Let $\operatorname{Var}(D_{it})=\sigma_D^2$, $\operatorname{Var}(\vps_{it})=\sigma_\vps^2$, and $\E(D_{it}\vps_{i,t-1})=\kappa$. It follows that
\begin{align*}
    \E(Y_{it}D_{is})&=(\theta_1^0)^{t-s}\theta_2^0\sigma^2_D\mathbf 1(t>s-1)+\kappa(\theta_1^0)^{t-s+1}\mathbf 1(t\geq s-1),\quad \E(D_{it}D_{is})=\sigma_D^2\mathbf 1(t=s),\\
    \E(Y_{it}Y_{is})&=\begin{cases}
			\frac{(\theta_2^0)^2\sigma_D^2+\sigma_\vps^2+2\theta_1^0\theta_2^0\kappa}{1-(\theta_1^0)^2}, & \text{if } t=s,\\
            \frac{\{(\theta_2^0)^2\sigma_D^2+\sigma_\vps^2\}(\theta_1^0)^{|t-s|}+\theta_2^0\kappa\{(\theta_1^0)^{|t-s|-1}+(\theta_1^0)^{|t-s|+1}\}}{1-(\theta_1^0)^2}, & \text{if } t\neq s.
		 \end{cases}
\end{align*}
We observe that the covariances among the instruments decay with the time distance. The matrix $\E(V_{it}V_{it}^{\top})$ belongs to the class of well-localized matrices with exponentially decaying off-diagonal entries. Standard results for such matrices (see e.g.,\cite{grochenig2006symmetry}) imply that the element in the $l$-th row and $r$-th column of the inverse matrix satisfies
$$\big|\big(\E(V_{it}V_{it}^{\top})^{-1}\big)_{lr}\big|\lesssim \rho^{|l-r|},\quad l,r=1,\ldots,2t+1,$$
for some $\rho\in(0,1)$.

On the other hand, based on the recursive representation of the model and the definition of the FOD transformation, we find that
\begin{align*}
    \E(\Delta Y_{i,t-1}D_{is})&=\begin{cases}
			c_{t-1}\Big[(\theta_1^0)^{t-s}\theta_2^0\sigma_D^2 + (\theta_1^0)^{t-s+1}\kappa - \frac{(\theta_2^0\sigma_D^2+\theta_1^0\kappa)(\theta_1^0)^{t-s}\{1-(\theta_1^0)^{T-t+1}\}}{(T-1+1)(1-\theta_1^0)}\Big], \, \text{if } 1\leq s<t,\\
            c_{t-1}\Big[\kappa - \frac{(\theta_2^0\sigma_D^2+\theta_1^0\kappa)\{1-(\theta_1^0)^{T-t+1}\}}{(T-1+1)(1-\theta_1^0)}\Big], \, \text{if } s=t,
		 \end{cases}\\
    \E(\Delta Y_{i,t-1}Y_{is})&=\begin{cases}
			c_{t-1}\Big[\frac{(\theta_2^0)^2\sigma_D^2+\sigma_\vps^2+2\theta_1^0\theta_2^0\kappa}{1-(\theta_1^0)^2} - \frac{\{1-(\theta_1^0)^{T-t+1}\}\{\theta_1^0(\theta_2^0)^2\sigma_D^2+\theta_2^0\kappa+(\theta_1^0)^2\theta_2^0\kappa+\theta_1^0\sigma_\vps^2\}}{(T-1+1)(1-\theta_1^0)\{1-(\theta_1^0)^2\}}\Big],  \,\text{if } s=t-1,\\
            c_{t-1}\Big[\frac{\{(\theta_2^0)^2\sigma_D^2+\sigma_\vps^2\}(\theta_1^0)^{t-s-1}+\theta_2^0\kappa(\theta_1^0)^{t-s-2}\{1+(\theta_1^0)^{2}\}}{1-(\theta_1^0)^2} &\\
            \hspace{1cm} - \frac{\{1-(\theta_1^0)^{T-t+1}\}(\theta_1^0)^{t-s-1}\{\theta_1^0(\theta_2^0)^2\sigma_D^2+\theta_2^0\kappa+(\theta_1^0)^2\theta_2^0\kappa+\theta_1^0\sigma_\vps^2\}}{(T-1+1)(1-\theta_1^0)\{1-(\theta_1^0)^2\}}\Big],  \,\text{if } 0\leq s<t-1,
		 \end{cases}
\end{align*}
where $c_{t-1}=\sqrt{(T-t+1)/(T-t+2)}$. Both $\E(\Delta Y_{i,t-1}D_{is})$ and $\E(\Delta Y_{i,t-1}Y_{is})$ exhibit geometric decay in the time distance $t-s$, at a rate determined by $\theta_1^0$.

Combining these findings, we conclude that the coefficients in $\Pi_t^0$ also inherit the decay structure, and hence $\Pi_t^0$ is approximately sparse.

\end{proof}

\subsubsection{The RE Condition}
%\begin{equation}
%\min_{\delta_{\Pi,t}\neq 0, |\delta_{\Pi,t}|_0\leq s_t,|\delta_{\Pi,J_t^c}|_1\leq c_0 |\delta_{\Pi,J_t}|_1 } \frac{|X_t^{\top}{\delta}_{\Pi}|_{2,n}^2}{|\delta_{\Pi, J_t}|_2^2}\geq \kappa_t^2(c_0,s_t)
%\end{equation}

%\begin{lemma}
% Define $\Omega \defeq \{\delta_{\Pi,t}/|\delta_{\Pi,t}|_2: \delta_{\Pi,t}\neq 0, |\delta_{\Pi,t}|_0\leq s_t,|\delta_{\Pi,J_t^c}|_1\leq c_0 |\delta_{\Pi,J_t}|_1\}$.
%Assume that
%$c_{\min}\leq \min_{\delta\in \Omega} n^{-1}\sum_t{\delta^{\top}\E(X_tX_t^{\top})}\delta \leq C_{\max}$.
%Also $X_{it}$ are iid over $i$ and have finite $q$ the moment with $q>4$.
%Assume $\delta_{n,t}=c_{\delta}(\sqrt{n_t}^{-1} \sqrt{s_t \log p} \vee n^{1/q-1} (s_t \log p)^{3/2} l_s^{1/q} \sqrt{s_t})\to 0$.
%$C_{\max} = \kappa_t^2(c_0,s_t)+ \delta_{n,t}$.
%Then we have the event
%\begin{equation}
%\min_{\delta_{\Pi,t}\neq 0, |\delta_{\Pi,t}|_0\leq s_t,|\delta_{\Pi,J_t^c}|_1\leq c_0 |\delta_{\Pi,J_t}|_1 } \frac{|X_t^{\top}{\delta}_{\Pi}|_{2,n}^2}{|\delta_{\Pi, J_t}|_2^2}\geq \kappa_t^2(c_0,s_t).
%\end{equation}
%happens with high probability.
%\end{lemma}
\begin{proof}[Proof of Lemma \ref{idlemma}]
Recall that $\delta_{t,J_t}$ is a sub-vector of $\delta_t$ corresponding to the indices set $J_t$. It suffices to show that the event
$$\min_{\delta_{t}\neq 0, |\delta_{t}|_0\leq s_t^*,|\delta_{t,J_t^c}|_1\leq c_0 |\delta_{t,J_t}|_1 } \frac{|\bm V_t{\delta}_{t}|_{2}}{\sqrt{N}|\delta_{t}|_2}\geq \kappa_t(c_0,s_t^*)$$
occurs with probability $1-\smallO(1)$ as it is less likely to hold than the original event required for identification.

Consider the sphere $\Omega_t(c_0,s_t^*)$ defined in Assumption \ref{33}. To simplify notation, we will henceforth refer to it as $\Omega_t$. Let $\Omega^\epsilon_{t}$ be the $\epsilon$-net of $\Omega_t$ with respect to $|\cdot|_2$. According to \citet{rudelson2012reconstruction}, the cardinality of $\Omega^\epsilon_{t}$ is bounded as $|\Omega^\epsilon_{t}|\lesssim {m_t \choose s_t^*}(c/\epsilon)^{s_t^*}\leq\{cem_t/(s_t^*\epsilon)\}^{s_t^*}$, where $m_t$ is the dimension of $V_{it}$ (i.e. the number of instruments for each period $t$), %(i.e. $m_t\asymp t$)
and $c>0$ is an absolute constant.
Moreover, for any point $\delta\in\Omega_t$, let $\pi(\delta)$ denote the closest point to $\delta$ within $\Omega^\epsilon_{t}$.

We first observe that
\begin{align*}
\min_{\delta_t\in\Omega_t}|\bm V_t\delta_t|_2^2&\geq\min_{\delta_t\in\Omega_t}|N\delta_t^\top\bar\E(V_{it}V_{it}^\top)\delta_t|-\max_{\delta_t\in\Omega_t}|\delta_t^\top\{\bm V_t^\top \bm V_t-N\bar\E(V_{it}V_{it}^\top)\}\delta_t|\\
&\geq\min_{\delta_t\in\Omega_t}|N\delta_t^\top\bar\E(V_{it}V_{it}^\top)\delta_t|-\max_{\delta_t\in\Omega_t}|\bm V_t\delta_t|_2^2-\max_{\delta_t\in\Omega_t}|N\delta_t^\top\bar\E(V_{it}V_{it}^\top)\delta_t|.
\end{align*}
Next, we shall derive the upper bound for the second term. By defining $C_t\defeq\max\limits_{\delta_t\in\Omega_t}|\bm V_t(\delta_t-\pi(\delta_t))|_2$, and $D_t\defeq\max\limits_{\pi(\delta_t)\in\Omega_t^\epsilon}|\bm V_t\pi(\delta_t)|_2$, we have the following inequalities
$$D_t-C_t\leq\max_{\delta_t\in\Omega_t}|\bm V_t\delta_t|_2\leq D_t+C_t.$$
Since $C_t\leq \epsilon\max\limits_{\delta_t\in\Omega_t}|\bm V_t\delta_t|_2$, we can further obtain that
$$D_t/(1+\epsilon)\leq\max_{\delta_t\in\Omega_t}|\bm V_t\delta_t|_2\leq D_t/(1-\epsilon).$$
To bound $D_t$, we apply the tail probability inequality in Lemma \ref{emp} on the mean zero random variable $z_{it}(\delta_t)\defeq\{\pi(\delta_t)^\top V_{it}\}^2 -\bar\E\{\pi(\delta_t)^\top V_{it}\}^2$, over all $\pi(\delta_t)\in\Omega_t^\epsilon$. It follows that $D_t^2\lesssim_\P\sqrt{N}\{s_t^*\log(cem_t/(s_t^*\epsilon))\}^{1/2}$.
%$D_t^2\lesssim_\P\sqrt{N}\{s_t^*\log(cem_t/(s_t^*\epsilon))\}^{1/\gamma}\Psi_{\psi_\nu,0}$, where $\gamma = 2/(2\nu+1)$ with $\nu\geq0$ such that $\Psi_{\psi_\nu,0}\defeq\max\limits_{\pi(\delta_t)\in\Omega_t^\epsilon}\sup\limits_{r\geq2}r^{-\nu}\|z_{\cdot}(\delta_t)\|_{r,0}<\infty$.
% \begin{align*}
%     \P\Big(\max_{\pi(\delta_t)\in\Omega_t^\epsilon}|\pi(\delta_t)^\top B_t\pi(\delta_t)|\geq z\Big)=\P\Big(\max_{\pi(\delta_t)\in\Omega_t^\epsilon}\Big|\sum_{i=1}^N\big[\{\pi(\delta_t)^\top V_{it}\}^2 -\E\{\pi(\delta_t)^\top V_{it}\}^2\big]\Big|\geq z\Big)
% \end{align*}

%By letting {\red$\epsilon= $ (need to cancel the order of $D_t^2/(1-\epsilon)^2$ with $\sqrt{Ns_t^*}\log(N\vee T)$)}
By choosing a sufficiently small $\epsilon$, e.g., $\epsilon=1/(\log s_t^*)^{1/4}$, and given Assumption \ref{33} with the particular $\kappa_t(\cdot)$ specified in the conditions of the lemma, we have $$\min_{\delta_t\in\Omega_t}N^{-1/2}|\bm V_t\delta_t|_2=\min_{\delta_{t}\neq 0, |\delta_{t}|_0\leq s_t^*,|\delta_{t,J_t^c}|_1\leq c_0 |\delta_{t,J_t}|_1 } \frac{|\bm V_t{\delta}_{t}|_{2}}{\sqrt{N}|\delta_{t}|_2}\geq \kappa_t(c_0,s_t^*),$$
which holds with probability $1-\smallO(1)$, as $N\to\infty$.
\end{proof}

\subsubsection{Prediction Performance of $\widehat{\Pi}_t$}

\begin{proof}[Proof of Theorem \ref{lassobound}]
First observe that
\begin{align*}
    |\bm W_t - \bm V_t\widehat \Pi_t|_2^2 &= |\bm V_t(\Pi_t^0-\widehat\Pi_t)+\bm\eta_t|_2^2= N|\Pi_t^0-\widehat\Pi_t|_{2,N}^2 + |\bm\eta_t|_2^2 + 2\langle\bm V_t(\Pi_t^0-\widehat\Pi_t),\bm\eta_t\rangle,\\
    |\bm W_t - \bm V_t\Pi_t^*|_2^2 &= |\bm V_t(\Pi_t^0-\Pi_t^*)+\bm\eta_t|_2^2= N|\Pi_t^0-\Pi_t^*|_{2,N}^2 + |\bm\eta_t|_2^2 + 2\langle\bm V_t(\Pi_t^0-\Pi_t^*),\bm\eta_t\rangle.
\end{align*}
By the definition of the LASSO estimator, we have
$$|\bm W_t - \bm V_t\widehat \Pi_t|_2^2 + \lambda_t|\bm\omega_t\circ\widehat\Pi_t|_1 \leq |\bm W_t - \bm V_t\Pi_t^*|_2^2 + \lambda_t|\bm\omega_t\circ \Pi_t^*|_1,$$
where $\circ$ denotes the Hadamard product.
It follows that
\begin{align*}
    |\delta_{\Pi,t}|_{2,N}^2 -2C_{s_t^*}|\delta_{\Pi,t}|_{2,N}&\leq |\Pi_t^0-\widehat\Pi_t|_{2,N}^2 - |\Pi_t^0-\Pi_t^*|_{2,N}^2\\
    &= N^{-1}|\bm W_t - \bm V_t\widehat \Pi_t|_2^2 - N^{-1}|\bm W_t - \bm V_t\Pi_t^*|_2^2 + 2N^{-1}\langle\bm V_t\delta_{\Pi,t},\bm\eta_t\rangle\\
    &\leq 2N^{-1}\langle\bm V_t\delta_{\Pi,t},\bm\eta_t\rangle + N^{-1}\lambda_t(|\bm\omega_t\circ \Pi_t^*|_1 - |\bm\omega_t\circ\widehat\Pi_t|_1).
\end{align*}
If $|\delta_{\Pi,t}|_{2,N}^2 -2C_{s_t^*}|\delta_{\Pi,t}|_{2,N}\leq0$, then we have $ |\delta_{\Pi,t}|_{2,N}\leq 2C_{s_t^*}$ and the desired bound in the conclusion holds. In the following, we shall derive the bound for the case of $|\delta_{\Pi,t}|_{2,N}^2 -2C_{s_t^*}|\delta_{\Pi,t}|_{2,N}>0$.

On the event $\mathcal A_{2t}$, %(with $c=4$),
we have
$$\langle\bm V_t\delta_{\Pi,t},\bm\eta_t\rangle=\langle\bm\eta_t^\top\bm V_t\oslash\bm\omega_t,\bm\omega_t\circ\delta_{\Pi,t}\rangle\leq\lambda_t|\bm\omega_t\circ\delta_{\Pi,t}|_1/c,$$
which implies that
\begin{eqnarray}\label{66}
    &&|\delta_{\Pi,t}|_{2,N}^2 -2C_{s_t^*}|\delta_{\Pi,t}|_{2,N} + (2N)^{-1}\lambda_t|\bm\omega_t\circ\delta_{\Pi,t}|_1\notag\\
    &\leq& \frac{4+c}{2c}N^{-1}\lambda_t|\bm\omega_t\circ\delta_{\Pi,t}|_1 + N^{-1}\lambda_t(|\bm\omega_t\circ \Pi_t^*|_1 - |\bm\omega_t\circ\widehat\Pi_t|_1).
\end{eqnarray}
Recall that $J_t$ is the indices set for nonzero elements in $\Pi_t^*$, and $J_t^c$ is the complement set for zero ones. Let $\delta_{\Pi,t,J_t}$ and $\delta_{\Pi,t,J_t^c}$ denote the sub-vectors of $\delta_{\Pi,t}$ corresponding to $J_t$ and $J_t^c$, similarly for $\Pi_t^*$, $\widehat\Pi_t$, and $\bm\omega_t$. Then, by \eqref{66} we obtain that
\begin{eqnarray*}
   &&|\delta_{\Pi,t}|_{2,N}^2 -2C_{s_t^*}|\delta_{\Pi,t}|_{2,N} + (2N)^{-1}\lambda_t|\bm\omega_t\circ\delta_{\Pi,t}|_1\\
   &\leq& \frac{4+c}{2c}N^{-1}\lambda_t|\bm\omega_{t,J_t}\circ\delta_{\Pi,t,J_t}|_1 + N^{-1}\lambda_t|\bm\omega_{t,J_t}\circ\Pi^*_{t,J_t}|_1 - N^{-1}\lambda_t|\bm\omega_{t,J_t}\circ\widehat\Pi_{t,J_t}|_1\\
   &\leq& C_1N^{-1}\lambda_t|\bm\omega_{t,J_t}\circ\delta_{\Pi,t,J_t}|_1,
\end{eqnarray*}
where $C_1>0$ is a constant depending only on the $c$ that satisfies Assumption \ref{weights}.

Without loss of generality, we set the penalty wights to be 1. Given that $|\delta_{\Pi,t}|_{2,N}^2 -2C_{s_t^*}|\delta_{\Pi,t}|_{2,N}\geq0$, \eqref{66} also implies that $\delta_{\Pi,t}$ satisfies $|\delta_{\Pi,t,J_t^c}|_1<6|\delta_{\Pi,t,J_t}|_1$ for $c>1$. On the event $\mathcal A_{1t}$, we get $|\delta_{\Pi,t}|_{2,N}\geq\kappa_t(3,s_t^*)|\delta_{\Pi,t,J_t}|_2$. Therefore, on the events $\mathcal A_{1t}$ and $\mathcal A_{2t}$, we have
$$|\delta_{\Pi,t}|_{2,N}^2 -2C_{s_t^*}|\delta_{\Pi,t}|_{2,N}\leq C_1N^{-1}\sqrt{s_t^*}\lambda_t|\delta_{\Pi,t,J_t}|_2\leq C_1N^{-1}\sqrt{s_t^*}\lambda_t|\delta_{\Pi,t}|_{2,N}/\kappa(3,s_t^*),$$
which gives the bound for prediction norm:
$$|\delta_{\Pi,t}|_{2,N}\lesssim 2C_{s_t^*}+N^{-1}\sqrt{s_t^*}\lambda_t/\kappa(3,s_t^*),$$
with probability at least $1-\alpha-\smallO(1)$.

Next, we shall derive the bound for $|\delta_{\Pi,t}|_1$. When %$|\delta_{\Pi,t}|_{2,N}^2 -2C_{s_t^*}|\delta_{\Pi,t}|_{2,N}\geq0$, we have
$|\delta_{\Pi,t,J_t^c}|_1<6|\delta_{\Pi,t,J_t}|_1$ is satisfied, we have
%which gives
$$|\delta_{\Pi,t}|_1<7|\delta_{\Pi,t,J_t}|_1\leq7\sqrt{s_t^*}|\delta_{\Pi,t,J_t}|_2\leq7\sqrt{s_t^*}|\delta_{\Pi,t}|_{2,N}/\kappa(3,s_t^*).$$
When $|\delta_{\Pi,t,J_t^c}|_1\geq6|\delta_{\Pi,t,J_t}|_1$, we have $|\delta_{\Pi,t}|_{2,N}^2 -2C_{s_t^*}|\delta_{\Pi,t}|_{2,N}<0$. Then, by \eqref{66} we can find that $C_{s_t^*}^2\geq C_2N^{-1}\lambda_t|\delta_{\Pi,t,J_t^c}|_1$, where $C_2>0$ is a constant only depending on $c$. Hence,
$$|\delta_{\Pi,t}|_1\leq\frac{7}{6}|\delta_{\Pi,t,J_t^c}|_1\lesssim NC_{s_t^*}^2/\lambda_t,$$
with probability at least $1-\alpha$.
Overall, $|\delta_{\Pi,t}|_1$ is bounded as
$$|\delta_{\Pi,t}|_1\lesssim 7\sqrt{s_t^*} \{2C_{s_t^*}+  N^{-1}\sqrt{s_t^*}\lambda_t/\kappa_t(3,s_t^*) \}/\kappa_t(3,s_t^*) + NC_{s_t^*}^2/\lambda_t,$$
with probability at least $1-\alpha-\smallO(1)$.
\end{proof}

\subsection{Proofs for Section \ref{secondstep}}
In this subsection, we first present the proofs for Section \ref{secondstep} for the model without time effects $\gamma_t$ (in Appendices \ref{app:var}--\ref{app:hdm}). The cross-sectional demeaning required when time effects are present introduces weak cross-sectional dependence, which makes the inference analysis more cumbersome but does not materially affect the results. %(under suitable conditions). 
In Appendix \ref{add}, we indicate how the proofs can be adapted to accommodate $\gamma_t$. %in an example with %a dynamic panel model with one lagged dependent variable as a covariate, that is, 

\subsubsection{Asymptotic Normality of AB-LASSO and AB-LASSO-SS}\label{app:main}
\begin{proof}[Proof of Theorem \ref{main}]
\underline{For AB-LASSO}: Recall the expression
$$\widehat\theta - \theta^0 = \bigg(\frac{1}{NT}\sum_{i=1}^N \sum_{t=1}^{T-1} \widehat{\bm\Theta}_tV_{it}\Delta X_{it}^{\top} \bigg)^{-1} \bigg(\frac{1}{NT}\sum_{i=1}^N \sum_{t=1}^{T-1} \widehat{\bm\Theta}_tV_{it} \Delta\varepsilon_{it}\bigg),$$
and rewrite the first summation as
\begin{eqnarray*}
&&(NT)^{-1}\sum_{i=1}^N \sum_{t=1}^{T-1} \widehat{\bm\Theta}_tV_{it}\Delta X_{it}^{\top}\\
&=&(NT)^{-1}\sum_{i=1}^N \sum_{t=1}^{T-1} (\widehat{\bm\Theta}_t - \bm\Theta_t^*)V_{it}\Delta X_{it}^{\top} + (NT)^{-1}\sum_{i=1}^N \sum_{t=1}^{T-1} \bm\Theta_t^*\{V_{it}\Delta X_{it}^{\top}-\E(V_{it}\Delta X_{it}^{\top})\} \\
&&+ \,(NT)^{-1}\sum_{i=1}^N \sum_{t=1}^{T-1} \bm\Theta_t^*\E(V_{it}\Delta X_{it}^{\top})\\
&=:&I_1 + I_2 + I_3.
\end{eqnarray*}
To deal with the inverse of the sum, we consider the following expansion:
\begin{align*}
(I_1+I_2+I_3)^{-1}&=[I_3\{\mathbf{I}_{d\times d}+I_3^{-1}(I_1+I_2)\}]^{-1}\\
&=\bigg[\sum_{k=0}^\infty\big\{-I_3^{-1}(I_1+I_2)\big\}^k\bigg]I_3^{-1}=\{\mathbf I_{d\times d}-I_3^{-1}(I_1+I_2)+R_n\}I_3^{-1},
\end{align*}
where $\mathbf I_{d\times d}$ represents the $d\times d$ identity matrix and $R_n$ denotes the remainder of order $\smallO_\P((NT)^{-1/2})$. We observe an approximate sparse error between $\bm\Theta^0_t$ and $\bm\Theta^*_t$, with the average error rate being of a small order $T^{-1}\sum_{t=1}^{T-1}|\bm\Theta^*_t-\bm\Theta^0_t|_\infty=\smallO(1)$. This generally holds true under a decaying temporal dependence structure. The error is deemed negligible in the subsequent asymptotic analysis. Based on Assumption \ref{identification}, we assert that $I_3$ is invertible in the limit, with the presence of only a negligible error.

The second summation in the expression of $(\widehat\theta - \theta^0)$ is rewritten as
\begin{eqnarray*}
&&(NT)^{-1}\sum_{i=1}^N \sum_{t=1}^{T-1} \widehat{\bm\Theta}_tV_{it} \Delta\varepsilon_{it}\\
&=&(NT)^{-1}\sum_{i=1}^N \sum_{t=1}^{T-1} (\widehat{\bm\Theta}_t-\bm\Theta_t^*)V_{it} \Delta\varepsilon_{it}+(NT)^{-1}\sum_{i=1}^N \sum_{t=1}^{T-1} \bm\Theta_t^*V_{it} \Delta\varepsilon_{it}=:P_1 + P_2
\end{eqnarray*}
It follows that
\beq\label{expansion}
\widehat\theta - \theta^0=I_3^{-1}P_2 +I_3^{-1}P_1- I_3^{-1}(I_1+I_2)I_3^{-1}(P_1+P_2)+\smallO_\P((NT)^{-1/2}).
\eeq
We will prove that $I_3^{-1}P_2$ is the leading term, with $\sqrt{NT}I_3^{-1}P_2$ exhibiting asymptotic Gaussianity and analyze the orders of the other terms.
%as proved in Lemma \ref{Central_limit_theorem}.
%in Lemma \ref{22} and \ref{11}, respectively.

% Then with the nominator, we can see that
% \begin{align*}
% (NT)^{-1}\sum_{i=1}^{N}\sum_{t=3}^{T}\widehat{\Theta}_{t}V_{it}\Delta\tilde{\varepsilon}_{it} & =(NT)^{-1}\sum_{i=1}^{N}\sum_{t=3}^{T}(\widehat{\Theta}_{t}-\Theta_{t}^{*})V_{it}\Delta\tilde{\varepsilon}_{it}+(NT)^{-1}\sum_{i=1}^{N}\sum_{t=3}^{T}\Theta_{t}^{*}V_{it}\Delta\tilde{\varepsilon}_{it}\\
%  & =P_{1}+P_{2}.
% \end{align*}\\
% {By Lemma \ref{11} and the rate therein.}
% To analyze the rate, we see the decomposition of the terms as follows,
% \begin{align*}
% \hat{\theta}-\theta & =[((I-I_{3}^{-1}(I_{1}+I_{2})+I_{3}^{-1}(I_{1}+I_{2})I_{3}^{-1}(I_{1}+I_{2}))+...)I_{3}^{-1}](P_{1}+P_{2})\\
%  & =\underbrace{I_{3}^{-1}P_{2}}_{(NT)^{-1/2}}-I_{3}^{-1}(I_{1}+I_{2})I_{3}^{-1}(P_{1}+P_{2})+I_{3}^{-1}P_{1}+ \Co_p(\sqrt{NT}^{-1}).
% \end{align*}

We refer to a central limit theorem for stationary random field (Theorem 1 of \citet{el2013central}) to establish asymptotic normality. To achieve this, we must verify the necessary conditions outlined below.

Define an index set $\mathcal J_{N,T}\defeq\{(i,t):1\leq i\leq N,1\leq t\leq T-1\}$. As $N,T\to\infty$, it follows that the cardinality $|\mathcal J_{N,T}|\to\infty$, while the ratio $|\partial\mathcal J_{N,T}|/|\mathcal J_{N,T}|\to0$, where $\partial \mathcal J_{N,T}$ contains the boundary points of $\mathcal J_{N,T}$. Under Assumptions \ref{a1} and \ref{a2}(ii), the $d$-dimensional process $Z_{(i,t)}\defeq\bm\Theta^*_tV_{it}\Delta\varepsilon_{it}$ is i.i.d.\ across $i$ and exhibits weak serial dependence over $t$, as $\E(\Delta\varepsilon_{it}\Delta\varepsilon_{i,t-\ell}\mid V_{it})\lesssim\frac{1}{T-t+\ell}$ for any $\ell\geq 1$. For $k=1,\ldots,d$, $Z_{(i,t),k}$ can be represented as $Z_{(i,t),k}=h_{(i,t),k}(\ldots,\eta_{(i,t-1)},\eta_{(i,t)})$, where $h_{(i,t),k}$ are measurable functions, and $\eta_{(i,t)}$ for $i\in\mathbb N$, $t\in\mathbb Z$, are i.i.d.\ random elements. By Definition \ref{dep} and Assumption \ref{a2}(i), it follows that
$$\sum_{(i,t)\in\mathcal J_{N,T}}\|Z_{(i,t),k}^*-Z_{(i,t),k}\|_2<\infty.$$

Moreover, Assumption \ref{a2}(ii), along with the cross-sectional independence assumption, implies that for $k,k'=1,\ldots,d$, the variance
\begin{align*}
\E\bigg[\bigg(\sum_{(i,t)\in\mathcal J_{N,T}}Z_{(i,t),k)}\bigg)\bigg(\sum_{(i,t)\in\mathcal J_{N,T}}Z_{(i,t),k'}\bigg)\bigg]=\sum_{(i,t)\in\mathcal J_{N,T}}\E(Z_{(i,t),k}Z_{(i,t),k'}) + \smallO(NT)
%\quad+\sum_{(i,t)\in\mathcal J'_{N,T}}\E(Z_{(i,t),k}Z_{(i,t-1),k'})+\sum_{(i,t)\in\mathcal J'_{N,T}}\E(Z_{(i,t-1),k}Z_{(i,t),k'})
\end{align*}
is of order $NT$. %, where $\mathcal J'_{N,T}\defeq\{(i,t):1\leq i\leq N,3\leq t\leq T\}$.
%{\red Here $\E$ denotes unconditional expectation. }
Therefore, based on Assumption \ref{identification}, by applying Theorem 1 of \citet{el2013central} and Slutsky's theorem, we deduce that
$$\sqrt{NT}I_3^{-1}P_2\stackrel{\mathcal{L}}{\to} \operatorname{N}(0, Q^{-1}\bm\Sigma(Q^{-1})^\top).$$

% {\color{red}
% Recalling the subspace $\Omega_t(c_0,s_t^*)$ defined in Assumption \ref{33}, and considering the
% % function class of the stacked vector $(\delta_t)_{t=2}^T$ given by
% % $$\mathcal H\defeq\{(\delta_t)_{t=2}^T\in\R^{m}:\delta_t\in\Omega_t(c_0,s_t^*),t=2,\ldots,T\},$$
% % where $m=\sum_{t=2}^Tm_t$ represents the total number of instruments for all $t=2,\ldots,T$.
% entropy condition (with respect to the $d_2$-metric):
% $$\operatorname{ent}\big(\epsilon,\bigcup\nolimits_{2\leq t\leq T}\Omega_t(c_0,s_t^*)\big)\lesssim T\max_{2\leq t\leq T}s_t^*\log(m_t/\epsilon), \,\text{ for all } 0<\epsilon<1,$$
% by employing Theorem \ref{emplemma} based on Assumptions \ref{a1}--\ref{a2}, we obtain:
% \begin{equation}\label{empbound}
% \sup_{\delta_t\in\Omega_t(c_0,s_t^*),\,t=2,\ldots,T}%{(\delta_t)_{t=2}^T\in\mathcal H}
% \bigg|(NT)^{-1}\sum_{i=1}^N\sum_{t=2}^T\delta_t^\top V_{it}\Delta\varepsilon_{it}\bigg|\lesssim_\P\sqrt{\max_{2\leq t\leq T}s_t^*\log m_t}/\sqrt{N}.   
% \end{equation}
% Consequently, we bound $|I_3^{-1}P_1|_\infty$ as:
% $$|I_3^{-1}P_1|_\infty \leq |I_3^{-1}|_\infty\bigg|(NT)^{-1}\sum_{i=1}^N\sum_{t=2}^T(\widehat{\bm\Theta}_t-\bm\Theta_t^*)V_{it} \Delta\varepsilon_{it}\bigg|_\infty\lesssim_\P\frac{\max_{2\leq t\leq T}s_t^*\log m_t}{N}.$$}

Next, we bound $|I_3^{-1}P_1|_\infty$. For $j=1,\ldots,d$, let $e_j$ denote the $j$-th canonical basis vector, so that the $j$-th component of $P_1$ can be written as $P_{1,j}=(NT)^{-1}\sum_{i=1}^{N}\sum_{t=1}^{T-1}
e_j^\top(\widehat{\bm{\Theta}}_{t}-\bm{\Theta}_{t}^{*})V_{it}\Delta\varepsilon_{it}$. To apply Markov's inequality, we study the order of $\E(P_{1,j}^2)$, where 
$$\E(P_{1,j}^2)=(NT)^{-2}\E\bigg(\sum_{i=1}^{N}\sum_{t=1}^{T-1}
a_{it,j}\Delta\varepsilon_{it}\bigg)^2, \quad a_{it,j}\defeq e_j^\top(\widehat{\bm{\Theta}}_{t}-\bm{\Theta}_{t}^{*})V_{it}.$$

We first consider the squared terms in the expansion. Recalling the filtration $\mathcal F_{it}$ in Assumption \ref{a2}(ii), we have 
$$\E(a_{it,j}^2(\Delta\varepsilon_{it})^2\mid \mathcal F_{it}, \widehat{\bm\Theta}_t)=a_{it,j}^2\E((\Delta\varepsilon_{it})^2\mid \mathcal F_{it}, \widehat{\bm\Theta}_t)\leq \sigma^2a_{it,j}^2,$$
where $\sigma^2=\max_{i,t}\E((\Delta\varepsilon_{it})^2\mid \mathcal F_{it}, \widehat{\bm\Theta}_t)<\infty$. Applying the prediction performance bound in Theorem \ref{lassobound} then yields
$$\sum_{i=1}^{N}\sum_{t=1}^{T-1}\E(a_{it,j}^2(\Delta\varepsilon_{it})^2)\lesssim T\max_{1\leq t\leq T-1}s_t^*(\log m_t)^2.$$

Concerning the cross terms, note that $\E(\Delta\varepsilon_{it}\Delta\varepsilon_{it'})=0$ for $1\leq t\neq t'\leq T-1$, and $\lim\limits_{N,T\to\infty}(NT)^{-1}\sum\limits_{i=1}^N\sum\limits_{1\leq s<t\leq T-1}\E(\Delta\varepsilon_{it}\Delta\varepsilon_{is} \mid V_{it})=0$. Therefore, $\Delta\varepsilon_{it}$ can be treated as approximately a martingale difference sequence over $t$. Moreover, $\Delta Y_{i,t-1}$ and $\Delta\varepsilon_{it}$ are correlated through the future average component, and this dependence is of order $1/(T-t)$ due to the harmonic weights in the FOD transformation. Since the generated errors $(\widehat{\bm{\Theta}}_{t}-\bm{\Theta}_{t}^{*})$ involve cross-sectional averages, cross-sectional independence together with Assumption \ref{a2}(iv) implies that 
$$\sum_{i=1}^{N}\sum_{1\leq t\neq t'\leq T-1}\E(a_{it,j}a_{it',j}\Delta\varepsilon_{it}\Delta\varepsilon_{it'})\lesssim\frac{T}{N},\quad \sum_{1\leq i\neq i'\leq N}\sum_{t=1}^{T-1}\E(a_{it,j}a_{i't,j}\Delta\varepsilon_{it}\Delta\varepsilon_{i't})=\bigO(1),$$
$$\sum_{1\leq i\neq i'\leq N}\sum_{1\leq t\neq t'\leq T-1}\E(a_{it,j}a_{i't',j}\Delta\varepsilon_{it}\Delta\varepsilon_{i't'})\lesssim (\log T)^2.$$

As a result, we obtain $\E(P_{1,j}^2)\lesssim \max\limits_{1\leq t\leq T-1}s_t^*(\log m_t)^2/(N^2T)$, and therefore $|I_3^{-1}P_1|_\infty\lesssim_\P\max\limits_{1\leq t\leq T-1}\sqrt{s_t^*}\log m_t/(N\sqrt{T})$. In particular, under the condition that $\max\limits_{1\leq t\leq T-1}\sqrt{s_t^*}\log m_t/\sqrt{N}\to0$ as $N,T\to\infty$, we conclude that $|I_3^{-1}P_1|_\infty=\smallO_\P(1/\sqrt{NT})$.

Next, we proceed to bound $|I_3^{-1}(I_1+I_2)I_3^{-1}(P_1+P_2)|_\infty$. Observe that
\begin{eqnarray}\label{expand}
|I_3^{-1}(I_1+I_2)I_3^{-1}(P_1+P_2)|_\infty&\leq &|I_3^{-1}I_1I_3^{-1}P_1|_\infty + |I_3^{-1}I_2I_3^{-1}P_1|_\infty \notag\\
&&+\, |I_3^{-1}I_1I_3^{-1}P_2|_\infty + |I_3^{-1}I_2I_3^{-1}P_2|_\infty.
\end{eqnarray}

We first examine the rate of $|I_3^{-1}I_1|_\infty$. Write $\Delta X_{it}=\mu_{it} + u_{it}$, where $\mu_{it}=\E(\Delta X_{it}\mid \mathcal F_{it})$ and $u_{it}=\Delta X_{it} - \E(\Delta X_{it}\mid \mathcal F_{it})$. For $j=1,\ldots,d$, the $j$-th component of $I_1$ can be written as $I_{1,j}=(NT)^{-1}\sum_{i=1}^{N}\sum_{t=1}^{T-1}
e_j^\top(\widehat{\bm{\Theta}}_{t}-\bm{\Theta}_{t}^{*})V_{it}\Delta X_{it}^\top=(NT)^{-1}\sum_{i=1}^{N}\sum_{t=1}^{T-1}
e_j^\top(\widehat{\bm{\Theta}}_{t}-\bm{\Theta}_{t}^{*})V_{it}\mu_{it}^\top + (NT)^{-1}\sum_{i=1}^{N}\sum_{t=1}^{T-1}
e_j^\top(\widehat{\bm{\Theta}}_{t}-\bm{\Theta}_{t}^{*})V_{it}u_{it}^\top$. Since $\E(u_{it}\mid\mathcal F_{it})=0$, the second term is asymptotically negligible relative to the first. It therefore suffices to bound $\big|(NT)^{-1}\sum_{i=1}^{N}\sum_{t=1}^{T-1}
e_j^\top(\widehat{\bm{\Theta}}_{t}-\bm{\Theta}_{t}^{*})V_{it}\mu_{it}^\top\big|_1$.

For each $t=1,\ldots,T-1$, applying the Cauchy-Schwarz inequality yields 
$$\bigg|\frac{1}{N}\sum_{i=1}^Ne_j^\top(\widehat{\bm{\Theta}}_{t}-\bm{\Theta}_{t}^{*})V_{it}\mu_{it}^\top\bigg|_1\leq \sqrt{d}\bigg\{\frac{1}{N}\sum_{i=1}^N(e_j^\top(\widehat{\bm{\Theta}}_{t}-\bm{\Theta}_{t}^{*})V_{it})^2\bigg\}^{1/2}\bigg(\frac{1}{N}\sum_{i=1}^N|\mu_{it}|_2^2\bigg)^{1/2}.$$
Utilizing the prediction performance bound, it follows that 
\begin{align*}
|I_{1,j}|_1\lesssim_\P\max_{1\leq t\leq T-1}\sqrt{s_t^*}\log m_t/\sqrt{N},
\end{align*}
where we use the facts that $\frac{1}{T}\sum_{t=1}^{T-1}\big(\frac{1}{N}\sum_{i=1}^N|\mu_{it}|_2^2\big)^{1/2}=\bigO_\P(1)$ and that the dimension $d$ is fixed. As a result, we bound $|I_3^{-1}I_1I_3^{-1}P_1|_\infty$ as
$$|I_3^{-1}I_1I_3^{-1}P_1|_\infty\lesssim_\P\max_{1\leq t\leq T-1}s_t^*(\log m_t)^2/\sqrt{N^3T}.$$

Regarding the remaining terms on the right-hand side of \eqref{expand}, note that $|I_3^{-1}I_2|_{\infty}=\bigO_\P(1/\sqrt{NT})$ and $|I_3^{-1}P_2|_\infty=\bigO_\P(1/\sqrt{NT})$, which imply that $|I_3^{-1}I_2I_3^{-1}P_2|_\infty=\bigO_\P(1/(NT))$. Combining the bounds for $|I_3^{-1}P_1|_\infty$ and $|I_3^{-1}I_1|_\infty$ established above, we further deduce that 
$$|I_3^{-1}I_2I_3^{-1}P_1|_\infty\lesssim_\P\max\limits_{1\leq t\leq T-1}\sqrt{s_t^*}\log m_t/(\sqrt{N^3}T),\quad |I_3^{-1}I_1I_3^{-1}P_2|_\infty\lesssim_\P\max\limits_{1\leq t\leq T-1}\sqrt{s_t^*}\log m_t/(N\sqrt{T}).$$ 

Under the condition that $\max\limits_{1\leq t\leq T-1}\sqrt{s_t^*}\log m_t/\sqrt{N}\to0$ as $N,T\to\infty$, we conclude that $\sqrt{NT}|I_3^{-1}(I_1+I_2)I_3^{-1}(P_1+P_2)|_\infty=\smallO_\P(1)$. This completes the proof for the AB-LASSO estimator.

\underline{AB-LASSO-SS}: Analogous to \eqref{expansion}, we derive expansions for $\widehat{\theta}_{A,B}-\theta^0$ and $\widehat{\theta}_{B,A}-\theta^0$, respectively. With sample splitting, a key difference arises in analyzing $(NT)^{-1/2}\sum_{i\in\mathbb I_s}\sum_{t=1}^{T-1}(\widehat{\bm\Theta}_t-\bm\Theta_t^*)V_{it} \Delta\varepsilon_{it}$, where $s\in\{A,B\}$. Since $\widehat{\bm\Theta}_t$ is obtained from a sub-sample independent of the one used in the summation, we directly obtain  
$$(NT)^{-1/2}\bigg|\sum_{i\in\mathbb I_s}\sum_{t=1}^{T-1}(\widehat{\bm\Theta}_t-\bm\Theta_t^*)V_{it} \Delta\varepsilon_{it}\bigg|_\infty\lesssim_\P\max_{1\leq t\leq T-1}\sqrt{s_t^*}\log m_t/\sqrt{N}.$$
%Thus, the required condition reduces to $\max\limits_{2\leq t\leq T}\sqrt{s_t^*}\log m_t/\sqrt{N}\to0$, as $N,T\to\infty$. 
Consequently, the required condition remains $\max\limits_{1\leq t\leq T-1}\sqrt{s_t^*}\log m_t/\sqrt{N}\to0$, as $N,T\to\infty$. The remainder of the proof follows that of AB-LASSO estimator in a similar manner.
\end{proof}

\subsubsection{Consistency of the Variance Estimator}\label{app:var}
\begin{proof}[Proof of Lemma \ref{lem:Omega-consistency}]
Define the following sequences of matrices: 
\begin{align*}
    Q_{N,T}=\frac{1}{NT}\sum_{i=1}^N\sum_{t=2}^T \bm\Theta_t^0E(V_{it}\Delta X_{it}^\top),\quad \Sigma_{0,t,N}=\frac{1}{N}\sum_{i=1}^N \E(V_{it}V_{it}^\top(\Delta\varepsilon_{it})^2).
\end{align*}
According to the definitions of the limit matrices in Section \ref{secondstep}, we have $Q_{N,T}\to Q$ as $N,T\to\infty$ and $\Sigma_{0,t,N}\to\Sigma_{0,t}$ as $N\to\infty$. Let 
$$\bm\Sigma_{N,T}=\frac{1}{T}\sum_{t=1}^{T-1}\bm\Theta_t^0\Sigma_{0,t,N}\bm\Theta_t^{0\top},$$
then $\bm\Sigma_{N,T}\to\bm\Sigma$ as $N,T\to\infty$. By Assumption \ref{identification}, $Q_{N,T}$ is nonsingular for sufficiently large $N$ and $T$. As a result, $|\Omega_{N,T}-\Omega|_{\max}=\smallO(1)$, where $\Omega_{N,T}=Q_{N,T}^{-1}\Sigma_{N,T}(Q_{N,T}^{-1})^\top$. Therefore, it suffices to show that $|\widehat\Omega - \Omega_{N,T}|_{\max}=\smallO_\P(1)$.

Recall the sandwich form of variance estimator $\widehat\Omega = \widehat Q^{-1}\widehat{\bm\Sigma}(\widehat Q^{-1})^\top$. It follows that 
$$\widehat\Omega-\Omega_{N,T}
=\underbrace{\widehat Q^{-1}(\widehat{\bm\Sigma}-\bm\Sigma_{N,T})(\widehat Q^{-1})^\top}_{=:\bm\Delta_1} + \underbrace{\widehat Q^{-1}\bm{\Sigma}_{N,T}(\widehat Q^{-1})^\top - Q_{N,T}^{-1}\bm{\Sigma}_{N,T}(Q_{N,T}^{-1})^\top}_{=:\bm\Delta_2}.$$
For the first term, we have $|\bm\Delta_1|_{\max}\leq|\widehat Q^{-1}|_\infty|\widehat{\bm\Sigma}-\bm\Sigma_{N,T}|_{\max}|(\widehat Q^{-1})^\top|_1$. Concerning $\bm\Delta_2$, observe that $\widehat{Q}^{-1} = Q^{-1}_{N,T} - {Q}_{N,T}^{-1} (\widehat Q - Q_{N,T})\widehat{Q}^{-1}$, which gives 
$$\widehat Q^{-1}\bm{\Sigma}_{N,T}(\widehat Q^{-1})^\top= \{Q^{-1}_{N,T} - {Q}_{N,T}^{-1} (\widehat Q - Q_{N,T}) \widehat{Q}^{-1}\}\bm\Sigma_{N,T}\{Q^{-1}_{N,T} - {Q}_{N,T}^{-1} (\widehat Q - Q_{N,T}) \widehat{Q}^{-1}\}^\top.$$
Note that the leading term in $\widehat Q^{-1}\bm{\Sigma}_{N,T}(\widehat Q^{-1})^\top$ is expressed by
\begin{eqnarray*}
&&\{Q^{-1}_{N,T} - {Q}_{N,T}^{-1} (\widehat Q - Q_{N,T}) {Q}_{N,T}^{-1}\}\bm\Sigma_{N,T}\{Q^{-1}_{N,T} - {Q}_{N,T}^{-1} (\widehat Q - Q_{N,T}) {Q}_{N,T}^{-1}\}^\top\\
&=& Q^{-1}_{N,T} \bm{\Sigma}_{N,T} (Q^{-1}_{N,T})^\top + Q^{-1}_{N,T} (\widehat Q - Q_{N,T})Q^{-1}_{N,T} \bm{\Sigma}_{N,T}\{Q^{-1}_{N,T} (\widehat Q - Q_{N,T}) Q^{-1}_{N,T}\}^\top\\
&& -\, Q^{-1}_{N,T} (\widehat Q - Q_{N,T}) Q^{-1}_{N,T} \bm{\Sigma}_{N,T} (Q^{-1}_{N,T})^\top - Q^{-1}_{N,T} \bm{\Sigma}_{N,T} \{Q^{-1}_{N,T} (\widehat Q - Q_{N,T}) Q^{-1}_{N,T}\}^\top.
\end{eqnarray*}
As a result, we obtain that 
\begin{align*}
|\bm\Delta_2|_{\max} &\leq \big|Q^{-1}_{N,T} (\widehat Q - Q_{N,T}) Q^{-1}_{N,T} \bm{\Sigma}_{N,T} (Q^{-1}_{N,T})^\top\big|_{\max} + \big|Q^{-1}_{N,T} \bm{\Sigma}_{N,T} \{Q^{-1}_{N,T} (\widehat Q - Q_{N,T}) Q^{-1}_{N,T}\}^\top\big|_{\max} \\
& \quad + \bigO_\P\big(|\widehat Q - Q_{N,T}|^2_{\max}\big),    
\end{align*}
where 
\begin{align*}
\big|Q^{-1}_{N,T} (\widehat Q - Q_{N,T}) Q^{-1}_{N,T} \bm{\Sigma}_{N,T} (Q^{-1}_{N,T})^\top\big|_{\max}&\leq|Q^{-1}_{N,T}|_\infty|\widehat Q - Q_{N,T}|_{\max}|Q^{-1}_{N,T}|_1|\bm{\Sigma}_{N,T}|_1|(Q^{-1}_{N,T})^\top|_1,\\
\big|Q^{-1}_{N,T} \bm{\Sigma}_{N,T} \{Q^{-1}_{N,T} (\widehat Q - Q_{N,T}) Q^{-1}_{N,T}\}^\top\big|_{\max}&\leq |Q^{-1}_{N,T}|_\infty|\bm{\Sigma}_{N,T}|_\infty|(Q^{-1}_{N,T})^\top|_\infty|\widehat Q - Q_{N,T}|_{\max}|(Q^{-1}_{N,T})^\top|_1
\end{align*}

Then, we proceed to show that $|\widehat Q - Q_{N,T}|_{\max}=\smallO_\P(1)$ and $|\widehat{\bm\Sigma}-\bm\Sigma_{N,T}|_{\max}=\smallO_\P(1)$. Concerning $\widehat Q$, observe that
\begin{eqnarray*}
&&\bigg|(NT)^{-1}\sum_{i=1}^N\sum_{t=1}^{T-1}\{\widehat{\bm\Theta}_tV_{it}\Delta X_{it}^\top - \bm\Theta_t^0\E(V_{it}\Delta X_{it}^\top)\}\bigg|_{\max}\\
&\leq&\bigg|(NT)^{-1}\sum_{i=1}^N\sum_{t=1}^{T-1}(\widehat{\bm\Theta}_t-\bm\Theta^0_t)V_{it}\Delta X_{it}^\top\bigg|_{\max} + \bigg|(NT)^{-1}\sum_{i=1}^N\sum_{t=1}^{T-1}\bm\Theta^0_t\{V_{it}\Delta X_{it}^\top-\E(V_{it}\Delta X_{it}^\top)\}\bigg|_{\max}.
\end{eqnarray*}
As shown in the proof of Theorem \ref{main}, the first term is bounded by $\max\limits_{1\leq t\leq T-1}\sqrt{s_t^*}\log m_t/\sqrt{N}$, with probability tending to 1, and the second term is of order $\bigO_\P(1/\sqrt{NT})$. Thus, $\widehat Q$ is consistent in the sense that $|\widehat Q - Q|_{\max}=\smallO_\P(1)$. Since the matrix $Q$ has fixed dimension $d\times d$, it also follows that $|\widehat Q^{-1}- Q^{-1}|_\infty=\smallO_\P(1)$. 

Recall that $\widehat{\bm\Sigma}= (T-1)^{-1} \sum_{t=1}^{T-1}\widehat{\bm\Theta}_t \widehat \Sigma_{0,t} \widehat {\bm\Theta}_t^{\top}$, where $\widehat \Sigma_{0,t} = N^{-1}\sum_{i=1}^N V_{it}V_{it}^\top(\Delta\widehat\varepsilon_{it})^2$, with $\Delta\widehat{\varepsilon}_{it}=\Delta Y_{it}-\Delta X_{it}^\top\widehat\theta$. Accordingly, we define $\Sigma^\ast_{0,t} = N^{-1}\sum_{i=1}^N V_{it}V_{it}^\top(\Delta\varepsilon_{it})^2$. For each $t=1,\ldots,T-1$, observe that
\begin{align*}
\Sigma_{0,t}^\ast-\widehat\Sigma_{0,t} &= N^{-1}\sum_{i=1}^NV_{it}V_{it}^\top\{(\Delta\varepsilon_{it})^2 - (\Delta\widehat\varepsilon_{it})^2\}\\
&= N^{-1}\sum_{i=1}^NV_{it}V_{it}^\top(\Delta\varepsilon_{it} - \Delta\widehat\varepsilon_{it})(\Delta\varepsilon_{it} + \Delta\widehat\varepsilon_{it})\\
&=N^{-1}\sum_{i=1}^NV_{it}V_{it}^\top\{-\Delta X_{it}^\top(\theta^0-\widehat\theta)\}\{2\Delta\varepsilon_{it}+\Delta X_{it}^\top(\theta^0-\widehat\theta)\}
\end{align*}
Utilizing the fact that $|\Delta X_{it}^\top(\widehat\theta-\theta^0)|\leq|\Delta X_{it}|_\infty|\widehat\theta - \theta^0|_1$, we get
\begin{align*}
\max_{1\leq t\leq T-1}|\Sigma_{0,t}^\ast-\widehat\Sigma_{0,t}|_{\max}&\leq \max_{1\leq t\leq T-1}2\Big|N^{-1}\sum_{i=1}^NV_{it}V_{it}^\top\Delta X_{it}^\top\Delta\varepsilon_{it}\Big|_{\max}|\widehat\theta - \theta^0|_1 \\
&\quad + \max_{1\leq t\leq T-1}\Big|N^{-1}\sum_{i=1}^NV_{it}V_{it}^\top\Big|_{\max}|\Delta X_{it}|_\infty^2|\widehat\theta - \theta^0|_1^2\\
&\leq 2C|\widehat\theta - \theta^0|_1 + C|\widehat\theta - \theta^0|_1^2,
\end{align*}
for some constant $C>0$. Similarly, there exists $C'>0$ such that 
\begin{align*}
\max_{1\leq t\leq T-1}|\bm\Theta_t^0(\Sigma_{0,t}^\ast-\widehat\Sigma_{0,t})\bm\Theta_t^{0\top}|_{\max} %&\leq \max_{1\leq t\leq T-1}2|N^{-1}\sum_{i=1}^N\bm\Theta_t^0V_{it}V_{it}^\top\bm\Theta_t^{0\top}\Delta X_{it}^\top\Delta\varepsilon_{it}|_{\max}|\widehat\theta - \theta^0|_1\\
%&\quad + \max_{1\leq t\leq T-1}|N^{-1}\sum_{i=1}^N\bm\Theta_t^0V_{it}V_{it}^\top\bm\Theta_t^{0\top}|_{\max}|\Delta X_{it}|_\infty^2|\widehat\theta - \theta^0|_1^2\\
&\leq 2C'|\widehat\theta - \theta^0|_1 + C'|\widehat\theta - \theta^0|_1^2.
\end{align*}
Combining with the consistency of the final estimator $\widehat\theta$ (shown in Theorem \ref{main}) and the consistency of $\widehat{\bm\Theta}_t$ (shown in Theorem \ref{lassobound}), we have the following expansion
\begin{eqnarray*}
&&\widehat{\bm\Theta}_t \widehat \Sigma_{0,t} \widehat{\bm\Theta}_t^{\top} - \bm\Theta^0_t \Sigma_{0,t}^\ast \bm\Theta_t^{0\top} \\
&=& (\widehat{\bm\Theta}_t - \bm\Theta_t^0)\Sigma^\ast_{0,t}\bm\Theta_t^{0\top} + \widehat{\bm\Theta}_t\Sigma^\ast_{0,t}(\widehat{\bm\Theta}_t - \bm\Theta_t^0)^\top + \widehat{\bm\Theta}_t(\widehat\Sigma_{0,t} - \Sigma^\ast_{0,t})\widehat{\bm\Theta}_t^{\top}\\
&=& (\widehat{\bm\Theta}_t - \bm\Theta_t^0)\Sigma^\ast_{0,t}\bm\Theta_t^{0\top} + \bm\Theta_t^{0}\Sigma^\ast_{0,t}(\widehat{\bm\Theta}_t - \bm\Theta_t^0)^\top + \bm\Theta_t^{0}(\widehat\Sigma_{0,t} - \Sigma^\ast_{0,t})\bm\Theta_t^{0\top} + \smallO_\P(1),
\end{eqnarray*}
which implies that 
\begin{align*}
|\widehat{\bm\Theta}_t \widehat \Sigma_{0,t} \widehat{\bm\Theta}_t^{\top} - \bm\Theta^0_t \Sigma_{0,t}^\ast \bm\Theta_t^{0\top}|_{\max}&\leq 2|\Sigma_{0,t}^\ast\bm\Theta_t^0|_{\max}|\widehat{\bm\Theta}_t - \bm\Theta_t^0|_1 + |\bm\Theta_t^0(\Sigma_{0,t}^\ast-\widehat\Sigma_{0,t})\bm\Theta_t^{0\top}|_{\max}+\smallO_\P(1)\\
&=\smallO_\P(1).    
\end{align*}
Lastly, taking average over $t$ and applying law of large numbers yield that $|\widehat{\bm\Sigma}-\bm\Sigma_{N,T}|_{\max}=\smallO_\P(1)$. 

Hence, combining these results, we conclude that $|\bm\Delta_1|_{\max}=\smallO_\P(1)$ and $|\bm\Delta_2|_{\max}=\smallO_\P(1)$, which completes the proof.

\end{proof}

\subsubsection{Asymptotic Normality for the General Model Estimator}\label{app:hdm}
\begin{proof}[Proof of Theorem \ref{maindiverged}]
Observe that the estimator $\widehat\theta_1$ obtained by \eqref{diverged1}, regardless of the demeaning transformation due to the presence of time effects, can be expressed as
\begin{align*}\widehat\theta_1-\theta_1^0&=\bigg(\frac{1}{NT}\sum_{i=1}^N\sum_{t=1}^{T-1}\widehat{\mathcal W}_{t}^{\top}\widehat{U}_{it}{\Delta X}_{1,it}^{\top}\bigg)^{-1}\bigg(\frac{1}{NT}\sum_{i=1}^N\sum_{t=1}^{T-1}\widehat{\mathcal W}_{t}^{\top}\widehat{U}_{it}\Delta X_{2,it}^\top\theta_2^0\bigg)\\
&\quad+\bigg(\frac{1}{NT}\sum_{i=1}^N\sum_{t=1}^{T-1}\widehat{\mathcal W}_{t}^{\top}\widehat{U}_{it}{\Delta X}_{1,it}^{\top}\bigg)^{-1}\bigg(\frac{1}{NT}\sum_{i=1}^N\sum_{t=1}^{T-1}\widehat{\mathcal W}_{t}^{\top}\widehat{U}_{it}\Delta \varepsilon_{it}\bigg)\\
&=:L_1+L_2.
\end{align*}
We will demonstrate that $|L_1|_\infty$ is negligible asymptotically, and that $L_2$ approaches\\ $(NT)^{-1}\sum_{i=1}^N\sum_{t=1}^{T-1}\mathcal W_{t}^\top U_{it}\Delta\varepsilon_{it}$ closely enough. This will lead to the conclusion upon application of the central limit theorem.

\underline{Step 1:} We first show that the Dantzig estimator $\widehat{\mathcal W}_t$ is close to the desired weighting matrix $\mathcal W_t$ with respect to various norms.
According to the definitions of $\mathcal W_t$ and $\widehat{\mathcal W}_t$ provided in Section \ref{debias}, we derive the following results:
\begin{align*}
|\widehat{\mathcal W}_t - \mathcal W_t|_{\max}&=|M_t^{-1}|_\infty|\widehat M_t\widehat{\mathcal W}_t + (M_t-\widehat M_t)\widehat{\mathcal W}_t - M_t\mathcal W_t|_{\max}\\
&\leq |M_t^{-1}|_\infty\{|\widehat M_t\widehat{\mathcal W}_t - \mathbf I_{d\times d_1}|_{\max} + |(M_t-\widehat M_t)\widehat{\mathcal W}_t|_{\max}\}\\
&\leq|M_t^{-1}|_\infty(\ell_t + |M_t-\widehat M_t|_{\max}|\widehat{\mathcal W}_t|_{1,1})\\
&\leq|M_t^{-1}|_\infty(\ell_t + |M_t-\widehat M_t|_{\max}|{\mathcal W}_t|_{1,1}).
\end{align*}
For each $t=1,\ldots,T-1$, we have $|M_{t}-\widehat{M}_{t}|_{\max}\lesssim_\P\rho_{N,t}$, for some $\rho_{N,t}\to0$ as $N\to\infty$. This result follows from the consistency of the LASSO estimators in step 1. 
By parts (i) and (iii) of Assumption \ref{assum}, we obtain that
\beq\label{max}
\max_{1\leq t\leq T-1}|\widehat{\mathcal W}_t - \mathcal W_t|_{\max}\lesssim_\P c_n\max_{1\leq t\leq T-1}(\ell_t + \rho_{N,t}c_n)\lesssim c_n^2\sqrt{v_n/N}.
\eeq
The last inequality is implied by bounding $|M_{t}-\widehat{M}_{t}|_{\max}$ using the concentration inequality in Lemma \ref{emp}, under Assumptions \ref{a1}--\ref{a2}.

To analyze $|\widehat{\mathcal W}_t - \mathcal W_t|_{1,1}$, we consider a truncation argument with $\tau_n=\sqrt{v_n/N}$. Specifically, we have
\begin{align*}
|\widehat{\mathcal W}_t - \mathcal W_t|_{1,1}&=|\widehat{\mathcal W}_t - \mathcal W_t|_{1,1}\mathbf 1(|\mathcal W_t|_{\max}\leq\tau_n) + |\widehat{\mathcal W}_t - \mathcal W_t|_{1,1}\mathbf 1(|\mathcal W_t|_{\max}>\tau_n)\\
&\leq2|\mathcal W_t|_{1,1}\mathbf 1(|\mathcal W_t|_{\max}\leq\tau_n) + \sum_{i=1}^d\sum_{j=1}^{d_1}|\widehat{\mathcal W}_{t,ij}-\mathcal W_{t,ij}|\mathbf 1(|\mathcal W_t|_{\max}>\tau_n).
\end{align*}
Utilizing the bound in \eqref{max} and Assumption \ref{assum}, it follows that
\begin{align*}
\max_{1\leq t\leq T-1}|\widehat{\mathcal W}_t - \mathcal W_t|_{1,1}&\lesssim_\P2\max_{1\leq t\leq T-1}|\mathcal W_{t,ij}|^r\mathbf 1(|\mathcal W_t|_{\max}\leq\tau_n)/|\mathcal W_{t,ij}|^{r-1} \\
&\quad\,+ c_n^2\sqrt{v_n/N}\sum_{i=1}^d\sum_{j=1}^{d_1}|\mathcal W_{t,ij}/\tau_n|^r\mathbf 1(|\mathcal W_t|_{\max}>\tau_n)\\
&\leq2\tau_n^{1-r}w_n + \tau_n^{-r}w_nc_n^2\sqrt{v_n/N}\lesssim c_n^2w_n(v_n/N)^{\frac{1-r}{2}},
\end{align*}
where the parameter $r\in[0,1)$ is chosen to ensure that Assumption \ref{assum}(i) holds.

\underline{Step 2:} Next, we analyze the rate of $|L_1|_\infty$. Note that the dimension of $X_{1,it}$, i.e., $d_1$, is fixed. The max norm, spectral norm (denoted by $|\cdot|_2$), and infinity norm of a $d_1\times d_1$ matrix are equivalent up to a constant factor that depends on $d_1$. Similarly, the $\ell_1$-norm and $\ell_\infty$-norm of a $d_1\times 1$ vector exhibit the same relationship. Recall the constraints in the Dantzig estimator in \eqref{penalize}, we find that
\begin{align*}
\bigg|N^{-1}\sum_{i=1}^N\widehat{\mathcal W}_{t}^{\top}\widehat{U}_{it}{\Delta X}_{1,it}^{\top} - \mathbf I_{d_1\times d_1}\bigg|_2\lesssim\ell_t,\quad\bigg|N^{-1}\sum_{i=1}^N\widehat{\mathcal W}_{t}^{\top}\widehat{U}_{it}{\Delta X}_{2,it}^{\top}\bigg|_{\max}\leq\ell_t.
\end{align*}

Denote the smallest and largest singular values of a matrix by $\sigma_{\min}(\cdot)$ and $\sigma_{\max}(\cdot)$, respectively. Applying Weyl's inequality for singular values, $\sigma_{\min}(C+D)\geq\sigma_{min}(C) - \sigma_{\max}(D)$, we can bound $|L_1|_\infty$ as follows:
\begin{align*}
|L_1|_\infty&\leq\bigg|\bigg(\frac{1}{NT}\sum_{i=1}^N\sum_{t=1}^{T-1}\widehat{\mathcal W}_{t}^{\top}\widehat{U}_{it}{\Delta X}_{1,it}^{\top}\bigg)^{-1}\bigg|_\infty\bigg|\frac{1}{NT}\sum_{i=1}^N\sum_{t=1}^{T-1}\widehat{\mathcal W}_{t}^{\top}\widehat{U}_{it}\Delta X_{2,it}^\top\theta_2^0\bigg|_\infty\\
&\lesssim\bigg|\bigg(\frac{1}{NT}\sum_{i=1}^N\sum_{t=1}^{T-1}\widehat{\mathcal W}_{t}^{\top}\widehat{U}_{it}{\Delta X}_{1,it}^{\top} - \mathbf I_{d_1\times d_1} + \mathbf I_{d_1\times d_1}\bigg)^{-1}\bigg|_2\bigg|\frac{1}{NT}\sum_{i=1}^N\sum_{t=1}^{T-1}\widehat{\mathcal W}_{t}^{\top}\widehat{U}_{it}{\Delta X}_{2,it}^{\top}\bigg|_{\max}|\theta_2^0|_1\\
&\leq \bigg\{\sigma_{\min}(\mathbf I_{d_1\times d_1}) - \frac{1}{T}\sum_{t=1}^{T-1}\sigma_{\max}\bigg(\frac{1}{N}\sum_{i=1}^N\widehat{\mathcal W}_{t}^{\top}\widehat{U}_{it}{\Delta X}_{1,it}^{\top} - \mathbf I_{d_1\times d_1}\bigg)\bigg\}^{-1}\max_{1\leq t\leq T-1}\ell_t\vartheta_n\\
&\lesssim\max_{1\leq t\leq T-1}\ell_t\vartheta_n=\smallO(1/\sqrt{NT}) \quad\text{ (by Assumption \ref{assum}(iii))}.
\end{align*}

\underline{Step 3:} Lastly, we establish the asymptotic normality of the leading term $L_2$. To achieve this, we verify that $L_2$ is sufficiently close to $(NT)^{-1}\sum_{i=1}^N\sum_{t=1}^{T-1}\mathcal W_{t}^\top U_{it}\Delta\varepsilon_{it}$. Based on the findings in Step 2, we observe that
\begin{eqnarray*}
&&\bigg|\frac{1}{NT}\sum_{i=1}^N\sum_{t=1}^{T-1}\mathcal W_{t}^\top U_{it}\Delta\varepsilon_{it} - L_2\bigg|_\infty\\
&\leq&\bigg|\bigg(\frac{1}{NT}\sum_{i=1}^N\sum_{t=1}^{T-1}\widehat{\mathcal W}_{t}^{\top}\widehat{U}_{it}{\Delta X}_{1,it}^{\top}\bigg)^{-1}\bigg|_{\max}\bigg|\frac{1}{NT}\sum_{i=1}^N\sum_{t=1}^{T-1}(\widehat{\mathcal W}_{t}^{\top}\widehat{U}_{it}\Delta\varepsilon_{it} - \mathcal W_{t}^\top U_{it}\Delta\varepsilon_{it})\bigg|_1\\
&&+ \bigg|\bigg(\frac{1}{NT}\sum_{i=1}^N\sum_{t=1}^{T-1}\widehat{\mathcal W}_{t}^{\top}\widehat{U}_{it}{\Delta X}_{1,it}^{\top}\bigg)^{-1}-\mathbf I_{d_1\times d_1}\bigg|_{\max}\bigg|\frac{1}{NT}\sum_{i=1}^N\sum_{t=1}^{T-1}\mathcal W_{t}^\top U_{it}\Delta\varepsilon_{it}\bigg|_1 \\
&\lesssim&\bigg|\frac{1}{NT}\sum_{i=1}^N\sum_{t=1}^{T-1}(\widehat{\mathcal W}_{t}^{\top}\widehat{U}_{it}\Delta\varepsilon_{it} - \mathcal W_{t}^\top U_{it}\Delta\varepsilon_{it})\bigg|_\infty + \smallO_\P(1/\sqrt{NT})\\
&\leq&\bigg|\frac{1}{NT}\sum_{i=1}^N\sum_{t=1}^{T-1}(\widehat{\mathcal W}_{t} - \mathcal W_{t})^\top \widehat U_{it}\Delta\varepsilon_{it})\bigg|_\infty + \bigg|\frac{1}{NT}\sum_{i=1}^N\sum_{t=1}^{T-1}\mathcal W_{t}^\top (\widehat U_{it} - U_{it})\Delta\varepsilon_{it}\bigg|_\infty + \smallO_\P(1/\sqrt{NT}).
\end{eqnarray*}

Using the results obtained in Step 1, we find that
$$\bigg|\frac{1}{NT}\sum_{i=1}^N\sum_{t=1}^{T-1}(\widehat{\mathcal W}_{t} - \mathcal W_{t})^\top \widehat U_{it}\Delta\varepsilon_{it})\bigg|_\infty\lesssim_{\P}c_n^2w_n(v_n/N)^{\frac{1-r}{2}}/\sqrt{NT}.$$
Furthermore, following similar steps as in bounding $|I_3^{-1}P_1|_\infty$ in the proof of Theorem \ref{main}, we obtain that
$$\bigg|\frac{1}{NT}\sum_{i=1}^N\sum_{t=1}^{T-1}\mathcal W_{t}^\top (\widehat U_{it} - U_{it})\Delta\varepsilon_{it}\bigg|_\infty\lesssim_{\P}c_n\max_{1\leq t\leq T-1}\sqrt{s_t^*}\log m_t/(N\sqrt{T}).$$
Combining these findings with Assumption \ref{assum}(ii), we conclude that
$$\bigg|\frac{1}{NT}\sum_{i=1}^N\sum_{t=1}^{T-1}\mathcal W_{t}^\top U_{it}\Delta\varepsilon_{it} - L_2\bigg|_\infty=\smallO_\P(1/\sqrt{NT}).$$
The proof is then completed by applying a central limit theorem to $\frac{1}{\sqrt{NT}}\sum\limits_{i=1}^N\sum\limits_{t=1}^{T-1}\mathcal W_{t}^\top U_{it}\Delta\varepsilon_{it}$, following a similar approach as in the proof of Theorem \ref{main}.

It is noteworthy that with the implementation of a sample-splitting procedure, where $\widehat{\mathcal W}_t$ and $\widehat U_{it}$ are estimated from an auxiliary sub-sample, we could get %sharper 
same bounds as follows:
\begin{align*}
\bigg|\frac{1}{NT}\sum_{i\in\mathbb I_s}\sum_{t=1}^{T-1}(\widehat{\mathcal W}_{t} - \mathcal W_{t})^\top \widehat U_{it}\Delta\varepsilon_{it})\bigg|_\infty&\lesssim_{\P}c_n^2w_n(v_n/N)^{\frac{1-r}{2}}/\sqrt{NT},\\
\bigg|\frac{1}{NT}\sum_{i\in\mathbb I_s}\sum_{t=1}^{T-1}\mathcal W_{t}^\top (\widehat U_{it} - U_{it})\Delta\varepsilon_{it}\bigg|_\infty&\lesssim_{\P}c_n\max_{1\leq t\leq T-1}\sqrt{s_t^*}\log m_t/(N\sqrt{T}),
\end{align*}
where $s\in\{A,B\}$. As a result, the required rate condition in Assumption \ref{assum}(ii) remains as %can be improved to
$c_n^2w_n(v_n/N)^{\frac{1-r}{2}} + c_n\max\limits_{1\leq t\leq T-1}\sqrt{s_t^*}\log m_t/\sqrt{N}=\smallO(1)$.
\end{proof}

\subsubsection{Time Effects}\label{add}
In this subsection, we discuss how the inference theory, such as Theorem \ref{main}, adapts in the presence of time effects $\gamma_t$. In particular, we compare the leading term with and without cross-sectional demeaning.
%What is the change of order after demean?

Recall the definition $\Delta\widetilde\varepsilon_{it} = \Delta\varepsilon_{it} - \sum_{j=1}^N \Delta \varepsilon_{jt}/N$, $\Delta\varepsilon_{it} = c_t\{\varepsilon_{it} - \sum_{s=1}^{T-t}\varepsilon_{i,t+s}/(T-t)\}$, and $c_t = \sqrt{(T-t)/(T-t+1)}$. Observe that $\sum_{i=1}^N\Delta\widetilde\varepsilon_{it}=0$ and $\Delta\widetilde\varepsilon_{it}=\Delta\varepsilon_{it}-\Delta\bar\varepsilon_t$, where $\bar\varepsilon_t=\sum_{j=1}^N\varepsilon_{jt}/N$. Let $\mu_t^V\defeq\E(V_{it})$. It follows that 
\begin{align*}
\frac{1}{NT}\sum_{i=1}^N\sum_{t=1}^{T-1}\bm\Theta_t^0V_{it}\Delta\widetilde{\varepsilon}_{it}&=\frac{1}{NT}\sum_{i=1}^N\sum_{t=1}^{T-1}\bm\Theta_t^0V_{it}\Delta\varepsilon_{it} - \frac{1}{NT}\sum_{i=1}^N\sum_{t=1}^{T-1}\bm\Theta_t^0\mu_t^V\Delta\varepsilon_{it} \\
&\quad - \frac{1}{NT}\sum_{i=1}^N\sum_{t=1}^{T-1}\bm\Theta_t^0(V_{it}-\mu_t^V)\Delta\bar\varepsilon_{t}\\
&=:P_1+R_1+R_2.
\end{align*}
In particular, %we find the terms on the right-hand side have the following orders:
$$P_1+R_1=\frac{1}{NT}\sum_{t=1}^{T-1}\bm\Theta_t^0\sum_{i=1}^N(V_{it}-\mu_t^V)\Delta\varepsilon_{it}=\bigO_\P(1/\sqrt{NT}),$$
$$R_2=-\frac{1}{T}\sum_{t=1}^{T-1}\bm\Theta_t^0\frac{1}{N}\sum_{i=1}^N(V_{it}-\mu_t^V)\Delta\bar\varepsilon_t=\bigO_\P(1/(N\sqrt{T})),$$ 
as $N^{-1}\sum_{i=1}^N(V_{it}-\mu_t^V)=\bigO_\P(1/\sqrt{N})$ and $\Delta\bar\varepsilon_t=\bigO_\P(1/\sqrt{N})$. Therefore, cross-sectional demeaning introduces only an additional higher-order component. The leading term in the expansion \eqref{expansion} has the same stochastic order with or without cross-sectional demeaning. The analysis of the remaining terms proceeds analogously.

\subsubsection{Efficiency}\label{app_ebound}
In this subsection, we discuss the connection between our results and efficiency in dynamic panel models with fixed effects. Specifically, under a univariate panel AR(1) model with certain conditions, we demonstrate that our estimator is efficient by invoking the results in \citet{hahn2002asymptotically}.
\begin{Proposition}[Attainability of the Efficiency Bound]\label{ebound}
Consider the univariate panel AR(1) model:
$$Y_{it} = \alpha_i + \theta^0 Y_{i,t-1} + \varepsilon_{it},\quad i=1,\ldots,N,t=1,\ldots,T.$$
Suppose that the conditions of Theorem 3 in \citet{hahn2002asymptotically} are satisfied: (i) $\varepsilon_{it}\stackrel{\operatorname{i.i.d.}}{\sim}\operatorname{N}(0,\sigma^2_\varepsilon)$; (ii) $0<\lim\limits_{N,T\to\infty}\frac{N}{T}<\infty$; (iii) $\lim\limits_{N\to\infty}(\theta^0)^N=0$; and (iv) $\frac{1}{N}\sum_{i=1}^N|Y_{i0}|^2=\bigO(1)$ and $\frac{1}{N}\sum_{i=1}^N|\alpha_{i}|^2=\bigO(1)$.\footnote{We can allow $\alpha_i$ to be random, provided that $\E(\varepsilon_{it}|\alpha_i)=0$. The proof then applies by conditioning on the realization of these effects.} Then the asymptotic variance of the proposed estimator $\widehat\theta$, as given in \eqref{var}, simplifies to $1-(\theta^0)^2$ and therefore the efficiency bound is attained. 
\end{Proposition}

\begin{proof}
By recursive substitution, the model can be rewritten as $Y_{it} = \mu_i + u_{it}$, where $\mu_i=\alpha_i/(1-\theta^0)$, $u_{it}=\sum_{\ell\geq0}(\theta^0)^\ell\varepsilon_{i,t-\ell}$. %, and $\E(u_{it}|\alpha_i)=0$. 
Observe that the FOD transformation yields
$$\Delta Y_{i,t-1}=a_{t}Y_{i,t-1} - b_{t}\sum_{s=t}^TY_{is}=a_{t}u_{i,t-1} - b_{t}\sum_{s=t}^Tu_{is},$$
where $a_{t}=\sqrt{(T-t+1)/(T-t+2)}$ and $b_{t}=a_{t}/(T-t+1)$. 

Let $V_{it}=(1,Y_{i,t-1},\ldots,Y_{i0})^\top$ and $U_{it}=(1,u_{i,t-1},\ldots,u_{i0})^\top$. It follows that $\E(V_{it}\Delta Y_{i,t-1})=\E(U_{it}\Delta Y_{i,t-1})$ and $\E(V_{it}V_{it}^\top)=\E(U_{it}U_{it}^\top) + \mu_i^2\mathbf 1\mathbf 1^\top$, where $\mathbf{1}$ denotes a vector of ones with the same dimension as $V_{it}$. The fixed effects therefore contribute only a rank-one component in the direction of the constant regressor.
In the linear projection of $\Delta Y_{i,t-1}$ onto $V_{it}$, this rank-one term affects only the intercept and cancels from the slope coefficients. Consequently, 
$$Q_t\defeq\E(V_{it}^\top\Delta Y_{i,t-1})\E(V_{it}V_{it}^\top)^{-1}\E(V_{it}\Delta Y_{i,t-1})=\E(V_{it}^\top\Delta Y_{i,t-1})\E(U_{it}U_{it}^\top)^{-1}\E(V_{it}\Delta Y_{i,t-1}),$$
where $\E(U_{it}U_{it}^\top)$ is a Toeplitz matrix whose inverse is tridiagonal. As a result, we obtain 
$$Q_t=\frac{\sigma^2_\varepsilon(a_{t} - \theta^0b_{t}S_t)^2}{1-(\theta^0)^2},$$
where $S_t=\frac{1-(\theta^0)^{T-t+1}}{1-\theta^0}$. 
Combining this with the identity $\E(V_{it}V_{it}^\top(\Delta\varepsilon_{it})^2)=\sigma_\varepsilon^2\E(V_{it}V_{it}^\top)$, which is ensured by the homoskedasticity of the errors, the asymptotic variance of the proposed estimator, as given in \eqref{var}, can be expressed as $\Omega = \lim\limits_{T\to\infty}\frac{1}{T}\sum_{t=1}^{T-1}Q_t^{-1}\sigma^2_\varepsilon$. 

As $T\to\infty$, we have
$$\max_{1\leq t\leq T-1}|(a_{t} - \theta^0b_{t}S_t) - 1|\leq \max_{1\leq t\leq T-1}|a_{t}-1| + |\theta^0|\max_{1\leq t\leq T-1}|b_{t}S_t| \to 0,$$
which implies that the asymptotic variance simplifies to $\Omega = 1-(\theta^0)^2$. Finally, by invoking Theorem 3 of \citet{hahn2002asymptotically}, we conclude that the proposed estimator is efficient in this context. 
\end{proof}

\subsection{Main Results for Estimator using First Differences}\label{app:FD}
In this subsection, we consider taking first differences (FD) over time to remove individual. Specifically, the transformed model is given by 
$$\Delta Y_{it} =  \Delta X_{it}^\top\theta^0 + \Delta\varepsilon_{it}, \quad 2\leq t\leq T,$$
where $\Delta Z_{it} = Z_{it} - Z_{i,t-1}$, for $Z_{it} \in \{Y_{it},  X_{it},  \varepsilon_{it}\}$.  The transformed error, $\Delta\varepsilon_{it}$, satisfies the moment conditions
$$
\E (X_{is}\Delta\varepsilon_{it}) = 0, \text{ for all } 1\leq s \leq t-1, \, 2\leq t\leq T.
$$
The estimation procedure remains largely unchanged, except for a minor adjustment to the time indexing. A key difference arises in the inference theory for step 2, particularly in the asymptotic variance of the estimator. 

Recall that the AB-LASSO estimator can be expressed by 
$$\widehat\theta - \theta^0 = \bigg(\sum_{i=1}^N \sum_{t=2}^{T} \widehat{\bm\Theta}_tV_{it}\Delta X_{it}^{\top} \bigg)^{-1} \bigg(\sum_{i=1}^N \sum_{t=2}^{T} \widehat{\bm\Theta}_tV_{it} \Delta\varepsilon_{it}\bigg).$$
Under assumptions analogous to those imposed in the main text, the asymptotic variance of $\widehat\theta$ admits a sandwich form. In particular, the martingale difference property of the errors implies that $\E(\varepsilon_{i,t-1}\varepsilon_{is} \mid V_{it})=0$, for $1\leq s<t-1\leq T$, which in turn yields $\E(\Delta\varepsilon_{it}\Delta\varepsilon_{i,t-\ell} \mid V_{it})=0$, for $\ell>1$. By defining $\Sigma_{0,t}\defeq\lim\limits_{N\to\infty}N^{-1}\sum_{i=1}^N\E(V_{it}V_{it}^\top(\Delta\varepsilon_{it})^2)$ and $\Sigma_{1,t}\defeq\lim\limits_{N\to\infty}N^{-1}\sum_{i=1}^N\E(V_{it}V_{i,t-1}^\top\Delta\varepsilon_{it}\Delta\varepsilon_{i,t-1})$, we can express the asymptotic variance of $\widehat\theta$ in the form of 
\begin{align*}
\Omega = Q^{-1}\bm\Sigma(Q^{-1})^\top, 
\end{align*}
where
$$\bm\Sigma = \lim_{T\to\infty}\Big(\frac{1}{T}\sum_{t=2}^T{\bm\Theta}_t^0\Sigma_{0,t}{\bm\Theta}_t^{0\top} + \frac{1}{T} \sum_{t=3}^T{\bm\Theta}_t^0\Sigma_{1,t}{\bm\Theta}_{t-1}^{0\top} + \frac{1}{T}\sum_{t=3}^T{\bm\Theta}_{t-1}^0\Sigma_{1,t}^\top{\bm\Theta}_{t}^{0\top}\Big).$$
Accordingly, the empirical analog of $\Omega$ is 
\begin{align*}
\widehat\Omega = \widehat Q^{-1}\widehat{\bm\Sigma}(\widehat Q^{-1})^\top, 
\end{align*}
where $\widehat Q = (NT)^{-1}\sum_{i=1}^N\sum_{t=2}^T\widehat{\bm\Theta}_tV_{it}\Delta X_{it}^{\top}$, and 
$$\widehat{\bm\Sigma}= \frac{1}{T} \sum_{t=2}^T\widehat {\bm\Theta}_t \widehat \Sigma_{0,t} \widehat {\bm\Theta}_t^{\top} + \frac{1}{T} \sum_{t=3}^T \widehat {\bm\Theta}_t \widehat \Sigma_{1,t} \widehat {\bm\Theta}_{t-1}^{\top} + \frac{1}{T} \sum_{t=3}^T \widehat {\bm\Theta}_{t-1} \widehat \Sigma_{1,t}^\top \widehat {\bm\Theta}_{t}^{\top},$$
with $\widehat \Sigma_{0,t} = N^{-1}\sum_{i=1}^N V_{it}V_{it}^\top(\Delta\widehat \varepsilon_{it})^2$ and $\widehat \Sigma_{1,t} = N^{-1} \sum_{i=1}^N V_{it}V_{i,t-1}^\top\Delta \widehat \varepsilon_{it}\Delta \widehat \varepsilon_{i,t-1}$.

\begin{Theorem}[Asymptotic Normality of AB-LASSO with FD]\label{main.fd}
	Under Assumptions \ref{a1}--\ref{identification}, where the transformation $\Delta$ denotes FD, suppose that the asymptotic variance $\Omega$ is a positive definite matrix, and that $\max\limits_{2\leq t\leq T} s_t^*\log m_t\sqrt{T/N}\to 0$ as $N,T\to\infty$. Then the AB-LASSO estimator $\widehat{\theta}$ is consistent for $\theta^0$, and
	$$\sqrt{NT}(\widehat{\theta}- \theta^0 )\stackrel{\mathcal{L}}{\to} \operatorname{N}(0, \Omega).$$
\end{Theorem}

\begin{proof}
	As shown at the beginning of the proof of Theorem \ref{main}, we obtain the analogous expansion under the FD transformation $\Delta$:
	$$\widehat\theta - \theta^0=I_3^{-1}P_2 +I_3^{-1}P_1- I_3^{-1}(I_1+I_2)I_3^{-1}(P_1+P_2)+\smallO_\P((NT)^{-1/2}).$$
	%We will prove that $I_3^{-1}P_2$ is the leading term, with $\sqrt{NT}I_3^{-1}P_2$ exhibiting asymptotic Gaussianity and analyze the orders of the other terms.
	
	We refer to a central limit theorem for stationary random field (Theorem 1 of \citet{el2013central}) to establish asymptotic normality for the scaled leading term $\sqrt{NT}I_3^{-1}P_2$. To achieve this, we must verify the necessary conditions outlined below.
	
	Define an index set $\mathcal J_{N,T}\defeq\{(i,t):1\leq i\leq N,2\leq t\leq T\}$. As $N,T\to\infty$, it follows that the cardinality $|\mathcal J_{N,T}|\to\infty$, while the ratio $|\partial\mathcal J_{N,T}|/|\mathcal J_{N,T}|\to0$, where $\partial \mathcal J_{N,T}$ contains the boundary points of $\mathcal J_{N,T}$. Under Assumption \ref{a1}, the $d$-dimensional process $Z_{(i,t)}\defeq\bm\Theta^*_tV_{it}\Delta\varepsilon_{it}$ is stationary over $t$ and i.i.d. across $i$. For $k=1,\ldots,d$, $Z_{(i,t),k}$ can be represented as $Z_{(i,t),k}=h_{(i,t),k}(\ldots,\eta_{(i,t-1)},\eta_{(i,t)})$, where $h_{(i,t),k}$ are measurable functions, and $\eta_{(i,t)}$ for $i\in\mathbb N$, $t\in\mathbb Z$, are i.i.d. random elements. By Definition \ref{dep} and Assumption \ref{a2}(i), it follows that 
	$$\sum_{(i,t)\in\mathcal J_{N,T}}\|Z_{(i,t),k}^*-Z_{(i,t),k}\|_2<\infty.$$
	Moreover, Assumption \ref{a2}(ii), along with the cross-sectional independence assumption, implies that for $k,k'=1,\ldots,d$, the variance
	\begin{eqnarray*}
		&&\E\bigg[\bigg(\sum_{(i,t)\in\mathcal J_{N,T}}Z_{(i,t),k)}\bigg)\bigg(\sum_{(i,t)\in\mathcal J_{N,T}}Z_{(i,t),k'}\bigg)\bigg]\\
		&=&\sum_{(i,t)\in\mathcal J_{N,T}}\E(Z_{(i,t),k}Z_{(i,t),k'})\quad+\sum_{(i,t)\in\mathcal J'_{N,T}}\E(Z_{(i,t),k}Z_{(i,t-1),k'})+\sum_{(i,t)\in\mathcal J'_{N,T}}\E(Z_{(i,t-1),k}Z_{(i,t),k'})
	\end{eqnarray*}
	is of order $NT$, where $\mathcal J'_{N,T}\defeq\{(i,t):1\leq i\leq N,3\leq t\leq T\}$.
	Therefore, based on Assumption \ref{identification}, by applying Theorem 1 of \citet{el2013central} and Slutsky's theorem, we deduce that
	$$\sqrt{NT}I_3^{-1}P_2\stackrel{\mathcal{L}}{\to} \operatorname{N}(0, Q^{-1}\bm\Sigma(Q^{-1})^\top).$$
	
	Recalling the subspace $\Omega_t(c_0,s_t^*)$ defined in Assumption \ref{33}, and considering the 
	entropy condition (with respect to the $d_2$-metric): 
	$$\operatorname{ent}\big(\epsilon,\bigcup\nolimits_{2\leq t\leq T}\Omega_t(c_0,s_t^*)\big)\lesssim T\max_{2\leq t\leq T}s_t^*\log(m_t/\epsilon), \,\text{ for all } 0<\epsilon<1,$$
	by employing Theorem \ref{emplemma} based on Assumptions \ref{a1}--\ref{a2}, we obtain:
	\begin{equation}\label{empbound}
	\sup_{\delta_t\in\Omega_t(c_0,s_t^*),\,t=2,\ldots,T}%{(\delta_t)_{t=2}^T\in\mathcal H}
	\bigg|(NT)^{-1}\sum_{i=1}^N\sum_{t=2}^T\delta_t^\top V_{it}\Delta\varepsilon_{it}\bigg|\lesssim_\P\sqrt{\max_{2\leq t\leq T}s_t^*\log m_t}/\sqrt{N}.   
	\end{equation}
	Consequently, we bound $|I_3^{-1}P_1|_\infty$ as:
	$$|I_3^{-1}P_1|_\infty \leq |I_3^{-1}|_\infty\bigg|(NT)^{-1}\sum_{i=1}^N\sum_{t=2}^T(\widehat{\bm\Theta}_t-\bm\Theta_t^*)V_{it} \Delta\varepsilon_{it}\bigg|_\infty\lesssim_\P\max_{2\leq t\leq T}s_t^*\log m_t/N.$$
	
	Next, we proceed to bound $|I_3^{-1}(I_1+I_2)I_3^{-1}(P_1+P_2)|_\infty$. Observe that 
	\begin{eqnarray*}
	|I_3^{-1}(I_1+I_2)I_3^{-1}(P_1+P_2)|_\infty&\leq &|I_3^{-1}I_1I_3^{-1}P_1|_\infty + |I_3^{-1}I_2I_3^{-1}P_1|_\infty \notag\\
	&&+\, |I_3^{-1}I_1I_3^{-1}P_2|_\infty + |I_3^{-1}I_2I_3^{-1}P_2|_\infty.
	\end{eqnarray*}
	
	We first look at the rate of $|I_3^{-1}I_1I_3^{-1}P_1|_\infty$. By letting $\bm D_{t}\defeq\widehat{\bm\Theta}_t-\bm\Theta_t^*$, we have 
	$$I_3^{-1}I_1I_3^{-1}P_1=(NT)^{-2}\sum_{i=1}^N\sum_{t=2}^T\sum_{i'=1}^N\sum_{t'=2}^TI_3^{-1}\bm D_{t}V_{it}\Delta X_{it}^{\top}I_3^{-1}\bm D_{t'}V_{i't'}\Delta\varepsilon_{i't'}.$$
	In particular, for $k=1,\ldots,d$, the $k$th element of $I_3^{-1}I_1I_3^{-1}P_1$ is given by
	\begin{eqnarray*}
		&&(NT)^{-2}\sum_{i=1}^N\sum_{t=2}^T\sum_{i'=1}^N\sum_{t'=2}^T[I_3^{-1}]_{k\cdot}\bm D_{t}V_{it}\Delta X_{it}^{\top}I_3^{-1}\bm D_{t'}V_{i't'}\Delta\varepsilon_{i't'}\\
		&=&(NT)^{-2}\sum_{k'=1}^d\sum_{i=1}^N\sum_{t=2}^T\sum_{i'=1}^N\sum_{t'=2}^T[I_3^{-1}\bm D_{t}]_{k\cdot}[V_{it}\Delta X_{it}^\top]_{\cdot k'}[I_3^{-1}\bm D_{t'}]_{k'\cdot}V_{i't'}\Delta\varepsilon_{i't'}\\
		&=&(NT)^{-2}\sum_{k'=1}^d\sum_{i=1}^N\sum_{t=2}^T\sum_{i'=1}^N\sum_{t'=2}^T[\Delta X_{it}V_{it}^\top]_{k'\cdot}[I_3^{-1}\bm D_{t}]_{k\cdot}^\top[I_3^{-1}\bm D_{t'}]_{k'\cdot}V_{i't'}\Delta\varepsilon_{i't'}\\
		&=&(NT)^{-2}\sum_{k'=1}^d\sum_{i=1}^N\sum_{t=2}^T\sum_{i'=1}^N\sum_{t'=2}^T[\Delta X_{it}V_{it}^\top-\E(\Delta X_{it}V_{it}^\top)]_{k'\cdot}[I_3^{-1}\bm D_{t}]_{k\cdot}^\top[I_3^{-1}\bm D_{t'}]_{k'\cdot}V_{i't'}\Delta\varepsilon_{i't'}\\
		&&+\,\, (NT)^{-2}\sum_{k'=1}^d\sum_{i=1}^N\sum_{t=2}^T\sum_{i'=1}^N\sum_{t'=2}^T[\E(\Delta X_{it}V_{it}^\top)]_{k'\cdot}[I_3^{-1}\bm D_{t}]_{k\cdot}^\top[I_3^{-1}\bm D_{t'}]_{k'\cdot}V_{i't'}\Delta\varepsilon_{i't'}, 
	\end{eqnarray*}
	where $[\cdot]_{k\cdot}$ denotes the $k$th row and $[\cdot]_{\cdot k}$ denotes the $k$th column of the matrix, respectively.
	
	Consider the class of functions
	$$\mathcal A_{t,t'}\defeq\{A_{t,t'}=a_ta_{t'}^\top:a_t\in\Omega(c_0,s_t^*),a_{t'}\in\Omega(c_0,s_{t'}^*)\},$$
	with the entropy condition (with respect to the $d_\infty$-metric):
	$$\operatorname{ent}\big(\epsilon,\bigcup\nolimits_{2\leq t,t'\leq T}\mathcal A_{t,t'}\big)\lesssim T^2\big\{\max_{2\leq t\leq T}s_t^*\log(m_t/\epsilon)\big\}^2,\,\text{ for all }0<\epsilon<1.$$
	Applying Theorem \ref{label:Ubound} based on Assumptions \ref{a1}--\ref{a2}, we find that
	\begin{eqnarray*}
		&&\sup_{A_{t,t'}\in\mathcal A_{t,t'},\,t,t'=2,\ldots,T}\bigg|(NT)^{-2}\sum_{i=1}^N\sum_{t=2}^T\sum_{i'=1}^N\sum_{t'=2}^T[\Delta X_{it}V_{it}^\top-\E(\Delta X_{it}V_{it}^\top)]_{k'\cdot}A_{t,t'}V_{i't'}\Delta\varepsilon_{i't'}\bigg|\\
		&\lesssim_\P&(NT)^{-1}\log^3(NT).
	\end{eqnarray*}
	Moreover, combining \eqref{empbound} with the fact that
	$$\sup_{a_t\in\Omega_t(c_0,s_t^*),\,t=2,\ldots,T}\bigg|(NT)^{-1}\sum_{i=1}^N\sum_{t=2}^T[\E(\Delta X_{it}V_{it}^\top)]_{k'\cdot} a_t\bigg|\lesssim\max_{2\leq t\leq T}s_t^*\log m_t/\sqrt{N},$$
	we achieve:
	\begin{eqnarray*}
		&&\sup_{A_{t,t'}\in\mathcal A_{t,t'},\,t,t'=2,\ldots,T}\bigg|(NT)^{-2}\sum_{i=1}^N\sum_{t=2}^T\sum_{i'=1}^N\sum_{t'=2}^T[\E(\Delta X_{it}V_{it}^\top)]_{k'\cdot}A_{t,t'}V_{i't'}\Delta\varepsilon_{i't'}\bigg|\\
		&\lesssim_\P&\max_{2\leq t\leq T}(s_t^*\log m_t)^{3/2}/N.
	\end{eqnarray*}
	As a result, we bound $|I_3^{-1}I_1I_3^{-1}P_1|_\infty$ as 
	$$|I_3^{-1}I_1I_3^{-1}P_1|_\infty \lesssim_\P\max_{2\leq t\leq T}(s_t^*\log m_t)^{2}/N^{3/2}.$$
	
	Regarding the other higher-order terms, note that $|I_3^{-1}I_2|_{\infty}=\bigO_\P(1/\sqrt{NT})$ and $|I_3^{-1}P_2|_\infty=\bigO_\P(1/\sqrt{NT})$, which imply that $|I_3^{-1}I_2I_3^{-1}P_2|_\infty=\bigO_\P(1/(NT))$. Combining the bound for $|I_3^{-1}P_1|_\infty$ we have found above, we also deduce that $|I_3^{-1}I_2I_3^{-1}P_1|_\infty\lesssim_\P\max\limits_{2\leq t\leq T}s_t^*\log m_t/\sqrt{N^3T}$. Lastly, as for $|I_3^{-1}I_1|_{\infty}$, applying Theorem \ref{emplemma} gives that
	\begin{align*}
	&\sup_{\delta_t\in\Omega_t(c_0,s_t^*),\,t=2,\ldots,T}\bigg|(NT)^{-1}\sum_{i=1}^N\sum_{t=2}^T\delta_t^\top\{V_{it}\Delta X_{it}^{\top}-\E(V_{it}\Delta X_{it}^{\top})\}\bigg|_\infty\lesssim_\P\sqrt{\max_{2\leq t\leq T}s_t^*\log m_t}/\sqrt{N}.
	\end{align*}
	Combining this with the fact that
	\begin{align*}
	&\sup_{\delta_t\in\Omega_t(c_0,s_t^*),\,t=2,\ldots,T}\bigg|(NT)^{-1}\sum_{i=1}^N\sum_{t=2}^T\delta_t^\top\E(V_{it}\Delta X_{it}^{\top})\bigg|_\infty\lesssim_\P\max_{2\leq t\leq T}s_t^*\log m_t/\sqrt{N},
	\end{align*}
	we conclude that $|I_3^{-1}I_1|_{\infty}\lesssim_\P\max\limits_{2\leq t\leq T}s_t^*\log m_t/\sqrt{N}$, which implies that $|I_3^{-1}I_1I_3^{-1}P_2|_\infty\lesssim_\P\max\limits_{2\leq t\leq T}s_t^*\log m_t/(N\sqrt{T})$.
	
	Under the condition $\max\limits_{2\leq t\leq T}s_t^*\log m_t\sqrt{T}/\sqrt{N}\to0$ as $N,T\to\infty$, we have $\sqrt{NT}|I_3^{-1}P_1|_\infty=\smallO_\P(1)$, as well as $\sqrt{NT}|I_3^{-1}(I_1+I_2)I_3^{-1}(P_1+P_2)|_\infty=\smallO_\P(1)$. Thus, the proof is concluded.
\end{proof}

The asymptotic normality of AB-LASSO-SS under FD is analogous to that under FOD. In particular, the conclusion and required conditions for AB-LASSO-SS remain unchanged from those in Theorem \ref{main} and are therefore omitted.

% \begin{Theorem}[Asymptotic Normality of AB-LASSO-SS with First Differences]\label{main.fdss}
% 	Under Assumptions \ref{a1}--\ref{identification}, where the transformation $\Delta$ denotes first differencing, suppose that the asymptotic variance $\Omega$ is a positive definite matrix, and that $\max\limits_{2\leq t\leq T} \sqrt{s_t^*}\log m_t/\sqrt{N}\to0$ as $N,T\to\infty$. Then the AB-LASSO-SS estimator $\widehat{\theta}_{SS}$ is consistent for $\theta^0$, and
% 	$$\sqrt{NT}(\widehat{\theta}_{SS}- \theta^0 )\stackrel{\mathcal{L}}{\to} \operatorname{N}(0, \Omega).$$
% \end{Theorem}

% \begin{proof}
% 	Analogous to \eqref{expansion}, we derive expansions for $\widehat{\theta}_{A,B}-\theta^0$ and $\widehat{\theta}_{B,A}-\theta^0$, respectively. With sample-splitting, a significant difference arises in the convergence rate of $(NT)^{-1/2}\sum_{i\in\mathbb I_s}\sum_{t=2}^T(\widehat{\bm\Theta}_t-\bm\Theta_t^*)V_{it} \Delta\varepsilon_{it}$, where $s\in\{A,B\}$. As $\widehat{\bm\Theta}_t$ is obtained from a sub-sample uncorrelated with the one considered in the summation, we have 
% 	$$(NT)^{-1/2}\bigg|\sum_{i\in\mathbb I_s}\sum_{t=2}^T(\widehat{\bm\Theta}_t-\bm\Theta_t^*)V_{it} \Delta\varepsilon_{it}\bigg|_\infty\lesssim_\P\max_{2\leq t\leq T}\sqrt{s_t^*}\log m_t/\sqrt{N}.$$
% 	Thus, the required condition reduces to $\max\limits_{2\leq t\leq T}\sqrt{s_t^*}\log m_t/\sqrt{N}\to0$, as $N,T\to\infty$. The remainder of the proof follows that of Theorem \ref{main.fd} in a similar manner.
% \end{proof}

\renewcommand{\thesubsection}{B.\arabic{subsection}}
\setcounter{equation}{0}
\renewcommand{\theequation}{B.\arabic{equation}}
\setcounter{theorem}{0}
\renewcommand{\thetheorem}{B.\arabic{theorem}}
\setcounter{lemma}{0}
\renewcommand{\thelemma}{B.\arabic{lemma}}
\setcounter{figure}{0}
\renewcommand{\thefigure}{B.\arabic{figure}}
\setcounter{table}{0}
\renewcommand{\thetable}{B.\arabic{table}}
\setcounter{Remark}{0}
\renewcommand{\theRemark}{B.\arabic{Remark}}
\setcounter{corollary}{0}
\renewcommand{\thecorollary}{B.\arabic{corollary}}
\setcounter{Example}{0}
\renewcommand{\theExample}{B.\arabic{Example}}

\section{Supplementary Results for Simulation Study}\label{app.sim}
\subsection{Application-based Calibration}\label{app.cab}
We carry out an application-based calibration to validate our method in the context of the empirical study presented in Section \ref{app}. Specifically, we generate the response variable $Y_{it}$, for $i=1,\ldots,N=2,510$ and $t=2,\ldots,T=32$, using the model:
$$Y_{it} = \alpha_i + \gamma_t + \theta_0 D_{i,t-1} + \beta_1Y_{i,t-1} + \theta_1^\top C_{1i,t-1} + \theta_2 C_{2it}  + \varepsilon_{it},  \quad \varepsilon_{it}\stackrel{\operatorname{i.i.d.}}{\sim}\operatorname{N}(0,\sigma^2),$$
where the covariates $D_{it}$, $C_{1i,t-1}$, $C_{2it}$ and the initial value $Y_{i1}$ are taken directly from the real dataset. In the calibration, the values of the individual and time effects, as well as the coefficients, are set to the FE estimates. The sample standard deviation of the residuals estimated by FE is used to set $\sigma$.\footnote{We also consider a heteroskedastic case where the variance depends on the treatment variable $D_{i,t-1}$, and a dynamic model including four lags of the dependent variable. The results confirm the validity of our method in these settings and are available from the authors upon request.}

Table \ref{table:calibration} presents the calibration results for the coefficients of the treatment and policy variables using AB-LASSO, AB-LASSO-SS ($K=2$), FE, and DFE-A. We report the RMSE, SD, bias, and CI length as percentages of the absolute ``true" parameter values, along with empirical coverage, based on 100 replications. Inference is assessed at the nominal 95\% confidence level.\footnote{Here, AB and DAB are not included in the comparison because running AB for a single replication (with 3,375 moment conditions) already requires a total memory usage of approximately 2.3 GB, with a peak usage of 1.0 TB. For the real data application in Section \ref{app}, we used a server with 1.5 TB of memory and a time limit of 2 days, which would not be feasible on a standard PC.}

We find that AB-LASSO and AB-LASSO-SS consistently yield lower RMSEs than FE and DFE-A, primarily due to lower bias in most cases. Their empirical coverage rates are generally more accurate, with notable improvements for school opening, college visits, and mask mandates. In addition, our methods deliver more precise estimates, with narrower confidence intervals for all coefficients. Overall, sample splitting does not substantially alter the results of AB-LASSO, likely because FOD helps mitigate overfitting bias.

\newpage

\begin{table}[htbp!]
	\begin{center}	\caption{Calibration results for the effects on COVID-19 cases}\label{table:calibration}
		\begin{tabular}{p{1.5cm} cccc}
			\hline\hline
			&  \footnotesize{AB-LASSO} &  \footnotesize{AB-LASSO-SS} & \footnotesize{FE} & \footnotesize{DFE-A}\\
			& &  \footnotesize{($K=2$)} & & \\
			\cline{2-5}
			& \multicolumn{4}{c}{\small{K-12 school opening}}\\
			\hline
            \footnotesize{RMSE} & \bf{0.08} & 0.09 & 0.12 & 0.09 \\
            \footnotesize{SD} & 0.05 & 0.05 & 0.05 & 0.05 \\
            \footnotesize{bias} & \bf{-0.07} & \bf{-0.07} & 0.10 & \bf{0.07} \\
            \footnotesize{CI length} & 0.19 & 0.19 & 0.21 & 0.21 \\
            \footnotesize{coverage} & \bf{0.74} & 0.69 & 0.50 & 0.70 \\
			\hline
			& \multicolumn{4}{c}{\small{College visits}}\\
			\hline
            \footnotesize{RMSE} & \bf{0.08} & \bf{0.08} & 0.11 & 0.09 \\
            \footnotesize{SD} & \bf{0.06} & \bf{0.06} & 0.07 & 0.07 \\
            \footnotesize{bias} & \bf{-0.04} & \bf{-0.04} & 0.08 & 0.06 \\
            \footnotesize{CI length} & 0.29 & 0.29 & 0.33 & 0.33 \\
            \footnotesize{coverage} & 0.93 & \bf{0.94} & 0.87 & 0.92 \\
			\hline
			& \multicolumn{4}{c}{\small{Mask mandates}}\\
			\hline
            \footnotesize{RMSE} & \bf{0.06} & \bf{0.06} & 0.09 & 0.07 \\
            \footnotesize{SD} & \bf{0.05} & \bf{0.05} & 0.06 & 0.06 \\
            \footnotesize{bias} & \bf{0.03} & \bf{0.03} & -0.06 & -0.04 \\
            \footnotesize{CI length} & 0.21 & 0.21 & 0.24 & 0.24 \\
            \footnotesize{coverage} & \bf{0.90} & \bf{0.90} & 0.79 & 0.88 \\
            \hline
			& \multicolumn{4}{c}{\small{Stay-at-home orders}}\\
			\hline
            \footnotesize{RMSE} & \bf{0.20} & \bf{0.20} & 0.23 & 0.22 \\
            \footnotesize{SD} & \bf{0.20} & \bf{0.20} & 0.21 & 0.21 \\
            \footnotesize{bias} & \bf{0.00} & \bf{0.00} & -0.08 & -0.06 \\
            \footnotesize{CI length} & 0.80 & 0.80 & 0.87 & 0.87 \\
            \footnotesize{coverage} & 0.97 & 0.97 & 0.94 & \bf{0.96} \\
            \hline
			& \multicolumn{4}{c}{\small{Banning gatherings}}\\
			\hline
            \footnotesize{RMSE} & \bf{0.23} & \bf{0.23} & 0.25 & 0.24 \\
            \footnotesize{SD} & \bf{0.23} & \bf{0.23} & 0.25 & 0.24 \\
            \footnotesize{bias} & 0.03 & 0.03 & \bf{0.02} & \bf{0.02} \\
            \footnotesize{CI length} & 0.93 & 0.93 & 1.04 & 1.04 \\
            \footnotesize{coverage} & \bf{0.97} & \bf{0.97} & 0.98 & 0.98 \\
			\hline\hline
	   \multicolumn{5}{l}{\footnotesize{Notes: Superior results are indicated in bold.}}
		\end{tabular}
	\end{center}
\end{table}

\subsection{Supplementary Tables for Simulation Study}\label{app.supp}
\mbox{}

\small\begin{table}[htbp!]
	\begin{center} 	\caption{Results for $\theta_2=0.25$ with $N=100$, homoskedastic case}\label{table:n100t2.homo}
		\begin{tabular}{p{1.5cm} cccccccc}
			\hline\hline
			& \footnotesize{AB} &  \footnotesize{AB-LASSO} & \footnotesize{AB-LASSO-SS} &  \footnotesize{AB-LASSO-SS} & \footnotesize{DAB-SS} & \footnotesize{DFE-A} & \footnotesize{ML}\\
			& &  & \footnotesize{($K=2$)} &  \footnotesize{($K=5$)} &  \\
			\cline{2-8}
			& \multicolumn{7}{c}{\small{$T=20$}}\\
			\hline
            \footnotesize{RMSE} & 0.26 & 0.12 & 0.15 & 0.15 & 0.44 & 0.11 & \bf{0.10} \\
            \footnotesize{SD} & 0.25 & 0.10 & 0.14 & 0.14 & 0.44 & \bf{0.08} & 0.10 \\
            \footnotesize{bias} & -0.08 & -0.06 & -0.06 & -0.07 & -0.05 & 0.07 & \bf{0.00} \\
            \footnotesize{CI length} & 1.72 & 0.39 & 0.70 & 0.58 & 1.72 & 0.34 & 0.38 \\
            \footnotesize{coverage} & 1.00 & 0.91 & 0.99 & 0.97 & \bf{0.95} & 0.87 & 0.94 \\
			\hline
			& \multicolumn{7}{c}{\small{$T=30$}}\\
			\hline
			\footnotesize{RMSE} & 0.48 & \bf{0.08} & 0.10 & 0.09 & 0.79 & 0.09 & \bf{0.08} \\
            \footnotesize{SD} & 0.44 & \bf{0.07} & 0.09 & 0.09 & 0.75 & \bf{0.07} & 0.08 \\
            \footnotesize{bias} & -0.20 & -0.03 & -0.03 & -0.03 & -0.25 & 0.05 & \bf{0.00} \\
            \footnotesize{CI length} & 2.61 & 0.29 & 0.42 & 0.35 & 2.61 & 0.27 & 0.29 \\
            \footnotesize{coverage} & 1.00 & 0.93 & 0.98 & \bf{0.95} & 0.91 & 0.88 & 0.92 \\
			\hline
			& \multicolumn{7}{c}{\small{$T=40$}}\\
			\hline
			\footnotesize{RMSE} & 0.51 & \bf{0.06} & 0.07 & 0.07 & 0.86 & 0.07 & 0.07 \\
            \footnotesize{SD} & 0.47 & \bf{0.06} & 0.07 & 0.07 & 0.82 & \bf{0.06} & 0.07 \\
            \footnotesize{bias} & -0.20 & -0.02 & -0.02 & -0.01 & -0.25 & 0.04 & \bf{0.00} \\
            \footnotesize{CI length} & 2.78 & 0.24 & 0.31 & 0.27 & 2.78 & 0.23 & 0.24 \\
            \footnotesize{coverage} & 0.99 & 0.94 & 0.98 & \bf{0.95} & 0.89 & 0.89 & 0.91 \\
            \hline
			& \multicolumn{7}{c}{\small{$T=50$}}\\
			\hline
            \footnotesize{RMSE} & 0.55 & \bf{0.06} & \bf{0.06} & \bf{0.06} & 0.91 & \bf{0.06} &  \\
            \footnotesize{SD} & 0.54 & 0.06 & 0.06 & 0.06 & 0.90 & \bf{0.05} &  \\
            \footnotesize{bias} & -0.14 & \bf{-0.01} & \bf{-0.01} & \bf{-0.01} & -0.12 & 0.03 &  \\
            \footnotesize{CI length} & 2.85 & 0.21 & 0.26 & 0.23 & 2.85 & 0.20 &  \\
            \footnotesize{coverage} & 0.99 & 0.93 & \bf{0.96} & \bf{0.94} & 0.89 & 0.89 &  \\
            \hline
			& \multicolumn{7}{c}{\small{$T=60$}}\\
			\hline
            \footnotesize{RMSE} & 0.51 & \bf{0.05} & 0.06 & \bf{0.05} & 0.85 & \bf{0.05} &  \\
            \footnotesize{SD} & 0.50 & \bf{0.05} & \bf{0.05} & \bf{0.05} & 0.85 & \bf{0.05} &  \\
            \footnotesize{bias} & -0.12 & \bf{-0.01} & \bf{-0.01} & \bf{-0.01} & -0.04 & 0.02 &  \\
            \footnotesize{CI length} & 2.82 & 0.19 & 0.23 & 0.20 & 2.82 & 0.18 &  \\
            \footnotesize{coverage} & 1.00 & \bf{0.94} & \bf{0.96} & \bf{0.94} & 0.89 & 0.92 &  \\
			\hline\hline
    \multicolumn{8}{l}{\footnotesize{Notes: The numbers in the table are divided by 0.25 for RMSE, SD, bias, and CI length.}}\\
    \multicolumn{8}{l}{\footnotesize{Superior results are indicated in bold.}}
		\end{tabular}
	\end{center}
\end{table}

\small\begin{table}[htbp!]
	\begin{center} 	\caption{Results for $\theta_2=0.25$ with $N=200$, homoskedastic case}\label{table:n200t2.homo}
		\begin{tabular}{p{1.5cm} cccccccc}
			\hline\hline
			& \footnotesize{AB} &  \footnotesize{AB-LASSO} & \footnotesize{AB-LASSO-SS} &  \footnotesize{AB-LASSO-SS} & \footnotesize{DAB-SS} & \footnotesize{DFE-A} & \footnotesize{ML}\\
			& &  & \footnotesize{($K=2$)} &  \footnotesize{($K=5$)} &  \\
			\cline{2-8}
			& \multicolumn{7}{c}{\small{$T=20$}}\\
			\hline
            \footnotesize{RMSE} & 0.09 & 0.09 & 0.10 & 0.10 & 0.18 & 0.09 & \bf{0.07} \\
            \footnotesize{SD} & 0.09 & 0.07 & 0.08 & 0.08 & 0.18 & \bf{0.06} & 0.07 \\
            \footnotesize{bias} & -0.03 & -0.06 & -0.06 & -0.05 & 0.05 & 0.07 & \bf{0.00} \\
            \footnotesize{CI length} & 0.54 & 0.27 & 0.34 & 0.30 & 0.54 & 0.24 & 0.27 \\
            \footnotesize{coverage} & 1.00 & 0.86 & 0.93 & 0.90 & 0.85 & 0.79 & \bf{0.96} \\
			\hline
			& \multicolumn{7}{c}{\small{$T=30$}}\\
			\hline
			\footnotesize{RMSE} & 0.22 & 0.06 & 0.06 & 0.06 & 0.33 & 0.07 & \bf{0.05} \\
            \footnotesize{SD} & 0.17 & \bf{0.05} & \bf{0.05} & \bf{0.05} & 0.33 & \bf{0.05} & \bf{0.05} \\
            \footnotesize{bias} & -0.13 & -0.03 & -0.03 & -0.03 & -0.07 & 0.05 & \bf{0.00} \\
            \footnotesize{CI length} & 1.35 & 0.20 & 0.23 & 0.21 & 1.35 & 0.19 & 0.20 \\
            \footnotesize{coverage} & 1.00 & 0.92 & \bf{0.94} & \bf{0.94} & \bf{0.96} & 0.84 & \bf{0.96} \\
			\hline
			& \multicolumn{7}{c}{\small{$T=40$}}\\
			\hline
			\footnotesize{RMSE} & 0.39 & \bf{0.05} & \bf{0.05} & \bf{0.05} & 0.62 & 0.06 & \bf{0.05} \\
            \footnotesize{SD} & 0.32 & \bf{0.04} & \bf{0.04} & \bf{0.04} & 0.56 & \bf{0.04} & 0.05 \\
            \footnotesize{bias} & -0.21 & -0.02 & -0.01 & -0.02 & -0.28 & 0.04 & \bf{0.00} \\
            \footnotesize{CI length} & 1.90 & 0.17 & 0.18 & 0.18 & 1.90 & 0.16 & 0.17 \\
            \footnotesize{coverage} & 0.98 & 0.93 & \bf{0.95} & 0.94 & 0.88 & 0.84 & \bf{0.95} \\
            \hline
			& \multicolumn{7}{c}{\small{$T=50$}}\\
			\hline
            \footnotesize{RMSE} & 0.43 & \bf{0.04} & \bf{0.04} & \bf{0.04} & 0.71 & 0.05 & \bf{0.04} \\
            \footnotesize{SD} & 0.38 & \bf{0.04} & \bf{0.04} & \bf{0.04} & 0.65 & \bf{0.04} & \bf{0.04} \\
            \footnotesize{bias} & -0.20 & -0.01 & -0.01 & -0.01 & -0.28 & 0.03 & \bf{0.00} \\
            \footnotesize{CI length} & 2.14 & 0.15 & 0.16 & 0.15 & 2.14 & 0.14 & 0.15 \\
            \footnotesize{coverage} & 0.99 & 0.94 & 0.96 & \bf{0.95} & 0.85 & 0.84 & 0.94 \\
            \hline
			& \multicolumn{7}{c}{\small{$T=60$}}\\
			\hline
            \footnotesize{RMSE} & 0.45 & \bf{0.04} & \bf{0.04} & \bf{0.04} & 0.75 & \bf{0.04} & \bf{0.04} \\
            \footnotesize{SD} & 0.42 & 0.04 & 0.04 & 0.04 & 0.71 & \bf{0.03} & 0.04 \\
            \footnotesize{bias} & -0.18 & -0.01 & -0.01 & -0.01 & -0.22 & 0.03 & \bf{0.00} \\
            \footnotesize{CI length} & 2.20 & 0.13 & 0.14 & 0.14 & 2.20 & 0.13 & 0.13 \\
            \footnotesize{coverage} & 0.98 & 0.93 & \bf{0.95} & 0.94 & 0.86 & 0.84 & 0.92 \\
			\hline\hline
    \multicolumn{8}{l}{\footnotesize{Notes: The numbers in the table are divided by 0.25 for RMSE, SD, bias, and CI length.}}\\
    \multicolumn{8}{l}{\footnotesize{Superior results are indicated in bold.}}
		\end{tabular}
	\end{center}
\end{table}

\end{appendices}

\end{document}